\numberwithin{equation}{section}
\newtheorem{theorem}{Theorem}[section]
\newtheorem{lemma}[theorem]{Lemma}
\newtheorem{proposition}[theorem]{Proposition}
\newtheorem{corollary}[theorem]{Corollary}
\theoremstyle{definition}
\newtheorem{definition}[theorem]{Definition}
\newtheorem{example}[theorem]{Example}
\theoremstyle{remark}
\newtheorem{remark}[theorem]{\bf{Remark}}
\newcommand{\R}{{\mathbb{R}}}
\newcommand{\C}{{\mathbb{C}}}
\newcommand{\<}{{\langle}}
\renewcommand{\>}{{\rangle}}
\newcommand{\CD}{{\mathcal{D}}}
\newcommand{\cg}{{\mathfrak g}}
\newcommand{\ch}{{\mathfrak{h}}}
\newcommand{\tens}{\otimes}
\newcommand{\id}{{\rm id}}
\newcommand{\extd}{{\rm d}}
\newcommand{\del}{{\partial}}
\newcommand{\eps}{\epsilon}
\renewcommand{\imath}{\mathrm{i}}
\newcommand{\rh}{\rho(\ch)}
\newcommand\reallywidehat[1]{%
\savestack{\tmpbox}{\stretchto{%
  \scaleto{%
    \scalerel*[\widthof{\ensuremath{#1}}]{\kern-.6pt\bigwedge\kern-.6pt}%
    {\rule[-\textheight/2]{1ex}{\textheight}}%WIDTH-LIMITED BIG WEDGE
  }{\textheight}% 
}{0.5ex}}%
\stackon[1pt]{#1}{\tmpbox}%
}
\begin{document}

\title{Generally covariant quantum mechanics}

\keywords{noncommutative geometry, quantum mechanics, black holes, quantum spacetime, quantum geodesics, quantum gravity}

\subjclass[2020]{Primary 83C65, 83C57, 81S30, 81Q35, 81R50}
%\thanks{}

\author{Edwin Beggs and Shahn Majid}\thanks{
{\it Authors to whom correspondence should be addressed:} e.j.beggs@swansea.ac.uk  and s.majid@qmul.ac.uk}
\date{Nov 2025 Ver 1.3}
\address{Department of Mathematics, Bay Campus, Swansea University, SA1 8EN, UK; Queen Mary University of London, 
School of Mathematical Sciences, Mile End Rd, London E1 4NS, UK}

%\thanks{}
%\date{}

\begin{abstract} We obtain generally covariant operator-valued geodesic equations on a pseudo-Riemannian manifold $M$ as part of the construction of quantum geodesics on the algebra $\CD(M)$ of differential operators.  Geodesic motion arises here as an associativity condition for a certain form of first order differential calculus on this algebra in the presence of curvature. The corresponding Schr\"odinger picture has wave functions on spacetime and proper time evolution by the Klein-Gordon operator, with stationary modes being  solutions of the Klein-Gordon equation. As an application, we describe gravatom solutions of the Klein-Gordon equations around a Schwarzschild black hole, i.e. gravitationally bound states which far from the event horizon resemble atomic states with the black hole in the role of the nucleus. The spatial eigenfunctions exhibit probability density banding as for higher orbital modes of an ordinary atom, but of a fractal nature approaching the horizon.  \end{abstract}

\maketitle

\section{Introduction}

Noncommutative geometry\cite{Con,BegMa}, and in particular the theory of quantum geodesics  \cite{Beg:geo,BegMa:cur, BegMa:geogra, LiuMa}, was recently applied in \cite{BegMa:geo} to the Heisenberg algebra $A$ in quantum mechanics viewed as a noncommutative phase space. This work equipped  $A$ with a certain carefully chosen exterior algebra of differential forms $\Omega_A$ defined by a choice of Hamiltonian, and a certain generalised quantum metric in $\Omega^1\tens_A\Omega^1$ such that quantum geodesic flow with parameter $t$ recovers the standard Schr\"odinger equation. A generalised quantum metric here means we  assume neither symmetry nor nondeneracy, and indeed the one in \cite{BegMa:geo} was antisymmetric in the $\hbar\to 0$ limit (so more like a  symplectic or contact structure)  and also had a kernel (due to having one dimension more in the calculus) which encodes the Hamilton-Jacobi equations of motion. Differential forms mean that as well as the Heisenberg algebra generators $x^\mu, p_\nu$, we also have their differentials $\extd x^\mu,\extd p_\nu$ (and an unexpected additonal 1-form $\theta'$) and all the tools of noncommutative differential geometry.  The work included also the relativistic case where now the geodesic time parameter is an external proper time, and extended it to an electromagnetic Heisenberg algebra applicable to spacetime with a background $U(1)$-gauge field. We will say later what we mean by a quantum geodesic, but in practical terms it amounts to a flow generated by the (minimally coupled) Klein-Gordon operator, and in the electromagnetic case we saw how the Lorentz force law appears naturally at this  level. Although we did not claim new physics from such constructions, they provided a novel perspective using new tools. 

In the present sequel, we now aim to extend these ideas to the generally covariant setting where the Heisenberg algebra is, we propose, replaced by the algebra $\CD(M)$ of differential operators on a smooth manifold $M$ and we aim to obtain geodesic motion rather than the Lorentz force, as a flow on this algebra. For our purposes, we consider $\CD(M)$  as the algebra generated by functions $f$ and vector fields $X$ with cross-relations $[X,f]=\lambda\, X(f)$, where $\lambda=-\imath\hbar$ for the application we have in mind, hence looks like the usual Heisenberg algebra in any local coordinates. We will explain in the preliminaries, see Section~\ref{sec:back}, how this is equivalent to the usual definition of $\CD(M)$.  Recent interest in this algebra lies in the fact that it forms a Hopf algebroid, see \cite{Xu} where further deformed versions of it were introduced. It is also used in algebraic geometry and homological approaches to physics\cite{Ver}. Importantly for us, even though it is noncommutative, this algebra is defined globally and when taken in the above form, we can work with it using the same tools of tensor calculus as familiar in General Relativity (GR), allowing us to build structures on it in a globally defined manner when $M$ is equipped with a Riemannian or pseudo-Riemannian metric $g$. Thus, although, as is standard practice in GR, we will frequently write down expressions in a local coordinate chart with coordinates $x^\mu$ and coordinate vector fields $\del_\mu$, this is mainly for convenience and we do not assume a global coordinate chart. Indeed, $x^\mu, \del_\mu$ are not precisely elements of $\CD(M)$ but merely locally-defined representatives and the relevant equations can be rewritten in terms of global quantities. Although standard in GR, such methods are not standard in quantum mechanics, which normally goes via the theory of linear operators on Hilbert spaces. The two are connected via a canonical {\em Schr\"odinger representation} $\rho:\CD(M)\to {\rm Lin}(L^2(M))$ in which functions act by multiplication and vector fields act by $\lambda$ times differentiation along the vector field, as defined on the smooth functions of compact support $C_c^\infty(M)\subset L^2(M)$. By ${\rm Lin}(L^2(M))$ we mean possibly unbounded operators with associated domains, but the ones arising in our case all have domains containing at least $C_c^\infty(M)$. Moreover, we work with complexified functions and vector fields, then $\CD(M)$ is  a $*$-algebra and the $*$-operation is required to map to the adjoint of the associated operator in the sense $(\phi,\rho(a)\psi)= (\rho(a^*)\phi,\psi)$ for all $\phi,\psi\in C^\infty_c(M)$, where $a\in \CD(M)$ and  $(\ ,\ )$ denotes the $L^2(M)$ inner product.  The basic idea of the paper is to transfer quantum mechanical ideas from the operator side in the image of $\rho$ to the smooth geometric side of $\CD(M)$ where we can aim to use tools of (noncommutative) differential geometry. We view this as motivation for our  definitions on $\CD(M)$, with the further study of their connection to the operator side a topic for further work, as it would need rather different tools of functional analysis.

In the case where $M$ is space and time is external, the picture we have in mind does appear to be in line with the time-independent case of what is usually called quantum mechanics on curved spacetime in the sense of \cite{Witt}, except that this is usually done as a Schr\"odinger equation whereas we lift the evolution to the $\CD(M)$ level. But when $M$ is spacetime, the physical interpretation of our results is less clear for two reasons, both of which we start to look at in  examples and applications towards the end of the paper. The first is that `wave functions' $\psi$ on spacetime are not something usually considered in physics. The second is that whereas the concept of one particle moving along a geodesic entails a proper time $s$ as geodesic parameter, we are effectively extending this to some kind of collective proper time with respect to which evolution take places. Hence, our quantum mechanics-like evolution with respect to $s$ is not actual quantum mechanics in any conventional sense, although it has parallels. What has sometimes also been referred to as quantum mechanics in curved spacetime is fields obeying the minimally coupled Klein-Gordon or Dirac equation viewed, where possible, as quantum-mechanics-like with respect to a global time coordinate. See for example \cite{Don} for an early work on this view around a Schwarzschild black hole. There are issues in the Dirac case about what is left of coordinate invariance, see \cite{Arm} and related works. In our case, by focusing on stationary states under $s$ we arrive at a similar point of considering solutions of the Klein-Gordon equation (but not for a particle of fixed mass) which is then comparable to these works. Finally, this circle of ideas should not be confused with quantum field theory on curved spacetime, which is quite well understood using operator algebra methods at least when $M$ is globally hyperbolic (the main issue in general is lack of a unique vacuum state), see \cite{HolWal, Kay} for relatively recent reviews. We do, however, obtain classical geodesic motion in the $\hbar\to 0$ commutative limit, and hence, with the above caveats, the present work fills a certain quantum-mechanics level gap between classical geodesics on $M$ as in GR, and quantum field theory on spacetime $M$ using operator algebra methods.

We turn now to the principal results of the paper and where they can be found. In order not to overpromise, let us say right away that the key thing we find, in Section~\ref{sec:jac} is a curvature obstruction to an associative differential exterior algebra $\Omega_{\CD(M)}$ on $\CD(M)$ of the type needed. This obstruction, which appears in a breakdown of Jacobi identities at order $\lambda^2$, is in line with curvature obstructions in \cite{BegMa:poi} in a different context. It is also at this order that we see the appearance of the Ricci tensor in our resulting commutation relations
\begin{align} \label{Xxi}
 [X,\widehat\xi ] =\lambda\,(\widehat{ \nabla_X \xi  })  - {\lambda\over m} \theta'\, (g^{\mu\nu}\,\xi_\mu\, \nabla_\nu X) 
-{\lambda^2\over 2m}\, \theta'\big(X^\rho\, \xi_\mu\, g^{\mu\nu}\, R_{\nu\rho} + g^{\mu\nu}\, X^\rho{}_{;\nu}\, \xi_{\mu;\rho}
\big).
\end{align}
Here, $X$ is a vector field, $\xi$ is a 1-form on $M$,  $\hat\xi$ is its image as a 1-form in $\Omega^1_{\CD(M)}$, $\nabla$ is the Levi-Civita connection, also indicated by a semicolon ;  and $R_{\mu\nu}$ is the Ricci curvature. The parameter $m$ will play the role of a particle mass and the element $\theta'$ is a central 1-form on $\CD(M)$ as in \cite{BegMa:geo, Ma:alm} which will be understood as a proper time interval. The way that these commutators emerge is that we ask that the Schr\"odinger representation extends to a representation of the whole exterior algebra
\begin{equation}\label{rho} \rho: \Omega_{\CD(M)}\to {\rm Lin}(L^2(M))\end{equation}
still typically with unbounded operators. Since the composition of operators (where their domains allow) is necessarily associative, $\Omega_{\CD(M)}$ would also have to be associative if $\rho$ were to be injective. However, this is not the case and instead all the terms that lead to nonassociativity of the calculus are in the kernel of $\rho$. On the other hand, searching for $\rho$ to the extent possible does lead to a full set of commutation relations, of which (\ref{Xxi}) is one, between elements of $\CD(M)$ and their differentials in $\Omega^1_{\CD(M)}$. This step by step derivation of a generally covariant set of commutation relations is achieved in Section~\ref{sec:calc}.  The method used is the same as in \cite{BegMa:geo}, namely to write down a natural (covariant) form of Hamiltonian $\ch$ and ask that evolution given by commutator with this matches up with what we expect for the quantum geodesic flow. The Jacobi identity issue is then covered in Section~\ref{sec:jac}. 

To complete the noncommutative geometry behind the quantum geodesic requires a technical component of a suitable `bimodule connection' $\nabla$ with respect to which our geodesic velocity field $\mathfrak{X}$ on $\CD(M)$ in Proposition~\ref{propX} should be autoparallel. This appears to be rather complicated and is deferred to further work. It is expected to exist at least to order $\lambda^2$ as a globalisation of the one in \cite{BegMa:geo} for the flat case, but meanwhile the geodesic flow itself {\em is} still defined in line with $\mathfrak{X}$ found to this order in Section~\ref{sec:geo}. Section~\ref{sec:jac} also identifies natural elements of $\Omega^1_{\CD(M)}$ that are in the kernel (and likely span it over the algebra, although we do not prove this). In the relativistic case without external potential $V$, and working to order $\lambda^2$ and (for convenience) with local coordinates, these are
\begin{align}\label{kerrho1}
&\extd x^\mu - {\theta'\over m}\left(g^{\mu\nu} p_\nu - {\lambda\over 2} \Gamma^\mu\right),\\ \label{kerrho2}
 &\extd p_\mu- {\theta'\over m} \left(\Gamma^\nu{}_{\mu\sigma}g^{\sigma\rho}(p_\nu p_\rho-\lambda\Gamma^\tau{}_{\nu\rho}p_\tau)+{\lambda\over 2}g^{\alpha\beta} \, \Gamma^\nu{}_{\beta\alpha,\mu}\,p_\nu - V_{,\mu}\right), 
 \end{align}
where $p_\mu= \del_\mu$ as a local vector field when viewed locally in $\CD(M)$ and mapping to $\lambda{\del\over\del x^\mu}$ in the Schr\"odinger representation, and $\Gamma^\mu=\Gamma^\mu {}_{\nu\rho}g^{\nu\rho}$ is a contraction of the Christoffel symbols. Therefore, if we set (\ref{kerrho1})-(\ref{kerrho2}) to zero in order to kill the nonassociativity in the calculus, and if we interpret $\theta'=\extd s$  as `proper time' $s$ then we can interpret  (\ref{kerrho1}) as definition of $p_\mu$ in terms of $\extd x^\mu\over\extd s$, in which case (\ref{kerrho2}) becomes
\begin{equation}\label{opgeo} {\extd^2x^\mu\over\extd s^2}=-\Gamma^\mu{}_{\nu\rho}{\extd x^\nu\over\extd s}{\extd x^\rho\over\extd s}+{\lambda\over 2 m}C^\mu{}_\nu {\extd x^\nu\over\extd s}+O(\lambda^2)\end{equation}
(the order $\lambda^2$ term can also be computed), where 
\[ C^{\mu\nu}=-g^{\alpha\beta}(g^{\mu\gamma}\Gamma^\nu{}_{\gamma \alpha,\beta}+g^{\nu\gamma}\Gamma^\mu{}_{\gamma \alpha,\beta})+  g^{\mu j}\Gamma^\nu{}_{;\beta}-g^{\nu \beta}\Gamma^\mu{}_{;\beta}+\Gamma^{\alpha\beta\mu}\Gamma^{\nu}{}_{\alpha\beta}- \Gamma^{\alpha\beta\nu}\Gamma^{\mu}{}_{\alpha\beta},\]
see Proposition~\ref{propC}. The combination of derivatives here is different from that in the curvature, and indeed $C^{\mu\nu}$ does not transform as a tensor (it could possibly be better understood in terms of jets, although we have not done this). Moreover, (\ref{opgeo}) becomes an operator equation in the `Heisenberg picture' when viewed in the Schr\"odinger representation, where these relations hold. The equations (\ref{opgeo}) are coordinate invariant and can be computed in any coordinates, but the separate terms in isolation do not transform simply, both because of  $\Gamma^\mu{}_{\nu\rho}$ and $C^{\mu\nu}$, and because the $\extd x^\mu\over\extd s$ do not commute with functions.  In the non-relativistic version where $M$ is space and $\theta'=\extd t$ for an external time $s=t$, and with an external potential $V$ in the Hamiltonian, we similarly recover noncommutative versions of Hamilton-Jacobi equations of motion on the curved space with order $\lambda$ corrections. Also note that real vector fields are not invariant under $*$ due to a divergence correction (see Section~\ref{orf}) but can be corrected so as to be invariant. Doing this to the $p_i$ associated to local vector fields $\del_i$ and applying the representation $\rho$ then recovers the hermitian momenta in (5.31) of \cite{Witt} in the case where $M$ is space.

From (\ref{opgeo}), we can see as promised that geodesic motion as in conventional GR is contained in our algebraic set-up at zeroth order. Indeed, for $\lambda=0$,  (\ref{kerrho1})-(\ref{kerrho2}) are a standard cotangent bundle approach to geodesics flows as used, for example, in \cite{Cha}. The difference is that we `quantise' this picture by providing order $\lambda$ corrections needed for a coordinate-invariant `Heisenberg picture' on $\CD(M)$ as the global version of the Heisenberg algebra. We shall see that order $\lambda$  is also relevant to the Schr\"odinger representation and Klein-Gordon operator on `wave functions'. We also explained that while the  differential calculus on $\CD(M)$ is nonassociative at order $\lambda^2$, the equations setting (\ref{kerrho1})-(\ref{kerrho2}), i.e. the geodesic equations, are required to kill this associativity obstruction in presence of generic curvature. This is a new `anomaly cancellation' derivation of classical   geodesic motion (rather different from the principle of least action), and is a conceptual outcome of the paper even at the zeroth order. As explained, the physical meaning of the order $\lambda$ level is less clear not least due to the interpretation of the time parameter $s$. On the other hand, some kind of proper time parameter like this  would seem to be unavoidable it we want to `quantise' geodesic motion.

At a more practical level,  Section~\ref{sec:ex} computes the main elements of the formalism for some  important special cases: (a) the flat case but now in any coordinate system due to our geometric approach (here the differential calculus is strictly associative as usual), (b) the case of a compact Lie group such as $SU(2)=S^3$ computed in a left-invariant basis and (c) a Schwarzschild black hole background with its usual coordinates.  This provides a sanity check on the general results in earlier sections. Section~\ref{sec:app} considers applications of the formalism, focussing on the case where $M$ is spacetime and without an external potential. This section can  be understood directly from (\ref{kerrho1})-(\ref{kerrho2}) as derived in the preceding sections of the paper. We look at these operator geodesic equations and an Ehrenfest theorem for their expectation values. On the other hand, our `Heisenberg picture' flow on  $\CD(M)$ has a corresponding `Schr\"odinger picture' evolution on `wave functions' $\psi$ but now on spacetime with the Klein-Gordon operator in place of the spatial Laplacian, and with respect to this external geodesic time $s$, as a less conventional outcome of our approach. However, when the spacetime admits a  time-like Killing vector, we can restrict as for flat space in \cite{BegMa:geo} to (non-normalisable) modes of a fixed frequency $e^{-\imath\omega t}$ with respect to the preferred time direction. On such modes, the Klein-Gordon flow reduces to what we call `pseudo-quantum mechanics', which resembles ordinary quantum mechanics for wave functions defined on space  but still has evolution with respect to geodesic time $s$. Using this formalism around a Schwarzschild black hole, we look in Section~\ref{sec:dir} at an initial Gaussian bump wave function and see in detail how  this gets absorbed by the black hole through the emergence of modes created at the horizon that eventually replace it. At least in  examples of the type we looked at, the classical entropy of the probability density $\rho=|\psi|^2$ increases throughout this process.

Finally, Section~\ref{sec:gravatom} constructs exact stationary states for pseudo-quantum mechanics around a black hole, i.e. non-normalisable modes for the Klein-Gordon flow of the form
\[ \psi(s,t,x)=e^{-\imath {E_{KG} \over\hbar}s} \phi(t,x),\quad  \phi(t,x)=e^{-\imath\omega t}\psi_{E}(x)\]
for spatial eigenfunctions $\psi_E(x)$ which resemble those of a hydrogen atom of energy $E$ far from the event horizon. Here $\phi(t,x)$ is an exact solution of the Klein-Gordon equation of square-mass proportional to   $E_{KG}$. Even though the Klein-Gordon equation is 2nd order in $t$ rather than 1st order as for the usual Schr\"odinger equation, this is irrelevant for stationary modes provided we specify, say, negative frequencies by $\omega\ge 0$ as here. Considering the ordinary Klein-Gordon equation as an extension of actual quantum mechanics with respect to $t$ is not new, see for example \cite{Don} around a black hole, but we arrive at it differently and with  somewhat more detailed results.  The spectrum of the gravatom that we describe, in the sense of gravitationally bound quantum states\cite{gravatom}, is not quantised due to an open boundary at the horizon, but the radial wave functions are not unlike higher orbital  modes of a hydrogen atom, albeit with a fractal banding in probability density, i.e. crossing zero infinitely often approaching the horizon. 

Section~\ref{sec:pre} provides some preliminary elements of noncommutative geometry of the particular `quantum Riemannian' flavour \cite{BegMa} that we use (as opposed to Connes earlier approach \cite{Con} coming out of operator algebras). This grew out of the  {\em quantum spacetime hypothesis} that spacetime is better modelled as noncommutative due to quantum gravity effects \cite{DFR,Ma:pla,Hoo,MaRue},  as used in works such as \cite{BegMa:gra,Ma:sq, Ma:haw,LirMa}. Likewise, quantum geodesics \cite{Beg:geo,BegMa:geo,BegMa:cur, BegMa:geogra, LiuMa}  have been introduced as a way of formulating geodesics on such quantum spacetimes. For the intuitive picture here, the reader  should imagine a dust of particles each moving on geodesics and  then replace the flow of a density $\rho$ of such particles by the flow of a wave function $\psi$ such that $\rho=|\psi|^2$ as in quantum mechanics. At the density level, there are also similarities with optimal transport\cite{LotVel} and there could be applications to relativistic fluid dynamics as in \cite{Olt}, but when we work with complex wave functions $\psi$ the theory acquires a very different and more quantum-mechanics like character. This is not our topic in the present paper, however, where we rather apply the formalism to $\CD(M)$ and then transfer the flow of its elements to $L^2(M)$. The paper concludes in Section~\ref{sec:con} with some remarks about directions for further work. We note that a literature search since the preprint version  turned up \cite{Fan} where a probabilistic view with respect to an external time $s$ was also proposed, and \cite{Kon} where applying noncommutative geometry of some form to quantise geodesics was also proposed, albeit different from our approach.  

We work in units of $c=1$ and signature $-+++$ in the spacetime case. In what follows, we will  distinguish between the real coordinate vector fields $\del_\mu$ as locally defined elements of $\CD(M)$ and their image $p_\mu=\rho(\del_\mu)$ as the corresponding local momentum operators. In the classical limit $\lambda=0$ of this algebra, the $\del_\mu$ map to $p_\mu$ as the real classical locally-defined momentum of a single particle moving on a geodesic as explained above. This gives continuity with usual notations in physics, extended to $\CD(M)$  via noncommutative geometry. 
%We will also use  $\mu$ for the geodesic time scale parameter (of mass dimension) rather than $m$, as this is an effective mass that comes out of the interpretation rather than apriori attached to a massive scalar field. We resisted calling the geodesic time parameter $\tau$ for similar reasons. 

\section{Preliminaries}\label{sec:pre}

Here we recap some basic preliminaries from conventional Riemannian geometry in the notations we need, and elements of noncommutative geometry from \cite{BegMa,BegMa:geo}. 

\subsection{Background and notation}\label{sec:back}
In the general theory, we will write $\partial_a$ for a local-coordinate vector field on the manifold, whereas $\frac{\del}{\del x^a}$ will be a partial derivative as an operator when we later consider vector fields acting as $\lambda \frac{\del}{\del x^a}$ on wave functions (in the  Schr\"odinger representation $\rho$). This imaginary constant  $\lambda$ obeys $\lambda^*=-\lambda$ and in quantum mechanics has value $-\mathrm{i}\,\hbar$. We take it to be `small' in that we count orders of $\lambda$ and take lower orders to be more significant. 

By working to order $\lambda^2$ we mean  discarding $\lambda^3$ in geometric constructions on the manifold $M$. Vector fields here will typically be denoted $X,Y,Z$ and functions typically $f,h$ etc. and will be taken to have order zero. The real parameter $m$ has dimensions of mass, and we will similarly not count its order or make assumptions on its size. We take $g^{ab}$ to be a (possibly Lorentzian) Riemannian metric on the connected manifold $M$, and $\nabla$ to be its Levi-Civita connection with Christoffel symbols $\Gamma^a{}_{bc}$. Unless otherwise stated we assume that the vector fields $X,Y,Z$ and functions $f,h$ are real, though  $\CD(M)$ below will be taken as a complex algebra with a $*$-operation that picks out the real geometry as invariant under it. 

We will use a semicolon to denote covariant differentiation of tensors, e.g.
\[
H^a{}_{b;c} = H^a{}_{b,c} + H^d{}_{b} \, \Gamma^a{}_{dc} - H^a{}_{d} \, \Gamma^d{}_{bc} 
\]
where comma denotes partial differentiation. We repeat the semicolon for successive covariant differentiation, including previous derivative indices. For example the differential of $f_{,a}=f_{;a}$ is
\[
f_{,a;b} =f_{;a;b} = f_{,a,b} - f_{,c} \,\Gamma^c{}_{ab}\ .
\]
The curvature on 1-forms and vector fields is
\[
([\nabla_a,\nabla_b] \xi)_a=-R^d{}_{cab}\, \xi_d,\quad ([\nabla_a,\nabla_b] X)^d=R^d{}_{cab}\, X^c
\]
in the case of a coordinate basis where $[\del_a,\del_b]=0$. More generally, as the Levi-Civita connection is torsion free, we can write the Lie bracket of vector fields as 
\begin{align} \label{bt6v}
[Y,X]_{\rm Lie}=\nabla_Y X-\nabla_X Y.
\end{align}

We will also have recourse to the standard measure of integration
\[
\int f(x^1,\dots,x^n)\,\sqrt{|\det(g)|}\, \extd x^1\dots\extd x^n
\]
on a coordinate patch, where $g$ is the matrix $g_{ab}$ for the metric in the coordinate basis.  Finally, for our purposes, we consider a complex version of the algebra of differential operators.

\begin{definition}\label{DM} Let  $\CD(M)$ be the algebra generated by complex valued smooth functions $C^\infty(M)$ and complex smooth vector fields, with  commutation relations
\begin{align} \label{btcv}
[Y,X]=\lambda [Y,X]_{\rm Lie},\quad [X,f]=\lambda\,X(f)\ ,\quad [f,g]=0.
\end{align}
We also set
\begin{align} \label{btcy}
f.X=fX,
\end{align}
where $fX$ denotes the vector field given by multiplying a function and a vector field to get a vector field as usual. Moreover, $\CD(M)$ acts on $C^\infty(M)$ (and on $C^\infty_c(M))$ by $\rho(f)g=fg$ and $\rho(X)=\lambda X(g)$ for all vector fields $X$ and functions $f,g$. \end{definition}

Here $\lambda$ at this point could be any nonzero parameter. It simply scales our generators in such a way as to have a commutative algebra as $\lambda\to 0$, in line with our view of the algebra as a quantisation of a classical geometry. There is also a $\star$ operation when $\lambda$ is imaginary,  which we will discuss in Section~\ref{orf} and which is the case we will need.  Although we  take this as a definition of $\CD(M)$, we now outline its equivalence with the usual description of 
the algebra of finite-degree differential operators on a smooth connected manifold $M$. Here, differential operators of degree $\le n$ are linear maps $C^\infty(M)\to C^\infty(M)$ which in every coordinate chart on an open set $U$ of any atlas of the manifold can be written as a sum of terms of the form
\begin{equation}\label{diffop} 
v_{i_1,\dots,i_k}\, {\del\over\del x^{i_1}}\cdots {\del\over\del x^{i_k}}.
\end{equation}
Here, $x^{i_m}$ are local coordinates, $v_{i_1,\dots,i_k}\in C^\infty(U)$  and $0\le k\le n$. Where charts overlap, the different coordinate descriptions will be related via the chain rule since the operator is the same. We first observe that  such operators can be described in global terms as the action (in the usual way, without $\lambda$) of sums of terms of the form
\begin{equation}\label{polyvec} f_k X_1\cdots X_k\end{equation}
for smooth functions $f_k$, smooth vector fields $X_i$ and $0\le k\le n$. In one direction, this is obvious as any (in our case, complex) vector field can be expanded in local coordinates and all the coefficients moved to the left via the product  rule. Conversely, every differential operator $D$ in the usual sense can be obtained this way, see e.g. \cite[Thm 9.62]{Nes}. To see this, one can multiply $D$ by 1 as a partition of unity for a finite atlas of $M$ to convert the local form of type  (\ref{diffop}) into a global one of type (\ref{polyvec}). It means that the usual algebra of finite-degree differential operators can be identified with the image of $\rho$ in Definition~\ref{DM}. Next, we note that  both $\CD(M)$ and the usual algebra of differential operators are filtered algebras according to the degree and have `associated graded' algebras in which one works modulo smaller degree. For $\CD(M)$, this is ${\rm Sym}(M)$, the algebra of symmetric tensors (i.e., symmetric tensor products over $C^\infty(M)$ of the space ${\rm Vect}(M)$ of vector fields) because the commutators are of lower degree. But this is also the associated graded algebra of the usual algebra of differential operators via the symbol of the differential operator. The induced map at the associated graded level is therefore an isomorphism to its image and by homological algebra, so is $\rho$ itself (i.e., one can see this by looking at the top degree). Hence, $\CD(M)$ as we defined it is isomorphic to the usual definition of finite-degree differential operators.

This completes the background from classical geometry. For noncommutative geometry, we use an approach that works over an algebra $A$, in our case a $*$-algebra working over $\C$ (namely, we take $A=\CD(M)$). A `differential calculus' means to specify an $A$-$A$-bimodule $\Omega^1$ and a map $\extd: A\to \Omega^1$ that obeys the Leibniz rule and where every element of $\Omega^1$ is a finite sum of terms $a\extd b$ for $a,b\in A$. In principle this should be extended to an `exterior algebra' $(\Omega,\extd)$ of all differential forms, but there is always a `maximal prolongation' way to do this by applying $\extd$ to the degree 1 relations. In the $*$-algebra case we require that this extends to a $*$ operation on $\Omega$ as a graded-involution that commutes with $\extd$.  A left bimodule connection \cite{DVM,Mou,BegMa} on $\Omega^1$ (or similarly for some other bimodule) is a pair of maps
\[ \nabla:\Omega^1\to \Omega^1\tens_A\Omega^1,\quad \sigma:
\Omega^1\tens_A\Omega^1\to \Omega^1\tens_A\Omega^1\]
obeying the Leibniz rules
\[  \nabla(a\xi)=\extd a\tens \xi+ a\nabla(\xi),\quad \nabla(\xi a)=\sigma(\xi\tens \extd a)+ \nabla(\xi)a \] 
for all $\xi\in\Omega^1$ and with $a\in A$ acting on the right hand side on the nearest copy of $\Omega^1$. The map $\sigma$ if it exists is determined by $\nabla$, i.e. is not additional data but a restriction in $\nabla$. One can apply a right module map `right vector field' $\Omega^1\to A$ to the left factor to turn $\nabla$ into something more like a covariant derivative. One also has a right handed version of these conditions, a right bimodule connection. The goal of the paper from a mathematical perspective is to find as best we can such a natural differential calculus on $A=\CD(M)$. 

\subsection{The Schr\"odinger representation and quantum geodesics flows}
We consider the Hilbert space $\mathcal{H}=L^2(M)$ of square integrable functions on $M$, using the standard measure. 
The algebra $\CD(M)$  acts on $L^2(M)$ in a representation $\rho:\CD(M)\to {\rm Lin}(L^2(M))$ as possibly unbounded operators by
\[
\rho(f)(\psi)=f\,\psi,\quad \rho(X)(\psi)=\lambda \, X^a\,\frac{\partial \psi}{\partial x^a},
\]
for $\psi\in L^2(M)$, $f\in C^\infty(M)$ and a vector field $X$. We use the coordinate formula for the standard differentiation of a function in the direction of a vector field.
We use $\rho$ explicitly to avoid confusion with powers of $\lambda$. We extend this to time dependent wave function $\psi(s)\in L^2(M)$ for some external `time' parameter $s$ i.e.\
 $\psi\in E=L^2(M) \bar\tens C^\infty(\R)$, where the overline is to remind us that we do not mean here the algebraic tensor product. Although one could make a completed tensor product to make $E$ a Hilbert module, what we mean more precisely by this in the present context is smooth $L^2(M)$-valued functions on $\R$. The tensor notation is rather more convenient for the description of the algebraic side of the constructions, so we retain this as a notational device. We next  fix a 
 Hermitian operator  $\rh$ acting on a suitable domain of $L^2(M)$ as our Hamiltonian and presented as the image in the Schr\"odinger representation of an element $\ch\in \CD(M)$.
 
 We now recap how this data, familiar from quantum mechanics (but we will also apply it to $M$ spacetime) relates to quantum geodesics flows on an algebra $A$. We recall \cite{BegMa} that a 
{\em right  $A-B$ bimodule connection} means an $A-B$ bimodule $E$ (so one can multiply elements of $E$ by elements of $A$ from the left and of $B$ from the right) and linear maps
\[ \nabla_E: E\to E\tens_B\Omega^1_B,\quad \sigma_E:\Omega^1_A\tens_A E\to E\tens_{B}\Omega^1_B\]
such that the Leibniz rules
\[  \nabla_E(e b)=e\tens \extd b+ \nabla_E(e)b,\quad \nabla_E(a e)=\sigma_E(\extd a\tens e)+ a\nabla_E(e) \] 
hold for all $e\in E, a\in A, b\in B$. This is a `polarised' version of a right $A-A$ bimodule connection on $E$. In our case, $A=\CD(M)$ with a differential calculus $\Omega^1_A$ to be determined and  $B=C^\infty(\R)$ is the geodesic time parameter $s$ coordinate algebra with its classical differential calculus and  $E=L^2(M)\bar\tens C^\infty(\R)$  (or $C^\infty(\R, C^\infty_c(M))$ prior to completion of $C_c^\infty(M)$ to $L^2(M)$). As in \cite{BegMa:geo}, we make a right $A-B$ bimodule  connection
\begin{equation}\label{nablaE}
\nabla_E(\psi)=(\dot\psi + \lambda^{-1}\rh\,\psi)\tens \extd s
\end{equation}
at least for a suitable domain (such as  $\psi\in C^\infty(\R, C^\infty_c(M)))$, where dot denotes partial derivative with respect to $s$. The quantum geodesic flow of $\psi\in E$ is given by $\nabla_E\psi=0$, i.e. a version of Schr\"odinger's equation for the observer of the quantum geodesic. We also have 
\begin{align} \label{vib}
\sigma_E(\extd a\tens\psi)=\nabla_{E}(\rho(a)\,\psi)-\rho(a)\, \nabla_E(\psi)=\rho(\mathfrak{X}(\extd a))\psi\tens \extd s,
\end{align}
where $\mathfrak{X}:\Omega^1_A\to A$ is the geodesic velocity vector also to be determined and again for example $\psi\in C^\infty(\R, C_c(M)))$.  As in \cite{BegMa:geo}, the composite $\rho\circ\mathfrak{X}$ is determined by (\ref{nablaE}) as 
\begin{align} \label{ghu}
\rho(\extd a):=\rho(\mathfrak{X}(\extd a))=\lambda^{-1}[\rh,\rho(a)]
\end{align}
and amounts to an extension of the Schr\"odinger representation of $\extd a$ on $L^2(M)$, for $a\in \CD(M)$.
 We will focus on  Hamiltonian $ \rh$ defined by the Laplacian and an optional external real potential $V$, 
\[
\rh\, \psi={\lambda^2\over 2m}\Delta\psi  + V\,\psi,\quad \Delta\psi=g^{ab}\, \psi_{,a;b} =g^{ab}\, \psi_{,a,b}  - \Gamma^c  \psi_{,c}
\]
or equivalently by the element 
\[ \ch={1\over 2m} (g^{ab}\del_a\del_b-\lambda\Gamma^c\del_c)+V\in \CD(M),\]
where $\Gamma^c:=g^{ab}\Gamma^c{}_{ab}$.

All of this depends on defining the differential calculus on $\CD(M)$, at least to degree 1, for the notion of a connection to make sense. After that the main part of the details for a quantum geodesic in the above case amounts to extending the Schr\"odinger representation as in (\ref{rho}). This is our main focus in the paper, with a little more about the underlying noncommutative geometry in Section~\ref{sec:geo}. 

\subsection{The star operation} \label{orf}
In particular,  we use the Schr\"odinger representation to motivate the definition of a  $*$-operation on $\CD(M)$ as follows,  for $\lambda$ imaginary. For a function $f$ on the manifold $M$, we let $f^*$ be simply the complex conjugate of $f$. For a real vector field $X$ we set
 \[ X^*:=X+\lambda\, \mathrm{div}(X),\]
 where we use the  divergence defined by the connection, $\mathrm{div}(X)=X^a{}_{;a}$. This is needed for $*$ to correspond to the adjoint operator  in the representation in the sense explained in the introduction (i.e. for inner products with elements from $C^\infty_c(M)\subset L^2(M)$).  The same principle applies to products of vector fields and makes $\CD(M)$ into a $*$-algebra with $*$ corresponding to adjoints in the sense discussed. Moreover, the operators in  $\CD(M)$ leave the dense subset $C_c^\infty(M)$ of $L^2(M)$ invariant, as do their images under $*$. (This means that we have an example of an $O^*$ algebra in the sense of \cite{Sch}.) For example in degree 2:

\begin{lemma}
Let the operator $T$ be defined by $T(\psi)=\lambda^2\,M^{ij}\, \psi_{,i;j}$ where $M^{ij}$ is a matrix of real functions. Then
\[
T^*(\psi)=T(\psi) + \lambda\, M^{ij}{}_{;i}\,\psi_{,j} + \lambda^2\, M^{ij}{}_{;j}\,\psi_{,i} +M^{ij}{}_{;i;j}\psi  \ .
\]
\end{lemma}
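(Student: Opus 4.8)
The plan is to obtain $T^*$ straight from the defining relation of the Hilbert-space adjoint, $\langle\phi,T\psi\rangle=\langle T^*\phi,\psi\rangle$, where $\langle\phi,\psi\rangle=\int_M\bar\phi\,\psi\,\sqrt{|\det(g)|}\,\extd x^1\cdots\extd x^n$ is the $L^2$ inner product for the standard measure. The whole computation is two integrations by parts that push both covariant derivatives in $\psi_{,i;j}$ off $\psi$ and onto $\bar\phi$. The single tool I need is the covariant divergence theorem: for a vector field $V$ of compact support (or sufficient decay, deferring domain questions as the text already does), $\int_M V^a{}_{;a}\,\sqrt{|\det(g)|}\,\extd x^1\cdots\extd x^n=0$. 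This holds precisely because $V^a{}_{;a}=|\det(g)|^{-1/2}\,\del_a\big(|\det(g)|^{1/2}\,V^a\big)$ is an ordinary coordinate divergence against the Riemannian measure, so the factor $\sqrt{|\det(g)|}$ is exactly what makes covariant integration by parts close up with no stray Christoffel terms.

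First I would apply this to $V^j=\bar\phi\,M^{ij}\,\psi_{;i}$ (sum over $i$, free index $j$). Expanding $V^j{}_{;j}$ by the Leibniz rule and integrating to zero converts $\int\bar\phi\,M^{ij}\psi_{;i;j}$ into $-\int\big(\bar\phi_{;j}M^{ij}+\bar\phi\,M^{ij}{}_{;j}\big)\psi_{;i}$, stripping off the outer derivative at the cost of a term in $M^{ij}{}_{;j}$. A second application, now to $V^i=\big(\bar\phi_{;j}M^{ij}+\bar\phi\,M^{ij}{}_{;j}\big)\psi$ (sum over $j$, free index $i$), strips off the remaining derivative and lands everything on $\bar\phi$, producing the four terms $M^{ij}\bar\phi_{;j;i}$, $M^{ij}{}_{;i}\bar\phi_{;j}$, $M^{ij}{}_{;j}\bar\phi_{;i}$ and $M^{ij}{}_{;j;i}\bar\phi$.

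Three elementary observations then finish the proof. The leading term reassembles the undifferentiated operator after the final conjugation: because $\nabla$ is torsion-free the scalar Hessian is symmetric, $\psi_{,i;j}=\psi_{,j;i}$, so $M^{ij}\bar\phi_{;j;i}=M^{ij}\bar\phi_{;i;j}$ and $\lambda^2 M^{ij}\bar\phi_{;j;i}=\overline{T\phi}$. Since $M^{ij}$ and all its covariant derivatives are real, complex conjugation acts only on $\bar\phi$ and on the numerical prefactor, so I would read off $T^*$ by conjugating the whole identity. The prefactor's behaviour under conjugation is the one point deserving care and is the true source of the relative orders in $\lambda$ among the correction terms: $\lambda$ is imaginary, so $\overline\lambda=-\lambda$ while $\overline{\lambda^2}=\lambda^2$. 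This is exactly the mechanism that, in the companion vector-field computation, turns $\rho(X)^*$ into $\rho(X+\lambda\,\mathrm{div}(X))$ and pins down the signs here.

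I do not expect any genuine obstacle: the statement is a finite computation controlled entirely by the two divergence-theorem steps, with no convergence or curvature subtlety beyond bookkeeping. The only place to stay attentive is the ordering of the two covariant derivatives in the last term, $M^{ij}{}_{;j;i}$ versus $M^{ij}{}_{;i;j}$: for a non-symmetric $M^{ij}$ these differ by a Ricci-curvature contribution through the Ricci identity. I would therefore either symmetrise $M^{ij}$ at the outset, which is legitimate since only its symmetric part couples to the symmetric Hessian defining $T$, or keep the derivative order fixed throughout and invoke the Ricci identity at the single point where the two orders are compared.
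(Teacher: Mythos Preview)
Your integration-by-parts approach is correct and will produce the result, but it is a genuinely different route from the paper's. The paper proceeds algebraically: it factorises $M^{ij}=X^iY^j$ as a rank-one tensor, rewrites $T$ as the image of the $\CD(M)$ element $YX-\lambda\nabla_YX$, and then takes the $*$ purely in the algebra via $(YX-\lambda\nabla_YX)^*=X^*Y^*+\lambda(\nabla_YX)^*$, feeding in the already-established rule $X^*=X+\lambda\,\mathrm{div}(X)$ for vector fields; the general $M^{ij}$ follows by linearity. Your method is more elementary and self-contained, needing only the covariant divergence theorem, whereas the paper's exploits the $*$-algebra structure and reuses the vector-field case as a building block, which fits the surrounding narrative of the section.

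One point where you have tied yourself in a knot: you claim that conjugation of $\lambda$ versus $\lambda^2$ is ``the true source of the relative orders in $\lambda$ among the correction terms.'' It is not. The operator $T$ carries an overall real prefactor $\lambda^2$, and after your two integrations by parts every one of the four terms still carries that same $\lambda^2$; conjugation does nothing to the powers. The mismatched powers in the displayed formula (a bare $\lambda$ on the first correction and none on the last) are evidently slips and should all read $\lambda^2$, as your own computation will show. Do not contort your argument to reproduce them. Your remark on $M^{ij}{}_{;j;i}$ versus $M^{ij}{}_{;i;j}$ is well taken: since $\psi_{,i;j}$ is symmetric, only the symmetric part of $M$ contributes to $T$, so symmetrising at the outset is the clean fix.
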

\begin{proof} We prove this for $M^{ij}=X^i\, Y^j$ and then use linear combinations for general $M^{ij}$. First, for vector fields 
$X=X^i\,\del_i$ and $Y=Y^j\,\del_j$
\[
T(\psi)=(Y\,X-\lambda\nabla_Y X) \psi. 
\]
We then use  $(Y\,X-\lambda\nabla_Y X)^*=X^*Y^*+\lambda(\nabla_Y X)^*$. Moreover, $(\phi,\rho(XY)\psi)=(\phi,\rho(X)\rho(Y)\psi)=(\rho(X^*)\phi,\rho(Y)\psi)=(\rho(Y^*)\rho(X^*)\phi,\psi)=(\rho((XY)^*))\phi,\psi)$ holds automatically for all $\phi,\psi\in C^\infty_c(M)$. \end{proof}

Also note that in the case of a single vector field $X$, we can subtract half the divergence correction. Then $X+{\lambda\over 2}{\rm div}(X)$ is invariant under $*$. In the case of a local coordinate vector field $\del_i$, the self-adjoint version in the above sense is 
\[ \del_i + {\lambda\over 2} \Gamma^j{}_{ij}.\]
But in the case where $M$ is space, $\Gamma^j{}_{ij}={1\over 2}{\del\over\del x^i}\ln \sqrt{\det(g)}$ so that the image of the self-adjoint version under $\rho$ is $-\imath\hbar({\del\over\del x^i}+ {1\over 4}{\del\over\del x^i}\ln\sqrt{\det(g)})$, which agrees with the proposed momenta (5.31) in \cite{Witt}, except that in that context spacetime is globally hyperbolic and the spatial metric on each time-slice can vary in the time $t$.

\section{Differential calculus on $\CD(M)$} \label{sec:calc}

In this section, we construct a natural differential calculus on $\CD(M)$ to the extent possible such as to obtain Schr\"odinger's equation as a quantum geodesic flow, i.e. the method used in \cite{BegMa:geo} in the flat spacetime case. The idea there, and here, is to work backwards to arrive at what we propose as set of relations for $\Omega^1_{\CD(M)}$ up to and including order $\lambda^2$ and which define a first order calculus to order $\lambda$. Higher differential forms will not be needed but their relations are in principle implied by application of $\extd$ to give an exterior algebra again to order $\lambda$. One can formally declare that $\lambda^2=0$ and then consider this as an actual calculus c.f. \cite{BegMa:poi}, although we do not do so here. We make no claim as to the axioms obeyed at order $\lambda^2$  due to failure of Jacobi identities in Section~\ref{sec:jac} at that order.

\subsection{Centrally extended one forms on $M$} \label{buio}

We start with the differentials for $f\in C^\infty(M)\subset \CD(M)$.  For the chosen form of $\ch$, we calculate
\[
m\, [\rh,\rho(f)] ={\lambda^2\over 2} \big(g^{ij}\, f_{,i,j}  - g^{ij}\, f_{,k} \,\Gamma^k{}_{ij} \big)+  \lambda^2\, g^{ij}\, f_{,j} \,\tfrac{\partial}{\partial x^i}\ ,\quad 
\]
and hence from (\ref{ghu}), we have
\begin{align*}
m\, \rho(\extd f)= \lambda\, \big({1\over 2}
g^{ij}\, (f_{,i})_{;j}  + g^{ij}\, f_{,j} \,\tfrac{\partial}{\partial x^i}
\big)\ ,\quad
m\, [\rho(\extd h),\rho(f)] = \,\lambda\,g^{ij}\, h_{,j} \,f_{,i}
\end{align*}
for all $f,h\in C^\infty(M)$. As $[\extd h,f]$ should be a 1-form on $\CD(M)$, we adjoin an extra 1-form $\theta'\in \Omega^1_{\CD(M)}$ which commutes with elements of $\CD(M)$ and obeys
\[
 \sigma_E(\theta'\tens\psi) =\psi\tens \extd s.
\]
Then we set
\begin{align*}
m\, [\extd h,f] = \,\lambda\,g^{ij}\, h_{,j} \,f_{,i}\,\theta',\quad \rho(\theta')=1
\end{align*}
which then has the right image under $\rho$.

We still have to be careful about defining a product between 1-forms and functions as part of $\Omega_{\CD(M)}$, rather than just a commutation relation, which we do symmetrically.  
Thus, for such a product $\bullet$ on the calculus which is consistent with the representation, we look at more general 1-forms than $\extd f$
while being careful about this lack of commutation. For $\eta_p\in C^\infty(M)$ we set $\widehat\eta=\frac12(\eta_p\bullet\extd x^p+\extd x^p\bullet \eta_p)$. Then
\begin{align} \label{bas}
2m\,\rho(\widehat\eta) =  2m\eta_p \rho(\extd x^p) + m [ \rho(\extd x^p), \eta_p] =\lambda\, \big(
g^{ij}\, \eta_{i;j}  + 2\,g^{ij}\, \eta_{j} \,\tfrac{\partial}{\partial x^i}\big)\ .
\end{align}
We can now define the centrally extended 1-forms
$\widehat \Omega^1(M)$ to consist of $\widehat \eta+f\,\theta'$, where $\eta\in\Omega^1(M)$ and $f\in C^\infty(M)$. The product is given by 
\begin{align} \label{repute}
f\bullet \widehat \eta = \widehat{f\,\eta} - {\lambda\over 2m} g^{ij}\, f_{,i}\,\eta_j \,\theta'\ ,\quad 
\widehat\eta\bullet  f = \widehat{f\,\eta} + {\lambda\over 2m} g^{ij}\, f_{,i}\,\eta_j\,\theta'\ ,
\end{align}
where $f\,\eta$ is the usual classical product of a function $f$ and 1-form $\eta$. This gives a commutator which is consistent with the formula above,
\begin{align} \label{firw}
m\, [\widehat \eta,f] = \lambda\,g^{ij}\, \eta_j \,f_{,i}\,\theta'.
\end{align}
This also means that $\theta'$ is in the algebra generated by functions and their differentials provided $\lambda\ne 0$, as for the flat case in \cite{BegMa:geo}. To see this, note that since $\Omega^1(M)$ is finitely generated and projective as an $A=C^\infty(M)$ module, there is is a `dual basis' ${\rm coev}=\sum_\alpha X_\alpha\tens_A \xi_\alpha$ with vector fields $X_\alpha$ and 1-forms $\xi_\alpha$ obeying $X_\alpha(\xi_\beta)=\delta_{\alpha,\beta}$. Writing $\xi_\alpha=\sum_{I}c_{\alpha I}\extd f_{\alpha I}$ for some functions $c_{\alpha I}$ and $f_{\alpha I}$ and letting $\eta_{\alpha I}$  be the 1-form corresponding to the vector field $X_\alpha c_{\alpha I}$ via the metric isomorphism $\#$, we have ${\rm coev}=\sum_{\alpha,I} (\eta_{\alpha I})^\#\tens_A \extd  f_{\alpha I}$. Applying  (\ref{firw}) to pairs $(\eta_{\alpha I}, f_{\alpha I})$ and summing over $(\alpha,I)$ gives the evaluation of $\rm coev$, which as constant, in front of $\theta'$. This expresses the latter in the required form. Moreover, having specified (\ref{firw}), we will omit writing the bullet product on the left in  (\ref{repute}) and consider the right bullet product there as defined by this commutator. The differential on $\CD(M)$ is given by setting $\extd f=\widehat{\extd f}$. This has a standard central extension form as in \cite[Prop.~1.22]{BegMa} except that we have chosen to do the product symmetrically. 

We observe that the 1-form
\begin{align}\label{keromega1}
m\,\widehat{\xi} -  g^{ij}\,\xi_j\, \theta'\,\partial_i -{ \lambda\over 2} g^{ij}\,\xi_{i;j}\,\theta'
\end{align}
is in the kernel of the Schr\"odinger representation $\rho$ for all $\xi\in\Omega^1_M$. We regard $\del_i$ (locally) as  an element of  $\CD(M)$ or 0-form. In particular, the elements
\begin{align*}
m\,\widehat{\extd x^i} -  g^{ij}\, \theta'\,\partial_j + {\lambda\over 2}  g^{pq}\,   \Gamma^i{}_{pq} \,\theta'
\end{align*}
are in the kernel to order $\lambda^2$. This means that the representation alone cannot uniquely determine the relations of $\Omega_{\CD(M)}$ even in degree 1; we need additional information.

\subsection{Commutator of differentials of functions and vector fields} \label{vhc}
We next  find the commutator $[\widehat\xi  ,X] $ for $\xi\in\Omega^1(M)$ and a vector field $X$. First we apply the representation and calculate
\begin{align} \label{bhg}
m\, [\rho(\widehat\xi & ),\rho(X)] =  -{ \lambda^2\over 2} X^a\,(g^{ij}\, \xi_{i;j} )_{,a}   + \lambda^2\,g^{ip}\, \big(g^{bj}\, \xi_{j}\, X^q{}_{;b}\, g_{pq}  -   \xi_{p;a} \, X^a\big)\,\partial_i  \cr
&=  - {\lambda^2\over 2} X^a\,g^{ij}\, (\xi_{i;j} )_{;a}   + \lambda^2\,g^{ip}\, \big(g^{bj}\, \xi_{j}\, X^q{}_{;b}\, g_{pq}  
-   \xi_{p;a} \, X^a\big)\,\partial_i     \ .
\end{align}
In addition, the relation (\ref{btcy}) gives
\begin{align} \label{hjs}
m\, [ f X,\widehat\xi]-m\, f\, [X,\widehat\xi] = m\, [f,\widehat\xi]\, X =-\lambda\,g^{ij}\xi_i\,f_{,j}\theta'\, X.
\end{align}
Using (\ref{bas}), we have
\begin{align*} 
m\,\rho(\widehat{ \nabla_X \xi  })  ={\lambda\over 2} 
g^{ij}\, (X^a\,\xi_{i;a})_{;j}  + \lambda g^{ip}\, X^a\,\xi_{p;a} \,\tfrac{\partial}{\partial x^i}\ ,
\end{align*}
and from this we propose the following to satisfy both (\ref{bhg}) and (\ref{hjs}):
\begin{align} \label{orp45}
m\, [X,\widehat\xi &] = m\, \lambda\,(\widehat{ \nabla_X \xi  })  - \lambda\,\theta'\, (g^{ij}\,\xi_i\, \nabla_j X) 
-{\lambda^2\over 2} \theta'\big(X^a\, \xi_p\, g^{pq}\, R_{qa} + g^{ij}\, X^a{}_{;j}\, \xi_{i;a}
\big)
\end{align}

\begin{proposition} \label{geid7}
The commutation relation in (\ref{orp45}) preserves the star operation.
\end{proposition}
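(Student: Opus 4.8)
The plan is to use that the $*$-operation is a conjugate-linear anti-homomorphism, so that $[a,b]^*=-[a^*,b^*]$, apply $*$ to both sides of (\ref{orp45}), and check that the two results agree. The $*$-data I need is read off from Section~\ref{orf}: $\mu$ is real and central, $\lambda^*=-\lambda$ since $\lambda$ is imaginary, real functions and the metric components $g^{ij}$ are $*$-invariant, and $\widehat\eta^*=\widehat\eta$ for any real classical $1$-form $\eta$ because $\widehat{\phantom{\eta}}$ was defined symmetrically. A useful preliminary is to observe that $(\theta')^*=\theta'$ is actually \emph{forced} by demanding that the function--form commutator (\ref{firw}) be $*$-compatible, and I would record this first. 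The one genuinely non-trivial ingredient is $X^*=X+\lambda\,\mathrm{div}(X)$ for a real vector field $X$, which injects order-$\lambda$ corrections into both sides that must conspire correctly.

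Starting from the left, $(\mu\,[X,\widehat\xi])^*=-\mu\,[X^*,\widehat\xi]=-\mu\,[X,\widehat\xi]-\mu\lambda\,[\mathrm{div}(X),\widehat\xi]$. The first term reproduces minus the right-hand side of (\ref{orp45}) upon substituting the relation itself, while the second is evaluated from (\ref{firw}) with $f=\mathrm{div}(X)=X^a{}_{;a}$, contributing an order-$\lambda^2$ term
\[
\lambda^2\,g^{ij}\,\xi_j\,(X^a{}_{;a})_{;i}\,\theta'.
\]
I would then take $*$ of the right-hand side of (\ref{orp45}) term by term: the $\widehat{\nabla_X\xi}$ term is self-adjoint so only the explicit $\lambda$ flips sign; the $\theta'\,(g^{ij}\xi_i\nabla_j X)$ term involves the real vector field with components $V^k=g^{ij}\xi_i X^k{}_{;j}$, whose star $V^*=V+\lambda\,\mathrm{div}(V)$ again sign-flips the leading part and leaves an order-$\lambda^2$ remainder
\[
\lambda^2\,\big(g^{ij}\,\xi_i\,X^k{}_{;j}\big)_{;k}\,\theta';
\]
and the explicit $\lambda^2$ Ricci-plus-second-derivative term is $*$-invariant. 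At order $\lambda$ the two computations agree automatically through the uniform sign flip.

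The content lies in matching the order-$\lambda^2$ pieces, and it collapses to a single scalar identity. After the symmetric term $g^{ij}\,X^a{}_{;j}\,\xi_{i;a}$ cancels against the piece $g^{ij}\,\xi_{i;k}\,X^k{}_{;j}$ produced by metric-compatibly expanding $\big(g^{ij}\xi_i X^k{}_{;j}\big)_{;k}$, what remains to be verified is
\[
g^{ij}\,\xi_i\,X^k{}_{;j;k}-g^{ij}\,\xi_j\,(X^a{}_{;a})_{;i}=X^a\,\xi_p\,g^{pq}\,R_{qa},
\]
which in raised-index form reads $\xi^i\big(\nabla_k\nabla_i X^k-\nabla_i\nabla_k X^k\big)=\xi^i X^c R_{ic}$. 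This is exactly the Ricci identity for the commutator of covariant derivatives on $X$, using the paper's curvature convention $([\nabla_a,\nabla_b]X)^d=R^d{}_{cab}X^c$ together with antisymmetry of the Riemann tensor in its last two indices to produce $R_{ic}$ on contraction. I expect this curvature computation to be the main obstacle, not the bookkeeping of sign flips: the whole point is that the two independent divergence corrections — one from $X^*$ on the left, one from $V^*$ inside the $\theta'$-term on the right — differ precisely by the Ricci tensor.

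Finally I would note the origin of the coefficient: the $-\tfrac{\lambda^2}{2}\theta'\,X^a\,\xi_p\,g^{pq}\,R_{qa}$ term in (\ref{orp45}) is the unique contribution that is simultaneously order $\lambda^2$ and $*$-even, so in the comparison ``$-$RHS versus $(\mathrm{RHS})^*$'' it enters with a relative factor $2$. Consequently checking $*$-invariance is tantamount to confirming that this Ricci term has exactly the coefficient $\tfrac12$ needed to absorb the curvature. I would therefore present the argument as: (i) establish $(\theta')^*=\theta'$ from (\ref{firw}); (ii) reduce $*$-invariance of (\ref{orp45}) to the scalar identity above; (iii) invoke the Ricci identity, keeping a careful eye on the index placement of $R_{qa}$, since the sign of that contraction is precisely what pins down the factor $\tfrac12$.
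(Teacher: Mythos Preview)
Your proposal is correct and follows essentially the same route as the paper. Both reduce the $*$-compatibility of (\ref{orp45}) to a single order-$\lambda^2$ scalar identity, namely the Ricci identity $\xi^i\big(X^k{}_{;i;k}-X^k{}_{;k;i}\big)=\xi^i X^a R_{ia}$, after the $g^{ij}X^a{}_{;j}\xi_{i;a}$ piece cancels against the $\xi_{i;k}$ part of $\mathrm{div}(g^{ij}\xi_i\nabla_j X)$; the paper reaches the same endpoint via the intermediate equations (\ref{orp2})--(\ref{orp5}), expanding $[\nabla_j X,g^{ij}\xi_i]$ and $\mathrm{div}(\nabla_j X)$ explicitly rather than packaging them as $\mathrm{div}(V)$ as you do.
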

\begin{proof} 
For real $X$ and $\xi$ we apply $*$ to this to find, on the assumption that the commutators are respected by the star operation,
\begin{align} \label{orp2}
m\, [\widehat\xi ,X+\lambda\,\mathrm{div}(X)] &=  -m\, \lambda\,(\widehat{ \nabla_X \xi  })  + \lambda\,\theta'\, (\nabla_j X)\, g^{ij}\,\xi_i
 + \lambda^2\,\theta'\, g^{ij}\,\xi_i\, \mathrm{div}(\nabla_j X)   \cr
&\quad -{\lambda^2\over 2} \theta'\big(X^a\, \xi_p\, g^{pq}\, R_{qa} + g^{ij}\, X^a{}_{;j}\, \xi_{i;a}
\big).
\end{align}
So we require to show
\begin{align} \label{orp3}
m\, [\widehat\xi ,\lambda\,\mathrm{div}(X)] &=  \lambda\,\theta'\, [\nabla_j X, g^{ij}\,\xi_i]  + \lambda^2\,\theta'\, g^{ij}\,\xi_i\, \mathrm{div}(\nabla_j X)   \cr
&\quad
- \lambda^2\, \theta'\big(X^a\, \xi_p\, g^{pq}\, R_{qa} +  g^{ij}\, X^a{}_{;j}\, \xi_{i;a}
\big)
\end{align}
and this is equivalent to
\begin{align} \label{orp5}
g^{ij}\,\xi_i\, \mathrm{div}(X)_{,j}&=  -   X^a{}_{;j}\, g^{ik}   \, \Gamma^j{}_{ak}    \,\xi_i + g^{ik}\,\xi_i\, \mathrm{div}(\nabla_k X)   -X^a\, \xi_p\, g^{pq}\, R_{qa} \cr
&= g^{ik}\,\xi_i\, X^a{}_{;k;a}   -X^a\, \xi_p\, g^{pq}\, g^{ik}\, R_{iqka}  \cr
&= g^{ik}\,\xi_i\, \big(X^a{}_{;k;a}   -X^b\,  g^{nm}\, R_{nkmb}  \big) \cr
&= g^{ik}\,\xi_i\, \big(X^a{}_{;k;a}   + X^b\,  g^{na}\, R_{nbka}  \big), 
\end{align}
which holds as required.\end{proof}

\subsection{Commutator of functions and differentials of vector fields}
From our previous calculations we have an immediate result to order $\lambda^2$

\begin{proposition} \label{geid}
We have
\begin{align*}
m\, [\extd X,f &] = m\, \lambda\,(\reallywidehat{X^a{}_{;i } \, f_{,a} \,\extd x^i})  + \lambda\,\theta'\, (g^{ij}\, f_{,i}\, \nabla_j X) 
+   {\lambda^2\over 2} \theta'\big(X^a\, f_{,p}\, g^{pq}\, R_{qa} + g^{ij}\, X^a{}_{;j}\, f_{,i;a}   
\big)
\end{align*}
and this preserves the star operation.
\end{proposition}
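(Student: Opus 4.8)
The plan is to derive this relation from the already-established commutator (\ref{orp45}) by applying the exterior derivative to the defining cross-relation $[X,f]=\lambda\,X(f)$ of $\CD(M)$ in (\ref{btcv}), so that no fresh geometric input is needed. First I would use the graded Leibniz rule for $\extd$ with respect to the bimodule product $\bullet$, writing $\extd(X\bullet f)=(\extd X)\bullet f+X\bullet\extd f$ and $\extd(f\bullet X)=(\extd f)\bullet X+f\bullet\extd X$, and subtracting to obtain the identity
\[
\extd[X,f]=[\extd X,f]+[X,\widehat{\extd f}],
\]
where I have used $\extd f=\widehat{\extd f}$. Since $[X,f]=\lambda\,X(f)$ with $X(f)=X^a f_{,a}$ a function, the left-hand side is simply $\lambda\,\widehat{\extd(X(f))}$, and rearranging gives $[\extd X,f]=\lambda\,\widehat{\extd(X(f))}-[X,\widehat{\extd f}]$.

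Next I would substitute (\ref{orp45}) with the particular choice $\xi=\extd f$, so that $\xi_i=f_{,i}$, $\xi_{i;a}=f_{,i;a}$ and $\nabla_X\xi=\nabla_X\extd f$. The three $\theta'$-terms of $[X,\widehat{\extd f}]$ then reproduce, after the sign change from moving the bracket across, exactly the $\lambda\,\theta'$ and $\lambda^2\,\theta'$ contributions claimed in the Proposition. The remaining task is the $\widehat{(\cdots)}$-part, which is $\lambda\big(\widehat{\extd(X(f))}-\widehat{\nabla_X\extd f}\big)$. Computing components, $\extd(X(f))$ has $i$-component $X^a{}_{;i}f_{,a}+X^a f_{,a;i}$ while $(\nabla_X\extd f)_i=X^a f_{,i;a}$, so the difference collapses to $X^a{}_{;i}f_{,a}$ precisely because the Hessian is symmetric, $f_{,a;i}=f_{,i;a}$. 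This yields the first term $\mu\,\lambda\,\widehat{X^a{}_{;i}f_{,a}\,\extd x^i}$ and completes the formula.

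For the statement that the relation preserves the $*$-operation, I would avoid repeating the curvature manipulation of Proposition~\ref{geid7} and instead argue structurally: the relation was obtained by applying $\extd$ to $[X,f]=\lambda\,X(f)$, which is itself $*$-compatible for real $X$ and real $f$ (using $X^*=X+\lambda\,\mathrm{div}(X)$, $\bar\lambda=-\lambda$, and that $X(f)$ is real), together with (\ref{orp45}), which is $*$-compatible by Proposition~\ref{geid7}. Since $\extd$ commutes with $*$ in a $*$-calculus, the derived relation is a $*$-compatible combination of $*$-compatible relations and hence preserves the $*$-operation.

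The computation is essentially bookkeeping, so the main thing to watch is consistency of orders: the Leibniz rule for $\extd$ is exact, but (\ref{orp45}) holds only to order $\lambda^2$, so I would retain all three terms there and simply discard any $\lambda^3$ contribution generated when the product $\bullet$ is applied. The only genuine point to verify carefully is the cancellation $X^a(f_{,a;i}-f_{,i;a})=0$, which is where torsion-freeness of the Levi-Civita connection enters; were there torsion, an extra term would survive and the clean first term would be spoiled.
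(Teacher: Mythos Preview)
Your derivation of the formula is correct and follows exactly the route the paper indicates: differentiate the relation $[X,f]=\lambda\,X^i f_{,i}$ via the graded Leibniz rule, then substitute the already-established commutator (\ref{orp45}) with $\xi=\extd f$. The cancellation $X^a(f_{,a;i}-f_{,i;a})=0$ via symmetry of the Hessian is precisely the right observation.

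For the $*$-compatibility, however, your approach differs from the paper's. The paper verifies it by an explicit computation: writing $(\extd X)^*=\extd(X^*)=\extd X+\lambda\,\extd\,\mathrm{div}(X)$ for real $X$, it shows directly that
\[
\mu\big([\extd X,f]+[\extd X,f]^*\big)=\lambda\,\mu\,[f,\extd\,\mathrm{div}(X)],
\]
and then checks that both sides of the proposed formula satisfy this, which requires expanding the $\theta'$-terms and using the divergence/Ricci identity $X^a{}_{;j;a}=X^a{}_{;a;j}+X^b R_{jb}$. Your structural argument --- that the relation is assembled from $*$-compatible ingredients (the algebra relation $[X,f]=\lambda X(f)$, the relation (\ref{orp45}) via Proposition~\ref{geid7}, and $\extd$ commuting with $*$) via $*$-equivariant operations --- is valid and more economical; it avoids repeating a curvature computation already absorbed into Proposition~\ref{geid7}. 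The paper's explicit check has the virtue of serving as an independent consistency test of the formula, whereas your argument trades that redundancy for brevity. Both are sound.
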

\begin{proof} We use Section~\ref{vhc} and differentiating the relation $[X,f]=\lambda\, X^i\,  f_{,i}$.
To check the star property we need to show that, for real $f,X$
\begin{align} \label{jdi}
m([\extd X,f] + [\extd X,f]^*) =\lambda\,m\,[f,\extd \, \mathrm{div}(X)]. 
\end{align}
The LHS of (\ref{jdi}) is
\begin{align*}
&\lambda\,\theta'\,[g^{ij}\,f_{,i} ,\nabla_j X] - \,\lambda^2\,\theta'\, g^{ij}\,f_{,i} \, \mathrm{div}(\nabla_j X) 
+  \lambda^2\, \theta'\big(X^a\, f_{,p}\, g^{pq}\, R_{qa} + g^{ij}\, X^a{}_{;j}\, f_{,i;a}   \big) \cr
&=-\lambda^2\,\theta'\, X^a{}_{;j}\, (g^{ij}\,f_{,i})_{,a}
- \lambda^2\,\theta'\, g^{ij}\,f_{,i} \, \mathrm{div}(\nabla_j X) 
+  \lambda^2\, \theta'\big(X^a\, f_{,p}\, g^{pq}\, R_{qa} + g^{ij}\, X^a{}_{;j}\, f_{,i;a}   \big) \cr
&=-\lambda^2\,\theta'\, X^a{}_{;j}\, (-g^{ik}\,f_{,i}\,\Gamma^j{}_{ka}  +    g^{ij}\,f_{,i;a}   )
- \,\lambda^2\,\theta'\, g^{ij}\,f_{,i} \, \mathrm{div}(\nabla_j X) 
+   \lambda^2\, \theta'\big(X^a\, f_{,p}\, g^{pq}\, R_{qa} + g^{ij}\, X^a{}_{;j}\, f_{,i;a}   \big) \cr
&=
-  \lambda^2\,\theta'\, g^{ij}\,f_{,i} \,  X^a{}_{;j;a}
+  \lambda^2\, \theta'   X^a\, f_{,i}\, g^{ij}\, R_{ja}   \cr
&=
- \lambda^2\,\theta'\, g^{ij}\,f_{,i} \,  X^a{}_{;a;j} ,
\end{align*}
which is equal to the RHS as required. \end{proof}

\subsection{The form of commutator of vector fields and their differentials}
\begin{proposition} \label{gup} The commutation relations (\ref{btcv}) and (\ref{bt6v}) imply  
commutation relations for $\extd X$ of the form
\[
[    Y  , \extd X  ] =\lambda\, \extd(\nabla_Y X)+\lambda\, P(X,Y)
\]
where $P(X,Y)=P(Y,X)$. Assuming associativity to order $\lambda$, the 
relation $f.X=(fX)$ implies to order $\lambda$
\begin{align*}
  \lambda\, P(f X,Y) - \lambda\,  f P(X,Y)
  &=  - \lambda\,(\reallywidehat{ Y^a{}_{;i } \, f_{,a} \,\extd x^i  })  \, X -  \lambda m^{-1}\,\theta'\, (g^{ij}\, f_{,i}\, \nabla_j Y) \, X
   -\lambda\,  \extd f\, \nabla_X Y 
\end{align*}
\end{proposition}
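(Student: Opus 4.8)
The plan is to proceed in two stages: first pin down the \emph{form} of the commutator from the algebra relations, then extract the constraint on $P$ from the module relation $f.X=fX$. For the form, I would apply $\extd$ to the relation $[Y,X]=\lambda[Y,X]_{\rm Lie}$ coming from (\ref{btcv}) and (\ref{bt6v}). Writing $[Y,X]=YX-XY$ and using the Leibniz rule on each product, the result regroups as
\[ [Y,\extd X]+[\extd Y,X]=\lambda\,\extd(\nabla_Y X-\nabla_X Y). \]
Since $\CD(M)$ is commutative at $\lambda=0$, every such commutator is $O(\lambda)$, so I may write $[Y,\extd X]=\lambda\,\extd(\nabla_Y X)+\lambda\,P(X,Y)$ for some bilinear $P$. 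The antisymmetric (in $X\leftrightarrow Y$) part of the reference term $\extd(\nabla_Y X)$ already reproduces $\tfrac12\extd[Y,X]_{\rm Lie}$, so substituting this ansatz and its $X\leftrightarrow Y$ swap into the displayed identity leaves precisely $P(X,Y)-P(Y,X)=0$. Hence consistency with the algebra relations holds if and only if $P$ is symmetric.

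For the constraint, the plan is to compute $[Y,\extd(fX)]$ in two ways and equate them. On one side I use the form directly with $fX$ in place of $X$, together with $\nabla_Y(fX)=(Yf)X+f\nabla_Y X$; this introduces $P(fX,Y)$. On the other side I differentiate the module relation, $\extd(fX)=(\extd f)X+f\,\extd X$, and expand with the derivation property $[Y,AB]=[Y,A]B+A[Y,B]$ to get
\[ [Y,\extd(fX)]=[Y,\extd f]\,X+(\extd f)[Y,X]+[Y,f]\,\extd X+f[Y,\extd X]. \]
Into this I substitute the order-$\lambda$ commutators already available: $[Y,\extd f]$ from (\ref{orp45}) with $X\to Y$, $\xi=\extd f$; the exact $[Y,X]=\lambda(\nabla_Y X-\nabla_X Y)$; and $[Y,f]=\lambda\,Yf$ from (\ref{btcv}); while $f[Y,\extd X]=\lambda f\,\extd(\nabla_Y X)+\lambda f\,P(X,Y)$ by the form. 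Equating the two expressions and isolating $\lambda P(fX,Y)-\lambda f P(X,Y)$ causes the $f\,\extd(\nabla_Y X)$ terms to cancel.

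The remaining reduction is algebraic. The one non-routine step is the combination $\lambda\,\widehat{\nabla_Y\extd f}\,X-\lambda\,\extd(Yf)\,X$: I would compute the components of $\nabla_Y\extd f-\extd(Yf)$ and, using symmetry of the Hessian $f_{,a;i}=f_{,a,i}-f_{,c}\Gamma^c{}_{ai}$ together with $Y^a{}_{;i}=Y^a{}_{,i}+Y^b\Gamma^a{}_{bi}$, recognize that they equal $-Y^a{}_{;i}f_{,a}$, producing the first target term $-\lambda\,(\widehat{Y^a{}_{;i}f_{,a}\extd x^i})\,X$. The $\theta'$-piece of $[Y,\extd f]$ yields the second target term directly, the two terms proportional to $\extd X$ (namely $[Y,f]\,\extd X$ and $-\lambda(Yf)\,\extd X$) cancel, and $(\extd f)[Y,X]-\lambda(\extd f)\nabla_Y X$ collapses to $-\lambda\,\extd f\,\nabla_X Y$, the third target term.

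The main obstacle I expect is bookkeeping rather than concept: I must keep the left/right bimodule placement correct throughout (products such as $\widehat\eta\,X$ and $\extd f\,\nabla_X Y$ are order-sensitive), and consistently discard the $O(\lambda^2)$ corrections generated whenever hatting or reordering acts on a factor already carrying a $\lambda$ — this is exactly where the hypothesis ``associativity to order $\lambda$'' is used. The only genuinely delicate identity is the Hessian-symmetry computation collapsing $\nabla_Y\extd f-\extd(Yf)$ to $-Y^a{}_{;i}f_{,a}$; the rest is careful but routine expansion.
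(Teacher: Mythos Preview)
Your proposal is correct and follows essentially the same route as the paper. The only cosmetic difference is in the last reduction: the paper rewrites $[Y,\extd f]-\lambda\,\extd(Yf)$ as $-[\extd Y,f]$ (by differentiating $[Y,f]=\lambda\,Yf$) and then quotes Proposition~\ref{geid}, whereas you compute $[Y,\widehat{\extd f}]$ directly from (\ref{orp45}) and subtract $\widehat{\extd(Yf)}$ via the classical identity $\nabla_Y\extd f-\extd(Yf)=-Y^a{}_{;i}f_{,a}\,\extd x^i$; since Proposition~\ref{geid} is itself derived from (\ref{orp45}), the two are the same computation.
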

\begin{proof} We have $[Y,X]=\lambda(\nabla_Y X-\nabla_X Y)$, and applying the derivation $\extd$ gives
\[
[Y,\extd X] - \lambda\, \extd(\nabla_Y X) = [X,\extd Y] - \lambda\, \extd(\nabla_X Y) 
\]
and we label this $\lambda\, P(X,Y)$. 
Next, $\extd( f X)=\extd f.X+f \extd X$ and then, assuming associativity to order $\lambda$ in what follows
\begin{align*}
[Y,\extd(fX)] &= [Y,\extd\,f]\,X+  \extd f\, [Y,X]+[Y,f]\,\extd X+f [Y,\extd X]+O(\lambda^2)
\end{align*}
which gives
\begin{align*}
\lambda\, \extd(\nabla_Y (fX))+\lambda\, P(fX,Y)&= [Y,\extd f]\,X+  \extd f\, [Y,X]+[Y,f]\,\extd X+\lambda\,f \extd(\nabla_Y X)+\lambda\, f P(X,Y)
\end{align*}
Now
\begin{align*}
\lambda\, \extd(\nabla_Y (fX))  =\lambda\, \extd\big(Y(\extd f)\, X+ f,\nabla_Y (X) \big)  
\end{align*}
so we get
\begin{align*}
  \lambda\, P(f X,Y) - \lambda\, fP(X,Y)&=\big( [Y,\extd f]\, - \lambda\, \extd(Y(\extd f))\big)\, X -\lambda\,  \extd f\, \nabla_X Y \cr
  &= -[\extd Y, f]\,  X -\lambda\,  \extd f\, \nabla_X Y  \ . 
\end{align*}
giving the answer. \end{proof}

\begin{proposition} \label{real5}
The reality condition $[    Y  , \extd X  ]^*=-[    Y^*  , \extd X^*  ]$ for real vector fields $X,Y$, assuming that $\extd(X^*)=(\extd X)^*$
and using $X^*=X+\lambda\,\mathrm{div}(X)$ for real $X$, is that for real $X,Y$ (we name the expression $N(X,Y)$ to use it later)
\begin{align*}
N(X,Y)& = P(X,Y)   -   P(X,Y)^* =  \lambda\,m^{-1}\,\theta'\, g^{ij}\, \mathrm{div}(Y)_{,i}\, \nabla_j X 
 + \lambda\,m^{-1}\,\theta'\, g^{ij}\,\mathrm{div}(X)_{,i}\, \nabla_j Y \cr
 &\quad + \lambda\big( \extd\big( X^q\, Y^p\, R_{pq} + Y^p{}_{;q} X^q{}_{;p}\big) +\extd x^i\big(
 Y^p{}_{;i} \mathrm{div}(X)_{;p} +  X^p{}_{;i} \mathrm{div}(Y)_{;p} 
 \big)\big)
\end{align*}
\end{proposition}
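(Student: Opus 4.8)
The plan is to read the claimed formula for $N(X,Y)$ as the condition extracted from imposing $*$-compatibility on the ansatz
\[
[Y,\extd X]=\lambda\,\extd(\nabla_Y X)+\lambda\,P(X,Y)
\]
of Proposition~\ref{gup}, and to reduce everything to commutators already computed in (\ref{orp45}) and Proposition~\ref{geid}. First I would compute the left-hand side $[Y,\extd X]^*$ by applying $*$ to the ansatz. Using $\lambda^*=-\lambda$, the assumption $(\extd a)^*=\extd(a^*)$, and the fact that $\nabla_Y X$ is a real vector field so that $(\nabla_Y X)^*=\nabla_Y X+\lambda\,\mathrm{div}(\nabla_Y X)$, this gives
\[
[Y,\extd X]^*=-\lambda\,\extd(\nabla_Y X)-\lambda^2\,\extd(\mathrm{div}(\nabla_Y X))-\lambda\,P(X,Y)^*.
\]

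Next I would expand the right-hand side $-[Y^*,\extd X^*]$ by substituting $X^*=X+\lambda\,\mathrm{div}(X)$ and $Y^*=Y+\lambda\,\mathrm{div}(Y)$ and using bilinearity of the commutator \emph{before} invoking the ansatz. Of the four resulting brackets, the pure term $[Y,\extd X]$ is $O(\lambda)$ and is fed back through the ansatz, the fully mixed term $\lambda^2[\mathrm{div}(Y),\extd(\mathrm{div}(X))]$ is $O(\lambda^3)$ by (\ref{firw}) and is discarded, and the two cross terms $\lambda[Y,\extd(\mathrm{div}(X))]$ and $\lambda[\mathrm{div}(Y),\extd X]$ survive at order $\lambda^2$. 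Equating the two expressions, the leading $\extd(\nabla_Y X)$ pieces cancel, and dividing by $\lambda$ isolates $N(X,Y)=P(X,Y)-P(X,Y)^*$ as
\[
N(X,Y)=\lambda\,\extd(\mathrm{div}(\nabla_Y X))-[Y,\extd(\mathrm{div}(X))]-[\mathrm{div}(Y),\extd X],
\]
where the two remaining commutators are each $O(\lambda)$ and hence needed only to leading order.

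I would then substitute the explicit formulas: $[Y,\extd(\mathrm{div}(X))]$ from (\ref{orp45}) with $\xi=\extd(\mathrm{div}(X))$, and $[\mathrm{div}(Y),\extd X]=-[\extd X,\mathrm{div}(Y)]$ from Proposition~\ref{geid} with $f=\mathrm{div}(Y)$. Keeping only the order-$\lambda$ parts, the two $\theta'$ contributions assemble directly into
\[
\lambda\,\mu^{-1}\,\theta'\,g^{ij}\,\mathrm{div}(Y)_{,i}\,\nabla_j X+\lambda\,\mu^{-1}\,\theta'\,g^{ij}\,\mathrm{div}(X)_{,i}\,\nabla_j Y,
\]
which is exactly the first two terms of the claim. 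Since every surviving piece already carries a factor $\lambda$, the symmetric embedding $\widehat{(\ )}$ may be replaced by the underlying classical 1-form up to $O(\lambda^2)$, so the rest of the identity is to be checked at the level of ordinary tensor calculus.

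The last and hardest step is to show that the remaining classical 1-forms,
\[
\extd(\mathrm{div}(\nabla_Y X))-\nabla_Y\,\extd(\mathrm{div}(X))+X^a{}_{;i}\,\mathrm{div}(Y)_{,a}\,\extd x^i,
\]
reorganise into $\extd(X^q Y^p R_{pq}+Y^p{}_{;q}X^q{}_{;p})+\extd x^i(Y^p{}_{;i}\mathrm{div}(X)_{;p}+X^p{}_{;i}\mathrm{div}(Y)_{;p})$. Writing $\mathrm{div}(\nabla_Y X)=(Y^a X^k{}_{;a})_{;k}$ and differentiating, one meets third covariant derivatives of $X$ in two different orderings; their difference is governed by the Ricci identity precisely as in the proof of Proposition~\ref{geid7}, cf.\ (\ref{orp5}), and this is what produces the Ricci term $X^q Y^p R_{pq}$ (indeed in the flat case all reorderings commute and the identity collapses to a direct relabelling). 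I expect the bookkeeping of these nested second and third covariant derivatives to be the main obstacle, with the symmetry $N(X,Y)=N(Y,X)$ — forced by symmetry of $P$ and of $R_{pq}$ — serving as a useful consistency check on the final collection of terms.
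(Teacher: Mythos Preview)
Your proposal is correct and follows essentially the same route as the paper's proof: both start from the ansatz of Proposition~\ref{gup}, impose $*$-compatibility to arrive at the intermediate identity
\[
P(X,Y)-P(X,Y)^*=\lambda\,\extd(\mathrm{div}(\nabla_Y X))-[\mathrm{div}(Y),\extd X]-[Y,\extd(\mathrm{div}(X))],
\]
then feed in (\ref{orp45}) and Proposition~\ref{geid} at leading order, and finish with the covariant-derivative bookkeeping (using the Ricci identity as in (\ref{orp5})) to recognise the $R_{pq}$ and $Y^p{}_{;q}X^q{}_{;p}$ pieces. The paper simply records that last step as ``standard differential geometry calculations'', so your more explicit outline of it is if anything slightly more detailed.
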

\begin{proof} We have, to order $\lambda^2$
\begin{align*}
\big(\lambda\, \extd(\nabla_Y X)+\lambda\, P(X,Y)\big)^* = -[    Y+\lambda\,\mathrm{div}(Y)  , \extd X +\lambda\,\extd\,\mathrm{div}(X)  ]
\end{align*}
which gives
\begin{align*}
-\lambda\, (\extd(\nabla_Y X))^*-\lambda\, P(X,Y)^* =-\lambda\, \extd(\nabla_Y X)-\lambda\, P(X,Y)
 -[    \lambda\,\mathrm{div}(Y)  , \extd X  ]
  -[    Y  ,\lambda\,\extd\,\mathrm{div}(X)  ]
\end{align*}
which gives, to order $\lambda$
\begin{align*}
P(X,Y)   -   P(X,Y)^* = \lambda\, \extd\,\mathrm{div}(\nabla_Y X)  
 -[  \mathrm{div}(Y)  , \extd X  ]
  -[    Y  ,\extd\,\mathrm{div}(X)  ]
\end{align*}
and using Proposition~\ref{geid} and (\ref{orp45}) we have, to order $\lambda$
\begin{align*}
P(X,Y)   -   P(X,Y)^* &= \lambda\, \extd\,\mathrm{div}(\nabla_Y X)  
+  \lambda\, X^a{}_{;i } \, \mathrm{div}(Y)_{,a} \,\extd x^i + \lambda\,m^{-1}\,\theta'\, g^{ij}\, \mathrm{div}(Y)_{,i}\, \nabla_j X \cr
& \quad -  \lambda\,\nabla_Y( \extd\,\mathrm{div}(X) )  + \lambda\,m^{-1}\,\theta'\, g^{ij}\,\mathrm{div}(X)_{,i}\, \nabla_j Y
\end{align*}
and then use standard differential geometry calculations.
\end{proof}

\subsection{Schr\"odinger representation of the differential of a vector field}

\begin{proposition} \label{hdkk}
The representation of $\extd X$ for a vector field $X$ is
\[
m \rho(\extd X)(\psi)=
  \lambda^2\,g^{ij}\, X^a{}_{;i}\, (\psi_{,a})_{;j}  
+{\lambda^2\over 2}((\Delta X)^a + X^a\, g^{ij}\, R_{ja})\psi_{,i}
  -  m\,V_{,a}\,X^a\, \psi \ ,
\]
where $R_{qr}$ is the Ricci tensor and $\Delta$ is the Laplace-Beltrami operator on vector fields. This corresponds to
\[
m\,\extd X -  \theta'\left( \,g^{ij}\, X^a{}_{;i}\, (\del_a\del_j- \lambda\Gamma^k{}_{aj} \, \del_k)
+{\lambda\over 2}(\Delta X  +  X^a\, g^{ij}\, R_{ja}\,\del_i)
  -  m X(V)\right)
\]
being in the kernel of the Schr\"odinger representation to order $\lambda^2$.
\end{proposition}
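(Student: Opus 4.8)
The plan is to compute $\rho(\extd X)$ directly from its defining property (\ref{ghu}), namely $\rho(\extd X)=\lambda^{-1}[\rh,\rho(X)]$, rather than from the abstract calculus element assembled in the previous subsections. Inserting the explicit Hamiltonian $\rh=\tfrac{\lambda^2}{2\mu}\Delta+V$ and the representation $\rho(X)\psi=\lambda X^a\psi_{,a}$, I would expand the two orderings $\rh\,\rho(X)\psi$ and $\rho(X)\,\rh\,\psi$. The first-order potential pieces $V X^a\psi_{,a}$ and $X^a V\psi_{,a}$ cancel in the commutator, leaving only the term in which $X$ differentiates $V$, which after multiplying by $\mu\lambda^{-1}$ produces the contribution $-\mu\,V_{,a}X^a\psi$. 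The remaining, and only nontrivial, piece is the Laplacian commutator
\[
\mu\,\rho(\extd X)\psi=\tfrac{\lambda^2}{2}\big(\Delta(X^a\psi_{,a})-X^a(\Delta\psi)_{,a}\big)-\mu\,V_{,a}X^a\psi .
\]

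The main work is to put $\Delta(X^a\psi_{,a})-X^a(\Delta\psi)_{,a}$ into covariant form. Since $\psi_{,a}=\psi_{;a}$ on a scalar and $g^{ij}$ is covariantly constant, I would set $\phi=X^a\psi_{;a}$ and compute $\Delta\phi=g^{ij}\phi_{;i;j}$ by the Leibniz rule. The second covariant derivative of $\phi$ yields $(\Delta X)^a\psi_{;a}$, two equal cross terms that combine (using symmetry of $g^{ij}$) into $2\,g^{ij}X^a{}_{;i}\psi_{;a;j}$, and the pure third-derivative term $X^a g^{ij}\psi_{;a;i;j}$. Subtracting $X^a(\Delta\psi)_{,a}=X^a g^{ij}\psi_{;i;j;a}$ leaves the curvature term $X^a g^{ij}(\psi_{;a;i;j}-\psi_{;i;j;a})$. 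I would first use symmetry of the scalar Hessian, $\psi_{;a;i}=\psi_{;i;a}$, to move the $a$ index inward, and then apply the Ricci identity $([\nabla_a,\nabla_b]\xi)_c=-R^d{}_{cab}\xi_d$ to the $1$-form $\psi_{;i}$ for the remaining commutator of the last two derivatives; contracting with $g^{ij}$ collapses the Riemann tensor to the Ricci tensor, giving $X^a g^{ij}R_{ja}\psi_{;i}$. The expected obstacle is precisely this curvature bookkeeping: getting the index ordering of the triple covariant derivative right, retaining the factor $2$ from the cross terms so it survives the overall $\tfrac{\lambda^2}{2}$, and confirming that the contraction yields $+R_{ja}$ with the sign fixed by the paper's conventions. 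Assembling the pieces then reproduces the claimed formula, with $\psi_{;a;j}=(\psi_{,a})_{;j}$ and $\psi_{;i}=\psi_{,i}$.

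For the second assertion, I would apply $\rho$ to the displayed centrally extended element and check it equals $\mu\,\rho(\extd X)$. Using $\rho(\theta')=1$, the identity $\rho(\del_a\del_j-\lambda\Gamma^k{}_{aj}\del_k)\psi=\lambda^2(\psi_{,a})_{;j}$ (the covariant Hessian, since $\rho$ is a representation and $f.X=fX$), $\rho(\del_i)\psi=\lambda\psi_{,i}$, and the fact that $X(V)$ is a function acting by multiplication, the operator expression maps term by term onto the formula just derived. Hence the difference between $\mu\,\extd X$ and that expression lies in $\ker\rho$ to order $\lambda^2$, exactly in parallel with the scalar kernel element (\ref{keromega1}).
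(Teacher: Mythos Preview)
Your proposal is correct and follows essentially the same route as the paper's proof: both start from $\rho(\extd X)=\lambda^{-1}[\rh,\rho(X)]$, expand the Laplacian commutator via the Leibniz rule into $(\Delta X)^a\psi_{,a}$, the cross term $2g^{ij}X^a{}_{;i}(\psi_{,a})_{;j}$, and a triple-derivative difference, then use symmetry of the scalar Hessian and the Ricci identity to extract the $R_{ja}$ term. Your verification of the kernel statement via $\rho(\theta')=1$ and $\rho(\del_a\del_j-\lambda\Gamma^k{}_{aj}\del_k)\psi=\lambda^2(\psi_{,a})_{;j}$ is exactly the intended reading.
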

\begin{proof} From (\ref{ghu}) 
\begin{align*}
2m\, \rho&(\extd X)(\psi)=2m\, \lambda^{-1}[ \rh,\rho(X)]\psi \cr
&= \lambda^2\,g^{ij}\, ((X^a\, \psi_{,a})_{,i})_{;j}    +  2m\,V\,X^a\, \psi_{,a}
- X^a\,\partial_a(\lambda^2\,g^{ij}\, (\psi_{,i})_{;j}    +  2m\,V\,\psi) \cr
&= \lambda^2\,g^{ij}\, X^a\, (((\psi_{,a})_{;i})_{;j}  - ((\psi_{,i})_{;j})_{;a}) + 2\lambda^2\,g^{ij}\, X^a{}_{;i}\, (\psi_{,a})_{;j}  
+\lambda^2\,g^{ij}\, (X^a{}_{;i})_{;j} \, \psi_{,a}  \cr
 &\quad  -  2m\,V_{,a}\,X^a\, \psi \cr
 &= \lambda^2\,g^{ij}\, X^a\, (((\psi_{,i})_{;a})_{;j}  - ((\psi_{,i})_{;j})_{;a}) + 2\lambda^2\,g^{ij}\, X^a{}_{;i}\, (\psi_{,a})_{;j}  
+\lambda^2\,g^{ij}\, (X^a{}_{;i})_{;j} \, \psi_{,a}  \cr
 &\quad  -  2m\,V_{,a}\,X^a\, \psi \cr
  &=  2\lambda^2\,g^{ij}\, X^a{}_{;i}\, (\psi_{,a})_{;j}  
+\lambda^2\,g^{ij}\, \big( (X^a{}_{,i})_{;j}  -  X^r\, R^a{}_{ijr}\big)\,\psi_{,a}   -  2m\,V_{,a}\,X^a\, \psi \ ,
\end{align*}
giving the answer.      \end{proof}

In particular, the elements 
\[ \extd \del_i- {\theta'\over m} \left(\Gamma^j{}_{i d}g^{d c}(\del_j \del_c-\lambda\Gamma^e{}_{jc}\del_e)+{\lambda\over 2}\Delta(\del_i)+ {\lambda\over 2} R_{ji}g^{jc}\del_c-m V_{,i}\right)\]
are in the kernel to order $\lambda^2$.

\begin{proposition} \label{bukk}
 \begin{align*} 
m\, \lambda^{-2}\rho&(P(X,Y))\psi  + m\, \lambda^{-2}Y^b\, X^a\, V_{,a;b}\,\psi=
\cr
&=  {1\over 2}\big(  Y^b\,X^c\, (   - g^{aq}\, g^{ij}(R_{qijc;b} + R_{qcib;j})   ) -   g^{cq}\, R_{qb}\, (X^b\, Y^a{}_{;c}  + Y^b\, X^a{}_{;c}) \cr
  &\quad -2 g^{ij} Y^b{}_{;i}\, X^a{}_{;b;j}  -2\,g^{ij}\, X^c{}_{;i}\, Y^a{}_{;c;j}   -  (\nabla_{\Delta(X)}Y)^a - (\nabla_{\Delta(Y)}X)^a    \big) \,\psi_{,a}  \cr
  &\quad -       g^{ij}\, X^a{}_{;i}\, Y^b{}_{;a}\,\psi_{,b;j} -   g^{ij}\, Y^b{}_{;i}\, X^a{}_{;b}\, \psi_{,a;j} 
-       g^{ij}\, X^a{}_{;i}\, Y^b{}_{;j}\,\psi_{,b;a}     +  g^{ij}\, Y^b \, X^c R_{ecbi} \,g^{ae}  \psi_{,a;j}  \ .
\end{align*}
\end{proposition}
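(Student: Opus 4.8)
The plan is to avoid any direct manipulation of the possibly nonassociative products and instead exploit the characterization of $P(X,Y)$ from Proposition~\ref{gup}, namely $\lambda\,P(X,Y)=[Y,\extd X]-\lambda\,\extd(\nabla_YX)$, by applying the Schr\"odinger representation to both sides. Since $\rho$ is a representation (on products of algebra elements with $1$-forms), this yields
\[
\lambda\,\rho(P(X,Y))=[\rho(Y),\rho(\extd X)]-\lambda\,\rho(\extd(\nabla_YX)),
\]
and multiplying by $\mu\lambda^{-2}$ reproduces exactly the left-hand side of the statement. Everything on the right is then explicit: $\rho(Y)\psi=\lambda\,Y^b\psi_{,b}$ with $\psi_{,b}=\partial\psi/\partial x^b$, while $\mu\,\rho(\extd X)$ is given by Proposition~\ref{hdkk}, and $\mu\,\rho(\extd(\nabla_YX))$ is the \emph{same} formula with $X$ replaced by the vector field $\nabla_YX$, whose components are $(\nabla_YX)^a=Y^c\,X^a{}_{;c}$.

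First I would compute the commutator $[\rho(Y),\mu\,\rho(\extd X)]\psi$ against the three pieces of Proposition~\ref{hdkk} in turn: the second-derivative term $\lambda^2 g^{ij}X^a{}_{;i}(\psi_{,a})_{;j}$, the first-derivative term $\tfrac{\lambda^2}{2}\big((\Delta X)^i+X^a g^{ij}R_{ja}\big)\psi_{,i}$, and the potential term $-\mu\,V_{,a}X^a\psi$. The one point requiring care is that $\rho(Y)$ acts by an \emph{ordinary} partial derivative $\lambda\,Y^b\partial_b$, whereas Proposition~\ref{hdkk} is written through covariant derivatives of $\psi$; I would convert consistently so that the third-order derivative terms generated by the two halves of the commutator can be matched. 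These top-order terms cancel up to curvature corrections from reordering the three covariant derivatives of $\psi$ via the Ricci identity $[\nabla_i,\nabla_j]\psi_{,a}=-R^c{}_{aij}\psi_{,c}$, which is the origin of the term $g^{ij}Y^b X^c R_{ecbi}\,g^{ae}\psi_{,a;j}$ and of part of the $\psi_{,a}$ coefficient.

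Next I would isolate the potential-dependent contributions. Collecting the $V$-terms coming both from commuting with $-\mu\,V_{,a}X^a\psi$ and from $-\mu\lambda^{-2}\rho(\extd(\nabla_YX))$, and using $V_{,a,b}=V_{,a;b}+V_{,c}\Gamma^c{}_{ab}$ together with $X^a{}_{;b}-X^a{}_{,b}=\Gamma^a{}_{cb}X^c$, the Christoffel pieces cancel in pairs and the net potential contribution is exactly $-\mu\lambda^{-2}\,Y^b X^a V_{,a;b}\psi$. This is precisely why the statement is phrased with $+\mu\lambda^{-2}Y^b X^a V_{,a;b}\psi$ added to the left-hand side, so that the right-hand side is $V$-free.

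The main obstacle is matching the remaining curvature-valued terms, and in particular producing the differentiated-Riemann terms $R_{qijc;b}$ and $R_{qcib;j}$. These arise from the mismatch between $\Delta(\nabla_YX)$, which sits inside $\rho(\extd(\nabla_YX))$ and contains the third covariant derivative $g^{jk}Y^c X^i{}_{;c;j;k}$ after a Leibniz expansion of $g^{jk}(Y^c X^i{}_{;c})_{;j;k}$, and the quantity $\nabla_Y(\Delta X)$ that emerges from differentiating the second-derivative part of $\rho(\extd X)$. Relating the two forces a Weitzenb\"ock-type commutation of the rough Laplacian with one covariant derivative: applying the Ricci identity to $X^i{}_{;c;j;k}$ twice deposits one undifferentiated Ricci contraction (feeding the $g^{cq}R_{qb}$ terms, and—via the $g^{jk}Y^c{}_{;j;k}X^i{}_{;c}$ piece of the expansion—the $(\nabla_{\Delta Y}X)$ and symmetric $(\nabla_{\Delta X}Y)$ terms) and one $\nabla R$ contraction (the $R_{qijc;b}$, $R_{qcib;j}$ terms). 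After assembling all contributions, the final check is that the result is manifestly invariant under $X\leftrightarrow Y$, as it must be since $P(X,Y)=P(Y,X)$; this symmetry is a stringent test on the index bookkeeping.
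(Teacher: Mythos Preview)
Your approach is essentially the same as the paper's: start from $\lambda\,\rho(P(X,Y))=[\rho(Y),\rho(\extd X)]-\lambda\,\rho(\extd(\nabla_YX))$, expand the commutator using Proposition~\ref{hdkk} for both $\rho(\extd X)$ and $\rho(\extd(\nabla_YX))$, reorder covariant derivatives via Ricci identities to extract the curvature terms, isolate and cancel the $V$-contributions, and verify $X\leftrightarrow Y$ symmetry at the end. The paper carries out exactly this computation in detail, including the Weitzenb\"ock-type step you anticipate (comparing $\Delta(\nabla_YX)$ with $\nabla_{\Delta Y}X+\nabla_Y(\Delta X)$), so your plan is correct and aligned with the paper's proof.
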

\begin{proof} By definition of $P(X,Y)$,
 \begin{align*}
\lambda\, \rho&(P(X,Y))\psi =\rho([    Y  , \extd X  ])\psi -\lambda\, \rho(\extd(\nabla_Y X))\psi   \cr
 &=\lambda\, Y^a \,  \partial_a\,\rho( \extd X  )\psi - \lambda\, \rho(\extd X  ) (Y^a\,\psi_{,a})-\lambda\, \rho(\extd(\nabla_Y X))\psi
\end{align*}
and using Proposition~\ref{hdkk} we have
 \begin{align*} 
&2m\, \rho(P(X,Y))\psi =
2m\, Y^a \,  \partial_a\,\rho( \extd X  )\psi - 2m\, \rho(\extd X  ) (Y^a\,\psi_{,a})-2m\, \rho(\extd(\nabla_Y X))\psi \cr
&= Y^b \,  \partial_b\big( 2\lambda^2\,g^{ij}\, X^a{}_{;i}\, (\psi_{,a})_{;j}  
+\lambda^2\,\Delta(X)^a \,\psi_{,a} + \lambda^2\, X^r\, g^{aq}\, R_{qr}\, \psi_{,a}
  -  2m\,V_{,a}\,X^a\, \psi\big) \cr
  &\quad -\big(2\lambda^2\,g^{ij}\, X^a{}_{;i}\, ((Y^b\,\psi_{,b})_{,a})_{;j}  
+\lambda^2\,\Delta(X)^a \,(Y^b\,\psi_{,b})_{,a} + \lambda^2\, X^r\, g^{aq}\, R_{qr}\, (Y^b\,\psi_{,b})_{,a}
  -  2m\,V_{,a}\,X^a\, Y^b\,\psi_{,b}\big)   \cr
  &\quad - \big( 2\lambda^2\,g^{ij}\, (\nabla_Y X)^a{}_{;i}\, (\psi_{,a})_{;j}  
+\lambda^2\,\Delta(\nabla_Y X)^a \,\psi_{,a} + \lambda^2\, (\nabla_Y X)^r\, g^{aq}\, R_{qr}\, \psi_{,a}
  -  2m\,V_{,a}\,(\nabla_Y X)^a\, \psi \big),
\end{align*}
which we simplify as
 \begin{align*} 
2m\, \lambda^{-2}\big(\rho&(P(X,Y))  +Y^b\, X^a\, V_{,a;b}\big)\psi=
\cr
&= Y^b \,  \partial_b\big( 2\,g^{ij}\, X^a{}_{;i}\, (\psi_{,a})_{;j}  
+\,\Delta(X)^a \,\psi_{,a} + \, X^r\, g^{aq}\, R_{qr}\, \psi_{,a}  \big) \cr
  &\quad -\big(2\,g^{ij}\, X^a{}_{;i}\, ((Y^b\,\psi_{,b})_{,a})_{;j}  
+\,\Delta(X)^a \,(Y^b\,\psi_{,b})_{,a} + \, X^r\, g^{aq}\, R_{qr}\, (Y^b\,\psi_{,b})_{,a}   \big)   \cr
  &\quad - \big( 2\,g^{ij}\, (\nabla_Y X)^a{}_{;i}\, (\psi_{,a})_{;j}  
+\,\Delta(\nabla_Y X)^a \,\psi_{,a} + \, (\nabla_Y X)^r\, g^{aq}\, R_{qr}\, \psi_{,a}   \big)   \cr
&= Y^b \,  \partial_b\big( 2\,g^{ij}\, X^a{}_{;i}\, (\psi_{,a})_{;j}  
+\,\Delta(X)^a \,\psi_{,a}\big)
 +  Y^b\,X^r\, g^{aq}\, R_{qr;b}\, \psi_{,a}  +  Y^b\,X^r\, g^{aq}\, R_{qr}\, \psi_{,a;b}   \cr
  &\quad -\big(2\,g^{ij}\, X^a{}_{;i}\, ((Y^b\,\psi_{,b})_{,a})_{;j}  
+\,\Delta(X)^a \,(Y^b\,\psi_{,b})_{,a} + \, X^r\, g^{aq}\, R_{qr}\, (Y^b\,\psi_{,b})_{,a}   \big)   \cr
  &\quad - \big( 2\,g^{ij}\, (\nabla_Y X)^a{}_{;i}\, (\psi_{,a})_{;j}  
+\,\Delta(\nabla_Y X)^a \,\psi_{,a}  \big) \cr
&= Y^b \,  \partial_b\big( 2\,g^{ij}\, X^a{}_{;i}\, (\psi_{,a})_{;j}  
+\,\Delta(X)^a \,\psi_{,a}\big)
 +  Y^b\,X^r\, g^{aq}\, R_{qr;b}\, \psi_{,a}    \cr
  &\quad -\big(2\,g^{ij}\, X^a{}_{;i}\, ((Y^b\,\psi_{,b})_{,a})_{;j}  
+\,\Delta(X)^a \,Y^b\,\psi_{,b;a}    +\,\Delta(X)^a \,Y^b{}_{;a}\,\psi_{,b}    + \, X^r\, g^{aq}\, R_{qr}\, Y^b{}_{;a}\,\psi_{,b}   \big)   \cr
  &\quad - \big( 2\,g^{ij}\, (\nabla_Y X)^a{}_{;i}\, (\psi_{,a})_{;j}  
+\,\Delta(\nabla_Y X)^a \,\psi_{,a}  \big)   \cr
&= Y^b \,  \big( 2\,g^{ij}\, X^a{}_{;i}\, \psi_{,a;j;b}  + 2\,g^{ij}\, X^a{}_{;i;b}\, (\psi_{,a})_{;j}  
+\,\Delta(X)^a{}_{;b} \,\psi_{,a}\big)
 +  Y^b\,X^r\, g^{aq}\, R_{qr;b}\, \psi_{,a}    \cr
  &\quad -\big(2\,g^{ij}\, X^a{}_{;i}\, ((Y^b\,\psi_{,b})_{,a})_{;j}  
   +\,\Delta(X)^a \,Y^b{}_{;a}\,\psi_{,b}    + \, X^r\, g^{aq}\, R_{qr}\, Y^b{}_{;a}\,\psi_{,b}   \big)   \cr
  &\quad - \big( 2\,g^{ij}\, (\nabla_Y X)^a{}_{;i}\, (\psi_{,a})_{;j}  
+\,\Delta(\nabla_Y X)^a \,\psi_{,a}  \big) \cr
&=  2\,g^{ij}\, Y^b \, X^a{}_{;i}\, \psi_{,a;j;b}  + 2\,g^{ij}\, Y^b \, X^a{}_{;i;b}\, (\psi_{,a})_{;j}  
+ Y^b \, \Delta(X)^a{}_{;b} \,\psi_{,a}
 +  Y^b\,X^r\, g^{aq}\, R_{qr;b}\, \psi_{,a}    \cr
  &\quad -2\,g^{ij}\, X^a{}_{;i}\, ((Y^b\,\psi_{,b})_{,a})_{;j}  
   - \Delta(X)^a \,Y^b{}_{;a}\,\psi_{,b}    - X^r\, g^{aq}\, R_{qr}\, Y^b{}_{;a}\,\psi_{,b}      \cr
  &\quad -  2\,g^{ij}\, (\nabla_Y X)^a{}_{;i}\, (\psi_{,a})_{;j}  
- \Delta(\nabla_Y X)^a \,\psi_{,a}  \cr
&=  2\,g^{ij}\, Y^b \, X^a{}_{;i}\, (\psi_{,a;j;b} -\psi_{,b;a;j} ) + 2\,g^{ij}\, Y^b \, X^a{}_{;i;b}\, \psi_{,a;j}  
+ Y^b \, \Delta(X)^a{}_{;b} \,\psi_{,a}
 +  Y^b\,X^r\, g^{aq}\, R_{qr;b}\, \psi_{,a}    \cr
  &\quad -2\,g^{ij}\, X^a{}_{;i}\, Y^b{}_{;a;j}\,\psi_{,b}  -2\,g^{ij}\, X^a{}_{;i}\, Y^b{}_{;a}\,\psi_{,b;j}
-2\,g^{ij}\, X^a{}_{;i}\, Y^b{}_{;j}\,\psi_{,b;a}  \cr
&\quad   - \Delta(X)^a \,Y^b{}_{;a}\,\psi_{,b}    - X^r\, g^{aq}\, R_{qr}\, Y^b{}_{;a}\,\psi_{,b}  -  2\,g^{ij}\, (Y^b\, X^a{}_{;b})_{;i}\, \psi_{,a;j}  
- \Delta(\nabla_Y X)^a \,\psi_{,a}  \cr
&=  2\,g^{ij}\, Y^b \, X^a{}_{;i}\, (\psi_{,a;j;b} -\psi_{,a;b;j} ) + 2\,g^{ij}\, Y^b \, X^c R^a{}_{cbi}   \psi_{,a;j}  
+ Y^b \, \Delta(X)^a{}_{;b} \,\psi_{,a}
 +  Y^b\,X^r\, g^{aq}\, R_{qr;b}\, \psi_{,a}    \cr
  &\quad -2\,g^{ij}\, X^a{}_{;i}\, Y^b{}_{;a;j}\,\psi_{,b}  -     2\,g^{ij}\, X^a{}_{;i}\, Y^b{}_{;a}\,\psi_{,b;j}
-       2\,g^{ij}\, X^a{}_{;i}\, Y^b{}_{;j}\,\psi_{,b;a}   \cr
&\quad   - \Delta(X)^a \,Y^b{}_{;a}\,\psi_{,b}    -   X^r\, g^{aq}\, R_{qr}\, Y^b{}_{;a}\,\psi_{,b}  -   2\,g^{ij}\, Y^b{}_{;i}\, X^a{}_{;b}\, \psi_{,a;j}  
- \Delta(\nabla_Y X)^a \,\psi_{,a}  \cr
&=  2\,g^{ij}\, Y^b \, X^a{}_{;i}\, \psi_{,c}  R^c{}_{ajb} +  2\,g^{ij}\, Y^b \, X^c R_{ecbi} \,g^{ae}  \psi_{,a;j}  
+ Y^b \, \Delta(X)^a{}_{;b} \,\psi_{,a}
 +  Y^b\,X^r\, g^{aq}\, R_{qr;b}\, \psi_{,a}    \cr
  &\quad -2\,g^{ij}\, X^a{}_{;i}\, Y^b{}_{;a;j}\,\psi_{,b}  -      2\,g^{ij}\, X^a{}_{;i}\, Y^b{}_{;a}\,\psi_{,b;j}
-       2\,g^{ij}\, X^a{}_{;i}\, Y^b{}_{;j}\,\psi_{,b;a}   \cr
&\quad   - \Delta(X)^a \,Y^b{}_{;a}\,\psi_{,b}    -   X^r\, g^{aq}\, R_{qr}\, Y^b{}_{;a}\,\psi_{,b}  -   2\,g^{ij}\, Y^b{}_{;i}\, X^a{}_{;b}\, \psi_{,a;j}  
- \Delta(\nabla_Y X)^a \,\psi_{,a}  \cr
&=  \big( 2\,g^{ij}\, Y^b \, X^e{}_{;i}\,   R^a{}_{ejb} 
+ Y^b \, \Delta(X)^a{}_{;b} 
 +  Y^b\,X^r\, g^{aq}\, R_{qr;b}   \cr
  &\quad -2\,g^{ij}\, X^c{}_{;i}\, Y^a{}_{;c;j}   - \Delta(X)^e \,Y^a{}_{;e}    -   X^r\, g^{cq}\, R_{qr}\, Y^a{}_{;c} 
- \Delta(\nabla_Y X)^a \big) \,\psi_{,a}  \cr
  &\quad -      2\,g^{ij}\, X^a{}_{;i}\, Y^b{}_{;a}\,\psi_{,b;j}
-       2\,g^{ij}\, X^a{}_{;i}\, Y^b{}_{;j}\,\psi_{,b;a}    +  2\,g^{ij}\, Y^b \, X^c R_{ecbi} \,g^{ae}  \psi_{,a;j}  
-   2\,g^{ij}\, Y^b{}_{;i}\, X^a{}_{;b}\, \psi_{,a;j}  
\end{align*}
We check that
\begin{align*}
\big(\Delta(\nabla_Y X) & -\nabla_{\Delta Y} X-\nabla_Y(\Delta X)\big)^a \cr
&= 2 g^{ij} Y^b{}_{;i}\, X^a{}_{;b;j} +2g^{ij} Y^b \, X^c{}_{;j}\, R^a{}_{cib}
+ g^{ij} Y^b\,X^c\, R^a{}_{cib;j} +g^{ij} Y^b\, X^a{}_{;p}\, R^p{}_{ibj}
\end{align*}
and then in our last expression for $2m\, \lambda^{-2}\big(\rho(P(X,Y))  + Y^b\, X^a\, V_{,a;b}\big)\psi$, 
the coefficient of $\psi_{,a}  $ can be rewritten as
 \begin{align*} 
 &\quad 2\,g^{ij}\, Y^b \, X^e{}_{;i}\,   R^a{}_{ejb} 
 +  Y^b\,X^r\, g^{aq}\, R_{qr;b}   \cr
  &\quad -2\,g^{ij}\, X^c{}_{;i}\, Y^a{}_{;c;j}   -  (\nabla_{\Delta(X)}Y)^a - (\nabla_{\Delta(Y)}X)^a   -   X^r\, g^{cq}\, R_{qr}\, Y^a{}_{;c} 
 \cr
&\quad -\big(2 g^{ij} Y^b{}_{;i}\, X^a{}_{;b;j} +2g^{ij} Y^b \, X^c{}_{;j}\, R^a{}_{cib}
+ g^{ij} Y^b\,X^c\, R^a{}_{cib;j} +g^{ij} Y^b\, X^a{}_{;p}\, R^p{}_{ibj}\Big) \cr
&=  Y^b\,X^c\, (g^{aq}\, R_{qc;b} - g^{ij} \, R^a{}_{cib;j} ) -   X^b\, g^{cq}\, R_{qb}\, Y^a{}_{;c}  - g^{ij} Y^b\, X^a{}_{;c}\, R^c{}_{ibj}  \cr
  &\quad -2 g^{ij} Y^b{}_{;i}\, X^a{}_{;b;j}  -2\,g^{ij}\, X^c{}_{;i}\, Y^a{}_{;c;j}   -  (\nabla_{\Delta(X)}Y)^a - (\nabla_{\Delta(Y)}X)^a.   
\end{align*}
Here,
\begin{align*}
g^{aq}\, R_{qc;b} & - g^{ij} \, R^a{}_{cib;j} = g^{aq}\, g^{ij}(R_{iqjc;b} - R_{qcib;j}) \cr
&=    - g^{aq}\, g^{ij}(R_{qijc;b} + R_{qcib;j})   \cr
&=     g^{aq}\, g^{ij}(R_{qicb;j} + R_{qibj;c} +   R_{qibc;j}    +R_{qbci;j})\cr
&=     g^{aq}\, g^{ij}( R_{qibj;c}    +R_{qbci;j})\cr
&=    - g^{aq}\, g^{ij}( R_{qijb;c}    +R_{qbic;j})\  ,
\end{align*}
which is symmetric in $b,c$, so the total is symmetric in swapping $X$ and $Y$, as required.  \end{proof}

\subsection{Commutator of a vector field and the differential of one}

We begin by writing 
 $P(X,Y)=P_0(X,Y)+\lambda\,P_1(X,Y)$ to order $\lambda$, where $P_0(X,Y)$ has been chosen to satisfy the lowest order requirements in $\lambda$. 
 Of course, this decomposition of  $P(X,Y)$ is not unique, rearranging the order within a term of $P_0(X,Y)$ will change its value while introducing higher order terms which can go into $P_1(X,Y)$. However, there is one principle we can use to try to solve this problem; if our functions and vector fields are real then, to $O(\lambda^0)$ terms formed from them are Hermitian. The only source of complex numbers (ignoring the Hilbert space) is the imaginary $\lambda$. In other words, we expect $\lambda\,P_1(X,Y)$ to be anti-Hermitian to order $\lambda$. Then from Proposition~\ref{real5} we expect to have to order $\lambda$,
\begin{equation}\label{P0N}
2\,P(X,Y) = P_0(X,Y) + P_0(X,Y)^* +N(X,Y)\ .
\end{equation}
We set
 \begin{align}\label{P0}
P_0(X,Y) &= - \, \widehat{\extd x^i}\, \big(\nabla_{\nabla_i X} Y + \nabla_{\nabla_i Y} X\big) - (2m)^{-1} g^{ij}\,\theta'\big(\nabla_i X\, \nabla_j Y+
\nabla_i Y\, \nabla_j X\big) \nonumber\\ 
&\quad - \theta'\, Y^b\, X^a\, V_{,a;b}    +  m^{-1}g^{ij}\, \theta'\,Y^b \, X^c R_{ecbi} \,g^{ae} (\partial_j \partial_a - \lambda\Gamma^k{}_{aj}\partial_k) 
\end{align}
which gives the order two derivatives of $\psi$ (and therefore the lowest order terms in the algebra of differential operators) in Proposition~\ref{bukk}, and satisfies the condition in Proposition~\ref{gup}.

\begin{lemma}\label{lem:PNP*} To order $\lambda$,
\begin{align*}  P_0&(X,Y)^* +N(X,Y) -P_0(X,Y)\\
   &=  \lambda\, \big(
Y^pX^q R_{qp;i} + X^q{}_{;w} Y^p R^w{}_{piq} + X^q Y^p{}_{;w} R^w{}_{qip} \big) \,   \extd x^i
  \cr
& \quad+\lambda m^{-1}\theta' \, g^{ij}\, \big( (\nabla_j \nabla_iX)^u\,\nabla_u Y +
X^u{}_{;i}\,\nabla_j \nabla_u Y + 
(\nabla_j \nabla_iY)^u\,\nabla_u X +
Y^u{}_{;i}\,\nabla_j \nabla_u X \big)  \cr
&\quad  +  \lambda m^{-1}\theta' \,g^{ij}\, ( (Y^b \, X^c+ X^b \, Y^c ) R_{ecbi})_{;j} \,g^{ae}\, \partial_a  \cr
  &\quad -  \lambda\,m^{-1}\,\theta'\, g^{ij}\, Y^p R_{pi}  \, \nabla_j X 
 - \lambda\,m^{-1}\,\theta'\, g^{ij}\, X^q R_{qi}    \, \nabla_j Y 
\end{align*}
\end{lemma}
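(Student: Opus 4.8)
The plan is to rewrite the target as $(P_0^* - P_0) + N$, where $N = N(X,Y)$ is already given explicitly by Proposition~\ref{real5}; the only genuine computation is then the order-$\lambda$ anti-Hermitian part $P_0^* - P_0$ of the ansatz (\ref{P0}). Because $P_0$ is assembled entirely from real tensors ($X,Y,V,g,R,\Gamma$) and the sole imaginary ingredient is $\lambda$ (with $\lambda^* = -\lambda$), the operator $P_0$ is Hermitian at order $\lambda^0$, so $P_0^* - P_0$ is $O(\lambda)$ and can be extracted term by term at first order.

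First I would compute $P_0^*$ by applying the antilinear, antimultiplicative $*$-operation to each of the four terms of (\ref{P0}), using $f^* = \bar f$ on functions, $Z^* = Z + \lambda\,\mathrm{div}(Z)$ on each vector-field factor $Z$, the centrality and reality $\theta'^* = \theta'$ of the central form, and the established $*$-rules for $\widehat{\extd x^i}$. Two mechanisms generate the order-$\lambda$ correction: the explicit $\lambda\,\mathrm{div}$ pieces of the reversed vector fields, and the commutators incurred when $*$ reverses a product — e.g. reversing $\widehat{\extd x^i}\cdot Z$ to $Z^*\cdot(\widehat{\extd x^i})^*$ produces $[\widehat{\extd x^i},Z]$, a $\theta'$-valued term via (\ref{firw}) and (\ref{orp45}). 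The potential term $-\theta'\,Y^b X^a V_{,a;b}$ is a central form times a real scalar and hence self-adjoint, so $V$ drops out completely, consistent with its absence from the right-hand side. For the curvature term of (\ref{P0}), which is $\theta'$ times a second-order covariant operator of the form $\lambda^2 M^{aj}\psi_{,a;j}$, I would use the adjoint formula of Section~\ref{orf}; its order-$\lambda$ correction $\lambda\,M^{aj}{}_{;a}\psi_{,j}$ yields, after relabelling, the divergence term $g^{ij}(\cdots R_{ecbi})_{;j}\,g^{ae}\partial_a$ in the third line of the claim.

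Next I would add $N(X,Y)$ from Proposition~\ref{real5}, expanding each exterior derivative of a scalar through $\extd\phi = \widehat{\phi_{,i}\,\extd x^i}$ and the Leibniz rule: $\extd(X^qY^pR_{pq})$ supplies the $R_{pq;i}\,\extd x^i$ contribution together with $\extd$-of-$X,Y$ pieces, while $\extd(Y^p{}_{;q}X^q{}_{;p})$ supplies the remaining double-derivative pieces. I would then collect the whole of $(P_0^* - P_0) + N$ by covariant type — the $\widehat{\extd x^i}$-coefficient, the $\theta'$-times-(product of two vector fields) part, and the $\theta'$-times-(single vector field) part — and match these against the three blocks of the stated answer.

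The main obstacle is the curvature bookkeeping in the final simplification. The second covariant derivatives produced by the reversed products and the $\lambda\,\mathrm{div}$ corrections must be organised into the forms $(\nabla_j\nabla_i X)^u\nabla_u Y$, $X^u{}_{;i}\nabla_j\nabla_u Y$, etc., and the residual curvature must be sorted into the Riemann pieces $R^w{}_{piq}$, $R^w{}_{qip}$ and the Ricci pieces $R_{pi}$ of the first and fourth lines. This requires repeated use of the Ricci identity to commute covariant derivatives (exactly as in the proof of Proposition~\ref{hdkk}) and of the second Bianchi identity (as at the close of Proposition~\ref{bukk}), together with the symmetry of the contracted curvature term established there. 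The decisive consistency check is that everything assembles into a right-hand side manifestly symmetric under $X\leftrightarrow Y$ — forced by $P(X,Y)=P(Y,X)$ and the symmetry of $N$ — which in particular accounts for the symmetrised coefficient $Y^bX^c + X^bY^c$ in the third line.
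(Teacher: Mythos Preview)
Your proposal is correct and follows essentially the same route as the paper: compute $P_0^*$ term by term using the $*$-rules for functions, vector fields and $\widehat{\extd x^i}$, observe that the potential term is self-adjoint, add $N(X,Y)$ from Proposition~\ref{real5}, subtract $P_0$, and then repackage the resulting $\extd x^i$ and $\theta'$ pieces using the Ricci identity for commuting covariant derivatives. The paper carries this out by explicit index manipulation rather than invoking the adjoint lemma of Section~\ref{orf}, and in fact does not need the second Bianchi identity at this stage (that enters in Proposition~\ref{coordin}), but the structure of the argument is the same.
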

\begin{proof} Working to  order $\lambda$,
 \begin{align}  \label{poty}
P_0(X,Y)^* &= - \, \big(\nabla_{\nabla_i X} Y + \nabla_{\nabla_i Y} X\big)^*\, \widehat{\extd x^i} - (2m)^{-1} \theta'\big((\nabla_i X)^*\, (\nabla_j Y)^*+   (\nabla_i Y)^*\, (\nabla_j X)^*\big)   \, g^{ij}\cr
&\quad - \theta'\, Y^b\, X^a\, V_{,a;b}    +   m^{-1} \theta'\,(\partial_a{}^* \partial_j{}^* +\lambda\,\partial_k{}^* \Gamma^k{}_{aj}) \, 
g^{ij}\, Y^b \, X^c R_{ecbi} \,g^{ae}   \cr
&= - \, \big(\nabla_{\nabla_i X} Y + \nabla_{\nabla_i Y} X\big)\, \widehat{\extd x^i} -(2 m)^{-1} \theta'\big((\nabla_i X)\, (\nabla_j Y)+   (\nabla_i Y)\, (\nabla_j X)\big)   \, g^{ij}\cr
&\quad - \theta'\,Y^b\, X^a\, V_{,a;b}    +   m^{-1}\theta' (\partial_a \partial_j +\lambda\,\partial_k \Gamma^k{}_{aj}) \, 
g^{ij}\, Y^b \, X^c R_{ecbi} \,g^{ae}   \cr
&\quad - \,\lambda\, \mathrm{div}\big(\nabla_{\nabla_i X} Y + \nabla_{\nabla_i Y} X\big)\, \widehat{\extd x^i} -  m^{-1} \theta' \lambda\big( \mathrm{div}(\nabla_i X)\, (\nabla_j Y)+   \mathrm{div}(\nabla_i Y)\, (\nabla_j X)\big)   \, g^{ij}\cr
&\quad   +  \lambda m^{-1}\theta'\, (\Gamma^p{}_{ap} \partial_j + \Gamma^p{}_{jp} \partial_a   ) \, 
g^{ij}\, Y^b \, X^c R_{ecbi} \,g^{ae}  
\end{align}
where we use $\mathrm{div}(\del_j)=\Gamma^p{}_{jp}$. If we add the last two lines of (\ref{poty}) to $N(X,Y)$ we get,
to order $\lambda$,
 \begin{align}  \label{poty2}
&
+\lambda\, \extd x^i\big(
 Y^p{}_{;i} \mathrm{div}(X)_{;p} +  X^p{}_{;i} \mathrm{div}(Y)_{;p}  \big) - \,\lambda\, \mathrm{div}\big(\nabla_{\nabla_i X} Y + \nabla_{\nabla_i Y} X\big)\, \widehat{\extd x^i}   \cr
  & + \lambda\, \extd\big( X^q\, Y^p\, R_{pq} + Y^p{}_{;q} X^q{}_{;p}\big) \cr
  & + \lambda\,m^{-1}\,\theta'\, g^{ij}\, \mathrm{div}(Y)_{,i}\, \nabla_j X 
 + \lambda\,m^{-1}\,\theta'\, g^{ij}\,\mathrm{div}(X)_{,i}\, \nabla_j Y \cr
&  -  m^{-1} \theta' \lambda\big( \mathrm{div}(\nabla_i X)\, (\nabla_j Y)+   \mathrm{div}(\nabla_i Y)\, (\nabla_j X)\big)   \, g^{ij}\cr
&   +  \lambda m^{-1} \theta'\, (\Gamma^p{}_{ap} \partial_j + \Gamma^p{}_{jp} \partial_a ) \, 
g^{ij}\, Y^b \, X^c R_{ecbi} \,g^{ae}   \cr
=& \, \lambda\, \extd x^i\big(
-Y^p{}_{;i}X^q R_{qp} - Y^p{}_{;i;q} X^q{}_{;p} - Y^p{}_{;j} X^q{}_{;p}\Gamma^j{}_{iq}
- X^q{}_{;i} Y^p R_{pq} - X^q{}_{;i;p} Y^p{}_{;q} - X^q{}_{;j} Y^p{}_{;q} \Gamma^j{}_{ip}
\big)  \cr
  & + \lambda\, \extd\big( X^q\, Y^p\, R_{pq} + Y^p{}_{;q} X^q{}_{;p}\big) \cr
  & - \lambda\,m^{-1}\,\theta'\, g^{ij}\, (Y^p R_{pi} + \Gamma^u{}_{pi}Y^p{}_{;u}   )  \, \nabla_j X 
 - \lambda\,m^{-1}\,\theta'\, g^{ij}\,( X^q R_{qi} + \Gamma^u{}_{qi}X^q{}_{;u}   ) \, \nabla_j Y \cr
&   +  \lambda m^{-1}  \theta'\, (\Gamma^p{}_{ap} \partial_j + \Gamma^p{}_{jp} \partial_a  ) \, 
g^{ij}\, Y^b \, X^c R_{ecbi} \,g^{ae}   \cr
=& \, \lambda\, \extd x^i\big(
Y^pX^q R_{qp;i} + X^q{}_{;w} Y^p R^w{}_{piq} + X^q Y^p{}_{;w} R^w{}_{qip}
- Y^p{}_{;j} X^q{}_{;p}\Gamma^j{}_{iq}
- X^q{}_{;j} Y^p{}_{;q} \Gamma^j{}_{ip}
\big)  \cr
  & - \lambda\,m^{-1}\,\theta'\, g^{ij}\, (Y^p R_{pi} + \Gamma^u{}_{pi}Y^p{}_{;u}   )  \, \nabla_j X 
 - \lambda\,m^{-1}\,\theta'\, g^{ij}\,( X^q R_{qi} + \Gamma^u{}_{qi}X^q{}_{;u}   ) \, \nabla_j Y \cr
&   +  \lambda m^{-1}  \theta'\, (\Gamma^p{}_{ap} \partial_j + \Gamma^p{}_{jp} \partial_a   ) \, 
g^{ij}\, Y^b \, X^c R_{ecbi} \,g^{ae}   
\end{align}

We use (\ref{orp45}) to rewrite the first two lines of the final expression for $P_0(X,Y)^*$ in  (\ref{poty}) to order $\lambda$ as
 \begin{align*}  
&- \, \big(\nabla_{\nabla_i X} Y + \nabla_{\nabla_i Y} X\big)\, \widehat{\extd x^i} - (2m)^{-1} \theta'\big((\nabla_i X)\, (\nabla_j Y)+   (\nabla_i Y)\, (\nabla_j X)\big)   \, g^{ij}\cr
&\quad - \theta'\,Y^b\, X^a\, V_{,a;b}    +   m^{-1}\theta' (\partial_a \partial_j +\lambda\,\Gamma^k{}_{aj}  \partial_k   ) \, 
g^{ij}\, Y^b \, X^c R_{ecbi} \,g^{ae}   \cr
&= - \, \widehat{\extd x^i} \big(\nabla_{\nabla_i X} Y + \nabla_{\nabla_i Y} X\big)- (2m)^{-1} \theta'   g^{ij}   \big((\nabla_i X)\, (\nabla_j Y)+   (\nabla_i Y)\, (\nabla_j X)\big)    \cr
&\quad - \theta'\,Y^b\, X^a\, V_{,a;b}    +  m^{-1}\theta'   \, 
g^{ij}\, Y^b \, X^c R_{ecbi} \,g^{ae}   (\partial_a \partial_j +\lambda\, \Gamma^k{}_{aj}  \partial_k  ) \cr
& \quad + \lambda\, \big(\nabla_{\nabla_i X} Y + \nabla_{\nabla_i Y} X\big)^p\,\Gamma^i{}_{pj} \extd x^j
+\lambda m^{-1}\theta' \, g^{ij}\,\nabla_j \big(\nabla_{\nabla_i X} Y + \nabla_{\nabla_i Y} X\big)  \cr
& \quad +\lambda m^{-1}\theta' (\Gamma^i{}_{au}\, g^{uj} + \Gamma^j{}_{au}\, g^{iu} )(X^a{}_{;i}\,\nabla_j Y + Y^a{}_{;j}\,\nabla_i X) \cr
&\quad  +   m^{-1}\theta' \,\big[ \partial_j \,,\,g^{ij}\, Y^b \, X^c R_{ecbi} \,g^{ae}\big] \, \partial_a
+  m^{-1}\theta' \,\big[ \partial_a \,,\,g^{ij}\, Y^b \, X^c R_{ecbi} \,g^{ae}\big] \, \partial_j  \cr
&= - \, \widehat{\extd x^i} \big(\nabla_{\nabla_i X} Y + \nabla_{\nabla_i Y} X\big)- (2m)^{-1} \theta'   g^{ij}   \big((\nabla_i X)\, (\nabla_j Y)+   (\nabla_i Y)\, (\nabla_j X)\big)    \cr
&\quad - \theta'\,Y^b\, X^a\, V_{,a;b}    +  m^{-1}\theta'   \, 
g^{ij}\, Y^b \, X^c R_{ecbi} \,g^{ae}   (\partial_a \partial_j +\lambda\, \Gamma^k{}_{aj}  \partial_k  ) \cr
& \quad + \lambda\, \big(\nabla_{\nabla_i X} Y + \nabla_{\nabla_i Y} X\big)^p\,\Gamma^i{}_{pj} \extd x^j
+ \lambda m^{-1}\theta' \, g^{ij}\,\nabla_j \big(\nabla_{\nabla_i X} Y + \nabla_{\nabla_i Y} X\big)  \cr
& \quad + \lambda m^{-1}\theta' (\Gamma^i{}_{au}\, g^{uj} + \Gamma^j{}_{au}\, g^{iu} )(X^a{}_{;i}\,\nabla_j Y + Y^a{}_{;j}\,\nabla_i X) \cr
&\quad  +  \lambda m^{-1}\theta' \,g^{ij}\, (Y^b \, X^c R_{ecbi})_{;j} \,g^{ae}\, \partial_a
+   \lambda m^{-1}\theta' \,g^{ij}\, (Y^b \, X^c R_{ecbi})_{;a} \,g^{ae}\, \partial_j \cr
&\quad -   \lambda m^{-1}\theta' \,Y^b \, X^c R_{ecbi} \,(g^{iu}\,  g^{ae} \Gamma^j{}_{ua} + g^{ij}\,  g^{ue} \Gamma^a{}_{ua}  )\, \partial_j \cr
&\quad -  \lambda m^{-1}\theta' \,Y^b \, X^c R_{ecbi} \,(g^{iu}\,  g^{ae} \Gamma^j{}_{uj} + g^{ij}\,  g^{ue} \Gamma^a{}_{uj}  )\, \partial_a \cr
&= - \, \widehat{\extd x^i} \big(\nabla_{\nabla_i X} Y + \nabla_{\nabla_i Y} X\big)- (2m)^{-1} \theta'   g^{ij}   \big((\nabla_i X)\, (\nabla_j Y)+   (\nabla_i Y)\, (\nabla_j X)\big)    \cr
&\quad - \theta'\,Y^b\, X^a\, V_{,a;b}    +   m^{-1}\theta'   \, 
g^{ij}\, Y^b \, X^c R_{ecbi} \,g^{ae}   (\partial_a \partial_j - \lambda\, \Gamma^k{}_{aj}  \partial_k  ) \cr
& \quad + \lambda\, \big(\nabla_{\nabla_i X} Y + \nabla_{\nabla_i Y} X\big)^p\,\Gamma^i{}_{pj} \extd x^j
+ \lambda m^{-1}\theta' \, g^{ij}\,\nabla_j \big(\nabla_{\nabla_i X} Y + \nabla_{\nabla_i Y} X\big)  \cr
& \quad + \lambda m^{-1}\theta' (\Gamma^i{}_{au}\, g^{uj} + \Gamma^j{}_{au}\, g^{iu} )(X^a{}_{;i}\,\nabla_j Y + Y^a{}_{;j}\,\nabla_i X) \cr
&\quad  +  \lambda m^{-1}\theta' \,g^{ij}\, (Y^b \, X^c R_{ecbi})_{;j} \,g^{ae}\, \partial_a
+   \lambda m^{-1}\theta' \,g^{ij}\, (Y^b \, X^c R_{ecbi})_{;a} \,g^{ae}\, \partial_j \cr
&\quad -   \lambda m^{-1}\theta' \,Y^b \, X^c R_{ecbi} \, g^{ij}\,  g^{ue} \Gamma^a{}_{ua}  \, \partial_j 
-  \lambda m^{-1}\theta' \,Y^b \, X^c R_{ecbi} \,g^{iu}\,  g^{ae} \Gamma^j{}_{uj}  \, \partial_a\ .
\end{align*}
We recognise the first two lines of the last expression as $P_0(X,Y)$, and hence
\begin{align*} 
 &P_0(X,Y)^* +N(X,Y) -P_0(X,Y) =\cr
 &=    \lambda\, \big(\nabla_{\nabla_i X} Y + \nabla_{\nabla_i Y} X\big)^p\,\Gamma^i{}_{pj} \extd x^j
+\lambda m^{-1}\theta' \, g^{ij}\,\nabla_j \big(\nabla_{\nabla_i X} Y + \nabla_{\nabla_i Y} X\big)  \cr
& \quad + \lambda m^{-1}\theta' (\Gamma^i{}_{au}\, g^{uj} + \Gamma^j{}_{au}\, g^{iu} )(X^a{}_{;i}\,\nabla_j Y + Y^a{}_{;j}\,\nabla_i X) \cr
&\quad  +  \lambda m^{-1}\theta' \,g^{ij}\, (Y^b \, X^c R_{ecbi})_{;j} \,g^{ae}\, \partial_a
+   \lambda m^{-1}\theta' \,g^{ij}\, (Y^b \, X^c R_{ecbi})_{;a} \,g^{ae}\, \partial_j \cr
&\quad -   \lambda m^{-1}\theta' \,Y^b \, X^c R_{ecbi} \, g^{ij}\,  g^{ue} \Gamma^a{}_{ua}  \, \partial_j 
-  \lambda m^{-1}\theta' \,Y^b \, X^c R_{ecbi} \,g^{iu}\,  g^{ae} \Gamma^j{}_{uj}  \, \partial_a \cr
&\quad +  \lambda\, \extd x^i\big(
Y^pX^q R_{qp;i} + X^q{}_{;w} Y^p R^w{}_{piq} + X^q Y^p{}_{;w} R^w{}_{qip}
- Y^p{}_{;j} X^q{}_{;p}\Gamma^j{}_{iq}
- X^q{}_{;j} Y^p{}_{;q} \Gamma^j{}_{ip}
\big)  \cr
  &\quad - \lambda\,m^{-1}\,\theta'\, g^{ij}\, (Y^p R_{pi} + \Gamma^u{}_{pi}Y^p{}_{;u}   )  \, \nabla_j X 
 - \lambda\,m^{-1}\,\theta'\, g^{ij}\,( X^q R_{qi} + \Gamma^u{}_{qi}X^q{}_{;u}   ) \, \nabla_j Y \cr
&\quad   +  \lambda m^{-1}  \theta'\, (\Gamma^p{}_{ap} \partial_j + \Gamma^p{}_{jp} \partial_a   ) \, 
g^{ij}\, Y^b \, X^c R_{ecbi} \,g^{ae}   \cr
 &=    \lambda\, ( Y^p{}_{;q}\, X^q{}_{;j} + X^p{}_{;q}\, Y^q{}_{;j})\,\Gamma^j{}_{pi} \extd x^i
+ \lambda m^{-1}\theta' \, g^{ij}\,\nabla_j \big(\nabla_{\nabla_i X} Y + \nabla_{\nabla_i Y} X\big)  \cr
& \quad + \lambda m^{-1}\theta' (\Gamma^i{}_{au}\, g^{uj} + \Gamma^j{}_{au}\, g^{iu} )(X^a{}_{;i}\,\nabla_j Y + Y^a{}_{;j}\,\nabla_i X) \cr
&\quad  +  \lambda m^{-1}\theta' \,g^{ij}\, (Y^b \, X^c R_{ecbi})_{;j} \,g^{ae}\, \partial_a
+   \lambda m^{-1}\theta' \,g^{ij}\, (Y^b \, X^c R_{ecbi})_{;a} \,g^{ae}\, \partial_j \cr
&\quad +  \lambda\, \extd x^i\big(
Y^pX^q R_{qp;i} + X^q{}_{;w} Y^p R^w{}_{piq} + X^q Y^p{}_{;w} R^w{}_{qip}
- Y^p{}_{;j} X^q{}_{;p}\Gamma^j{}_{iq}
- X^q{}_{;j} Y^p{}_{;q} \Gamma^j{}_{ip}
\big)  \cr
  &\quad - \lambda\,m^{-1}\,\theta'\, g^{ij}\, (Y^p R_{pi} + \Gamma^u{}_{pi}Y^p{}_{;u}   )  \, \nabla_j X 
 - \lambda\,m^{-1}\,\theta'\, g^{ij}\,( X^q R_{qi} + \Gamma^u{}_{qi}X^q{}_{;u}   ) \, \nabla_j Y \cr
  &=    \lambda\, \extd x^i\big(
Y^pX^q R_{qp;i} + X^q{}_{;w} Y^p R^w{}_{piq} + X^q Y^p{}_{;w} R^w{}_{qip} \big) 
  \cr
& \quad+ \lambda m^{-1}\theta' \, g^{ij}\,\nabla_j \big(\nabla_{\nabla_i X} Y + \nabla_{\nabla_i Y} X\big)  \cr
& \quad + \lambda m^{-1}\theta' (\Gamma^i{}_{au}\, g^{uj} + \Gamma^j{}_{au}\, g^{iu} )(X^a{}_{;i}\,\nabla_j Y + Y^a{}_{;j}\,\nabla_i X) \cr
&\quad  +  \,\lambda m^{-1}\theta' \,g^{ij}\, (Y^b \, X^c R_{ecbi})_{;j} \,g^{ae}\, \partial_a
+   \lambda m^{-1}\theta' \,g^{ij}\, (Y^b \, X^c R_{ecbi})_{;a} \,g^{ae}\, \partial_j \cr
  &\quad - \lambda\,m^{-1}\,\theta'\, g^{ij}\, (Y^p R_{pi} + \Gamma^u{}_{pi}Y^p{}_{;u}   )  \, \nabla_j X 
 - \lambda\,m^{-1}\,\theta'\, g^{ij}\,( X^q R_{qi} + \Gamma^u{}_{qi}X^q{}_{;u}   ) \, \nabla_j Y \cr
   &=    \lambda\, \extd x^i\big(
Y^pX^q R_{qp;i} + X^q{}_{;w} Y^p R^w{}_{piq} + X^q Y^p{}_{;w} R^w{}_{qip} \big) 
  \cr
& \quad+\lambda m^{-1}\theta' \, g^{ij}\,\nabla_j \big(\nabla_{\nabla_i X} Y + \nabla_{\nabla_i Y} X\big)  \cr
& \quad + \lambda m^{-1}\theta' \,\Gamma^i{}_{au}\, g^{uj} Y^a{}_{;j}\,\nabla_i X
+ \lambda m^{-1}\theta' \,\Gamma^j{}_{au}\, g^{iu} X^a{}_{;i}\,\nabla_j Y   \cr
&\quad  +  \lambda m^{-1}\theta' \,g^{ij}\, (Y^b \, X^c R_{ecbi})_{;j} \,g^{ae}\, \partial_a
+   \lambda m^{-1}\theta' \,g^{ij}\, (Y^b \, X^c R_{ecbi})_{;a} \,g^{ae}\, \partial_j \cr
  &\quad - \lambda\,m^{-1}\,\theta'\, g^{ij}\, Y^p R_{pi}  \, \nabla_j X 
 - \lambda\,m^{-1}\,\theta'\, g^{ij}\, X^q R_{qi}    \, \nabla_j Y \cr
    &=    \lambda\, \extd x^i\big(
Y^pX^q R_{qp;i} + X^q{}_{;w} Y^p R^w{}_{piq} + X^q Y^p{}_{;w} R^w{}_{qip} \big) 
  \cr
& \quad+ \lambda m^{-1}\theta' \, g^{ij}\,\nabla_j \big(    X^u{}_{;i}\,\nabla_u Y + Y^u{}_{;i}\,\nabla_u X \big)  \cr
& \quad + \lambda m^{-1}\theta' \,\Gamma^u{}_{ai}\, g^{ij} Y^a{}_{;j}\,\nabla_u X
+ \lambda m^{-1}\theta' \,\Gamma^u{}_{aj}\, g^{ij} X^a{}_{;i}\,\nabla_u Y   \cr
&\quad  +  \lambda m^{-1}\theta' \,g^{ij}\, (Y^b \, X^c R_{ecbi})_{;j} \,g^{ae}\, \partial_a
+   \lambda m^{-1}\theta' \,g^{ij}\, (Y^b \, X^c R_{ecbi})_{;a} \,g^{ae}\, \partial_j \cr
  &\quad - \lambda\,m^{-1}\,\theta'\, g^{ij}\, Y^p R_{pi}  \, \nabla_j X 
 - \lambda\,m^{-1}\,\theta'\, g^{ij}\, X^q R_{qi}    \, \nabla_j Y 
\end{align*}
which gives the result stated. \end{proof}

\begin{proposition} \label{coordin}
We have
 \begin{align*}
P(X,Y) &= - \, \widehat{\extd x^i}\, \big(\nabla_{\nabla_i X} Y + \nabla_{\nabla_i Y} X\big)
 +{\lambda\over 2m}\,\theta'\, g^{ij}\nabla_j \big(\nabla_{\nabla_i X} Y + \nabla_{\nabla_i Y} X\big)
 \cr
 &\quad - \theta'\, Y^b\, X^a\, V_{,a;b}    +  m^{-1}g^{ij}\, \theta'\,Y^b \, X^c R_{ecbi} \,g^{ae} (\partial_j \partial_a - \lambda\Gamma^k{}_{aj}\partial_k) \cr
  &\quad - (2m)^{-1} g^{ij}\,\theta'\big(\nabla_i X\, \nabla_j Y+\nabla_i Y\, \nabla_j X\big)
  +  {\lambda\over 2m} g^{ij}\,\theta' \,  \Gamma^k{}_{qj} \, \big ( X^q{}_{;i\, }\nabla_k Y+  Y^q{}_{;i\, }\nabla_k X\big) \cr
       &\quad +  \tfrac12\,  \lambda\, \big(
Y^pX^q R_{qp;i} + X^q{}_{;w} Y^p R^w{}_{piq} + X^q Y^p{}_{;w} R^w{}_{qip} \big) \,  ( \extd x^i- m^{-1} \theta'\,g^{ij}\del_j)
  \cr
&\quad   +  {\lambda\over 2m}\theta' \, \big(
Y^pX^q (R_{pi;q}+ R_{qi;p} -R_{qp;i}) -Y^p{}_{;w} \, X^q R^w{}_{piq}   -  X^q{}_{;w} \, Y^p  R^w{}_{qip} 
  \big) \,g^{ji}\, \partial_j   \cr
  &\quad - {\lambda\over 2m}\,\theta'\, g^{ij}\, Y^p R_{pi}  \, \nabla_j X 
 -{\lambda\over 2m}\,\theta'\, g^{ij}\, X^q R_{qi}    \, \nabla_j Y \ .
\end{align*}
Together with Proposition~\ref{gup} this gives the commutator as
\[
[    Y  , \extd X  ] =\lambda\, \extd(\nabla_Y X)+\lambda\, P(X,Y)\ .
\]
\end{proposition}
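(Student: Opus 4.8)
The plan is to read $P(X,Y)$ directly off the master identity (\ref{P0N}), namely $2\,P(X,Y)=P_0(X,Y)+P_0(X,Y)^*+N(X,Y)$, which was justified by the anti-hermiticity of the subleading piece $\lambda P_1$ together with the reality relation $N=P-P^*$ of Proposition~\ref{real5}. Rewriting this as
\[
P(X,Y)=P_0(X,Y)+\tfrac12\big(P_0(X,Y)^*+N(X,Y)-P_0(X,Y)\big),
\]
the whole task reduces to inserting the explicit $P_0$ of (\ref{P0}) and the value of $P_0^*+N-P_0$ supplied by Lemma~\ref{lem:PNP*}, and simplifying. In particular the leading hermitian terms of $P_0$ — the $\widehat{\extd x^i}$ term, the $V_{,a;b}$ term, the product term $g^{ij}\nabla_iX\,\nabla_jY$, and the Riemann operator term carrying $\partial_j\partial_a-\lambda\,\Gamma^k{}_{aj}\partial_k$ — survive unchanged into the answer, while the whole order-$\lambda$ correction is exactly half of the four-line expression of Lemma~\ref{lem:PNP*}.

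First I would match the second (derivative-of-$\nabla$) line of the Lemma to a single covariant term: by the Leibniz rule $\nabla_j(\nabla_{\nabla_iX}Y)=(\nabla_j\nabla_iX)^u\,\nabla_uY+X^u{}_{;i}\,\nabla_j\nabla_uY$ (and likewise with $X\leftrightarrow Y$), that line is precisely $\lambda\mu^{-1}\theta'\,g^{ij}\,\nabla_j(\nabla_{\nabla_iX}Y+\nabla_{\nabla_iY}X)$, so halving it yields the $\tfrac{\lambda}{2\mu}\theta'\,g^{ij}\,\nabla_j(\cdots)$ term on the first line of the statement. Similarly the two Ricci terms $-\lambda\mu^{-1}\theta'\,g^{ij}Y^pR_{pi}\nabla_jX$ and $-\lambda\mu^{-1}\theta'\,g^{ij}X^qR_{qi}\nabla_jY$ halve directly into the last line.

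The substantive work is the curvature-divergence line $\lambda\mu^{-1}\theta'\,g^{ij}\big((Y^bX^c+X^bY^c)R_{ecbi}\big)_{;j}\,g^{ae}\,\partial_a$. I would expand its covariant derivative by Leibniz into a Riemann-times-$\nabla(X\text{ or }Y)$ part and an $R_{ecbi;j}$ part, and then apply the contracted second Bianchi identity in the same form already used above, e.g.
\[
g^{aq}R_{qc;b}-g^{ij}R^a{}_{cib;j}=-\,g^{aq}g^{ij}\big(R_{qijb;c}+R_{qbic;j}\big),
\]
to convert the Riemann-derivative pieces into the Ricci-derivative combination $R_{pi;q}+R_{qi;p}-R_{qp;i}$ appearing on the fifth line. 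Reordering the bare operator $\partial_a$ against the covariantly differentiated coefficients produces the non-tensorial Christoffel term $\tfrac{\lambda}{2\mu}g^{ij}\theta'\,\Gamma^k{}_{qj}(X^q{}_{;i}\nabla_kY+Y^q{}_{;i}\nabla_kX)$ on the third line, while the pieces proportional to $R_{qp;i}$ regroup with the $\extd x^i$ curvature line of the Lemma to produce the combination $(\extd x^i-\mu^{-1}\theta'\,g^{ij}\partial_j)$ multiplying $\tfrac{\lambda}{2}(Y^pX^qR_{qp;i}+\cdots)$ on the fourth line. Collecting all contributions gives the displayed formula, and feeding it into Proposition~\ref{gup} yields $[\,Y,\extd X\,]=\lambda\,\extd(\nabla_YX)+\lambda\,P(X,Y)$.

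I expect the main obstacle to be precisely this last expansion: keeping the bare-versus-covariant ($\partial_a$ versus $\nabla_j$) bookkeeping straight while repeatedly invoking the Bianchi identity, since it is there that the unmotivated-looking combination $R_{pi;q}+R_{qi;p}-R_{qp;i}$ and the non-tensorial $\Gamma^k{}_{qj}$ correction must emerge, and where a careless relabeling of the dummy indices contracted against $g^{ij}$ and $g^{ae}$ would destroy the manifest $X\leftrightarrow Y$ symmetry that $P(X,Y)=P(Y,X)$ requires.
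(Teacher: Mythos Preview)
Your overall strategy---write $P=P_0+\tfrac12(P_0^*+N-P_0)$ and feed in Lemma~\ref{lem:PNP*}---is exactly the paper's, and your treatment of the Ricci terms and of the Bianchi manipulation of the curvature--divergence line (producing the $R_{pi;q}+R_{qi;p}-R_{qp;i}$ combination and the kernel factor $\extd x^i-\mu^{-1}\theta'g^{ij}\partial_j$) is also the same route the paper takes.

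The genuine gap is in your accounting of the Christoffel correction $\tfrac{\lambda}{2\mu}g^{ij}\theta'\,\Gamma^k{}_{qj}(X^q{}_{;i}\nabla_kY+Y^q{}_{;i}\nabla_kX)$. Your Leibniz identity is not an exact match for the Lemma's second line: in the paper's conventions $\nabla_j\nabla_u Y$ means the \emph{iterated} derivative $\nabla_{\partial_j}(\nabla_{\partial_u}Y)$, so that
\[
g^{ij}\big((\nabla_j\nabla_iX)^u\nabla_uY+X^u{}_{;i}\,\nabla_j\nabla_uY\big)
=g^{ij}\nabla_j\big(\nabla_{\nabla_iX}Y\big)+g^{ij}\,\Gamma^k{}_{qj}\,X^q{}_{;i}\,\nabla_kY,
\]
with the extra $\Gamma$ piece arising precisely because the lower index on $\nabla_uY$ is not treated covariantly. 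Halving therefore produces \emph{both} the $\tfrac{\lambda}{2\mu}\theta'g^{ij}\nabla_j(\cdots)$ term on the first line \emph{and} the $\Gamma^k{}_{qj}$ term on the third line of the Proposition. (In the paper this same $\Gamma$ piece surfaces through the added-and-subtracted $\nabla_{\nabla_iX}\nabla_jY$ intermediary and the final ``combine the last part of the third line with the fifth line'' step.)

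Correspondingly, your proposed source for the $\Gamma$ term is wrong: ``reordering the bare operator $\partial_a$'' in the curvature--divergence line contributes only at order $\lambda^2$ (since $[\partial_a,f]=\lambda f_{,a}$ against an overall $\lambda$ prefactor) and is discarded; that line, after Leibniz and Bianchi, yields only the Ricci--derivative and Riemann--times--$\nabla X,\nabla Y$ pieces. If you carry your plan through as written you will end up missing the $\Gamma^k{}_{qj}$ term entirely.
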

\begin{proof} We use equation (\ref{P0}) for $P_0(X,Y)$ and Lemma~\ref{lem:PNP*} for $P_0(X,Y)^* +N(X,Y) -P_0(X,Y)$. Then from Proposition~\ref{real5}, we have
\[
2\,P(X,Y) = P_0(X,Y) + P_0(X,Y)^* +N(X,Y)= 2\,P_0(X,Y) + P_0(X,Y)^* +N(X,Y)-P_0(X,Y) 
\]
giving 
 \begin{align}
2 P(X,Y) &= - 2 \widehat{\extd x^i}\, \big(\nabla_{\nabla_i X} Y + \nabla_{\nabla_i Y} X\big) - m^{-1} g^{ij}\,\theta'\big(\nabla_i X\, \nabla_j Y+
\nabla_i Y\, \nabla_j X\big) \nonumber\cr
&\quad - 2\theta'\, Y^b\, X^a\, V_{,a;b}    +  2m^{-1}g^{ij}\, \theta'\,Y^b \, X^c R_{ecbi} \,g^{ae} (\partial_j \partial_a - \lambda\Gamma^k{}_{aj}\partial_k) \nonumber\cr
     &\quad + \lambda\, \big(
Y^pX^q R_{qp;i} + X^q{}_{;w} Y^p R^w{}_{piq} + X^q Y^p{}_{;w} R^w{}_{qip} \big) \,   \extd x^i
  \nonumber\cr
& \quad+ \lambda m^{-1}\theta' \, g^{ij}\, \big( (\nabla_j \nabla_iX)^u\,\nabla_u Y +
X^u{}_{;i}\,\nabla_j \nabla_u Y + 
(\nabla_j \nabla_iY)^u\,\nabla_u X +
Y^u{}_{;i}\,\nabla_j \nabla_u X \big)  \nonumber\cr
&\quad  +  \lambda m^{-1}\theta' \,g^{ij}\, ( (Y^b \, X^c+ X^b \, Y^c ) R_{ecbi})_{;j} \,g^{ae}\, \partial_a  \nonumber\\ 
  &\quad - \lambda m^{-1}\,\theta'\, g^{ij}\, Y^p R_{pi}  \, \nabla_j X 
 -\lambda m^{-1}\,\theta'\, g^{ij}\, X^q R_{qi}    \, \nabla_j Y \ .
\end{align}
We split this first result for $P(X,Y)$ into well defined bits:
 \begin{align*}
2P(X,Y) &=  - 2 \widehat{\extd x^i}\, \big(\nabla_{\nabla_i X} Y + \nabla_{\nabla_i Y} X\big)
 +\lambda m^{-1}\,\theta'\, g^{ij}\nabla_j \big(\nabla_{\nabla_i X} Y + \nabla_{\nabla_i Y} X\big)
 \cr
 &\quad - 2\theta'\, Y^b\, X^a\, V_{,a;b}    + 2 m^{-1}g^{ij}\, \theta'\,Y^b \, X^c R_{ecbi} \,g^{ae} (\partial_j \partial_a - \lambda\Gamma^k{}_{aj}\partial_k) \cr
  &\quad - m^{-1} g^{ij}\,\theta'\big(\nabla_i X\, \nabla_j Y+\nabla_i Y\, \nabla_j X\big)
  +\lambda m^{-1} g^{ij}\,\theta'\big(\nabla_{\nabla_i X} \nabla_j Y+\nabla_{\nabla_i Y} \nabla_j X\big) \cr
       &\quad +   \lambda\, \big(
Y^pX^q R_{qp;i} + X^q{}_{;w} Y^p R^w{}_{piq} + X^q Y^p{}_{;w} R^w{}_{qip} \big) \,   \extd x^i
  \cr
&\quad    - \lambda m^{-1} g^{ij}\,\theta'\big(\nabla_{\nabla_i X} \nabla_j Y+\nabla_{\nabla_i Y} \nabla_j X\big) \cr
 &\quad - \lambda m^{-1}\,\theta'\, g^{ij}\nabla_j \big(\nabla_{\nabla_i X} Y + \nabla_{\nabla_i Y} X\big) \cr
& \quad+ \lambda m^{-1}\theta' \, g^{ij}\, \big( (\nabla_j \nabla_iX)^u\,\nabla_u Y +
X^u{}_{;i}\,\nabla_j \nabla_u Y + 
(\nabla_j \nabla_iY)^u\,\nabla_u X +
Y^u{}_{;i}\,\nabla_j \nabla_u X \big)  \cr
&\quad  +  \lambda m^{-1}\theta' \,g^{ij}\, ( (Y^b \, X^c+ X^b \, Y^c ) R_{ecbi})_{;j} \,g^{ae}\, \partial_a  \cr
  &\quad - \lambda m^{-1}\,\theta'\, g^{ij}\, Y^p R_{pi}  \, \nabla_j X 
 - \lambda m^{-1}\,\theta'\, g^{ij}\, X^q R_{qi}    \, \nabla_j Y \ .
\end{align*}
The last five lines of this are
\begin{align*}
&\quad    - \lambda m^{-1} g^{ij}\,\theta'\big(\nabla_{\nabla_i X} \nabla_j Y+\nabla_{\nabla_i Y} \nabla_j X\big) + \lambda m^{-1}\theta' \, g^{ij}\, \big( 
\Gamma^k{}_{uj}\, X^u{}_{;i} \,\nabla_k Y +\Gamma^k{}_{uj}\, Y^u{}_{;i} \,\nabla_k X \big)  \cr
&\quad  +  \lambda m^{-1}\theta' \,g^{ij}\, ( (Y^b \, X^c+ X^b \, Y^c ) R_{ecbi})_{;j} \,g^{ae}\, \partial_a   - \lambda m^{-1}\,\theta'\, g^{ij}\, Y^p R_{pi}  \, \nabla_j X 
 - 2\lambda m^{-1}\,\theta'\, g^{ij}\, X^q R_{qi}    \, \nabla_j Y \cr
 &=    - \lambda m^{-1} g^{ij}\,\theta'\big(\nabla_{\nabla_i X} \nabla_j Y+\nabla_{\nabla_i Y} \nabla_j X\big) + \lambda m^{-1}\theta' \, g^{ij}\, \big( 
\Gamma^k{}_{uj}\, X^u{}_{;i} \,\nabla_k Y +\Gamma^k{}_{uj}\, Y^u{}_{;i} \,\nabla_k X \big)  \cr
&\quad  +  \lambda m^{-1}\theta' \,g^{ij}\, ( (Y^b \, X^c+ X^b \, Y^c ) R_{ecbi})_{;j} \,g^{ae}\, \partial_a   - \lambda m^{-1}\,\theta'\, g^{ij}\, Y^p R_{pi}  \, \nabla_j X 
 - 2\lambda m^{-1}\,\theta'\, g^{ij}\, X^q R_{qi}    \, \nabla_j Y \cr
  &=    - \lambda m^{-1} g^{ij}\,\theta'\big(   X^q{}_{;i}\, Y^p{}_{;j;q}\,\del_p   
  + Y^p{}_{;i}\, X^q{}_{;j;p}\,\del_q  \big)   +  \lambda m^{-1}\theta' \,g^{ij}\, ( (Y^b \, X^c+ X^b \, Y^c ) R_{ecbi})_{;j} \,g^{ae}\, \partial_a  \cr
  &\quad - \lambda m^{-1}\,\theta'\, g^{ij}\, Y^p R_{pi}  \, \nabla_j X 
 - \lambda m^{-1}\,\theta'\, g^{ij}\, X^q R_{qi}    \, \nabla_j Y \ .
\end{align*}
Then
 \begin{align*}
2P(X,Y) &= - 2 \widehat{\extd x^i}\, \big(\nabla_{\nabla_i X} Y + \nabla_{\nabla_i Y} X\big)
 +\lambda m^{-1}\theta'\, g^{ij}\nabla_j \big(\nabla_{\nabla_i X} Y + \nabla_{\nabla_i Y} X\big)
 \cr
 &\quad -2 \theta'\, Y^b\, X^a\, V_{,a;b}    + 2 m^{-1}g^{ij}\, \theta'\,Y^b \, X^c R_{ecbi} \,g^{ae} (\partial_j \partial_a - \lambda\Gamma^k{}_{aj}\partial_k) \cr
  &\quad - m^{-1} g^{ij}\,\theta'\big(\nabla_i X\, \nabla_j Y+\nabla_i Y\, \nabla_j X\big)
  +\lambda m^{-1}g^{ij}\,\theta'\big(\nabla_{\nabla_i X} \nabla_j Y+\nabla_{\nabla_i Y} \nabla_j X\big) \cr
       &\quad + \ \lambda\, \big(
Y^pX^q R_{qp;i} + X^q{}_{;w} Y^p R^w{}_{piq} + X^q Y^p{}_{;w} R^w{}_{qip} \big) \,   \extd x^i
  \cr
&\quad    - \lambda m^{-1} g^{ij}\,\theta'\big(   X^q{}_{;i}\, Y^p{}_{;j;q}\,\del_p   
  + Y^p{}_{;i}\, X^q{}_{;j;p}\,\del_q  \big) \cr
&\quad  +  \lambda m^{-1}\theta' \,g^{uw}\, ( Y^p \, X^q R_{iqpu} +  X^q \, Y^p  R_{ipqu})_{;w} \,g^{ji}\, \partial_j  \cr
  &\quad - \lambda m^{-1}\,\theta'\, g^{ij}\, Y^p R_{pi}  \, \nabla_j X 
 -2  \lambda\,m^{-1}\,\theta'\, g^{ij}\, X^q R_{qi}    \, \nabla_j Y \ .
\end{align*}
We can rewrite the fourth and sixth lines as
 \begin{align*}
       &\quad   \lambda\, \big(
Y^pX^q R_{qp;i} + X^q{}_{;w} Y^p R^w{}_{piq} + X^q Y^p{}_{;w} R^w{}_{qip} \big) \,   \extd x^i
  \cr
&\quad  +  \lambda m^{-1}\theta' \,g^{uw}\, ( Y^p \, X^q R_{iqpu} +  X^q \, Y^p  R_{ipqu})_{;w} \,g^{ji}\, \partial_j  \cr
&=  \lambda\, \big(
Y^pX^q R_{qp;i} + X^q{}_{;w} Y^p R^w{}_{piq} + X^q Y^p{}_{;w} R^w{}_{qip} \big) \,   \extd x^i
  \cr
&\quad  +  \lambda m^{-1}\theta' \,( -Y^p \, X^q R^w{}_{piq} -  X^q \, Y^p  R^w{}_{qip})_{;w} \,g^{ji}\, \partial_j  \cr
&=   \lambda\, \big(
Y^pX^q R_{qp;i} + X^q{}_{;w} Y^p R^w{}_{piq} + X^q Y^p{}_{;w} R^w{}_{qip} \big) \,  ( \extd x^i-m^{-1} \theta'\,g^{ij}\del_j)
  \cr
&\quad  +  \lambda m^{-1}\theta' \, \big(
Y^pX^q R_{qp;i} + X^q{}_{;w} Y^p R^w{}_{piq} + X^q Y^p{}_{;w} R^w{}_{qip} +( -Y^p \, X^q R^w{}_{piq} -  X^q \, Y^p  R^w{}_{qip})_{;w}\big) \,g^{ji}\, \partial_j  \cr
&=  \lambda\, \big(
Y^pX^q R_{qp;i} + X^q{}_{;w} Y^p R^w{}_{piq} + X^q Y^p{}_{;w} R^w{}_{qip} \big) \,  ( \extd x^i-m^{-1} \theta'\,g^{ij}\del_j)
  \cr
&\quad  +  \lambda m^{-1}\theta' \, \big(
Y^pX^q R_{qp;i}  -Y^p{}_{;w} \, X^q R^w{}_{piq} -Y^p{} \, X^q R^w{}_{piq;w}  -  X^q{}_{;w} \, Y^p  R^w{}_{qip} 
-  X^q \, Y^p  R^w{}_{qip;w}   \big) \,g^{ji}\, \partial_j  \ .
\end{align*}
and we note that
\[
-R^w{}_{piq;w} = R^w{}_{pqw;i}  + R^w{}_{pwi;q}  = - R_{pq;i} + R_{pi;q},
\]
so the fourth and sixth lines become
 \begin{align*}
 &=  \lambda\, \big(
Y^pX^q R_{qp;i} + X^q{}_{;w} Y^p R^w{}_{piq} + X^q Y^p{}_{;w} R^w{}_{qip} \big) \,  ( \extd x^i-m^{-1} \theta'\,g^{ij}\del_j)
  \cr
&\quad  +  \lambda m^{-1}\theta' \, \big(
Y^pX^q (R_{pi;q}+ R_{qi;p} -R_{qp;i}) -Y^p{}_{;w} \, X^q R^w{}_{piq}   -  X^q{}_{;w} \, Y^p  R^w{}_{qip} 
  \big) \,g^{ji}\, \partial_j  \ ,
\end{align*}
which gives
 \begin{align*}
2P(X,Y) &= - 2 \widehat{\extd x^i}\, \big(\nabla_{\nabla_i X} Y + \nabla_{\nabla_i Y} X\big)
 +\lambda m^{-1}\theta'\, g^{ij}\nabla_j \big(\nabla_{\nabla_i X} Y + \nabla_{\nabla_i Y} X\big)
 \cr
 &\quad -2 \theta'\, Y^b\, X^a\, V_{,a;b}    + 2 m^{-1}g^{ij}\, \theta'\,Y^b \, X^c R_{ecbi} \,g^{ae} (\partial_j \partial_a - \lambda\Gamma^k{}_{aj}\partial_k) \cr
  &\quad - m^{-1} g^{ij}\,\theta'\big(\nabla_i X\, \nabla_j Y+\nabla_i Y\, \nabla_j X\big)
  +\lambda m^{-1} g^{ij}\,\theta'\big(\nabla_{\nabla_i X} \nabla_j Y+\nabla_{\nabla_i Y} \nabla_j X\big) \cr
       &\quad +    \lambda\, \big(
Y^pX^q R_{qp;i} + X^q{}_{;w} Y^p R^w{}_{piq} + X^q Y^p{}_{;w} R^w{}_{qip} \big) \,  ( \extd x^i-m^{-1} \theta'\,g^{ij}\del_j)
  \cr
&\quad    - \lambda m^{-1} g^{ij}\,\theta'\big(   X^q{}_{;i}\, Y^p{}_{;j;q}\,\del_p   
  + Y^p{}_{;i}\, X^q{}_{;j;p}\,\del_q  \big) \cr
&\quad   +  \lambda m^{-1}\theta' \, \big(
Y^pX^q (R_{pi;q}+ R_{qi;p} -R_{qp;i}) -Y^p{}_{;w} \, X^q R^w{}_{piq}   -  X^q{}_{;w} \, Y^p  R^w{}_{qip} 
  \big) \,g^{ji}\, \partial_j   \cr
  &\quad - \lambda m^{-1}\,\theta'\, g^{ij}\, Y^p R_{pi}  \, \nabla_j X 
 -\lambda m^{-1}\,\theta'\, g^{ij}\, X^q R_{qi}    \, \nabla_j Y \ .
\end{align*}
Finally, we combine the last part of the third line with the fifth line to give the stated answer. 
\end{proof}

\begin{remark}\label{rembasis} \rm The formula for $P(X,Y)$ is written in a coordinate basis but is both coordinate invariant and applies in any (local) basis. To see this, we set a new basis of 1-forms and vector fields
\[
\del_i=\Lambda^a{}_{i}  \, \del_a\ ,\quad \extd x^i=\Lambda^{-1i}{}_b\, f^b   \ .
\]
For the purposes if this remark only, we use $a,b,c$ for the new basis labels and $i,j,k$ for the coordinate basis. Then, for $E_i\, \extd x^i$ a 1-form valued in a vector bundle (for which we do not write indices)
\begin{align*}
\widehat{\extd x^i}\,   E_i & - {\lambda\over 2m}\theta'\, g^{ij}\,\nabla_j E_i = 
\widehat{(\Lambda^{-1i}{}_c\, f^c) }\, \Lambda^b{}_{i}  \, E_b - {\lambda\over 2m}\theta'\, \Lambda^{-1i}{}_c\, g^{ca}\,\nabla_a (\Lambda^b{}_{i}  \, E_b)  \cr
&= 
\widehat{f^c }\, \Lambda^b{}_{i}  \, \Lambda^{-1i}{}_c\,E_b  -   {\lambda\over 2m}\theta'\, g^{pq}  \del_q( \Lambda^{-1i}{}_c)\, \Lambda^c{}_{p}    
\,  \Lambda^b{}_{i}  \, E_b  - {\lambda\over 2m}\theta'\, \Lambda^{-1i}{}_c\, g^{ca}\,\nabla_a (\Lambda^b{}_{i}  \, E_b)  \cr
&= 
\widehat{f^c }\,E_c  -   {\lambda\over 2m}\theta'\, g^{ac}  \del_a( \Lambda^{-1i}{}_c)   
\,  \Lambda^b{}_{i}  \, E_b
 - {\lambda\over 2m}\theta'\, \Lambda^{-1i}{}_c\, g^{ca}\,\nabla_a (\Lambda^b{}_{i}  \, E_b)  \cr
 &= 
\widehat{f^c }\,E_c  -   {\lambda\over 2m}\theta'\, g^{ac}  \del_a( \Lambda^{-1i}{}_c
\,  \Lambda^b{}_{i} ) \, E_b
 - {\lambda\over 2m}\theta'\, \Lambda^{-1i}{}_c\, g^{ca}\,\Lambda^b{}_{i}  \, \nabla_a (E_b)  \cr
&=     \widehat{\extd x^i}\,   E_i  - {\lambda\over 2m}\theta'\, g^{ab}\,\nabla_a E_b\ .
\end{align*}
This equation serves two purposes. First, change to another coordinate basis shows the coordinate independence of the expression on the noncommutative algebra. Second, it provides a formula in a more general context than a coordinate basis, which will be useful later.
Next we define the Christoffel symbols for any basis. To do this, calculate
\begin{align*}
\nabla_b \del_a &= \Lambda^{-1p}{}_b\, \nabla_p( \Lambda^{-1j}{}_a\, \del_j)= 
 \Lambda^{-1p}{}_b\, \del_p( \Lambda^{-1j}{}_a)\, \del_j +  \Lambda^{-1p}{}_b\, \Lambda^{-1j}{}_a\, \Gamma^k{}_{pj}\, \del_k \cr
 &=   \del_b( \Lambda^{-1j}{}_a)\, \Lambda^c{}_{j} \, \del_c +  \Lambda^{-1p}{}_b\, \Lambda^{-1j}{}_a\, \Gamma^k{}_{pj}
 \, \Lambda^c{}_{k} \, \del_c \cr
  &=  - \Lambda^{-1j}{}_a\,  \del_b(\Lambda^c{}_{j}) \, \del_c +  \Lambda^{-1p}{}_b\, \Lambda^{-1j}{}_a\, \Gamma^k{}_{pj}
 \, \Lambda^c{}_{k} \, \del_c\ ,
\end{align*}
as $ \Lambda^{-1j}{}_a\, \Lambda^c{}_{j}=\delta^c{}_a$. We define $\Gamma^c{}_{ab}$ in the new basis by $\nabla_b \del_a =\Gamma^c{}_{ab}\, \del_c$. Then
\begin{align*}
\partial_j \partial_i - \lambda\Gamma^k{}_{ij}\partial_k&= \Lambda^a{}_{j}  \, \del_a\, \Lambda^b{}_{i}  \, \del_b
- \lambda\Gamma^k{}_{ij} \, \Lambda^c{}_{k}  \, \del_c \cr
&=  \Lambda^a{}_{j}  \,  \Lambda^b{}_{i}  \, \del_a\,\del_b + \lambda\, \Lambda^a{}_{j}  \, \del_a(\Lambda^c{}_{i})  \, \del_c
- \lambda\Gamma^k{}_{ij} \, \Lambda^c{}_{k}  \, \del_c\cr
&=  \Lambda^a{}_{j}  \,  \Lambda^b{}_{i}  \,\big( \del_a\,\del_b + \lambda\, (\Lambda^{-1p}{}_{b}  \, \del_a(\Lambda^c{}_{p})  
- \Lambda^{-1p}{}_{b}  \, \Lambda^{-1q}{}_{a}  \,\Gamma^k{}_{pq} \, \Lambda^c{}_{k})  \, \del_c \big)\cr
&=  \Lambda^a{}_{j}  \,  \Lambda^b{}_{i}  \,\big( \partial_a \partial_b - \lambda\Gamma^c{}_{ab}\partial_c \big)\ .
\end{align*}
The last of the expression we need to consider for $2m P(X,Y)$ is
 \begin{align*}
  &-  g^{ij}\,\theta'\, \nabla_i X\, \nabla_j Y
  +  {\lambda} g^{ij}\,\theta' \,  \Gamma^k{}_{qj} \,  X^q{}_{;i\, }\nabla_k Y\cr
    &= -  g^{ij}\,\theta'\, \nabla_i X\,(\Lambda^a{}_{j}  \,  \nabla_a Y)
  +  {\lambda} g^{ij}\,\theta' \,  \Gamma^k{}_{qj} \,  X^q{}_{;i\, }\nabla_k Y\cr
      &= -  g^{ij}\,\theta'\, \Lambda^a{}_{j}  \,  \nabla_i X\,(\nabla_a Y)
       - {\lambda} g^{ij}\,\theta'\, X^q{}_{;i }\,\partial_q(\Lambda^a{}_{j})  \,  \nabla_a Y
  +  {\lambda} g^{ij}\,\theta' \,  \Gamma^k{}_{qj} \,  X^q{}_{;i}\nabla_k Y\cr
        &= -  g^{ab}\,\theta' \,  \nabla_b X\,\nabla_a Y
       + {\lambda} \theta'\, (-\partial_q(\Lambda^c{}_{j})  
  +   \Gamma^k{}_{qj}  \, \Lambda^c{}_{k} ) \, g^{ij}\, X^q{}_{;i} \,\nabla_c Y\cr
          &= -  g^{ab}\,\theta' \,  \nabla_b X\,\nabla_a Y
       + {\lambda} \theta'\, (   \Gamma^c{}_{ab}  \, \Lambda^b{}_{q}  \Lambda^a{}_{j}  ) \, g^{ij}\, X^q{}_{;i} \,\nabla_c Y\cr
                 &= - g^{ab}\,\theta' \,  \nabla_b X\,\nabla_a Y
                  +  {\lambda} g^{da}\,\theta' \,  \Gamma^c{}_{ba} \,  X^b{}_{;d}\nabla_c Y
\end{align*}
and so the first three lines of the formula for $P(X,Y)$ in Proposition~\ref{coordin} are coordinate independent and true in more general bases (given the formula for the Christoffel symbols used here). The remaining lines are manifestly coordinate invariant by standard differential geometry. 
\end{remark}

\subsection{Check of Schr\"odinger representation of differential of a vector field}

It remains to check an identity used in the derivation that amounts to consistency of the proposed Schr\"odinger representation
of differentials of vector fields.

\begin{proposition}
\begin{align*}
\rho(P(X,Y)&-\tfrac12 P_0(X,Y) -\tfrac12 P_0(X,Y)^* - \tfrac12\, N(X,Y)     )\psi 
\cr
 &=   {\lambda^2\over 4m} g^{ab}\,\Gamma^i{}_{ab}   \big(
Y^pX^q R_{qp;i} + X^q{}_{;w} Y^p R^w{}_{piq} + X^q Y^p{}_{;w} R^w{}_{qip} \big) \,  \psi.
\end{align*}
\end{proposition}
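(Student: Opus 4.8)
The plan is to remove $N$ in favour of $P$ and recognise the whole combination as a difference of $*$-hermitian parts. By Proposition~\ref{real5} the quantity $N(X,Y)$ is \emph{defined} as $P(X,Y)-P(X,Y)^*$, so
\[
P - \tfrac12 P_0 - \tfrac12 P_0^* - \tfrac12 N = \tfrac12\big(P+P^*\big) - \tfrac12\big(P_0+P_0^*\big).
\]
Since the $*$-operation on $\CD(M)$ was set up in Section~\ref{orf} precisely so that $\rho(a^*)=\rho(a)^\dagger$, applying $\rho$ turns this into a difference of operator hermitian parts,
\[
\rho\big(P - \tfrac12 P_0 - \tfrac12 P_0^* - \tfrac12 N\big) = \mathrm{Herm}\,\rho(P) - \mathrm{Herm}\,\rho(P_0),\qquad \mathrm{Herm}(T):=\tfrac12\big(T+T^\dagger\big),
\]
the adjoint being taken on $L^2(M)$ with the measure $\sqrt{|g|}$. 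This is the conceptual heart of the identity: relation (\ref{P0N}) was imposed exactly to make $\mathrm{Herm}(P)$ and $\mathrm{Herm}(P_0)$ agree through order $\lambda$, and the proposition measures the residual.

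Next I would substitute the two explicit operators. For $\rho(P)$ I use Proposition~\ref{bukk}; for $\rho(P_0)$ I compute directly from (\ref{P0}) using $\rho(\del_a)=\lambda\,\partial/\partial x^a$, $\rho(\theta')=1$, and the image of $\widehat{\extd x^i}$ from (\ref{bas}). Because $P_0$ in (\ref{P0}) was chosen to reproduce exactly the second-order-derivative content of $\rho(P)$ in Proposition~\ref{bukk}, those second-order parts coincide and cancel in the difference. Hence $\rho(P)-\rho(P_0)$ is a first-order-plus-multiplication operator at order $\lambda^2$, and $\mathrm{Herm}$ of its first-order part collapses to a multiplication operator by integration by parts: for real $c^a$,
\[
\mathrm{Herm}\big(\lambda^2\, c^a\,\tfrac{\partial}{\partial x^a}\big) = -\tfrac12\,\lambda^2\, c^a{}_{;a}.
\]

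The $\Gamma^i$-dependence of the answer then enters through the connection correction hidden in the one-form image. Writing $K_i = Y^pX^q R_{qp;i} + X^q{}_{;w}Y^p R^w{}_{piq} + X^q Y^p{}_{;w}R^w{}_{qip}$ for the curvature covector appearing on the right-hand side, the kernel element (\ref{keromega1}) gives
\[
\mu\,\rho(\widehat{\extd x^i}) = \lambda\, g^{ij}\tfrac{\partial}{\partial x^j} - \tfrac\lambda2\,\Gamma^i,\qquad \Gamma^i=g^{ab}\Gamma^i{}_{ab},
\]
so the curvature one-form term $\tfrac12\lambda\,K_i\,\extd x^i$ present in $P$ (Proposition~\ref{coordin}) contributes not only a first-order operator but also an explicit scalar proportional to $\Gamma^i K_i$. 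Collecting the divergences produced by $\mathrm{Herm}$ from all surviving first-order coefficients together with this explicit scalar, and simplifying, I expect the scalar terms to assemble into the single contribution $\tfrac{\lambda^2}{4\mu}\,\Gamma^i K_i\,\psi = \tfrac{\lambda^2}{4\mu}\,g^{ab}\Gamma^i{}_{ab}\,K_i\,\psi$.

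The main obstacle is this last simplification: one must show that the divergence $c^a{}_{;a}$ of the first-order coefficient of $\rho(P)-\rho(P_0)$, taken together with the scalar pieces coming from $\rho(P)$ in Proposition~\ref{bukk} and from the $\Gamma^i$ correction above, reduces to the single term $\tfrac{\lambda^2}{4\mu}\,\Gamma^i K_i\,\psi$ with no residual curvature scalar. This is a finite but intricate computation that will rely on the contracted second Bianchi identity and repeated use of the Ricci identity for reordering covariant derivatives, exactly as in the cancellations closing the proof of Proposition~\ref{bukk}.
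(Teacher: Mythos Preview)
Your rewriting
\[
P - \tfrac12 P_0 - \tfrac12 P_0^* - \tfrac12 N = \tfrac12\big(P+P^*\big) - \tfrac12\big(P_0+P_0^*\big)
\]
is correct and the passage to $\mathrm{Herm}\,\rho(P) - \mathrm{Herm}\,\rho(P_0)$ is legitimate: since $\rho(\extd a)=\lambda^{-1}[\rh,\rho(a)]$ with $\rh$ hermitian and $\lambda$ imaginary, one indeed has $\rho(\xi^*)=\rho(\xi)^\dagger$ on one-forms, hence $\rho(N)=\rho(P)-\rho(P)^\dagger$ and $\rho(P_0^*)=\rho(P_0)^\dagger$. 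This is a genuinely different and more conceptual route than the paper's, which instead computes $\rho(P-P_0)$ and $\rho(P_0^*+N-P_0)$ separately (the latter via Lemma~\ref{lem:PNP*}) and then checks that all first-order pieces cancel pairwise. Your approach replaces that second computation by a single adjoint, at the cost of having to evaluate the divergence $c^a{}_{;a}$ of the first-order coefficient of $\rho(P)-\rho(P_0)$.

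There is, however, a confusion in your paragraph about the source of the $\Gamma^i$. You say the scalar arises from applying $\rho(\widehat{\extd x^i})$ to the curvature term $\tfrac12\lambda K_i\,\extd x^i$ ``present in $P$ (Proposition~\ref{coordin})''. But the $P$ of Proposition~\ref{coordin} is \emph{defined} by (\ref{P0N}) as $\tfrac12(P_0+P_0^*+N)$, so for that $P$ the combination $P-\tfrac12 P_0-\tfrac12 P_0^*-\tfrac12 N$ is identically zero; invoking it here is circular. In your Herm route the input must be $\rho(P)$ from Proposition~\ref{bukk}, which has no one-form pieces awaiting evaluation. The $\Gamma^i$ will instead emerge from the divergence computation $-\tfrac12 c^a{}_{;a}$ itself (note that $c^a$ contains explicit Christoffel-symbol terms such as $g^{ij}\Gamma^k{}_{ij}X^b{}_{;k}Y^a{}_{;b}$, so it is not a tensor and its ``divergence'' $\partial_a c^a + c^a\Gamma^b{}_{ba}$ need not be coordinate-invariant, consistent with the non-tensorial right-hand side). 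In the paper's organisation, by contrast, the $\Gamma^i K_i$ scalar appears transparently as the zeroth-order piece of $\rho(\lambda K_i\,\extd x^i)$ in $\rho(P_0^*+N-P_0)$, and the remaining first-order terms cancel against those of $\rho(P-P_0)$ without any further divergence being taken.

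So your outline is sound but the hard step---showing that the divergence of the explicit $c^a$ displayed in the paper collapses to $-\Gamma^i K_i$---remains to be done, and is comparable in length to the paper's route.
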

\begin{proof} First we calculate
\begin{align*}
&2m\rho( \widehat{\extd x^i}\, \big(\nabla_{\nabla_i X} Y + \nabla_{\nabla_i Y} X\big) )\psi
+g^{ij}\,  \rho\big(\nabla_i X\, \nabla_j Y+ \nabla_i Y\, \nabla_j X\big) \psi  \cr
&=2m  \lambda\rho( \widehat{\extd x^i} )\, ( X^b{}_{;i}\, Y^a{}_{;b} + Y^b{}_{;i}\, X^a{}_{;b} )\,\psi_{,a}
+{\lambda} g^{ij}\,  \big(\rho(\nabla_i X)\,  Y^a{}_{;j}\,\psi_{,a}      +       \rho(\nabla_i Y)\, X^a{}_{;j}\,\psi_{,a}  \big)  \cr
&= \lambda^2( 2\,g^{ic}\,\tfrac{\del}{\partial x^c}-g^{pq}\,\Gamma^i{}_{pq}    )\, ( X^b{}_{;i}\, Y^a{}_{;b} + Y^b{}_{;i}\, X^a{}_{;b} )\,\psi_{,a}
+{\lambda} g^{ij}\,  \big(\rho(\nabla_i X)\,  Y^a{}_{;j}\,\psi_{,a}      +       \rho(\nabla_i Y)\, X^a{}_{;j}\,\psi_{,a}  \big)  \cr
&=
 \lambda^2( -g^{pq}\,\Gamma^i{}_{pq}    )\, ( X^b{}_{;i}\, Y^a{}_{;b} + Y^b{}_{;i}\, X^a{}_{;b} )\,\psi_{,a} \cr
&\quad+ \lambda^2( 2\,g^{ic}   )\, ( X^b{}_{;i}\, Y^a{}_{;b} + Y^b{}_{;i}\, X^a{}_{;b} )\,\psi_{,a;c} \cr
&\quad+ \lambda^2( 2\,g^{ic}   )\, ( X^b{}_{;i;c}\, Y^a{}_{;b} + Y^b{}_{;i;c}\, X^a{}_{;b} 
+ X^b{}_{;i}\, Y^a{}_{;b;c} + Y^b{}_{;i}\, X^a{}_{;b;c}   )\,\psi_{,a} \cr
&\quad +\lambda^2  g^{ij}\,  \big(   X^b{}_{;i}\,  Y^a{}_{;j}    +     Y^b{}_{;i}\, X^a{}_{;j} \big) \,\psi_{,a;b} 
+\lambda^2  g^{ij}\,  \big(   X^b{}_{;i}\,  Y^a{}_{;j;b}    +     Y^b{}_{;i}\, X^a{}_{;j;b} \big) \,\psi_{,a}   \cr
&\quad+ \lambda^2( 2\,g^{ic}  )\, ( X^b{}_{;k}\, Y^a{}_{;b} + Y^b{}_{;k}\, X^a{}_{;b} )\,\psi_{,a} \,\Gamma^k{}_{ic}   \cr
&\quad  +\lambda^2  g^{ij}\,  \big(   X^b{}_{;i}\,  Y^a{}_{;k}\,\psi_{,a}      +     Y^b{}_{;i}\,X^a{}_{;k}\,\psi_{,a}  \big) 
 \,\Gamma^k_{bj}\ .
\end{align*}
Now from Proposition~\ref{bukk},
\begin{align*}
2m\,& \lambda^{-2}\rho(P(X,Y)-P_0(X,Y))\psi 
\cr
&=  \big(  Y^b\,X^c\, (   - g^{aq}\, g^{ij}(R_{qijc;b} + R_{qcib;j})   ) -   g^{cq}\, R_{qb}\, (X^b\, Y^a{}_{;c}  + Y^b\, X^a{}_{;c}) \cr
  &\quad -2 g^{ij} Y^b{}_{;i}\, X^a{}_{;b;j}  -2\,g^{ij}\, X^c{}_{;i}\, Y^a{}_{;c;j}   -  (\nabla_{\Delta(X)}Y)^a - (\nabla_{\Delta(Y)}X)^a    \big) \,\psi_{,a}  \cr
&\quad+   2\,g^{ij}   \, ( X^b{}_{;i;j}\, Y^a{}_{;b} + Y^b{}_{;i;j}\, X^a{}_{;b} 
+ X^b{}_{;i}\, Y^a{}_{;b;j} + Y^b{}_{;i}\, X^a{}_{;b;j}   )\,\psi_{,a} \cr
&\quad 
+ g^{ij}\,  \big(   X^b{}_{;i}\,  Y^a{}_{;j;b}    +     Y^b{}_{;i}\, X^a{}_{;j;b} \big) \,\psi_{,a}   \cr
&\quad+  g^{ij}  \, ( X^b{}_{;k}\, Y^a{}_{;b} + Y^b{}_{;k}\, X^a{}_{;b} )\,\psi_{,a} \,\Gamma^k{}_{ij}   \cr
&\quad  +g^{ij}\,  \big(   X^b{}_{;i}\,  Y^a{}_{;k}\,\psi_{,a}      +     Y^b{}_{;i}\,X^a{}_{;k}\,\psi_{,a}  \big) 
 \,\Gamma^k_{bj}    \cr
 &=  \big(  -Y^b\,X^c\, g^{aq}\, g^{ij}(R_{qijc;b} + R_{qcib;j})    -   g^{cq}\, R_{qb}\, (X^b\, Y^a{}_{;c}  + Y^b\, X^a{}_{;c}) \big)\,\psi_{,a} \cr
&\quad+   g^{ij}   \, ( X^b{}_{;i;j}\, Y^a{}_{;b} + Y^b{}_{;i;j}\, X^a{}_{;b}   )\,\psi_{,a} 
+ g^{ij}\,  \big(   X^b{}_{;i}\,  Y^a{}_{;j;b}    +     Y^b{}_{;i}\, X^a{}_{;j;b} \big) \,\psi_{,a}   \cr
&\quad+  g^{ij}  \, ( X^b{}_{;k}\, Y^a{}_{;b} + Y^b{}_{;k}\, X^a{}_{;b} )\,\psi_{,a} \,\Gamma^k{}_{ij}   
  +g^{ij}\,  \big(   X^b{}_{;i}\,  Y^a{}_{;k}     +     Y^b{}_{;i}\,X^a{}_{;k} \big) \,\psi_{,a} 
 \,\Gamma^k_{bj}    \ .
\end{align*}
We use the symmetries of the Riemann tensor
\[
g^{ij}\, R_{qijc;b} = g^{ij}\, R_{jcqi;b} = - g^{ij}\, R_{jcbq;i} - g^{ij}\, R_{jcib;q} =  g^{ij}\, R_{qbic;j} - R_{cb;q} 
\]
to rewrite this as
\begin{align*}
2m\,& \lambda^{-2}\rho(P(X,Y)-P_0(X,Y))\psi 
\cr
 &=  \big(  (Y^b\,X^c+ X^b\,Y^c)\, g^{ae}\, g^{ij}\, R_{ecbi;j}    + Y^b\,X^c\, g^{aq}\, R_{cb;q}  -   g^{ji}\, R_{ib}\, (X^b\, Y^a{}_{;j}  + Y^b\, X^a{}_{;j}) \big)\,\psi_{,a}\cr
&\quad+   g^{ij}   \, ( X^b{}_{;i;j}\, Y^a{}_{;b} + Y^b{}_{;i;j}\, X^a{}_{;b}   )\,\psi_{,a} 
+ g^{ij}\,  \big(   X^b{}_{;i}\,  Y^a{}_{;j;b}    +     Y^b{}_{;i}\, X^a{}_{;j;b} \big) \,\psi_{,a}   \cr
&\quad+  g^{ij}  \, ( X^b{}_{;k}\, Y^a{}_{;b} + Y^b{}_{;k}\, X^a{}_{;b} )\,\psi_{,a} \,\Gamma^k{}_{ij}   
  +g^{ij}\,  \big(   X^b{}_{;i}\,  Y^a{}_{;k}     +     Y^b{}_{;i}\,X^a{}_{;k} \big) \,\psi_{,a} 
 \,\Gamma^k_{bj}    \ .
\end{align*}
Next we calculate
\begin{align*} 
 &\rho(P_0(X,Y)^* +N(X,Y) -P_0(X,Y) )(\psi)\cr
     &=  \lambda\, \big(
Y^pX^q R_{qp;i} + X^q{}_{;w} Y^p R^w{}_{piq} + X^q Y^p{}_{;w} R^w{}_{qip} \big) \,  \rho( \extd x^i)(\psi)
  \cr
& \quad+ \lambda m^{-1} \, g^{ij}\, \rho\big( (\nabla_j \nabla_iX)^u\,\nabla_u Y +
X^u{}_{;i}\,\nabla_j \nabla_u Y + 
(\nabla_j \nabla_iY)^u\,\nabla_u X +
Y^u{}_{;i}\,\nabla_j \nabla_u X \big) (\psi) \cr
&\quad  +  \lambda m^{-1}\,g^{ij}\, ( (Y^b \, X^c+ X^b \, Y^c ) R_{ecbi})_{;j} \,g^{ae}\,\rho( \partial_a)(\psi)  \cr
  &\quad - \lambda\,m^{-1}\, g^{ij}\, Y^p R_{pi}  \, \rho(\nabla_j X)(\psi) 
 - \lambda\,m^{-1}\, g^{ij}\, X^q R_{qi}    \, \rho( \nabla_j Y)(\psi) \ ,
\end{align*}
and then
\begin{align*} 
 & m\, \lambda^{-2}\rho(P_0(X,Y)^* +N(X,Y) -P_0(X,Y) )(\psi)\cr
     &= - \tfrac12\, g^{ab}\,\Gamma^i{}_{ab}   \big(
Y^pX^q R_{qp;i} + X^q{}_{;w} Y^p R^w{}_{piq} + X^q Y^p{}_{;w} R^w{}_{qip} \big) \,  \psi \cr
& \quad + g^{ai}\, \big(
Y^pX^q R_{qp;i} + X^q{}_{;w} Y^p R^w{}_{piq} + X^q Y^p{}_{;w} R^w{}_{qip} \big) \,  \psi_{,a}
  \cr
& \quad + g^{ij}\, \big( (\nabla_j \nabla_iX)^u\,Y^a{}_{;u} +
X^u{}_{;i}\,(\nabla_j \nabla_u Y)^a + 
(\nabla_j \nabla_iY)^u\,X^a{}_{;u} +
Y^u{}_{;i}\,(\nabla_j \nabla_u X)^a \big) \, \psi_{,a} \cr
&\quad  +  g^{ij}\, ( (Y^b \, X^c+ X^b \, Y^c ) R_{ecbi})_{;j} \,g^{ae}\,\psi_{,a} \cr
  &\quad - g^{ij}\, Y^p R_{pi}  \, X^a{}_{;j}\,\psi_{,a}
 - g^{ij}\, X^q R_{qi}    \,Y^a{}_{;j}\,\psi_{,a}\ .
\end{align*}
Hence,
\begin{align*}
2m\,& \lambda^{-2}\rho(P(X,Y)-\tfrac12 P_0(X,Y) -\tfrac12 P_0(X,Y)^* - \tfrac12\, N(X,Y)     )\psi 
\cr
 &=   \tfrac12\, g^{ab}\,\Gamma^i{}_{ab}   \big(
Y^pX^q R_{qp;i} + X^q{}_{;w} Y^p R^w{}_{piq} + X^q Y^p{}_{;w} R^w{}_{qip} \big) \,  \psi  \cr
&\quad+   g^{ij}   \, ( X^b{}_{;i;j}\, Y^a{}_{;b} + Y^b{}_{;i;j}\, X^a{}_{;b}   )\,\psi_{,a} 
+ g^{ij}\,  \big(   X^b{}_{;i}\,  Y^a{}_{;j;b}    +     Y^b{}_{;i}\, X^a{}_{;j;b} \big) \,\psi_{,a}   \cr
&\quad+  g^{ij}  \, ( X^b{}_{;k}\, Y^a{}_{;b} + Y^b{}_{;k}\, X^a{}_{;b} )\,\psi_{,a} \,\Gamma^k{}_{ij}   
  +g^{ij}\,  \big(   X^b{}_{;i}\,  Y^a{}_{;k}     +     Y^b{}_{;i}\,X^a{}_{;k} \big) \,\psi_{,a} 
 \,\Gamma^k_{bj}  \cr
& \quad - g^{ai}\, \big(  X^q{}_{;w} Y^p R^w{}_{piq} + X^q Y^p{}_{;w} R^w{}_{qip} \big) \,  \psi_{,a}
  \cr
& \quad - g^{ij}\, \big( (\nabla_j \nabla_iX)^u\,Y^a{}_{;u} +
X^u{}_{;i}\,(\nabla_j \nabla_u Y)^a + 
(\nabla_j \nabla_iY)^u\,X^a{}_{;u} +
Y^u{}_{;i}\,(\nabla_j \nabla_u X)^a \big) \, \psi_{,a} \cr
&\quad  -  g^{ij}\, (Y^b \, X^c+ X^b \, Y^c )_{;j}  R_{ecbi}\,g^{ae}\,\psi_{,a} \cr
 &=   \tfrac12\, g^{ab}\,\Gamma^i{}_{ab}   \big(
Y^pX^q R_{qp;i} + X^q{}_{;w} Y^p R^w{}_{piq} + X^q Y^p{}_{;w} R^w{}_{qip} \big) \,  \psi  \cr
&\quad +   g^{ij}   \, ( (X^b{}_{;i;j} -  (\nabla_j \nabla_iX)^b     )\, Y^a{}_{;b} + (Y^b{}_{;i;j}  - 
  (\nabla_j \nabla_iY)^b      )\, X^a{}_{;b}   )\,\psi_{,a}  \cr
&\quad + g^{ij}\,  \big(   X^b{}_{;i}\,  (Y^a{}_{;j;b}  -   (\nabla_j \nabla_b Y)^a   )   +   
  Y^b{}_{;i}\, (X^a{}_{;j;b} -  (\nabla_j \nabla_b X)^a    ) \big) \,\psi_{,a}   \cr
&\quad+  g^{ij}  \, ( X^b{}_{;k}\, Y^a{}_{;b} + Y^b{}_{;k}\, X^a{}_{;b} )\,\psi_{,a} \,\Gamma^k{}_{ij}   
  +g^{ij}\,  \big(   X^b{}_{;i}\,  Y^a{}_{;k}     +     Y^b{}_{;i}\,X^a{}_{;k} \big) \,\psi_{,a} 
 \,\Gamma^k_{bj}  \cr
& \quad - g^{ai}\, \big(  X^q{}_{;w} Y^p R^w{}_{piq} + X^q Y^p{}_{;w} R^w{}_{qip} \big) \,  \psi_{,a}
  \cr
&\quad  -  g^{ij}\, (Y^b \, X^c+ X^b \, Y^c )_{;j}  R_{ecbi}\,g^{ae}\,\psi_{,a} \cr
 &=   \tfrac12\, g^{ab}\,\Gamma^i{}_{ab}   \big(
Y^pX^q R_{qp;i} + X^q{}_{;w} Y^p R^w{}_{piq} + X^q Y^p{}_{;w} R^w{}_{qip} \big) \,  \psi  \cr
&\quad +   g^{ij}   \, ( (X^b{}_{;i;j} -  (\nabla_j \nabla_iX)^b     )\, Y^a{}_{;b} + (Y^b{}_{;i;j}  - 
  (\nabla_j \nabla_iY)^b      )\, X^a{}_{;b}   )\,\psi_{,a}  \cr
&\quad + g^{ij}\,  \big(   X^b{}_{;i}\,  (Y^a{}_{;j;b}  -   (\nabla_b \nabla_j Y)^a   )   +   
  Y^b{}_{;i}\, (X^a{}_{;j;b} -  (\nabla_b \nabla_j X)^a    ) \big) \,\psi_{,a}   \cr
&\quad+  g^{ij}  \, ( X^b{}_{;k}\, Y^a{}_{;b} + Y^b{}_{;k}\, X^a{}_{;b} )\,\psi_{,a} \,\Gamma^k{}_{ij}   
  +g^{ij}\,  \big(   X^b{}_{;i}\,  Y^a{}_{;k}     +     Y^b{}_{;i}\,X^a{}_{;k} \big) \,\psi_{,a} 
 \,\Gamma^k_{bj}  \cr
& \quad - g^{ae}\,g^{ij} \big(  X^q{}_{;i} Y^p R_{jpeq} + X^q Y^p{}_{;i} R_{jqep} \big) \,  \psi_{,a}
- g^{ij} \big( X^b{}_{;i} R^a{}_{pjb} Y^p + Y^b{}_{;i} R^a{}_{pjb} X^p
\big) \,  \psi_{,a}
  \cr
&\quad  -  g^{ij}\, (Y^b \, X^c+ X^b \, Y^c )_{;i}  R_{ecbj}\,g^{ae}\,\psi_{,a} \cr
 &=   \tfrac12\, g^{ab}\,\Gamma^i{}_{ab}   \big(
Y^pX^q R_{qp;i} + X^q{}_{;w} Y^p R^w{}_{piq} + X^q Y^p{}_{;w} R^w{}_{qip} \big) \,  \psi  \cr
&\quad +   g^{ij}   \, (    - X^b{}_{;k} \Gamma^k{}_{ ij }   Y^a{}_{;b} - Y^b{}_{;k}\Gamma^k{}_{ij} X^a{}_{;b}   )\,\psi_{,a}  
 + g^{ij}\,  \big(  - X^b{}_{;i}\,  Y^a{}_{;k} \Gamma^k{}_{jb}     -   
  Y^b{}_{;i}\,  X^a{}_{;k}  \Gamma^k{}_{jb}  \big) \,\psi_{,a}   \cr
&\quad+  g^{ij}  \, ( X^b{}_{;k}\, Y^a{}_{;b} + Y^b{}_{;k}\, X^a{}_{;b} )\,\psi_{,a} \,\Gamma^k{}_{ij}   
  +g^{ij}\,  \big(   X^b{}_{;i}\,  Y^a{}_{;k}     +     Y^b{}_{;i}\,X^a{}_{;k} \big) \,\psi_{,a} 
 \,\Gamma^k_{bj} 
\end{align*}
and the Christoffel symbols in the last two lines cancel. We see that the result is given by the action of an algebra element as stated of order $\lambda^2$ and which therefore vanishes at order $\lambda$ as in (\ref{P0N}). \end{proof}

\subsection{Differential calculus to order $\lambda$}\label{sec:prod}

Finally, while we have a set of commutation relations for $\Omega^1_{\CD(M)}$ to order $\lambda^2$, we should complete its specification. The model for how this data extends to a product was explained below (\ref{firw})  in the case of functions and differentials of functions. For the other products one can proceed case by case,  but here we note that since our proposed relations are invariantly defined, it is sufficient to define the products in a local coordinate chart. In this case,  we take  $\extd x^\mu, \extd p_\nu, \theta'$ as locally a basis over the algebra from the left, and define products of these from the right via the commutation relations. More generally,  we define  $(a\extd x^\mu).b:= a(\extd x^\mu b)$ and $(a\extd p_\nu).b:= a(\extd p_\nu b)$ for $a,b$ in $\CD(M)$. There are no issues for $\theta'$ as our construction has this central, but for this procedure to specify a right action, we need $[\extd x^\mu, \ ]$ and $[\extd p_\nu,\ ]$ to be derivations. Our relations tell us what these are on functions and vector fields but that they extend as derivations is not automatic. Indeed, requiring this on  $[a,b]=ab-ba$ is just the Jacobi identity. We will see in the next section that this holds to order $\lambda$ (but not necessarily at order $\lambda^2$), hence we have a bimodule at least to order $\lambda$. That $\extd$ is a derivation was part of our construction to find the relations, but can also be verified explicitly to order $\lambda$. 

\section{Jacobiators}\label{sec:jac}

We define the Jacobiator
\begin{align} \label{bvy4}
J(x,y,z) &:= [  x  ,[    y  , z  ] ] +  [ z   ,[   x   ,  y ] ] +  [  y  ,[  z    , x  ] ] 
\end{align}
for elements $x,y,z$ elements of the algebra or its 1-forms.
Note that applying a permutation to $x,y,z$ simply multiplies the Jacobiator by the sign of the permutation.
 If we have associativity then all the Jacobiators will vanish.

\begin{proposition}   \label{bhd}
For all functions $f,h\in C^\infty(M)$,  1-forms $\xi\in\Omega^1(M)$ and vector fields $X,Y$, to order $\lambda^2$ we have
\begin{align*}
&J(f,h,\widehat\xi)=0\ ,\quad J(f,Y,\widehat\xi)= 0  \cr
& J(Y,X,\widehat\xi)  =   \lambda^2 \,   Y^a \, X^c\,  \xi_i\,  R^i{}_{jca}\,\big(   \widehat{\extd x^j} -  m^{-1}\theta' \, g^{ej}\, \partial_e  \big). 
\end{align*}
\end{proposition}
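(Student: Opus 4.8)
The plan is to expand each Jacobiator directly from the definition (\ref{bvy4}) and substitute the commutation relations already in hand: $[f,h]=0$ and $[X,f]=\lambda\,X(\extd f)$ from (\ref{btcv}), the Lie-bracket relation $[Y,X]=\lambda(\nabla_Y X-\nabla_X Y)$ from (\ref{bt6v}), the function--form commutator (\ref{firw}), and the vector-field--form commutator (\ref{orp45}). Throughout I will use that $\theta'$ is central and that the remaining coefficients are ordinary functions, so that most nested commutators collapse, and I will work consistently to order $\lambda^2$, discarding $\lambda^3$.

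For $J(f,h,\widehat\xi)$ the term $[\widehat\xi,[f,h]]$ vanishes since $[f,h]=0$, while the other two terms are each of the form $[\,\cdot\,,(\text{function})\,\theta']$ with the outer bracket taken against a function; as functions commute and $\theta'$ is central, both vanish. For $J(f,Y,\widehat\xi)$ I would compute the three nested commutators using (\ref{orp45}), (\ref{firw}) and $[f,Y]=-\lambda\,Y^k f_{,k}$. Each contributes at order $\lambda^2$ a multiple of $\theta'$ times an expression linear in the first derivatives of $f$. Collecting the coefficient of $f_{,m}$, the pieces carrying a second derivative of $f$ cancel by symmetry of mixed partials, the plain first-derivative pieces cancel after relabelling, and the residual Christoffel terms cancel because $\Gamma^m{}_{ab}$ is symmetric in its lower indices (torsion-freeness). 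Hence $J(f,Y,\widehat\xi)=0$, with no curvature surviving.

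The substance is $J(Y,X,\widehat\xi)=[Y,[X,\widehat\xi]]-[X,[Y,\widehat\xi]]+[\widehat\xi,[Y,X]]$. I would feed (\ref{orp45}) into each inner bracket, then apply (\ref{orp45}), (\ref{firw}) and the vector-field Lie bracket to the outer bracket, keeping only terms up to $\lambda^2$ (so the explicit $\lambda^2$ tail of (\ref{orp45}) never re-enters an outer commutator). The result splits into a $\widehat{\cdot}$ part and a $\theta'\,\partial$ part. The $\widehat{\cdot}$ part assembles into $\lambda^2\,\widehat{\nabla_Y\nabla_X\xi-\nabla_X\nabla_Y\xi-\nabla_{[Y,X]_{\rm Lie}}\xi}=\lambda^2\,\widehat{R(Y,X)\xi}$, which, using the convention $([\nabla_a,\nabla_b]\xi)_c=-R^d{}_{cab}\xi_d$ together with the antisymmetry of $R$ in its last two indices, is precisely $\lambda^2\,Y^aX^c\xi_i R^i{}_{jca}\,\widehat{\extd x^j}$. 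This already fixes the coefficient that the $\theta'$ part must reproduce.

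For the $\theta'\,\partial$ part I would organize the terms by the order of derivatives acting on $X$, $Y$ and $\xi$. The terms containing $\xi_{i;b}$ cancel in pairs, as do those built from products of single covariant derivatives of the schematic form $X^a{}_{;j}\,Y^b{}_{;k}$. The only survivors carry a second covariant derivative of $X$ or of $Y$, of the form $g^{ij}\xi_i\,Y^b(X^a{}_{;b;j}-X^a{}_{;j;b})$ and its $X\leftrightarrow Y$ partner; the Ricci identity converts each commuted pair into a Riemann term. Folding the two contributions together with the first Bianchi identity, and then using the pair symmetry and the antisymmetries of the Riemann tensor to rewrite the index pattern, yields exactly $-\lambda^2\,Y^aX^c\xi_i R^i{}_{jca}\,\mu^{-1}\theta'\,g^{ej}\partial_e$, which combines with the $\widehat{\cdot}$ part into the stated answer. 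The main obstacle is this last step: the delicate cancellation in the $\theta'$-sector and, above all, getting the signs in the Ricci identity right so that the Bianchi/symmetry rearrangement lands on the same coefficient $R^i{}_{jca}$ that multiplies $\widehat{\extd x^j}$, rather than its negative.
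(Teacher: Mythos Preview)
Your proposal is correct and follows essentially the same route as the paper: direct expansion of each Jacobiator via the commutation relations (\ref{firw}) and (\ref{orp45}), with the $\widehat{\,\cdot\,}$-part of $J(Y,X,\widehat\xi)$ assembling into $\widehat{R(Y,X)\xi}$ and the $\theta'\partial$-part reducing via the Ricci identity and the first Bianchi identity to the same Riemann coefficient. One small correction: in $J(f,Y,\widehat\xi)$ the residual Christoffel terms do not cancel purely by torsion-freeness; the term $Y^a(g^{ij}\xi_j f_{,i})_{,a}$ produces a $g^{ij}{}_{,a}$ which must be expanded as $-g^{ik}\Gamma^j{}_{ka}-g^{kj}\Gamma^i{}_{ka}$ (metric compatibility) before the remaining Christoffel pieces pair off, with symmetry of $\Gamma$ used only at the final step.
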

\begin{proof} The first calculation is omitted as easier, and known since we have a (symmetric version of) a standard centrally extended calculus on a manifold. For the second result,
\begin{align*}
&J(f,Y,\widehat\xi) 
= [  f  ,[    Y  , \widehat\xi  ] ] -\lambda\,  [\widehat \xi   ,Y(f) ] +  [  Y  ,[ \widehat \xi    , f  ] ]   \cr
&= [  f  , \lambda\,(\widehat{ \nabla_Y \xi  }) ] + (2m)^{-1} [  f  , - 2\lambda\,\theta'\, (g^{ij}\,\xi_i\, \nabla_j Y)  ]  -\lambda^2m^{-1}\,  g^{ij}\, \xi_j   (Y(f))_{, i} \, \theta' + \lambda m^{-1}\,  [  Y  , g^{ij}\, \xi_j\, f_{,i}\,\theta'    ]   \cr
&= -\lambda^2m^{-1} g^{ij}\, f_{,i}\, (\nabla_Y \xi  )_j \theta' +\lambda^2\,  m^{-1} \theta'\, g^{ij}\,\xi_i\, (\nabla_j Y)^a\,f_{,a}  \cr & \quad -\lambda^2m^{-1}\,  g^{ij}\, \xi_j   (Y(f))_{, i} \, \theta' + \lambda^2m^{-1}\,   Y^a\,  g^{ij}\, (\xi_j\, f_{,i})_{,a}\,\theta'     \cr
&=\, \lambda^2\,  m^{-1} \theta'\, g^{ij}\big(
 - f_{,i}\, (\nabla_Y \xi  )_j +  \xi_i\, (\nabla_j Y)^a\,f_{,a}  -  \xi_j   (Y(f))_{, i}  +  Y^a\,  (\xi_j\, f_{,i})_{,a}
\big)
\end{align*}
which vanishes. For the third result, by definition,
\begin{align*}
J(X,Y,\widehat\xi) 
&=[  X  ,[    Y  , \widehat\xi  ] ]  - [  Y  ,[    X  ,\widehat \xi  ] ] + \lambda [\widehat \xi   ,[ X  ,  Y ]_{\rm Lie}] \ .
\end{align*}
We begin with
\begin{align*}
 m\, &[  Y  ,[    X  , \xi  ] ] = \big[ Y,  m\, \lambda\,(\widehat{ \nabla_X \xi  })  - \lambda\,\theta'\, (g^{ij}\,\xi_i\, \nabla_j X) 
-{\lambda^2\over 2} \theta'\big(X^a\, \xi_p\, g^{pq}\, R_{qa} + g^{ij}\, X^a{}_{;j}\, \xi_{i;a}\big)   \cr
&= m\, \lambda\, \big[ Y,  (\widehat{ \nabla_X \xi  }) \big] - \lambda\,\theta'\, \big[ Y, g^{ij}\,\xi_i \big] \, \nabla_j X 
- \lambda\,\theta'\, g^{ij}\,\xi_i\,\big[ Y,  \nabla_j X  \big] \cr
&= m\, \lambda^2\,(\widehat{ \nabla_Y  \nabla_X \xi   })  - \lambda^2\,\theta'\, g^{ij}\, (\nabla_X \xi )_i\, \nabla_j Y
- \lambda^2\,\theta'\, Y^a\, (g^{ij}\,\xi_i )_{,a} \, \nabla_j X 
- \lambda^2\,\theta'\, g^{ij}\,\xi_i\, [ Y,  \nabla_j X]_{\rm Lie} \cr
&= m\, \lambda^2\,(\widehat{ \nabla_Y  \nabla_X \xi   })  - \lambda^2\,\theta'\, g^{ij}\,X^a\,\xi_{i;a}\, Y^b{}_{;j}\,\partial_b
- \lambda^2\,\theta'\, Y^a\, g^{ij}\,\xi_{i;a}  \, \nabla_j X \cr
&\quad + \lambda^2\,\theta'\, Y^a\, g^{ik}\,\xi_i \,\Gamma^j_{ka} \, \nabla_j X 
- \lambda^2\,\theta'\, g^{ij}\,\xi_i\, [ Y,  \nabla_j X ]_{\rm Lie} \cr
&= m\, \lambda^2\,(\widehat{ \nabla_Y  \nabla_X \xi   })  - \lambda^2\,\theta'\, g^{ij}\,\xi_{i;a}\, (X^a\,\nabla_j Y
+Y^a\,  \nabla_j X )\cr
&\quad + \lambda^2\,\theta'\, Y^a\, g^{ik}\,\xi_i \,\Gamma^j_{ka} \, \nabla_j X 
- \lambda^2\,\theta'\, g^{ij}\,\xi_i\, Y^a\,\nabla_a  \nabla_j X   +  \lambda^2\,\theta'\, g^{ij}\,\xi_i\,X^b{}_{;j}  \nabla_b Y
\cr
&= m\, \lambda^2\,(\widehat{ \nabla_Y  \nabla_X \xi   })  - \lambda^2\,\theta'\, g^{ij}\,\xi_{i;a}\, (X^a\,\nabla_j Y
+Y^a\,  \nabla_j X )\cr
&\quad + \lambda^2\,\theta' \, g^{ij}\,\xi_i \,\big(Y^a\, \Gamma^c_{ja} \, \nabla_c X 
-   Y^a\,\nabla_a  \nabla_j X   +  X^b{}_{;j}  \nabla_b Y\big)
\cr
&= m\, \lambda^2\,(\widehat{ \nabla_Y  \nabla_X \xi   })  - \lambda^2\,\theta'\, g^{ij}\,\xi_{i;a}\, (X^a\,\nabla_j Y
+Y^a\,  \nabla_j X )\cr
&\quad + \lambda^2\,\theta' \, g^{ij}\,\xi_i \,\big(
-   Y^a\,  (X^e{}_{;j}  )_{;a}  +  X^b{}_{;j}  \, Y^e{}_{;b}\big)\partial_e
\cr
&= m\, \lambda^2\,(\widehat{ \nabla_Y  \nabla_X \xi   })  - \lambda^2\,\theta'\, g^{ij}\,\xi_{i;a}\, (X^a\,\nabla_j Y
+Y^a\,  \nabla_j X )\cr
&\quad + \lambda^2\,\theta' \, g^{ij}\,\xi_i \,\big(
-   Y^a\,  (X^e{}_{;a}  )_{;j}  +  X^b{}_{;j}  \, Y^e{}_{;b}\big)\partial_e
- \lambda^2\,\theta' \, g^{ij}\,\xi_i \,\big(
 Y^a\, R^e{}_{caj}\, X^c \big)\partial_e
\end{align*}
so
\begin{align*}
m\, [  Y  ,[    X  , \xi  ] ]  &-  m\, [  X  ,[    Y  , \xi  ] ] - \lambda\,  m\, [\widehat \xi   ,[  X  ,  Y ]_{\rm Lie}] \cr
=&\,  m\, \lambda^2\,(\widehat{ \nabla_Y  \nabla_X \xi   })  -  m\, \lambda^2\,(\widehat{ \nabla_X  \nabla_Y \xi   }) 
\cr
&\quad + \lambda^2\,\theta' \, g^{ij}\,\xi_i \,\big(
-   Y^a\,  (X^e{}_{;a}  )_{;j}  +  X^b{}_{;j}  \, Y^e{}_{;b}\big)\partial_e
- \lambda^2\,\theta' \, g^{ij}\,\xi_i \,
 Y^a\,  X^c \, R^e{}_{caj}\,\partial_e \cr
 &\quad - \lambda^2\,\theta' \, g^{ij}\,\xi_i \,\big(
-   X^a\,  (Y^e{}_{;a}  )_{;j}  +  Y^b{}_{;j}  \, X^e{}_{;b}\big)\partial_e
+ \lambda^2\,\theta' \, g^{ij}\,\xi_i \,
 X^a\,  Y^c \, R^e{}_{caj}\,\partial_e \cr
 &\quad - \lambda  m\, [  [Y  ,  X ]_{\rm Lie},\widehat \xi ] \cr
 =&\,  m\, \lambda^2\,(\widehat{ \nabla_Y  \nabla_X \xi   })  -  m\, \lambda^2\,(\widehat{ \nabla_X  \nabla_Y \xi   }) 
 -  m\, \lambda^2\,(\widehat{ \nabla_{[Y,X]_{\rm Lie}} \xi   }) 
\cr
&\quad + \lambda^2\,\theta' \, g^{ij}\,\xi_i \,\big(
-   Y^a\,  (X^e{}_{;a}  )_{;j}  +  X^b{}_{;j}  \, Y^e{}_{;b}\big)\partial_e
- \lambda^2\,\theta' \, g^{ij}\,\xi_i \,
 Y^a\,  X^c \, R^e{}_{caj}\,\partial_e \cr
 &\quad - \lambda^2\,\theta' \, g^{ij}\,\xi_i \,\big(
-   X^a\,  (Y^e{}_{;a}  )_{;j}  +  Y^b{}_{;j}  \, X^e{}_{;b}\big)\partial_e
- \lambda^2\,\theta' \, g^{ij}\,\xi_i \,
 X^c\,  Y^a \, R^e{}_{ajc}\,\partial_e \cr
 &\quad + \lambda^2\,\theta'\, g^{ij}\,\xi_i\, \nabla_j [Y,X]_{\rm Lie}  \cr
  =&\,   \lambda^2\,\theta' \, g^{ij}\,\xi_i \,
 X^c\,  Y^a \, R^e{}_{jca}\,\partial_e -  m\, \lambda^2\, Y^i\, X^j\,R^b{}_{aij}\, \xi_b\, \widehat{\extd x^a}     \cr
   =&\,   \lambda^2\,\theta' \, g^{ij}\,\xi_i \,
 X^c\,  Y^a \, R^e{}_{jca}\,\partial_e -  m\, \lambda^2\, Y^a \, X^c\,R^b{}_{jac}\, \xi_b\, \widehat{\extd x^j}     \cr
    =&\,  \lambda^2  m\,   Y^a \, X^c\,   \big( m^{-1}\theta' \, g^{ij}\,\xi_i \,g^{ep}\, R_{pjca}\,\partial_e 
    - R^i{}_{jac}\, \xi_i\, \widehat{\extd x^j}  \big)   \cr
        =&\,  \lambda^2  m\,   Y^a \, X^c\,  \xi_i\,  \big( -  m^{-1}\theta' \, g^{ij} \,g^{ep}\, R_{jpca}\,\partial_e 
    + R^i{}_{jca}\, \widehat{\extd x^j}  \big)   \cr
            =&\,  \lambda^2  m\,   Y^a \, X^c\,  \xi_i\,  \big( -  m^{-1}\theta' \, g^{ep}\, R^i{}_{pca}\,\partial_e 
    + R^i{}_{jca}\, \widehat{\extd x^j}  \big)   \cr
 =&\,  \lambda^2  m\,   Y^a \, X^c\,  \xi_i\,  \big(  R^i{}_{jca}\, \widehat{\extd x^j} -  m^{-1}\theta' \, g^{ej}\, R^i{}_{jca}\,\partial_e  \big)   \cr
 =&\,  \lambda^2  m\,   Y^a \, X^c\,  \xi_i\,  R^i{}_{jca}\,\big(   \widehat{\extd x^j} -  m^{-1}\theta' \, g^{ej}\, \partial_e  \big)  
\end{align*}    
giving the answer. \end{proof}

Hence the calculus is not associative at order $\lambda^2$. Note that we assumed associativity in deriving (\ref{hjs}), however this only required the vanishing of the Jacobi relation for two functions and a vector field, which we see does hold.

\begin{proposition} \label{smallj}
We have $J(f,h,\extd X)=0$ and 
\begin{align*}
&J(f,Y,\extd X) 
=   \lambda^2 \, \extd x^i\,Y^b X^c R^a{}_{cbi} f_{,a} 
  -  \lambda^2m^{-1}g^{ae}\, \theta'\,Y^b \, X^c R^j{}_{bce} f_{,a}\partial_j \cr
  &\quad =   \lambda^2 \,Y^b X^c\, f_{,a}  \big(R^a{}_{cbi} \, \extd x^i
  -  m^{-1}g^{ae}\, \theta'\,g^{ij}\, R_{ibce} \partial_j \big) \cr
    &\quad =   \lambda^2 \,Y^b X^c\, f_{,a} R^a{}_{cbi} \,  \big(\extd x^i
  -  m^{-1}\, \theta'\,g^{ij}\,\partial_j \big)
\end{align*}
to order $\lambda^2$.
\end{proposition}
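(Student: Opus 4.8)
The plan is to expand both Jacobiators straight from the definition (\ref{bvy4}) and substitute the commutators already established, keeping everything to order $\lambda^2$. Since both answers are of order $\lambda^2$, I may freely identify $\extd x^i$ with $\widehat{\extd x^i}$ and discard all $O(\lambda^3)$ contributions; in particular I only ever need the order-$\lambda$ part of a ``seed'' commutator before applying a further (order-$\lambda$) outer bracket.

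For the first identity I would exploit that $\CD(M)$ is an associative algebra, so the ordinary Jacobi identity gives $J(f,h,X)=0$ as an element of $A$. Applying the derivation $\extd$ and using $\extd[a,b]=[\extd a,b]+[a,\extd b]$ (a regrouping that needs only the Leibniz rule, not associativity) collects the nine nested commutators into three Jacobiators according to which entry carries the $\extd$:
\[
0=\extd J(f,h,X)=J(\widehat{\extd f},h,X)+J(f,\widehat{\extd h},X)+J(f,h,\extd X).
\]
Up to a Jacobiator-preserving cyclic permutation and a sign, the first two terms are instances of $J(f,Y,\widehat\xi)$, which vanishes by Proposition~\ref{bhd}; hence $J(f,h,\extd X)=0$. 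The same manoeuvre applied to $J(f,Y,X)=0$ yields
\[
0=J(Y,X,\widehat{\extd f})+J(\extd Y,X,f)+J(f,Y,\extd X),
\]
where $J(Y,X,\widehat{\extd f})$ is the explicit curvature expression of Proposition~\ref{bhd}. Permuting $J(\extd Y,X,f)$ into $-J(f,X,\extd Y)$ shows that this relation fixes exactly the part of $J(f,Y,\extd X)$ antisymmetric under $X\leftrightarrow Y$; invoking the first Bianchi identity one checks the stated formula against it, which is a strong consistency test but does not alone pin down the symmetric part.

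The main computation is therefore $J(f,Y,\extd X)$ itself, which I would evaluate directly as
\[
J(f,Y,\extd X)=[f,[Y,\extd X]]+[\extd X,[f,Y]]+[Y,[\extd X,f]].
\]
Here I substitute $[Y,\extd X]=\lambda\,\extd(\nabla_Y X)+\lambda\,P(X,Y)$ from Proposition~\ref{coordin}, the scalar $[f,Y]=-\lambda\,Y^a f_{,a}$ from (\ref{btcv}) (so the middle term becomes $\extd X$ commuting with a function, handled by Proposition~\ref{geid}), and $[\extd X,f]$ again from Proposition~\ref{geid}. The outer brackets are then resolved using that $\theta'$ is central, that $[\widehat\eta,f]$ is given by (\ref{firw}), and that the action of $[Y,\,\cdot\,]$ on forms $\widehat\eta$ and on $\theta'\cdot(\text{vector field})$ terms is governed by (\ref{orp45}) and (\ref{btcv}).

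The hard part will be the bookkeeping: assembling the many order-$\lambda^2$ contributions and showing that they cancel down to the single Riemann term $\lambda^2\,Y^b X^c f_{,a}\,R^a{}_{cbi}\,(\extd x^i-\mu^{-1}\theta'g^{ij}\partial_j)$. As in the proof of Proposition~\ref{bhd}, the curvature is not visible term by term but must be manufactured from the failure of covariant derivatives to commute, i.e.\ by repeatedly applying the Ricci identity $X^a{}_{;i;j}-X^a{}_{;j;i}=R^a{}_{cji}X^c$ (and its $1$-form analogue) to the second covariant derivatives generated when $\nabla_Y$ hits the structures coming out of $P(X,Y)$ and $[\extd X,f]$. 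Recognising the output as the kernel combination $\extd x^i-\mu^{-1}\theta'g^{ij}\partial_j$ — the same one appearing in Proposition~\ref{bhd}, reflecting that the image of the Jacobiator lies in $\ker\rho$ — is the natural signal that the collection is complete.
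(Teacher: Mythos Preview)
Your proposal is correct and, for $J(f,Y,\extd X)$, follows essentially the paper's direct expansion: the paper writes
\[
J(f,Y,\extd X)=[f,\lambda\,\extd(\nabla_Y X)]+[f,\lambda\,P_0(X,Y)]-\lambda[\extd X,Y(\extd f)]+[Y,[\extd X,f]]
\]
(using only $P_0$ since the extra $\lambda$ in $P-P_0$ is killed at this order), substitutes Proposition~\ref{geid} and (\ref{orp45}), and then carries out exactly the bookkeeping you anticipate, arriving at the curvature combination via $[\nabla_j,\nabla_Y]X$ and $X^a{}_{;i;b}-X^a{}_{;b;i}$.

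Where you genuinely differ is in the treatment of $J(f,h,\extd X)$. The paper computes $[[\extd X,f],h]$ directly from Proposition~\ref{geid} to order $\lambda$, finds it symmetric in $f,h$, and concludes. Your route---differentiating the algebra identity $J(f,h,X)=0$ and reducing to instances of $J(f,Y,\widehat\xi)=0$ from Proposition~\ref{bhd}---is more conceptual and correctly avoids any hidden associativity assumption (only the Leibniz rule for $\extd$ on products is used). The same differentiation applied to $J(f,Y,X)=0$, as you note, yields only the part of $J(f,Y,\extd X)$ antisymmetric under $X\leftrightarrow Y$; your Bianchi check of that piece is a nice independent consistency test, but the paper does not record it and simply computes the whole Jacobiator head-on.
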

\begin{proof} Begin with
\begin{align} \label{bvy}
J(f,h,\extd X) &= [  f  ,[    h  , \extd X  ] ] +  [ \extd X   ,[   f   ,  h ] ] +  [  h  ,[  \extd X    , f  ] ] \cr
&=   [  f  ,[    h  , \extd X  ] ]   - [  h  ,[    f  , \extd X  ] ]   \ ,\cr
J(f,Y,\extd X) &=[  f  ,[    Y  , \extd X  ] ] +  [ \extd X   ,[   f   ,  Y ] ] +  [  Y  ,[  \extd X    , f  ] ] \cr
&= [  f  ,[    Y  , \extd X  ] ] -\lambda\,  [ \extd X   ,Y(\extd f) ] +  [  Y  ,[  \extd X    , f  ] ] \ .
\end{align}
We only need the commutators to first order in $\lambda$ for this, so set
\begin{align*}
[\extd X,f ] &=  \lambda\,(\widehat{ X^a{}_{;i } \, f_{,a}})  + \lambda m^{-1}\,\theta'\, (g^{ij}\, f_{,i}\, \nabla_j X) 
\big)\ ,\cr
[[\extd X,f ] ,h]&=  \lambda\,[(\widehat{ X^a{}_{;i } \, f_{,a}  }) ,h] + \lambda m^{-1}\,\theta'\, [(g^{ij}\, f_{,i}\, \nabla_j X) ,h]
\end{align*}
then from (\ref{firw}),
\begin{align*}
&[[\extd X,f ] ,h]= \lambda^2m^{-1}\,g^{ij}\, X^a{}_{;j } \, f_{,a}  \,\theta'  h_{,i} 
+ \lambda^2 m^{-1}\,\theta'\, g^{ij}\, f_{,i}\, X^a{}_{;j} \, h_{,a}
\end{align*}
and this is symmetric in $f,h$ so $J(f,h,\extd X)=0$. Next
\begin{align*}
&J(f,Y,\extd X) 
= [  f  , \lambda\, \extd(\nabla_Y X)  ]     +  [  f  ,  \lambda\, P_0(X,Y)   ]
-\lambda\,  [ \extd X   ,Y(\extd f) ] +  [  Y  ,[  \extd X    , f  ] ] \cr
&= -  \lambda^2\,(\nabla_Y X)^a{}_{;i } \, f_{,a} \,\extd x^i - \lambda^2m^{-1}\,\theta'\, g^{ij}\, f_{,i}\, \nabla_j \nabla_Y X\cr
&\quad  - \lambda\big[f, \widehat{\extd x^i}\, \big(\nabla_{\nabla_i X} Y + \nabla_{\nabla_i Y} X\big)\big]
 - \lambda (2m) ^{-1} g^{ij}\,\theta' \big[ f, \big(\nabla_i X\, \nabla_j Y+
\nabla_i Y\, \nabla_j X\big)\big] \cr
&\quad  +  \lambda m^{-1}g^{ij}\, \theta'\,Y^b \, X^c R_{ecbi} \,g^{ae} [f,\partial_j \partial_a ] \cr
&\quad -  \lambda^2\,X^a{}_{;i } \, (Y(\extd f))_{,a}\,\extd x^i - \lambda^2m^{-1}\,\theta'\, g^{ij}\, (Y(\extd f))_{,i}\, \nabla_j X  \cr
&\quad + \lambda \big[Y,  X^a{}_{;i } \, f_{,a}\, \extd x^i + m^{-1}\,\theta'\, g^{ij}\, f_{,i}\, \nabla_j X \big]   \cr
&= -  \lambda^2\,(\nabla_Y X)^a{}_{;i } \, f_{,a} \,\extd x^i - \lambda^2m^{-1}\,\theta'\, g^{ij}\, f_{,i}\, \nabla_j \nabla_Y X\cr
&\quad  +    \lambda^2m^{-1}\,g^{ij} \,f_{,j}\,\theta'    (\nabla_{\nabla_i X} Y + \nabla_{\nabla_i Y} X)
 + \lambda^2 \widehat{\extd x^i}\, (\nabla_{\nabla_i X} Y + \nabla_{\nabla_i Y} X)^a\, f_{,a} \cr
&\quad +  \lambda^2 m^{-1} g^{ij}\,\theta'  f_{,a}(X^a{}_{;i}\, \nabla_j Y + Y^a{}_{;i}\, \nabla_j X) 
  +  \lambda m^{-1}g^{ij}\, \theta'\,Y^b \, X^c R_{ecbi} \,g^{ae} [f,\partial_j \partial_a ] \cr
&\quad -  \lambda^2\,X^a{}_{;i } \, (Y(\extd f))_{,a}\, \extd x^i - \lambda^2m^{-1}\,\theta'\, g^{ij}\, (Y(\extd f))_{,i}\, \nabla_j X  \cr
&\quad + \lambda^2 \nabla_Y(X^a{}_{;i } \, f_{,a}\, \extd x^i )
- \lambda^2m^{-1}\,\theta' \, g^{ij}\,X^a{}_{;i } \, f_{,a}\, \nabla_j Y \\
&\quad + \lambda^2m^{-1}\,\theta'\, Y^a(g^{ij}\, f_{,i})_{,a}\, \nabla_j X 
+ \lambda^2m^{-1}\,\theta'\, g^{ij}\, f_{,i}\, [Y,\nabla_j X]_{\rm Lie} \cr
&= - \lambda^2m^{-1}\,\theta'\, g^{ij}\, f_{,i}\, [\nabla_j ,\nabla_Y] X
  +    \lambda^2m^{-1}\,g^{ij} \,f_{,j}\,\theta'    \, \nabla_{\nabla_i Y} X
 + \lambda^2 \, \extd x^i\, (  X^b{}_{;i}Y^a{}_{;b} - Y^b X^a{}_{;b;i}  )\, f_{,a} \cr
&\quad + \lambda^2 m^{-1} g^{ij}\,\theta'  f_{,a}(X^a{}_{;i}\, \nabla_j Y + Y^a{}_{;i}\, \nabla_j X) 
  +  \lambda m^{-1}g^{ij}\, \theta'\,Y^b \, X^c R_{ecbi} \,g^{ae} [f,\partial_j \partial_a ] \cr
&\quad -  \lambda^2\,X^b{}_{;i } \, (  Y^a{}_{;b}f_{,a} + Y^a f_{,a;b}  )  \, \extd x^i - \lambda^2m^{-1}\,\theta'\, g^{ij}\, (  Y^a{}_{;i}f_{,a} + Y^a f_{,a;i}  ) \, \nabla_j X  \cr
&\quad + \lambda^2 \nabla_Y(X^a{}_{;i } \, f_{,a}\, \extd x^i )
- \lambda^2m^{-1}\,\theta' \, g^{ij}\,X^a{}_{;i } \, f_{,a}\, \nabla_j Y  + \lambda^2m^{-1}\,\theta'\, Y^a(g^{ij}\, f_{,i})_{,a}\, \nabla_j X 
\cr
&= - \lambda^2m^{-1}\,\theta'\, g^{ij}\, f_{,i}\, [\nabla_j ,\nabla_Y] X
  +    \lambda^2m^{-1}\,g^{ij} \,f_{,j}\,\theta'    \, \nabla_{\nabla_i Y} X
 + \lambda^2 \, \extd x^i\, (   - Y^b X^a{}_{;b;i}  )\, f_{,a} \cr
&\quad
  +  \lambda m^{-1}g^{ij}\, \theta'\,Y^b \, X^c R_{ecbi} \,g^{ae} [f,\partial_j \partial_a ] 
   -  \lambda^2\,X^b{}_{;i } \, (   Y^a f_{,a;b}  )  \, \extd x^i  \cr
&\quad + \lambda^2 \nabla_Y(X^a{}_{;i } \, f_{,a}\, \extd x^i )
  - \lambda^2m^{-1}\,\theta'\, Y^a\, g^{ip}\,\Gamma^j{}_{ap} f_{,i}\, \nabla_j X 
\cr
&=  \lambda^2m^{-1}\,\theta'\, g^{ij}\, f_{,i}\, [\nabla_Y,\nabla_j] X
  +    \lambda^2m^{-1}\,g^{ij} \,f_{,j}\,\theta'    \, \nabla_{\nabla_i Y} X
 + \lambda^2 \, \extd x^i\,Y^b  (  X^a{}_{;i;b} - X^a{}_{;b;i}  )\, f_{,a} \cr
&\quad
  +  \lambda m^{-1}g^{ij}\, \theta'\,Y^b \, X^c R_{ecbi} \,g^{ae} [f,\partial_j \partial_a ] 
  - \lambda^2m^{-1}\,\theta'\, Y^a\, g^{ip}\,\Gamma^j{}_{ap} f_{,i}\, \nabla_j X 
\cr
&= \lambda^2m^{-1}\,\theta'\, g^{ij}\, f_{,i}\, R(Y,\del_j)X
 + \lambda^2 \, \extd x^i\,Y^b X^c R^a{}_{cbi} f_{,a} 
  +  \lambda m^{-1}g^{ij}\, \theta'\,Y^b \, X^c R_{ecbi} \,g^{ae} [f,\partial_j \partial_a ] 
\cr
&=   \lambda^2 \, \extd x^i\,Y^b X^c R^a{}_{cbi} f_{,a} 
  -  \lambda^2m^{-1}g^{ae}\, \theta'\,Y^b \, X^c R^j{}_{bce} f_{,a}\partial_j
\ ,
\end{align*}
as required. \end{proof}

\begin{proposition} \label{bigj}
We have
\begin{align*}
&J(X,Y,\extd Z) 
            = \lambda^2\, \extd(R(X,Y) Z)   - \lambda^2\,\extd x^i\,  \nabla_i(R(X,Y) Z)  +\lambda^2\,\theta'\, (R(X,Y) Z)^k\, V_{,k} \cr
&\quad + \lambda^2\,\big(
R^b{}_{rqp;i} X^q Y^pZ^r \,\del_b
-  R^a{}_{rqi} X^q  Z^r\nabla_a Y
 - R^a{}_{rip}  Y^p Z^r\, \nabla_a X
- R^a{}_{iqp} X^q Y^p\, \nabla_a Z \big) (\extd x^i  -  m^{-1} g^{ij}\,\theta'    \del_j )
\end{align*}
to order $\lambda^2$.
\end{proposition}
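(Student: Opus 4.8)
The plan is to expand the Jacobiator directly from its definition (\ref{bvy4}) and substitute the commutator of Proposition~\ref{coordin}, exactly as was done for $J(Y,X,\widehat\xi)$ in Proposition~\ref{bhd} and for $J(f,Y,\extd X)$ in Proposition~\ref{smallj}. Using antisymmetry of the commutator I would first write $J(X,Y,\extd Z)=[X,[Y,\extd Z]]-[Y,[X,\extd Z]]+[\extd Z,[X,Y]]$ and feed in $[W,\extd Z]=\lambda\,\extd(\nabla_W Z)+\lambda\,P(Z,W)$ from Proposition~\ref{coordin} together with $[X,Y]=\lambda[X,Y]_{\rm Lie}$ from (\ref{btcv}) and (\ref{bt6v}). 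Iterating once more on the $\extd(\nabla_W Z)$ pieces and keeping terms to order $\lambda^2$ gives
\begin{align*}
J(X,Y,\extd Z)&=\lambda^2\,\extd\big(\nabla_X\nabla_Y Z-\nabla_Y\nabla_X Z-\nabla_{[X,Y]_{\rm Lie}}Z\big)\cr
&\quad+\lambda^2\big(P(\nabla_Y Z,X)-P(\nabla_X Z,Y)-P(Z,[X,Y]_{\rm Lie})\big)\cr
&\quad+\lambda\,[X,P(Z,Y)]-\lambda\,[Y,P(Z,X)].
\end{align*}

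The first line is immediately $\lambda^2\,\extd(R(X,Y)Z)$ by the definition of the Riemann curvature (in the paper's convention, with $[X,Y]_{\rm Lie}=\nabla_X Y-\nabla_Y X$), accounting for the leading $\extd$-term of the claimed formula. For the remaining two lines I would observe that each carries an overall factor $\lambda^2$ (explicitly on the second line, and on the third line through the single commutator, which is itself $O(\lambda)$), so only the leading $O(\lambda^0)$ part of $P$ is needed. By (\ref{P0N}) and Proposition~\ref{real5}, $N$ is $O(\lambda)$ and $P_0^*$ agrees with $P_0$ at lowest order, so $P$ coincides at $O(\lambda^0)$ with $P_0$ of (\ref{P0}) with its explicit $-\lambda\Gamma^k{}_{aj}\partial_k$ term dropped. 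I may therefore replace every $P$ above by the $\widehat{\extd x^i}$-, $\theta'$-, $V$- and $\partial_j\partial_a$-valued expression given by (\ref{P0}).

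Then I compute. The middle line is purely algebraic: substitute the four arguments, expand the $\nabla_{\nabla_i(\cdot)}$ compositions, and collect. The commutator line $\lambda[X,P(Z,Y)]-\lambda[Y,P(Z,X)]$ is where the labour lies: I commute the vector fields $X,Y$ through the $\widehat{\extd x^i}$, $\theta'$, $V$ and $\partial_j\partial_a$ factors of $P_0$ using (\ref{orp45}), (\ref{btcv}), (\ref{firw}) and the centrality of $\theta'$, keeping terms to order $\lambda$. Each such commutator either differentiates a coefficient or, when two covariant derivatives are reordered, produces a Riemann tensor. The $V$-dependent pieces recombine, through $[\nabla_a,\nabla_b]$ acting on $\nabla V$, into $\theta'(R(X,Y)Z)^k V_{,k}$; the second-order operator terms cancel between the two lines; and the surviving first-order terms assemble into $-\extd x^i\,\nabla_i(R(X,Y)Z)$ together with the curvature coefficient multiplying $\big(\extd x^i-\mu^{-1}g^{ij}\theta'\partial_j\big)$.

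The main obstacle is precisely this last assembly: the intermediate expression contains a large number of Riemann-tensor terms carrying one or two covariant derivatives, and showing that they collapse to the stated coefficient $R^b{}_{rqp;i}X^q Y^p Z^r$ together with the three $\nabla_a$-contractions requires repeated use of the pair-symmetry and the first Bianchi identity of $R$, and crucially the \emph{second} Bianchi identity to convert the $R_{\cdots;i}$ contractions into a single $\nabla_i$ of $R(X,Y)Z$. The recurring combination $\extd x^i-\mu^{-1}g^{ij}\theta'\partial_j$ is anticipated from the kernel elements (\ref{kerrho1}) and is the same structure seen in Propositions~\ref{bhd} and~\ref{smallj}. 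A decisive consistency check is that, since $\rho$ lands in the associative algebra of operators, $\rho$ annihilates every Jacobiator automatically; applying $\rho$ to the proposed right-hand side (using Proposition~\ref{hdkk} for the $\extd$ terms) must therefore vanish identically, which I would use to fix any ambiguity in the coefficients.
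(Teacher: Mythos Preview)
Your approach is essentially the same as the paper's: both expand the Jacobiator, substitute $[W,\extd Z]=\lambda\,\extd(\nabla_W Z)+\lambda\,P(Z,W)$, observe that only the leading $P_0$ of (\ref{P0}) matters at order $\lambda^2$, split $[X,Y]_{\rm Lie}=\nabla_X Y-\nabla_Y X$ to reach the organisation (\ref{bvy66}), and then commute $X,Y$ through the $\widehat{\extd x^i}$, $\theta'$ and $\partial_j\partial_a$ pieces of $P_0$, with the second Bianchi identity entering to combine $R_{ecbi;q}-R_{ecqi;b}$ into $R_{ecbq;i}$. One small inaccuracy: the second-order operator terms do not simply cancel between the two lines as you suggest --- they survive and reorganise into the $\mu^{-1}g^{ij}\theta'\,\partial_j$ half of the $(\extd x^i-\mu^{-1}g^{ij}\theta'\,\partial_j)$ factor, paired with the $\nabla_a Y$, $\nabla_a X$, $\nabla_a Z$ and $R^b{}_{rqp;i}\,\partial_b$ coefficients; this is the step that makes the calculation long rather than a cancellation that shortens it.
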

\begin{proof} 
Begin with
\begin{align*}
J(X,Y,\extd Z) 
&= [  X  ,[    Y  , \extd Z  ] ]  -  [  Y  ,[    X  , \extd Z  ] ] -\lambda\,  [   [  X  ,  Y]_{\rm Lie}  , \extd Z ]   \cr
&=  \lambda\,  [  X  , \extd(\nabla_Y Z)  ]  +  \lambda\,  [  X  ,  P(Z,Y)   ] 
 - \lambda\, [  Y  ,   \extd(\nabla_X Z) ]       -  \lambda\, [  Y  ,   P(Z,X) ] \cr
&\quad   -\lambda^2 \, \extd(\nabla_{  [   X  ,  Y ]_{\rm Lie}  } Z)    -\lambda^2\, P(Z, [  X  ,  Y ]_{\rm Lie} ) \cr
&= \lambda^2\, \extd(\nabla_X \nabla_Y Z) +  \lambda^2\, P(\nabla_Y Z,X) +  \lambda\,  [  X  ,  P(Z,Y)   ]  \cr
&\quad -\lambda^2\, \extd(\nabla_Y \nabla_X Z) -  \lambda^2\, P(\nabla_X Z,Y)     -  \lambda\, [  Y  ,   P(Z,X) ] \cr
&\quad   -\lambda^2 \, \extd(\nabla_{  [ X  ,  Y ]_{\rm Lie}  } Z)    -\lambda^2\, P(Z, [  X  ,  Y]_{\rm Lie} ) \cr
&= \lambda^2\, \extd(R(X,Y) Z) +  \lambda^2\, P(\nabla_Y Z,X) +  \lambda\,  [  X  ,  P(Z,Y)   ]  \cr
&\quad  -  \lambda^2\, P(\nabla_X Z,Y)     -  \lambda\, [  Y  ,   P(Z,X) ]    -\lambda^2\, P(Z, [ X  ,  Y ]_{\rm Lie} ) 
\end{align*}
which gives 
\begin{align} \label{bvy66}
&J(X,Y,\extd Z) 
= \lambda^2\, \extd(R(X,Y) Z) \cr
&\quad  +  \lambda\,  [  X  ,  P_0(Z,Y)   ]  -  \lambda^2\, P_0(\nabla_X Z,Y)  -\lambda^2\, P_0(Z, \nabla_XY )  \cr
&\quad    -  \lambda\, [  Y  ,   P_0(Z,X) ]  +  \lambda^2\, P_0(\nabla_Y Z,X)    + \lambda^2\, P_0(Z, \nabla_Y X   ) \ .
\end{align}

Now we use, to order $\lambda$,
 \begin{align} \label{osos}
[X, & P_0( Z,Y)] = - \, [X,\widehat{\extd x^i}]\, \big(\nabla_{\nabla_i  Z} Y + \nabla_{\nabla_i Y}  Z\big)
-  \widehat{\extd x^i}\, [X,\big(\nabla_{\nabla_i  Z} Y + \nabla_{\nabla_i Y}  Z\big)]  \cr
&\quad - (2m) ^{-1} \theta' [X,g^{ij}\,\big(\nabla_i  Z\, \nabla_j Y+
\nabla_i Y\, \nabla_j  Z\big)] \cr
&\quad - \theta'\, [X,Y^b\,  Z^a\, V_{,a;b}]    + m^{-1}\theta'\,[X,g^{ij}\, Y^b \,  Z^c R_{ecbi} \,g^{ae} (\partial_j \partial_a ) ] \cr
&=  \lambda(X^a\Gamma^i{}_{ak}\,\extd x^k +   m^{-1}\,\theta'\, g^{ij}\,\nabla_j X    )\,  \big(\nabla_{\nabla_i  Z} Y + \nabla_{\nabla_i Y}  Z\big)  \cr
&\quad -  \extd x^i\, [X,\big(\nabla_{\nabla_i  Z} Y + \nabla_{\nabla_i Y}  Z\big)]  
 - (2m) ^{-1} \theta' [X,g^{ij}\,\big(\nabla_i  Z\, \nabla_j Y+
\nabla_i Y\, \nabla_j  Z\big)] \cr
&\quad - \theta'\, [X,Y^b\,  Z^a\, V_{,a;b}]    +  m^{-1}\theta'\,[X,g^{ij}\, Y^b \,  Z^c R_{ecbi} \,g^{ae} (\partial_j \partial_a ) ] \ .
\end{align}
Now write to order $\lambda^2$ the terms containing  $\extd x^i$ in second line of (\ref{bvy66}) as
 \begin{align} \label{osoy6}
&  \lambda^2\, X^a\Gamma^s{}_{ai}\,\extd x^i   \,  \big(\nabla_{\nabla_s  Z} Y + \nabla_{\nabla_s Y}  Z\big)  
 - \lambda\,  \extd x^i\, [X,\big(\nabla_{\nabla_i  Z} Y + \nabla_{\nabla_i Y}  Z\big)]  \cr
 &\quad  +  \lambda^2\,\extd x^i\, (\nabla_{\nabla_i  \nabla_X Z} Y + \nabla_{\nabla_i Y}  \nabla_X Z) 
 +  \lambda^2\,\extd x^i\, (\nabla_{\nabla_i  Z} \nabla_X Y + \nabla_{\nabla_i \nabla_X Y}  Z) \cr
 &=  \lambda^2\, X^a\Gamma^s{}_{ai}\,\extd x^i   \,  \big(\nabla_{\nabla_s  Z} Y + \nabla_{\nabla_s Y}  Z\big)  
 + \lambda^2\,  \extd x^i\, \big(\nabla_{\nabla_{\nabla_i  Z} Y} X + \nabla_{\nabla_{\nabla_i Y}  Z} X \big) \cr
 &\quad  +  \lambda^2\,\extd x^i\, (\nabla_{\nabla_i  \nabla_X Z} Y + [\nabla_{\nabla_i Y} , \nabla_X] Z) 
 +  \lambda^2\,\extd x^i\, ([\nabla_{\nabla_i  Z}  , \nabla_X ]Y + \nabla_{\nabla_i \nabla_X Y}  Z)  \cr
  &=  \lambda^2\, X^a\Gamma^s{}_{ai}\,\extd x^i   \,  \big(\nabla_{\nabla_s  Z} Y + \nabla_{\nabla_s Y}  Z\big)  
 + \lambda^2\,  \extd x^i\, \big(\nabla_{\nabla_{\nabla_i  Z} Y} X + \nabla_{\nabla_{\nabla_i Y}  Z} X \big) \cr
 &\quad  +  \lambda^2\,\extd x^i\, (\nabla_{\nabla_i  \nabla_X Z} Y + R(\nabla_i Y,X)Z
 + \nabla_{  [  \nabla_i Y ,X]_{\rm Lie}   } Z   )   \cr
 &\quad +  \lambda^2\,\extd x^i\, ( R(\nabla_i  Z  , X )Y 
 + \nabla_{  [\nabla_i  Z,X]_{\rm Lie}  }Y+ \nabla_{\nabla_i \nabla_X Y}  Z)  \cr
   &=  \lambda^2\, X^a\Gamma^s{}_{ai}\,\extd x^i   \,  \big(\nabla_{\nabla_s  Z} Y + \nabla_{\nabla_s Y}  Z\big)  
 + \lambda^2\,  \extd x^i\, \big(\nabla_{\nabla_{\nabla_i  Z} Y} X + \nabla_{\nabla_{\nabla_i Y}  Z} X \big) \cr
 &\quad  +  \lambda^2\,\extd x^i\, (\nabla_{[\nabla_i , \nabla_X] Z} Y + R(\nabla_i Y,X)Z
 + \nabla_{   \nabla_{ \nabla_i Y} X   } Z   )   \cr
 &\quad +  \lambda^2\,\extd x^i\, ( R(\nabla_i  Z  , X )Y 
 + \nabla_{  \nabla_{ \nabla_i  Z}X  }Y+ \nabla_{[\nabla_i, \nabla_X] Y}  Z)  \ .
\end{align}
Now using $[\frac{\del}{\del x^i},X]_{\rm Lie} =\nabla_i X-X^a\Gamma^s{}_{ia}\frac{\del}{\del x^s}$,  this is
 \begin{align} \label{osoy7}
   &=   \lambda^2\,  \extd x^i\, \big(\nabla_{\nabla_{\nabla_i  Z} Y} X + \nabla_{\nabla_{\nabla_i Y}  Z} X \big) \cr
 &\quad  +  \lambda^2\,\extd x^i\, (\nabla_{  R(  \frac{\del}{\del x^i},X) Z} Y  + \nabla_{   \nabla_{\nabla_i X} Z} Y
 + R(\nabla_i Y,X)Z
 + \nabla_{   \nabla_{ \nabla_i Y} X   } Z   )   \cr
 &\quad +  \lambda^2\,\extd x^i\, ( R(\nabla_i  Z  , X )Y 
 + \nabla_{  \nabla_{ \nabla_i  Z}X  }Y+ \nabla_{  \nabla_{\nabla_i X}Y}  Z    + \nabla_{ R(  \frac{\del}{\del x^i},X) Y}  Z)  \ .
\end{align}
so we get the total $\extd x^i$ contribution to the Jacobi operator as
 \begin{align} \label{osoy79}
   &= 
   \lambda^2\,\extd x^i\, (\nabla_{  R(  \frac{\del}{\del x^i},X) Z} Y 
 + R(\nabla_i Y,X)Z 
 +  R(\nabla_i  Z  , X )Y 
 + \nabla_{ R(  \frac{\del}{\del x^i},X) Y}  Z)  \cr
 &\quad -  \lambda^2\,\extd x^i\, (\nabla_{  R(  \frac{\del}{\del x^i},Y) Z} X 
 + R(\nabla_i X,Y)Z 
 +  R(\nabla_i  Z  , Y )X
 + \nabla_{ R(  \frac{\del}{\del x^i},Y) X}  Z) \ .
\end{align}
Now we write to order $\lambda^2$ the terms not containing  $\extd x^i$ and not containing $V$ in second line of (\ref{bvy66}), using
(\ref{osos}) as
 \begin{align*} 
&  \lambda^2m^{-1}\,\theta'\, g^{ij}\,\nabla_j X    \,  \big(\nabla_{\nabla_i  Z} Y + \nabla_{\nabla_i Y}  Z\big)  
 - \lambda(2m) ^{-1} \theta' [X,g^{ij}\,\big(\nabla_i  Z\, \nabla_j Y+\nabla_i Y\, \nabla_j  Z\big)] \cr
&\quad    +   \lambda m^{-1}\theta'\,[X,g^{ij}\, Y^b \,  Z^c R_{ecbi} \,g^{ae} (\partial_j \partial_a ) ] \cr 
& \quad +  \lambda^2(2m) ^{-1} g^{ij}\,\theta'\big(\nabla_i  \nabla_X Z\, \nabla_j Y+ \nabla_i Y\, \nabla_j \nabla_X  Z\big) 
 -  \lambda^2m^{-1}g^{ij}\, \theta'\,Y^b \,  (\nabla_X Z)^c R_{ecbi} \,g^{ae} \partial_j \partial_a   \cr 
& \quad +  \lambda^2(2m) ^{-1} g^{ij}\,\theta'\big(\nabla_i  Z\,  \nabla_j \nabla_X Y+ \nabla_i \nabla_X Y\, \nabla_j  Z\big)
 -  \lambda^2m^{-1}g^{ij}\, \theta'\,(\nabla_X Y)^b \,  Z^c R_{ecbi} \,g^{ae} \partial_j \partial_a  \cr
 &=  \lambda^2m^{-1}\,\theta'\, g^{ij}\,\nabla_j X    \,  \big(\nabla_{\nabla_i  Z} Y + \nabla_{\nabla_i Y}  Z\big)  
 - \lambda(2m) ^{-1} \theta' [X,g^{ij} ] \,\big(\nabla_i  Z\, \nabla_j Y + \nabla_i Y\, \nabla_j  Z\big) \cr
 &\quad  - \lambda^2 (2m) ^{-1} \theta' g^{ij}\,\big( [X, \nabla_i  Z]_{\rm Lie}\, \nabla_j Y + [X, \nabla_i Y]_{\rm Lie} \, \nabla_j  Z
 + \nabla_i  Z\, [X, \nabla_j Y]_{\rm Lie} + \nabla_i Y\, [X, \nabla_j  Z]_{\rm Lie} \big) \cr
&\quad    +   \lambda m^{-1}\theta'\,[X,g^{ij}\, Y^b \,  Z^c R_{ecbi} \,g^{ae} ] \, \partial_j \partial_a 
 +   \lambda m^{-1}\theta'\,g^{ij}\, Y^b \,  Z^c R_{ecbi} \,g^{ae} [X,\partial_j \partial_a  ] \cr 
& \quad +  \lambda^2(2m) ^{-1} g^{ij}\,\theta'\big(\nabla_i  \nabla_X Z\, \nabla_j Y+ \nabla_i Y\, \nabla_j \nabla_X  Z\big) 
 -  \lambda^2m^{-1}g^{ij}\, \theta'\,Y^b \,  (\nabla_X Z)^c R_{ecbi} \,g^{ae} \partial_j \partial_a   \cr 
& \quad +  \lambda^2(2m) ^{-1} g^{ij}\,\theta'\big(\nabla_i  Z\,  \nabla_j \nabla_X Y+ \nabla_i \nabla_X Y\, \nabla_j  Z\big)
 -  \lambda^2m^{-1}g^{ij}\, \theta'\,(\nabla_X Y)^b \,  Z^c R_{ecbi} \,g^{ae} \partial_j \partial_a  \cr
 &=  \lambda^2m^{-1}\,\theta'\, g^{ij}\,\nabla_j X    \,  \big(\nabla_{\nabla_i  Z} Y + \nabla_{\nabla_i Y}  Z\big)  
 -  \lambda m^{-1} \theta' [X,g^{ij} ] \,\nabla_i  Z\, \nabla_j Y  \cr
 &\quad  - \lambda^2 m^{-1} \theta' g^{ij}\,\big( [X, \nabla_i  Z]_{\rm Lie}\, \nabla_j Y + [X, \nabla_i Y]_{\rm Lie} \, \nabla_j  Z\big) \cr
&\quad    +  \lambda m^{-1}\theta'\,[X,g^{ij}\, Y^b \,  Z^c R_{ecbi} \,g^{ae} ] \, \partial_j \partial_a 
 +   \lambda m^{-1}\theta'\,g^{ij}\, Y^b \,  Z^c R_{ecbi} \,g^{ae} [X,\partial_j \partial_a  ] \cr 
& \quad + \lambda^2m^{-1} g^{ij}\,\theta'  \nabla_i  \nabla_X Z\, \nabla_j Y
 -  \lambda^2m^{-1}g^{ij}\, \theta'\,Y^b \,  (\nabla_X Z)^c R_{ecbi} \,g^{ae} \partial_j \partial_a   \cr 
& \quad +  \lambda^2m^{-1} g^{ij}\,\theta' \nabla_i  Z\,  \nabla_j \nabla_X Y
 -  \lambda^2m^{-1}g^{ij}\, \theta'\,(\nabla_X Y)^b \,  Z^c R_{ecbi} \,g^{ae} \partial_j \partial_a  \cr
  &=  \lambda^2m^{-1}\,\theta'\, g^{ij}\,\nabla_j X    \,  \big(\nabla_{\nabla_i  Z} Y + \nabla_{\nabla_i Y}  Z\big)  
 - \lambda m^{-1} \theta' [X,g^{ij} ] \,\nabla_i  Z\, \nabla_j Y  \cr
 &\quad  + \lambda^2 m^{-1} \theta' g^{ij}\,\big( \nabla_{\nabla_i Z}X \,\nabla_j Y +
 \nabla_{\nabla_i Y}X \, \nabla_j  Z\big) \cr
&\quad    +   \lambda^2m^{-1}\theta'\,X^q\,g^{ij}\, Y^b \,  Z^c R_{ecbi;q} \,g^{ae}  \, \partial_j \partial_a 
   -   \lambda^2m^{-1}\theta'\,X^q\,\Gamma^j{}_{kq}    \,g^{ik}\, Y^b \,  Z^c R_{ecbi} \,g^{ae}  \, \partial_j \partial_a 
   \cr
&\quad    -   \lambda^2m^{-1}\theta'\,X^q\,\Gamma^a{}_{kq}    \,g^{ij}\, Y^b \,  Z^c R_{ecbi} \,g^{ke}  \, \partial_j \partial_a +  \lambda m^{-1}\theta'\,g^{ij}\, Y^b \,  Z^c R_{ecbi} \,g^{ae} [X,\partial_j \partial_a  ] \cr 
& \quad +  \lambda^2m^{-1} g^{ij}\,\theta'  [\nabla_i , \nabla_X] Z\, \nabla_j Y
  +  \lambda^2m^{-1} g^{ij}\,\theta' \nabla_i  Z\,  [\nabla_j ,\nabla_X] Y
 \cr
   &=  \lambda^2m^{-1}\,\theta'\, g^{ij}\,\nabla_j X    \,  \big(\nabla_{\nabla_i  Z} Y + \nabla_{\nabla_i Y}  Z\big)  
 + \lambda^2m^{-1} \theta' X^q(  g^{kj}\,\Gamma^i{}_{qk} +g^{ik}\,\Gamma^j{}_{qk}  )\,\nabla_i  Z\, \nabla_j Y  \cr
 &\quad  + \lambda^2 m^{-1} \theta' g^{ij}\,\big( \nabla_{\nabla_i Z}X \,\nabla_j Y +
 \nabla_{\nabla_i Y}X \, \nabla_j  Z\big) \cr
&\quad    +   \lambda^2m^{-1}\theta'\,X^q\,g^{ij}\, Y^b \,  Z^c R_{ecbi;q} \,g^{ae}  \, \partial_j \partial_a 
 -   \lambda^2m^{-1}\theta'\,g^{ij}\, Y^b \,  Z^c R_{ecbi} \,g^{ae}(\nabla_j X\,\partial_a+ \nabla_a X\,\partial_j) \cr 
& \quad +  \lambda^2m^{-1} g^{ij}\,\theta' \big( R(\del_i , X) Z\, \nabla_j Y + \nabla_{ [\del_i ,X]_{\rm Lie} } Z\, \nabla_j Y
  + \nabla_i  Z\,  R(\del_j ,X) Y   + \nabla_i  Z\,  \nabla_{ [ \del_j ,X]_{\rm Lie} } Y \big)
 \cr
    &=  \lambda^2m^{-1}\,\theta'\, g^{ij}    \,  \big(\nabla_{\nabla_i  Z} Y \,\nabla_j X + \nabla_{\nabla_i Y}  Z \,\nabla_j X   +
    \nabla_{\nabla_i Z}X \,\nabla_j Y +
 \nabla_{\nabla_i Y}X \, \nabla_j  Z     +   \nabla_{ \nabla_j X } Y  \, \nabla_i  Z+ \nabla_{ \nabla_i X} Z\, \nabla_j Y  \big) \cr
&\quad    +   \lambda^2m^{-1}\theta'\,X^q\,g^{ij}\, Y^b \,  Z^c R_{ecbi;q} \,g^{ae}  \, \partial_j \partial_a 
 -   \lambda^2m^{-1}\theta'\,g^{ij}\, Y^b \,  Z^c R_{ecbi} \,g^{ae}(\nabla_j X\,\partial_a+ \nabla_a X\,\partial_j) \cr 
& \quad +  \lambda^2m^{-1} g^{ij}\,\theta' \big( R(\del_i , X) Z\, \nabla_j Y 
  + \nabla_i  Z\,  R(\del_j ,X) Y    \big)
 \cr
     &=  \lambda^2m^{-1}\,\theta'\, g^{ij}    \,  \big(\nabla_{\nabla_i  Z} Y \,\nabla_j X + \nabla_{\nabla_i Y}  Z \,\nabla_j X   +
    \nabla_{\nabla_i Z}X \,\nabla_j Y +
 \nabla_{\nabla_i Y}X \, \nabla_j  Z     +   \nabla_{ \nabla_j X } Y  \, \nabla_i  Z+ \nabla_{ \nabla_i X} Z\, \nabla_j Y  \big) \cr
&\quad    +   \lambda^2m^{-1}\theta'\,X^q\,g^{ij}\, Y^b \,  Z^c R_{ecbi;q} \,g^{ae}  \, \partial_j \partial_a \cr 
& \quad +  \lambda^2m^{-1} g^{ij}\,\theta' \big( R(\del_i , X) Z\, \nabla_j Y 
  +  R(\del_j ,X) Y \, \nabla_i  Z   + R(\del_i , Y) Z\, \nabla_j X + R(\del_i , Z) Y\, \nabla_j X    \big)
 \end{align*}
So the total $\theta'$ contribution to the Jacobi operator from terms not containing $V$ is
\begin{align*}
&\quad     \lambda^2m^{-1}\theta'\,X^q\,g^{ij}\, Y^b \,  Z^c (R_{ecbi;q} - R_{ecqi;b}) \,g^{ae}  \, \partial_j \partial_a \cr 
& \quad +  \lambda^2m^{-1} g^{ij}\,\theta' \big(  R(\del_j ,X) Y \, \nabla_i  Z    + R(\del_i , Z) Y\, \nabla_j X   
- R(\del_j ,Y) X\, \nabla_i  Z    - R(\del_i , Z) X\, \nabla_j Y     \big)\ .
 \end{align*}
The terms containing $V$ are easily computed separately. Then
\begin{align*}
&J(X,Y,\extd Z) 
= \lambda^2\, \extd(R(X,Y) Z)  +\lambda^2\,\theta'\, X^qY^bZ^a\, R^k{}_{aqb}\, V_{,k}\cr
 &\quad
 +  \lambda^2\,\extd x^i\, (\nabla_{  R(  \frac{\del}{\del x^i},X) Z} Y 
 + R(\nabla_i Y,X)Z 
 +  R(\nabla_i  Z  , X )Y 
 + \nabla_{ R(  \frac{\del}{\del x^i},X) Y}  Z)  \cr
 &\quad -  \lambda^2\,\extd x^i\, (\nabla_{  R(  \frac{\del}{\del x^i},Y) Z} X 
 + R(\nabla_i X,Y)Z 
 +  R(\nabla_i  Z  , Y )X
 + \nabla_{ R(  \frac{\del}{\del x^i},Y) X}  Z) \cr
 &\quad  +   \lambda^2m^{-1}\theta'\,X^q\,g^{ij}\, Y^b \,  Z^c (R_{ecbi;q} - R_{ecqi;b}) \,g^{ae}  \, \partial_j \partial_a \cr 
& \quad +  \lambda^2m^{-1} g^{ij}\,\theta' \big(  R(\del_j ,X) Y \, \nabla_i  Z    + R(\del_i , Z) Y\, \nabla_j X   
- R(\del_j ,Y) X\, \nabla_i  Z    - R(\del_i , Z) X\, \nabla_j Y     \big)\cr
&= \lambda^2\, \extd(R(X,Y) Z)  +\lambda^2\,\theta'\, X^qY^bZ^a\, R^k{}_{aqb}\, V_{,k}
+   \lambda^2m^{-1}\theta'\,X^q\,g^{ij}\, Y^b \,  Z^c R_{ecbq;i}  \,g^{ae}  \, \partial_j \partial_a \cr
 &\quad
 +  \lambda^2\,\extd x^i\, (R^a{}_{riq} X^q Y^b{}_{;a} Z^r\,\del_b + R^a{}_{rpi} X^b{}_{;a} Y^p Z^r\,\del_b
  + R^b{}_{rpq} X^q{}_{;i} Y^p Z^r \,\del_b \cr
 &\quad + R^b{}_{rpq} X^q Y^p{}_{;i} Z^r \,\del_b
 +  R^b{}_{rpq} Y^p X^q Z^r{}_{;i}\,\del_b
+ R^a{}_{ipq} X^q Y^p Z^b{}_{;a}\,\del_b)  \cr
& \quad +  \lambda^2m^{-1} g^{ij}\,\theta' \big(  R^b{}_{jpq} X^q Y^p\,\del_b \, \nabla_i  Z    + R^b{}_{pir} Z^r Y^p\,\del_b\, \nabla_j X   
    - R^b{}_{qir} Z^r X^q\,\del_b\, \nabla_j Y     \big)\cr
    &= \lambda^2\, \extd(R(X,Y) Z)  +\lambda^2\,\theta'\, X^qY^bZ^a\, R^k{}_{aqb}\, V_{,k}
+   \lambda^2m^{-1}\theta'\,g^{ij}\, X^q\,Y^p \,  Z^r R^a{}_{rpq;i}   \, \partial_j \partial_a \cr
 &\quad
 +  \lambda^2\,\extd x^i\, (R^a{}_{riq} X^q Y^b{}_{;a} Z^r\,\del_b + R^a{}_{rpi} X^b{}_{;a} Y^p Z^r\,\del_b
  + \nabla_i(R^b{}_{rpq} X^q Y^p Z^r \,\del_b) \cr
 &\quad - R^b{}_{rpq;i} X^q Y^pZ^r \,\del_b
+ R^a{}_{ipq} X^q Y^p Z^b{}_{;a}\,\del_b)  \cr
& \quad +  \lambda^2m^{-1} g^{ij}\,\theta' \big(  R^b{}_{jpq} X^q Y^p\,\del_b \, \nabla_i  Z    + R^b{}_{pir} Z^r Y^p\,\del_b\, \nabla_j X   
    - R^b{}_{qir} Z^r X^q\,\del_b\, \nabla_j Y     \big)\cr
        &= \lambda^2\, \extd(R(X,Y) Z)   - \lambda^2\,\extd x^i\,  \nabla_i(R(X,Y) Z)  +\lambda^2\,\theta'\, (R(X,Y) Z)^k\, V_{,k} \cr
&\quad + \lambda^2\,R^b{}_{rqp;i} X^q Y^pZ^r \,\del_b\, (\extd x^i
 -   m^{-1}\theta'\,g^{ij}  \, \partial_j   )
 \cr
 &\quad
 +  \lambda^2\,\extd x^i\, R^a{}_{riq} X^q  Z^r\nabla_a Y      -  \lambda^2m^{-1} g^{ij}\,\theta' R^b{}_{qir} Z^r X^q\,\del_b\, \nabla_j Y 
 \cr
 &\quad + \lambda^2\,\extd x^i\, R^a{}_{rpi}  Y^p Z^r\, \nabla_a X     +  \lambda^2m^{-1} g^{ij}\,\theta' R^b{}_{pir} Z^r Y^p\,\del_b\, \nabla_j X   
\cr
 &\quad 
+ \lambda^2\,\extd x^i\, R^a{}_{ipq} X^q Y^p\, \nabla_a Z  +  \lambda^2m^{-1} g^{ij}\,\theta'   R^b{}_{jpq} X^q Y^p\,\del_b \, \nabla_i  Z,\end{align*}
which gives the result stated. \end{proof}

\begin{corollary}\label{cor:imjac} The images of all the Jacobiators above are in the kernel of $\rho$, in fact in the space spanned by elements of the form (\ref{keromega1}) and the expressions in Proposition~\ref{hdkk}.
\end{corollary}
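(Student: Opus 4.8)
The plan is to show, for each Jacobiator computed in Propositions~\ref{bhd}, \ref{smallj} and \ref{bigj}, that it equals — to order $\lambda^2$ — a linear combination of the kernel elements~(\ref{keromega1}) and of the vector-field kernel element of Proposition~\ref{hdkk}, all of which $\rho$ annihilates to order $\lambda^2$. The key bookkeeping observation is that every Jacobiator in these propositions carries an overall factor $\lambda^2$, so inside the bracket we only ever need expressions to order $\lambda^0$; in particular any combination agreeing with a genuine kernel element up to an order-$\lambda$ term may be replaced by that kernel element, the discrepancy being pushed to order $\lambda^3$ and discarded. Since $\ker\rho$ is a sub-bimodule, multiplying a kernel element on either side by a function or vector field keeps it in the kernel.

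The cases $J(f,h,\widehat\xi)=0$ and $J(f,Y,\widehat\xi)=0$ are immediate, as these already vanish as algebra elements. For $J(Y,X,\widehat\xi)$ of Proposition~\ref{bhd} and $J(f,Y,\extd X)$ of Proposition~\ref{smallj} the common feature is the factor $\widehat{\extd x^i}-\mu^{-1}\theta'\,g^{ij}\partial_j$. I would first record that
\[
\mu\big(\widehat{\extd x^i}-\mu^{-1}\theta'\,g^{ij}\partial_j\big)=\Big(\mu\,\widehat{\extd x^i}-g^{ij}\theta'\partial_j+\tfrac{\lambda}{2}g^{pq}\Gamma^i{}_{pq}\theta'\Big)-\tfrac{\lambda}{2}g^{pq}\Gamma^i{}_{pq}\theta',
\]
where the first bracket is exactly~(\ref{keromega1}) evaluated at $\xi=\extd x^i$ (using $(\extd x^i)_{p;q}=-\Gamma^i{}_{pq}$) and hence lies in the kernel, while the trailing term is $O(\lambda)$. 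Multiplying by the overall $\lambda^2$ kills that remainder, so both Jacobiators are (curvature-weighted) multiples of kernel elements of type~(\ref{keromega1}) to order $\lambda^2$.

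The substantial case is $J(X,Y,\extd Z)$ of Proposition~\ref{bigj}, which I would split into its two lines. Every term of the second line is multiplied by $\extd x^i-\mu^{-1}g^{ij}\theta'\partial_j$, so the identity just displayed shows each is, to order $\lambda^2$, a type-(\ref{keromega1}) kernel element multiplied on the left by a function or vector field, hence still in $\ker\rho$. For the first line I would set $W:=R(X,Y)Z$ and use Proposition~\ref{hdkk} (with its kernel element $K_W$) to write $\mu\,\extd W=K_W+\theta'\,g^{ij}W^a{}_{;i}\partial_a\partial_j-\mu\,\theta'\,W(V)+O(\lambda)$. Separately, expanding the bimodule product $\extd x^i\,\nabla_i W=\widehat{\extd x^i}\!\cdot\!(W^a{}_{;i}\partial_a)$ to order $\lambda^0$ and inserting~(\ref{keromega1}) for $\widehat{\extd x^i}$ gives $\mu\,\extd x^i\,\nabla_i W=\theta'\,g^{ij}W^a{}_{;i}\partial_a\partial_j+(\text{kernel})+O(\lambda)$. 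Subtracting, the second-order operator pieces $\theta'\,g^{ij}W^a{}_{;i}\partial_a\partial_j$ cancel, and the potential pieces cancel because $W(V)=W^kV_{,k}$, so
\[
\mu\big(\extd W-\extd x^i\,\nabla_i W+\theta'\,W^kV_{,k}\big)=K_W+(\text{kernel of type~(\ref{keromega1})})+O(\lambda),
\]
and the overall $\lambda^2$ again discards the remainder.

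The step I expect to be the main obstacle is matching the two occurrences of the operator $\theta'\,g^{ij}W^a{}_{;i}\partial_a\partial_j$: one from the $\partial_a\partial_j$ coefficient of Proposition~\ref{hdkk} applied to $W$, the other from the leading behaviour of the product $\widehat{\extd x^i}\!\cdot\!\nabla_i W$. Verifying that these coincide exactly — so that the genuinely second-order, non-kernel part cancels and only kernel elements survive — requires care with the ordering of $\extd x^i$ against the vector field in the bimodule product and with the $O(\lambda)$ terms one is entitled to drop. Once this cancellation is confirmed, every Jacobiator lies in the span of~(\ref{keromega1}) and the Proposition~\ref{hdkk} elements, and hence in $\ker\rho$ to order $\lambda^2$, as claimed.
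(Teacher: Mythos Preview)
Your proposal is correct and follows essentially the same route as the paper. Both exploit the overall $\lambda^2$ factor to work modulo $O(\lambda)$, dispatch the first Jacobiators via the visible factor $\widehat{\extd x^i}-\mu^{-1}\theta'g^{ij}\partial_j$, and for $J(X,Y,\extd Z)$ with $U=R(X,Y)Z$ split off a type-(\ref{keromega1}) piece from $-\extd x^i\,\nabla_iU$ so that the remaining $-\mu^{-1}\theta'g^{ij}U^a{}_{;i}\partial_j\partial_a$ term combines with $\extd U$ and the potential term into the Proposition~\ref{hdkk} kernel element for $U$; the cancellation you flag as the main obstacle is immediate at leading order.
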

\proof This is by inspection of most of the terms except for the last case if we write $U=R(X,Y)Z$ then the $-\lambda^2\extd x^i$ terms in the first line can be replaced by $-(\extd x^i-m^{-1}\theta'\del_j)\nabla_i U- m^{-1}\theta'\del_j U^a{}_{;i}\del_a$ and the second term here combines with the other terms on  the right to give an expression  in the kernel of the form in Proposition~\ref{hdkk} applied to $U$.\endproof

\section{Operator geodesic equations from associativity}\label{sec:geo}

We have constructed the calculus in the previous sections motivated by the Schr\"odinger representation and a chosen Hamiltonian. This calculus as we have seen has a Jacobiator (it is not associative) even between 0-forms and 1-forms i.e. $\Omega^1_{\CD(M)}$ is not quite a bimodule over $\CD(M)$ if there is sufficiently nontrivial curvature. We can, however, impose relations that kill the non-associativity if we want. Indeed, the Schr\"odinger representation maps to an associative operator algebra and hence all the Jacobiators must have their image in its kernel, hence it is natural to kill this kernel. We keep the option of a potential $V$ in the choice of $\ch$, although in our spacetime application, we will set this $V=0$. 

\begin{corollary} \label{corbimod} The quotient of $\Omega^1_{\CD(M)}$ by elements of the form (\ref{keromega1}) and the expressions in Proposition~\ref{hdkk} (including with potential $V$) in the kernel of the Schr\"odinger representation is a first order differential calculus to order $\lambda^2$. 
\end{corollary}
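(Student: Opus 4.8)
The plan is to exploit that the Schr\"odinger representation $\rho$ of \eqref{rho}, extended to $1$-forms as in \eqref{ghu}, is multiplicative: by construction one has $\rho(a\cdot\omega)=\rho(a)\rho(\omega)$ and $\rho(\omega\cdot a)=\rho(\omega)\rho(a)$ for $a\in\CD(M)$ and $\omega\in\Omega^1(\CD(M))$, with values in the \emph{associative} algebra $\mathrm{Lin}(L^2(M))$; this is exactly what \eqref{bas}, \eqref{repute} and $\rho(\theta')=1$ encode. Consequently the three associators whose vanishing is the content of the bimodule axioms for $\Omega^1(\CD(M))$ over $\CD(M)$ — the left associator $(ab)\omega-a(b\omega)$, the right associator $(\omega a)b-\omega(ab)$, and the middle associator $(a\omega)b-a(\omega b)$ — all map to zero under $\rho$, since the target is associative. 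Thus every such associator lies in $\ker\rho\cap\Omega^1(\CD(M))$. The Jacobiators $J(a,b,\omega)$ of Propositions~\ref{bhd}, \ref{smallj} and \ref{bigj} are the alternating sums of these associators and so repackage the same obstruction; they exhaust all configurations with a single $1$-form entry, the remaining entries being generated by functions and vector fields (on which $J$ is derivation-like in each slot), while the central part $f\,\theta'$ contributes nothing since $\theta'$ is central.

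First I would fix the subspace to quotient by. Let $N\subseteq\Omega^1(\CD(M))$ be the span, closed under the left and right $\CD(M)$-actions and taken modulo $\lambda^3$, of the explicit kernel elements \eqref{keromega1} (for all $\xi$) together with those of Proposition~\ref{hdkk} (for all $X$, retaining the $V$-dependent term). Each generator lies in $\ker\rho$ by the computation preceding \eqref{keromega1} and by Proposition~\ref{hdkk}; and since $\rho$ is a homomorphism, $\ker\rho\cap\Omega^1(\CD(M))$ is stable under both actions, so $N\subseteq\ker\rho$ is itself a sub-bimodule to order $\lambda^2$, proper because $\rho\neq 0$. The left and right actions therefore descend to $\Omega^1(\CD(M))/N$, making the quotient a well-defined $\CD(M)$-bimodule candidate.

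Next I would use the input that these generators are \emph{sufficient}, that is, that to order $\lambda^2$ they span all of $\ker\rho\cap\Omega^1(\CD(M))$. Granting this, each associator above lies in $\ker\rho$ and hence in $N$, so all three associators vanish in $\Omega^1(\CD(M))/N$; Corollary~\ref{cor:imjac} is precisely the verification of this for the antisymmetrised Jacobiator combinations, where for $J(X,Y,\extd Z)$ the substitution $U=R(X,Y)Z$ recasts the $\extd(R(X,Y)Z)$ and $V$-dependent pieces as a Proposition~\ref{hdkk} element applied to $U$ — this being where the potential term $\theta'\,(R(X,Y)Z)^k V_{,k}$ is absorbed, explaining the clause ``including with potential $V$''. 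With all associators killed, the quotient satisfies the bimodule axioms and is an associative $\CD(M)$-bimodule to order $\lambda^2$, as claimed.

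The hard part will be the sufficiency statement of the third paragraph. Corollary~\ref{cor:imjac} only places the antisymmetric Jacobiator combinations in $N$, whereas the bimodule axioms require each of the three associators \emph{separately} to be absorbed, so one must know that the listed elements generate the full order-$\lambda^2$ kernel and not merely the alternating combinations appearing in the Jacobiators. I expect this to rest on the symmetric construction of the products \eqref{repute}, which ties the commutator obstruction to its symmetric counterpart, together with a careful order-$\lambda^2$ bookkeeping of left- and right-multiplication on the $\theta'\,\del_a\del_j$ and $\theta'\,\del_k$ pieces of the Proposition~\ref{hdkk} elements, using centrality of $\theta'$ and the relation \eqref{btcy}, and carried out without presupposing the associativity one is trying to establish (the only associativity legitimately available being the vanishing of $J(f,h,\widehat\xi)$-type Jacobiators already noted). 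Once the order-$\lambda^2$ kernel is seen to be spanned by the stated generators, the corollary follows immediately.
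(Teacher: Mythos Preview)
Your approach differs from the paper's in a meaningful way. The paper's proof is two sentences: it cites Corollary~\ref{cor:imjac} (the Jacobiators lie in the span $N$ of the listed kernel elements, hence vanish in the quotient) and adds that one can also verify the claim directly by several pages of calculation. You instead argue that multiplicativity of $\rho$ forces every associator $(ab)\omega-a(b\omega)$, $(\omega a)b-\omega(ab)$, $(a\omega)b-a(\omega b)$ into $\ker\rho$, and then propose to finish by showing that $N$ exhausts $\ker\rho\cap\Omega^1(\CD(M))$ to order~$\lambda^2$. Your observation that vanishing of the Jacobiator (the totally antisymmetrised associator) does not by itself force each of the three bimodule associators to vanish is correct, and it is a fair criticism of taking the paper's first sentence in isolation; the paper's actual warrant is the unshown direct check.

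However, the target you set for yourself is stronger than what is needed, and the paper does not establish it: injectivity of $\rho$ on the quotient (equivalently $N=\ker\rho$ to the stated order) is explicitly treated as an \emph{assumption} later, in the proofs of Corollary~\ref{cordh} and Proposition~\ref{propX}. So your route, as written, rests on a claim the paper neither proves nor needs for this corollary. What is actually required is only that the three specific associators lie in $N$, not that every kernel element does. That is exactly the content of the paper's ``several pages of calculations'': one computes $(ab)\omega-a(b\omega)$ etc.\ for $a,b$ among functions and vector fields and $\omega$ among $\widehat\xi,\extd X,\theta'$, and checks membership in $N$ directly, using the symmetric product~\eqref{repute}, the commutators~\eqref{orp45} and Proposition~\ref{coordin}, and centrality of $\theta'$. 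Your multiplicativity argument gives for free that these associators lie in $\ker\rho$, which is a useful consistency check, but to finish you should aim for the weaker statement ``associators $\in N$'' rather than ``$N=\ker\rho$''.
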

\proof As explained in Section~\ref{sec:prod}, it is enough to work in local coordinates where we can take $\extd x^\mu,\extd p_\nu,\theta'$ as a left basis over the algebra. After quotienting, we therefore obtain  $\theta'$ as basis over the algebra and this was taken as central and associative with respect to products by $\CD(M)$, so we have a bimodule to the stated order. Moreover, writing $\extd a=D(a)\theta'$ for $a\in \CD(M)$, it is also part of the construction in Section~\ref{sec:calc} that this $D$ is a derivation to the stated order (we will shortly interpret it as ${\extd\over\extd s}$), but we  check this explicitly. 
From  (\ref{keromega1}) and Prop~\ref{hdkk} respectively, 
\begin{align*}
m D(f)&=  g^{ij}\,f_{,j}\, \partial_i +{ \lambda\over 2m} g^{ij}\, f_{,i;j}\\
m D(X) &= \,g^{ij}\, X^a{}_{;i}\, (\del_a\del_j- \lambda\Gamma^k{}_{aj} \, \del_k)
+{\lambda\over 2}(\Delta X  +  X^a\, g^{ij}\, R_{ja}\,\del_i)
  -  m X(V).
\end{align*}
Hence, for functions $f,h$, 
\begin{align*}
m\,D(f)h&+m\,f\,D(h)=g^{ij}\big(f_{,j}\, (\del_i\ ) h+\frac{\lambda}2 f_{,i;j}h+f\,h_{,j}\,\del_i+ \frac{\lambda}2
f\,h_{,i;j}\big)\\
&=g^{ij}\big((f h)_{,j} \del_i +\lambda f_{,j}h_{,i}+\frac{\lambda}2 f_{,i;j}h+ \frac{\lambda}2
f\,h_{,i;j}\big)= m\,D(fh).\end{align*}
Next, we already used in Proposition~\ref{gup} that $\extd( f X)=\extd f.X+f \extd X$ in obtaining the relations, but proceeding directly,
\begin{align*}
m\,D(fX)&-m\,f\,D(X) =   g^{ij}\, f_{,i}\, X^a\, (\del_a\del_j- \lambda\Gamma^k{}_{aj} \, \del_k)
+{\lambda\over 2}    (    2\,g^{ij}\, f_{,i} \,X^a{}_{;j}\,\del_a + g^{ij}\, f_{,i;j}\, X) \cr
&=   g^{ij}\, f_{,i}\, X\,\del_j
-  g^{ij}\, f_{,i}\, X^a\,  \lambda\Gamma^k{}_{aj} \, \del_k
+{\lambda\over 2}    (    2\,g^{ij}\, f_{,i} \,X^a{}_{;j}\,\del_a + g^{ij}\, f_{,i;j}\, X) 
\end{align*}
so
\begin{align*}
m\,(D(fX)&-f\,D(X)-D(f)\,X) 
=   g^{ij}\, f_{,i}\, [X,\del_j]
-  g^{ij}\, f_{,i}\, X^a\,  \lambda\Gamma^k{}_{aj} \, \del_k
+\lambda  \,g^{ij}\, f_{,i} \,X^a{}_{;j}\,\del_a 
\end{align*}
which gives zero by the commutation relations between vector fields. $Xf$ is defined by $fX$ and the commutator $[f,X]$ and that this is compatible with $\extd$ a derivation was  Proposition~\ref{geid}. Finally, for  products of vector fields, we extend $D$ as a derivation on the tensor product of vector fields tensor over the field (which is automatic) and show that this is well defined once we introduce the relations on the algebra. One relation is $(X.f)Y=X(fY)$ for functions $f$, which we have dealt with. The other nontrivial relation is $XY-YX=[X,Y]$ where $[X,Y]$ is a single vector field. In Proposition~\ref{gup}, we obtained the relations there from $\extd$ applied to the commutator of vector fields and one can check explicitly that this can be pushed the other way to $D$ well-defined by the derivation rule. 
 \endproof

Next, moving towards  applications, it will be convenient to say what $D$ looks like on local coordinate vector fields $\del_i$. For this, we need the following lemma. 

\begin{lemma}\label{lapdel} The Laplace-Beltrami operator on coordinate basis vector fields is
\[\Delta \del_i
= (-g^{j k}\, R_{k i} + g^{ab} \, \Gamma^j{}_{ab,i})\del_j.\]
\end{lemma}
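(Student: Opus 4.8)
The plan is to read $\Delta$ as the Laplace--Beltrami (connection) Laplacian on vector fields, $(\Delta X)^j=g^{ab}\,X^j{}_{;a;b}$, which is the natural lift to the tangent bundle of $\Delta\psi=g^{ab}\psi_{,a;b}$ and is the operator already appearing in Proposition~\ref{hdkk}. Applied to $X=\del_i$, whose components $(\del_i)^j=\delta^j_i$ are constant, the first covariant derivative is just a single Christoffel symbol, $(\del_i)^j{}_{;a}=\Gamma^j{}_{ia}$. Regarding $\Gamma^j{}_{ia}$ as a $(1,1)$-tensor (with $i$ a fixed spectator label), differentiating once more and contracting with $g^{ab}$, I would obtain
\[
(\Delta\del_i)^j=g^{ab}\,\Gamma^j{}_{ai,b}+g^{ab}\,\Gamma^j{}_{eb}\,\Gamma^e{}_{ai}-g^{ab}\,\Gamma^e{}_{ab}\,\Gamma^j{}_{ie},
\]
i.e.\ a first-derivative-of-$\Gamma$ term together with two quadratic terms.

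The key step is to move the derivative index on the first term from $b$ to $i$. Using the paper's convention $([\nabla_a,\nabla_b]X)^d=R^d{}_{cab}X^c$, the coordinate Riemann tensor is $R^j{}_{aib}=\Gamma^j{}_{ab,i}-\Gamma^j{}_{ai,b}+\Gamma^j{}_{ei}\Gamma^e{}_{ab}-\Gamma^j{}_{eb}\Gamma^e{}_{ai}$, hence
\[
\Gamma^j{}_{ai,b}=\Gamma^j{}_{ab,i}+\Gamma^j{}_{ei}\Gamma^e{}_{ab}-\Gamma^j{}_{eb}\Gamma^e{}_{ai}-R^j{}_{aib}.
\]
Substituting this into the displayed expression, the generated term $-g^{ab}\Gamma^j{}_{eb}\Gamma^e{}_{ai}$ cancels the quadratic term already present, while $g^{ab}\Gamma^j{}_{ei}\Gamma^e{}_{ab}=\Gamma^e\Gamma^j{}_{ei}$ (writing $\Gamma^e=g^{ab}\Gamma^e{}_{ab}$) cancels the last term $-g^{ab}\Gamma^e{}_{ab}\Gamma^j{}_{ie}=-\Gamma^e\Gamma^j{}_{ei}$. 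What survives is exactly $g^{ab}\Gamma^j{}_{ab,i}-g^{ab}R^j{}_{aib}$, so only the contracted-derivative term and a single curvature contraction remain.

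It then remains to identify $g^{ab}R^j{}_{aib}=g^{jk}R_{ki}$, and this contraction is the only delicate point of the argument: one must lower the index $j$, use the antisymmetry in each Riemann index pair together with the pair-swap symmetry, and match against the Ricci convention $R_{ki}=R^a{}_{kai}$, being careful with signs. Granting this, one arrives at $(\Delta\del_i)^j=g^{ab}\Gamma^j{}_{ab,i}-g^{jk}R_{ki}$ as claimed. As a sanity check I would verify the formula for flat space in polar coordinates, where $R=0$ and both sides give $\Delta\del_r=-r^{-2}\del_r$; this also confirms that the metric sits \emph{outside} the derivative $\del_i$ in $g^{ab}\Gamma^j{}_{ab,i}$, precisely as the curvature-swap naturally produces. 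A fully equivalent but more laborious route would be to expand $R_{ki}$ and all Christoffel symbols directly in terms of the metric, reconciling the two sides via $g^{jk}{}_{,c}=-\Gamma^j{}_{mc}g^{mk}-\Gamma^k{}_{mc}g^{jm}$; the curvature-swap identity is what makes the bookkeeping collapse.
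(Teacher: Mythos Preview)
Your proof is correct and follows essentially the same route as the paper. The paper's proof is a one-liner: it writes down the reduced formula $\Delta(\del_i)=(g^{ab}(\Gamma^j{}_{ia,b}+\Gamma^j{}_{cb}\Gamma^c{}_{ia})-\Gamma^j{}_{ic}\Gamma^c)\del_j$ and then simply says ``which we then identify in terms of the Ricci tensor as stated,'' whereas you have spelled out precisely that identification via the curvature-swap $\Gamma^j{}_{ai,b}\to\Gamma^j{}_{ab,i}$ and the contraction $g^{ab}R^j{}_{aib}=g^{jk}R_{ki}$.
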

 \proof The general formula for the Laplacian on a vector field reduces to $\Delta (\del_i) =(g^{ab}(\Gamma^j{}_{ia,b}+\Gamma^j{}_{cb}\Gamma^c{}_{ia})-  \Gamma^j{}_{ic} \Gamma^c)\del_j$, which we then identify in terms of the Ricci tensor as stated. 
\endproof

Finally, we write $\theta'=\extd s$, where $s$ will have the interpretation as a `geodesic time' variable but for the moment this is just some central 1-form. Dividing through by this and using the preceding lemma,  the quotient relations as in Corollary~\ref{corbimod} become
\begin{align}\label{kerrho1a}m {\extd x^i\over\extd s}&=g^{ij} \del_j - \tfrac{\lambda}{2} \, \Gamma^i,\\ \label{kerrho2a}
m {\extd \del_i\over\extd s}&= \Gamma^j{}_{i a}g^{ab}(\del_j \del_b-\lambda\Gamma^k{}_{jb}\del_k)+{\lambda\over 2}g^{ab} \, \Gamma^j{}_{ab,i}\,\del_j - m V_{,i}.\end{align}
to order $\lambda^2$, where $\Gamma^i=\Gamma^i{}_{ab}g^{ab}$ and $\extd\over\extd s$ denotes the coefficient of $\extd s$ on applying $\extd$, i.e. what we called $D$ in the proof of the corollary. We view these quotient relations as a first order formalism for  noncommutative geodesic equations due to the following:

\begin{proposition} \label{propC} Eliminating $\del_i$ in terms of $\extd x^i\over\extd s$, we obtain to order $\lambda$ 
\[  \frac{\extd^2 x^i}{\extd s^2}  +  \Gamma^i{}_{jk} \,  \frac{\extd x^j}{\extd s} \,  \frac{\extd x^k}{\extd s} + {g^{ij} \over m} V_{,j} ={\lambda\over 2m}  C^i{}_j  {\extd x^j\over\extd s},\]
\[ C^{ij}=-g^{ab}(g^{i c}\Gamma^j{}_{c a,b}+g^{j c}\Gamma^i{}_{c a,b})+  g^{i b}\Gamma^j{}_{;b}-g^{j b}\Gamma^i{}_{;b}+\Gamma^{a bi}\Gamma^{j}{}_{ab}- \Gamma^{abj}\Gamma^{i}{}_{ab},\]
where we use the notation $\Gamma^i{}_{;j}:=\Gamma^i{}_{,j}+ \Gamma^i{}_{kj}\Gamma^k$.\end{proposition}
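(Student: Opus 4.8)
The plan is to treat the quotient relations \eqref{kerrho1a}--\eqref{kerrho2a} as a first-order system for the pair $(x^i,\del_i)$ in the time $s$, and to produce a closed second-order equation for the $x^i$ by differentiating \eqref{kerrho1a} once more and eliminating the momenta. Concretely, I would first invert \eqref{kerrho1a} to express the momenta through the velocities, $\del_j=\mu\,g_{jk}\,\dot x^k+\tfrac\lambda2 g_{jk}\Gamma^k$ where $\dot x^k:=\extd x^k/\extd s$; this is the noncommutative momentum--velocity relation, exact to order $\lambda$. Applying $\extd$ to \eqref{kerrho1a} and extracting the coefficient of $\extd s$ (which acts as a derivation, the chain rule $\extd f/\extd s$ of a function being read off from \eqref{keromega1}, equivalently \eqref{kerrho1a}) gives $\mu\,\ddot x^i=({\extd g^{ij}/\extd s})\,\del_j+g^{ij}\,({\extd\del_j/\extd s})-\tfrac\lambda2\,{\extd\Gamma^i/\extd s}$, into which I substitute \eqref{kerrho2a} for ${\extd\del_j/\extd s}$ and the momentum--velocity relation for the remaining $\del$'s.

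At order $\lambda^0$ the algebra $\CD(M)$ is commutative, so everything reduces to the classical phase space. With $\del_j=\mu g_{jk}\dot x^k$ and, from \eqref{kerrho2a}, $\mu\,{\extd\del_i/\extd s}=\Gamma^j{}_{ia}g^{ab}\del_j\del_b+(\text{potential})$, the product $\del_a\del_b=\mu^2 g_{ak}g_{bl}\dot x^k\dot x^l$ feeds back, and after using metric compatibility $g^{ij}{}_{;k}=0$ to combine $g^{ij}{}_{,k}\,\del_j$ with the Christoffel contribution, the velocity-quadratic part collapses to $-\mu\,\Gamma^i{}_{jk}\dot x^j\dot x^k$ while the potential descends to $-g^{ij}V_{,j}$. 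This reproduces the entire left-hand side of the claimed identity, i.e. the Levi-Civita geodesic equation with potential.

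The substantive part is the order-$\lambda$ correction, which I would collect from four sources: (i) the explicit $\lambda$-terms already present in \eqref{kerrho1a} and \eqref{kerrho2a}; (ii) the $\lambda$-piece $\tfrac{\lambda}{2\mu}g^{kl}g^{ij}{}_{,k;l}$ of the chain-rule derivative supplied by \eqref{keromega1}; (iii) the ordering/commutator terms $[\del_a,f]=\lambda\,f_{,a}$ arising both when differentiating the product $g^{ij}\del_j$ and when rewriting $\del_a\del_b$ through the momentum--velocity relation; and (iv) the $\tfrac\lambda2 g\Gamma$ cross terms in that relation. A useful structural observation is that every one of these corrections is linear in $\dot x$ — the velocity-quadratic content sits entirely at order $\lambda^0$ — so the order-$\lambda$ remainder is automatically of the form $\tfrac{\lambda}{2\mu}C^i{}_j\,\dot x^j$, and it only remains to identify $C^i{}_j=C^{ik}g_{kj}$. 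Reorganizing, the explicit $\Gamma\Gamma$ pieces and the $\Gamma^i{}_{,b}$ pieces assemble into the combination $\Gamma^i{}_{;b}=\Gamma^i{}_{,b}+\Gamma^i{}_{kb}\Gamma^k$ named in the statement, while the double-derivative pieces produce the $-g^{ab}(g^{ic}\Gamma^j{}_{ca,b}+g^{jc}\Gamma^i{}_{ca,b})$ and $\Gamma^{abi}\Gamma^j{}_{ab}-\Gamma^{abj}\Gamma^i{}_{ab}$ contributions.

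\textbf{The main obstacle} is precisely this order-$\lambda$ bookkeeping together with the recognition of the final non-tensorial shape of $C^{ij}$. Because $C^{ij}$ does not transform as a tensor, covariance offers no shortcut and the cancellations must be verified term by term. In particular I expect the genuinely curvature-valued (Ricci and Riemann) contributions that appear transiently — exactly as they did in passing from Proposition~\ref{hdkk} through Lemma~\ref{lapdel} to \eqref{kerrho2a}, where the Ricci terms cancelled — to cancel again here, leaving only the asymmetric derivative-of-Christoffel combination. Checking that the symmetric-in-$(i,j)$ double-derivative term and the manifestly antisymmetric remainder $g^{ib}\Gamma^j{}_{;b}-g^{jb}\Gamma^i{}_{;b}+\Gamma^{abi}\Gamma^j{}_{ab}-\Gamma^{abj}\Gamma^i{}_{ab}$ are exactly what survive, with the stated coefficients, is where the real work lies.
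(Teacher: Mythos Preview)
Your proposal is correct and follows essentially the same route as the paper: differentiate \eqref{kerrho1a} in $s$, substitute \eqref{kerrho2a} for $\extd\del_j/\extd s$, use the $T$-formula (equivalently \eqref{keromega1}) for $\extd f/\extd s$, and collect the order-$\lambda$ remainder as a matrix acting on $\dot x^j$. The paper's only organizational difference is that it keeps the momenta $\del_j$ throughout and substitutes $\del_j\to\mu\,g_{jk}\dot x^k$ at the very last step rather than inverting \eqref{kerrho1a} up front; also, no transient Ricci or Riemann terms actually appear in the computation (that cancellation was already absorbed in passing to \eqref{kerrho2a} via Lemma~\ref{lapdel}), so the remaining work is pure Christoffel-symbol algebra --- expanding $g^{ij}{}_{,k}=-\Gamma^i{}_{pk}g^{pj}-\Gamma^j{}_{pk}g^{pi}$ repeatedly --- from the start.
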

\proof We use $T$ for the operation that extracts the coefficient of $\theta'$, so
\begin{align}\label{T} m T(\widehat{\extd f})&=  g^{bc}\,f_{,c}\, \partial_b + {\lambda\over 2} g^{bc}\,f_{,b;c},\nonumber\\
m T(\extd \del_i)&=  g^{bc}\, \Gamma^a{}_{bi} \, (\del_a\del_c- \lambda\Gamma^k{}_{ac} \, \del_k)
+{\lambda \over 2} g^{ab} \, \Gamma^j{}_{ab,i}\del_j  -  m\,V_{,i}  \end{align}
for any function $f$ on $M$, and using Lemma~\ref{lapdel}. Then applying $\extd$ to (\ref{kerrho1a}), 
\begin{align*}
m^2 \frac{\extd^2 x^i }{\extd s^2} &= m\,T(\extd g^{ij })\, \del_j + 
 m\,g^{ij} T(\extd \del_j)  - \tfrac{\lambda}{2}  \, m\, T(\extd \Gamma^i ) \cr
 &=  \ \big(   g^{ab}\,  g^{ij}{}_{,b}  \, \del_a + {\lambda\over 2} g^{ab}\,g^{ij}{}_{,a,b}  - {\lambda\over 2} \Gamma^k \, g^{ij}{}_{,k}  \big)  \del_j  \cr
 &\quad +  g^{ij} \, 
 \big(\Gamma^c{}_{ja}g^{ab}(\del_c \del_b-\lambda\Gamma^d{}_{cb}\del_d)+ \frac{\lambda}{2}g^{ab} \, \Gamma^c{}_{ab,j}\,\del_c - {m\over 2} V_{,j}\big) \cr
 &\quad - \tfrac{\lambda}{2}  \,  \left(   g^{ab}\,  \Gamma^i{}_{,b}  \, \del_a +{ \lambda\over 2} g^{ab}\,\Gamma^i{}_{,a,b}  - {\lambda\over 2} \Gamma^j \, \Gamma^i{}_{,j}  \right) 
 \end{align*}
 to which we add
   \begin{align*}
&m^2 \Gamma^i{}_{ab} \,  \frac{\extd x^a}{\extd s} \,  \frac{\extd x^b}{\extd s} =  \Gamma^i{}_{ab} \, 
\left(g^{aj} \del_j - \tfrac{\lambda}{2} \, \Gamma^a \right) \,
\left(g^{b c} \del_c - \tfrac{\lambda}{2} \, \Gamma^b \right).
 \end{align*}
The quadratic in $\del$'s is order zero and this vanishes after matching indices and using an identity of the form $g^{ij}{}_{,k}= -\Gamma^i{}_{p k}g^{pj}- \Gamma^j{}_{p k}g^{pi}$. In doing so, we pick up a derivative of $g$ from moving a $\del_j$ to the right. The resulting order $\lambda$ terms are  $\del_j$ times 
   \begin{align*}
C^{ij}&= g^{ab}\,g^{ij}{}_{,a,b}  -  \Gamma^k \, g^{ij }{}_{,k}   +g^{ik } \, 
 \big(-2\,\Gamma^c{}_{ka}g^{ab}\, \Gamma^ j {}_{bc} + g^{ab} \, \Gamma^ j {}_{ab,k} \big) 
  -  g^{ j  b}\,  \Gamma^ i {}_{,b}    \cr
 &\quad +  \Gamma^ i {}_{ab} \, 
\big(2 \,g^{ak} \, g^{b j }{}_{,k}  
 - g^{a j }  \, \Gamma^b  
  - \Gamma^a  g^{b j }  \big) \cr
  &=   g^{ab}\,g^{ij}{}_{,a,b}  - \Gamma^b \, g^{ij}{}_{,b}   
    -   g^{ j  b}\,  \Gamma^ i {}_{,b} 
    + g^{ik }  \big(-2\,\Gamma^c{}_{ka}g^{ab}\, \Gamma^ j {}_{bc} + g^{ab} \, \Gamma^ j {}_{ab,k} \big) 
 \cr
 &\quad +  \Gamma^ i {}_{ab} \, 
\big(2 \,g^{ak} \, g^{b j }{}_{,k}  
 -2  g^{a j }  \, \Gamma^b
 \big)  \cr
   &=  - g^{ab}\,( \Gamma^ j {}_{pa}g^{p i }  +    \Gamma^ i {}_{pa} g^{p  j }    )_{,b}  +\Gamma^a \, (   \Gamma^ j {}_{pa}g^{p i }  +    \Gamma^ i {}_{pa} g^{p  j }  )   
    -   g^{ j  b}\,  \Gamma^ i {}_{,b} 
    \\
    &+ g^{ik } \, 
 \big(-2\,\Gamma^c{}_{k a}g^{ab}\, \Gamma^ j {}_{bc} + g^{ab} \, \Gamma^ j {}_{ab,k} \big) +  \Gamma^ i {}_{ab} \, 
\big(-2 \,g^{a k}(   \Gamma^b{}_{p k}g^{p j }+ \Gamma^ j {}_{p k}g^{p b}    )  
 -2  g^{a j }  \, \Gamma^b
 \big)  
  \end{align*}
  where at the end, we expanded out three derivatives of the metric tensor in terms of Christoffel symbols using the identity above. Similarly expanding the remaining derivative and making a lot of cancelations gives the result stated after replacing $\del_j$ by $m{\extd x^j\over\extd s}$ and a lowered index.
 \endproof
 
 The matrix $C^{ij}$ with indices raised has the first term symmetric and the remaining terms antisymmetric. It is not a tensor and indeed we do not want it to transform as one due to the noncommutative nature of the coordinates and calculus on $\CD(M)$. Also note that since our results are valid to order $\lambda^2$, one can also similarly determine the order $\lambda^2$ correction. Next, the Hamiltonian $\ch\in \CD(M)$ is necessarily constant under these equations. 
 
\begin{corollary}\label{cordh} Let
\[ \ch= {1\over 2m} g^{ij}\,   (\del_i\del_j- \lambda\Gamma^k{}_{ij} \, \del_k)    +V \in \CD(M).\]
 Then $\extd \ch=0$ in the quotient bimodule, i.e. ${\extd\ch\over\extd s}=0$ at least to order $\lambda$.
\end{corollary}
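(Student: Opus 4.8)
The plan is to avoid all index computation by exploiting the fact that the Schr\"odinger representation intertwines $\extd$ with the commutator against the Hamiltonian, together with faithfulness of $\rho$ on $\CD(M)$.

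First I would fix what $\tfrac{\extd\ch}{\extd s}$ denotes in the quotient. By Corollary~\ref{corbimod}, in the quotient bimodule every $1$-form is, to order $\lambda^2$, a central multiple of $\theta'=\extd s$; in particular $\extd\ch=\tfrac{\extd\ch}{\extd s}\,\theta'$ for a well-defined element $\tfrac{\extd\ch}{\extd s}\in\CD(M)$. Since $\theta'$ is central with $\rho(\theta')=1$, applying the representation $\rho$ (which descends to the quotient precisely because the relations we quotient by lie in $\ker\rho$, by Corollary~\ref{cor:imjac}) gives $\rho(\tfrac{\extd\ch}{\extd s})=\rho(\tfrac{\extd\ch}{\extd s})\,\rho(\theta')=\rho(\extd\ch)$.

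Next I would evaluate the right-hand side using the defining property (\ref{ghu}), namely $\rho(\extd a)=\lambda^{-1}[\rh,\rho(a)]$ for every $a\in\CD(M)$. This is consistent with the Leibniz rule, since for $a=bc$ one has $\lambda^{-1}[\rh,\rho(b)]\rho(c)+\rho(b)\lambda^{-1}[\rh,\rho(c)]=\lambda^{-1}[\rh,\rho(bc)]$. Taking $a=\ch$ and recalling $\rh=\rho(\ch)$ yields $\rho(\extd\ch)=\lambda^{-1}[\rh,\rh]=0$, hence $\rho(\tfrac{\extd\ch}{\extd s})=0$. Finally I would invoke faithfulness: the restriction of $\rho$ to $\CD(M)$ is the standard action of differential operators on $C^\infty(M)$, and a nonzero differential operator cannot annihilate every function, so $\rho$ is injective on $\CD(M)$. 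Therefore $\tfrac{\extd\ch}{\extd s}=0$, which is the claim (and in fact the vanishing of $[\rh,\rh]$ is exact, so the conclusion holds to whatever order $\tfrac{\extd\ch}{\extd s}$ is defined, certainly to order $\lambda$).

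The step I expect to require the most care is the first one: making sure $\tfrac{\extd\ch}{\extd s}$ is a genuine element of $\CD(M)$ in the quotient and that $\rho$ is well-defined there, so that faithfulness can be applied; both follow from Corollaries~\ref{cor:imjac} and~\ref{corbimod}. As a cross-check one could instead prove the statement by direct computation: since associativity holds in the quotient to order $\lambda^2$ and $\theta'$ is central, $T\circ\extd=:\tfrac{\extd}{\extd s}$ is a derivation of $\CD(M)$, and one applies it to $\ch=\tfrac{1}{2\mu}(g^{ij}\del_i\del_j-\lambda\Gamma^k\del_k)+V$ via the formulas (\ref{T}). The order-$\lambda^0$ part is then exactly classical conservation of energy along geodesics, cancelling upon contracting the Christoffel symbols against the symmetric $\tfrac{\extd x}{\extd s}\tfrac{\extd x}{\extd s}$ and using Proposition~\ref{propC} at lowest order, while the order-$\lambda$ part is a longer manipulation involving the $C^i{}_j$ correction; that direct route is where the real calculational effort would lie, and it is avoided entirely by the representation-theoretic argument above.
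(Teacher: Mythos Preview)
Your argument is correct and is, in fact, the route the paper itself sketches in the first sentence of its proof (``This follows in principle from the way $\extd$ was defined via the Schr\"odinger representation and $[\rh,\ ]$\ldots'') before opting instead for a direct index computation. The key observation you make, which sharpens the paper's sketch, is that one does not need $\rho$ to be injective on the full quotient bimodule of $1$-forms: since $\tfrac{\extd\ch}{\extd s}$ lands in $\CD(M)$ itself, only faithfulness of $\rho$ on the \emph{algebra} of differential operators is required, and that is classical. One small citation slip: the fact that the relations quotiented by lie in $\ker\rho$ is not Corollary~\ref{cor:imjac} (which concerns the Jacobiators) but rather the very construction of those kernel elements in (\ref{keromega1}) and Proposition~\ref{hdkk}.

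What each approach buys: yours is short and conceptual, and makes transparent that the result is forced by the construction rather than a computational accident; it also shows the vanishing is exact, not merely to order $\lambda$. The paper's direct computation, by contrast, expands $\extd\ch$ via the Leibniz rule, substitutes the explicit formulae~(\ref{T}) for $T(\extd f)$ and $T(\extd\partial_i)$, and checks term-by-term that the $\partial\partial\partial$, $\partial\partial$, and $\partial$ coefficients all cancel through Christoffel-symbol identities. This is considerably longer but serves as an independent consistency check on the explicit formulae derived earlier in the paper --- which is presumably why the authors chose to include it, and why they later remark in the Schwarzschild section that the analogous verification there is ``an excellent check on our calculations''.
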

\proof This follows in principle from the way $\extd$ was defined via the Schr\"odinger representation and $[\rh,\ ]$, if we assume that (\ref{kerrho1a})-(\ref{kerrho2a}) generate the whole kernel so that $\rho$ becomes injective on the quotient. Here we just check it directly. We have
\begin{align*}
2m\,\extd \ch &= g^{ij}\,   (\extd\del_i \,\del_j + \del_i\,\extd\del_j  - \lambda\extd \Gamma^t{}_{ij} \, \del_t
- \lambda\Gamma^t{}_{ij} \, \extd\del_t)    + \extd(g^{ij})\,   (\del_i\del_j- \lambda\Gamma^t{}_{ij} \, \del_t)   +2m\, \extd V
\end{align*}
Using (\ref{T}), we have
\begin{align*}
4m^2\, T(\extd \ch) &= g^{ij}\,  \big( 2\,g^{bc}\, \Gamma^a{}_{bi} \, (\del_a\del_c- \lambda\Gamma^k{}_{ac} \, \del_k)
+\lambda g^{ab}\Gamma^k{}_{ab,i}\del_k   -  2m\,V_{,i}\big) \,\del_j  \cr
&\quad +   g^{ij}\, \del_j\, \big(2\,g^{bc}\, \Gamma^a{}_{bi} \, (\del_a\del_c- \lambda\Gamma^k{}_{ac} \, \del_k)
+\lambda  g^{ab}\Gamma^k{}_{ab,i}\del_k   -  2m\,V_{,i}\big) \cr
&\quad  - \lambda  g^{ij}  \,  \big( 2 g^{bc}\,( \Gamma^t{}_{ij})_{,c}\, \partial_b + \lambda\, g^{bc}\,( \Gamma^t{}_{ij})_{,b;c}\big)\, \del_t \cr
&\quad 
- \lambda g^{kj}\, \Gamma^i{}_{kj} \, \big(  2\,g^{bc}\, \Gamma^a{}_{bi} \, (\del_a\del_c- \lambda\Gamma^k{}_{ac} \, \del_k)
+\lambda  g^{ab}\Gamma^k{}_{ab,i}\del_k   -  2m\,V_{,i} \big) \cr
&\quad  +  \big( 2 g^{bc}\,(g^{ij})_{,c}\, \partial_b + \lambda\, g^{bc}\,(g^{ij})_{,b;c}\big)  \,   (\del_i\del_j- \lambda\Gamma^t{}_{ij} \, \del_t)     \cr
&\quad  + 2m\, \big( 2 g^{bc}\,V_{,c}\, \partial_b + \lambda\, g^{bc}\,V_{,b;c}\big),
\end{align*}
where we do not apply the covariant derivative to the indices of $g^{ij}$ and $ \Gamma^t{}_{ij}$ in the brackets. The potential terms cancel and the three terms without $\lambda$ in this expression total
\begin{align*}
& g^{ij}\,   2\,g^{bc}\, \Gamma^a{}_{bi} \, \del_a\del_c\del_j
 +   g^{ij}\, \del_j\, 2\,g^{bc}\, \Gamma^a{}_{bi} \, \del_a\del_c    +   2 g^{bc}\,g^{ij}{}_{,c}\, \partial_b   \,   \del_i\del_j    \cr
 &= g^{ij}\, [\del_j , 2\,g^{bc}\, \Gamma^a{}_{bi}] \, \del_a\del_c 
 + 2g^{ij}\,  \big(  g^{bc}\, \Gamma^a{}_{bi} \, \del_a\del_c
 +   g^{ba}\, \Gamma^c{}_{bi} \, \del_a\del_c    +   g^{ac}{}_{,i}  \,   \del_a\del_c \big) \, \partial_j   
\end{align*}
and as the bracket vanishes, and moving all coordinate vectors to the right, we get
\begin{align*}
4 m^2\, & T(\extd \ch) =\lambda( g^{ij}\,  \big( 2\,g^{bc}\, \Gamma^a{}_{bi} \, ( - \Gamma^k{}_{ac} \, \del_k)
+   g^{ab}\Gamma^k{}_{ab,i}\del_k    \big) \,\del_j  \cr
&\quad +   g^{ij}\, \del_j\, \big(2\,g^{bc}\, \Gamma^a{}_{bi} \, (-  \Gamma^k{}_{ac} \, \del_k)
+    g^{ab}\Gamma^k{}_{ab,i}\del_k     \big) \cr
&\quad  -    g^{ij}  \,  \big( 2 g^{bc}\,( \Gamma^k{}_{ij})_{,c}\, \partial_b + \lambda\, g^{bc}\,( \Gamma^k{}_{ij})_{,b;c}\big)\, \del_k \cr
&\quad 
-   g^{kj}\, \Gamma^i{}_{kj} \, \big(  2\,g^{bc}\, \Gamma^a{}_{bi} \, (\del_a\del_c- \lambda\Gamma^k{}_{ac} \, \del_k)
+\lambda  g^{ab}\Gamma^k{}_{ab,i}\del_k     \big) \cr
&\quad  +    g^{bc}\big((g^{ij})_{,b,c}-g^{ij}{}_{,a}\Gamma^a{}_{bc}\big)    (\del_i\del_j-  \lambda\Gamma^k{}_{ij} \, \del_k) 
-   2 g^{bc}\,(g^{ij})_{,c}\, \Gamma^k{}_{ij} \, \partial_b   \del_k    \cr
&\quad + 2 \, g^{ij}\, g^{bc}{}_{,j}\, \Gamma^a{}_{bi} \, \del_a\del_c +  2 \, g^{ij}\, g^{bc}\, \Gamma^a{}_{bi,j} \, \del_a\del_c  -  \lambda 2 g^{bc}\,(g^{ij})_{,c}\, (\Gamma^k{}_{ij,b}) \del_k
\end{align*}
Expanding the $(g^{ij})_{,b}$ part of $(g^{ij})_{,b,c}$ derivative generates derivatives of $\Gamma$'s and one can then check that all order $\lambda$ derivative of $\Gamma$ terms cancel. At order $\lambda$ we then expand all remaining derivatives of the metric, which generates $\Gamma^2$ terms and find that these also all cancel. 
 \endproof

The above results are all that we need for the applications that follow. However, our motivation came out of quantum geodesics and it remains to fill in some of this noncommutative geometry. Here we limit ourselves to finding the geodesic velocity vector field $\mathfrak{X}: \Omega^1_{\CD(M)}\to \CD(M)$ as a bimodule map to order $\lambda$.

\begin{proposition} \label{propX} There is a geodesic velocity field $\mathfrak{X}$ underlying the model given by
\begin{align*}\mathfrak{X}(\extd f)&={1\over m}\big(g^{ij}f_{,i}\del_j+{\lambda\over 2} \Delta f),\\
 \mathfrak{X}(\extd Y)&={1\over m}\big(g^{ij} Y^a{}_{;i}(\del_a\del_j -\lambda  \Gamma^b{}_{aj}\del_b   )+ {\lambda\over 2}\left(\Delta Y + Y^a\, g^{ij}\, R_{ja}\,\del_i\right)\big)- Y(V)
 \end{align*}
at least to order $\lambda$. We also set $\mathfrak{X}(\theta')=1$ so that $\mathfrak{X}$ vanishes on the kernel of $\rho$. 
\end{proposition}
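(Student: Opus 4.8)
The plan is to use that $\mathfrak{X}$ is forced by the quantum-geodesic data through (\ref{ghu}), which says $\rho\circ\mathfrak{X}=\lambda^{-1}[\rh,\rho(\,\cdot\,)]$ on differentials. Since the Schr\"odinger representation $\rho$ of $\CD(M)$ is faithful, an element of $\CD(M)$ is determined by its operator image, so to establish each formula it suffices to apply $\rho$ to the proposed right-hand side and check that it reproduces $\lambda^{-1}[\rh,\rho(a)]$ for $a=f$ and $a=Y$ respectively. Thus the proof reduces to two representation computations already essentially available in the text.

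First I would treat $\extd f$. Applying $\rho$ to ${1\over\mu}\big(g^{ij}f_{,i}\del_j+{\lambda\over2}\Delta f\big)$, the first-order term $g^{ij}f_{,i}\del_j$ is a vector field, so $\rho$ sends it to $\lambda\,g^{ij}f_{,i}\tfrac{\partial}{\partial x^j}$, while the scalar ${\lambda\over2}\Delta f={\lambda\over2}g^{ij}(f_{,i})_{;j}$ acts by multiplication; together these give precisely $\mu^{-1}$ times the expression computed for $\mu\,\rho(\extd f)$ in Section~\ref{buio}. For $\extd Y$ I would apply $\rho$ to the stated element: using $\rho(\del_a)=\lambda\tfrac{\partial}{\partial x^a}$, the second-order piece $g^{ij}Y^a{}_{;i}(\del_a\del_j-\lambda\Gamma^b{}_{aj}\del_b)$ maps to $\lambda^2 g^{ij}Y^a{}_{;i}(\psi_{,a})_{;j}$, the Christoffel correction being exactly what converts the coordinate second derivative into the covariant one, the first-order $\tfrac{\lambda}{2}$-terms reproduce $\tfrac{\lambda^2}{2}\big((\Delta Y)^i+Y^ag^{ij}R_{ja}\big)\psi_{,i}$, and $-Y(V)$ yields $-V_{,a}Y^a\psi$. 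This matches $\mu^{-1}$ times Proposition~\ref{hdkk} term by term, so both $\rho$-images agree to order $\lambda^2$.

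It remains to fix $\mathfrak{X}$ on the central generator and confirm that it is a genuine bimodule map. I would set $\mathfrak{X}(\theta')=1$, consistent with $\rho(\theta')=1$, and then verify directly that $\mathfrak{X}$ annihilates the kernel generators (\ref{keromega1}) and those displayed after Proposition~\ref{hdkk}: using the right-module property $\mathfrak{X}(\theta'\cdot a)=\mathfrak{X}(\theta')\,a=a$ together with the two formulas just checked, each such combination collapses to zero, immediately for the Proposition~\ref{hdkk} family and after relabelling the symmetric indices of $g^{ij}$ for (\ref{keromega1}). Since these elements generate $\ker\rho$ to order $\lambda^2$ and the quotient of $\Omega^1(\CD(M))$ by them is an honest bimodule on which $\rho$ is injective (Corollary~\ref{corbimod}), and since $\rho\circ\mathfrak{X}=\lambda^{-1}[\rh,\,\cdot\,]$ is an inner derivation and hence automatically compatible with the module structure, $\mathfrak{X}$ descends to a well-defined bimodule map to order $\lambda$. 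The main obstacle is precisely this last point: because $\Omega^1(\CD(M))$ is nonassociative at order $\lambda^2$, the phrase ``bimodule map'' is only meaningful to order $\lambda$, so one must ensure the formulas respect the symmetric products (\ref{repute}); passing to the quotient where $\rho$ is faithful is exactly what lets us avoid re-deriving bilinearity through a separate lengthy calculation.
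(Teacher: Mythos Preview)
Your approach is correct and reaches the same conclusion as the paper, but by a genuinely different route. The paper \emph{defines} $\mathfrak{X}(\extd a)=\lambda^{-1}[\ch,a]$ as an element of $\CD(M)$ and then carries out a page-long direct computation of $[\ch,Y]$ inside $\CD(M)$ using the commutation relations (\ref{btcv}), eventually simplifying to the stated formula; the authors explicitly note that ``the difference is that we are now using the commutator in $\CD(M)$ not its image under $\rho$.'' You instead take the proposed right-hand side, push it through $\rho$, and match against the operator expressions already computed in Section~\ref{buio} and Proposition~\ref{hdkk}, invoking faithfulness of the Schr\"odinger representation on $\CD(M)$ to lift the equality back. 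Your route is more economical since it recycles Proposition~\ref{hdkk} rather than redoing that calculation in a different guise; the paper's route is more self-contained and makes explicit that the answer really is the commutator $\lambda^{-1}[\ch,a]$ in $\CD(M)$, not merely something with the same $\rho$-image. Both rely on the same underlying computation, just organised differently.

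One minor comment: your final paragraph about the bimodule-map property is somewhat over-argued. The paper does not verify that property in the proof either; it simply exhibits the formula and checks vanishing on the kernel generators by comparison, exactly as you do. Your appeal to ``$\rho\circ\mathfrak{X}$ is an inner derivation and hence automatically compatible with the module structure'' is not quite the right justification and is not needed here; the proposition as stated only claims the formulas and the vanishing on $\ker\rho$, which you have already handled.
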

\proof This follows naturally from the way we have constructed the differential calculus if we assume that $\rho$ is injective on the quotient, but we still have to identify it even in this case. We use $\ch$ as above and in view of (\ref{ghu}), we take 
\[
\mathfrak{X}(\extd a)= \lambda^{-1} [ \ch,a]\ .
\]
For $f\in C^\infty(M)$, we see easily that $
m \, \mathfrak{X}(\extd f)= g^{ij}\, f_{,i}\, \del_j + {\lambda\over 2} g^{ij}\, f_{,j;i}$ as stated. 
The more difficult calculation is for the vector field $Y$,
\begin{align*}
2\lambda m \, \mathfrak{X}&(\extd Y) = [g^{ij},Y]\,(\del_i\del_j-\lambda\, \Gamma^k{}_{ij}\del_k) + 2m\, [V,Y] 
 - \lambda\, g^{ij}\,  [\Gamma^k{}_{ij} ,Y]\del_k \cr
 &\quad + g^{ij}\,([\del_i,Y]\del_j + \del_i [\del_j,Y]  -  \lambda\, \Gamma^k{}_{ij}  [\del_k,Y] ) \cr
 &= -\lambda\,Y^p\,g^{ij}{}_{,p}\, (\del_i\del_j-\lambda\, \Gamma^k{}_{ij}\del_k) -2 \lambda m\, Y^p\, V_{,p}
 +  \lambda^2\, g^{ij}\, Y^p\, \Gamma^k{}_{ij,p}\,\del_k \cr
 &\quad + \lambda\,  g^{ij}\,\big( (\nabla_i Y - Y^p\,\Gamma^q{}_{pi}\,\del_q) \del_j + \del_i  (\nabla_j Y - Y^p\,\Gamma^q{}_{pj}\,\del_q) -  \lambda\, \Gamma^k{}_{ij}  (\nabla_k Y - Y^p\,\Gamma^q{}_{pk}\,\del_q)
  \big) 
\end{align*}
so
\begin{align*}
2m \, \mathfrak{X}&(\extd Y) 
 =  Y^p \,   g^{in}\,\Gamma^j{}_{np} \,  (\del_i\del_j-\lambda\, \Gamma^k{}_{ij}\del_k) -2 m\, Y^p\, V_{,p}
 +  \lambda\, g^{ij}\, Y^p\, \Gamma^k{}_{ij,p}\,\del_k \cr
 &\quad +  g^{ij}\,\big( (\nabla_i Y ) \del_j + \del_i  (\nabla_j Y - Y^p\,\Gamma^q{}_{pj}\,\del_q) -  \lambda\, \Gamma^k{}_{ij}  (\nabla_k Y - Y^p\,\Gamma^q{}_{pk}\,\del_q)
  \big) \cr
  &\quad -\lambda\, Y^p \,   g^{ji}\,\Gamma^m{}_{ip}  \Gamma^k{}_{mj}\del_k   \cr
  &=  Y^p \,   g^{in}\,\Gamma^j{}_{np} \,  (\del_i\del_j-\lambda\, \Gamma^k{}_{ij}\del_k) - 2m\, Y^p\, V_{,p}
 \cr
 &\quad +  g^{ij}\,\big(\nabla_i Y \, \del_j + \del_i \, \nabla_j Y  -  \lambda\, \Gamma^k{}_{ij}  \nabla_k Y 
  \big) \cr
   &\quad -  \lambda\, g^{ij}\, Y^p{}_{,i}\,\Gamma^q{}_{pj}\,\del_q
    -  \lambda \, g^{ij}\, Y^p\,\Gamma^q{}_{pj,i}\,\del_q
     -  g^{ij}\,  Y^p\,\Gamma^q{}_{pj}\,  \del_i \del_q
\cr
  &\quad 
  + \lambda\,Y^p\,g^{ij}( \Gamma^k{}_{ij}  \Gamma^q{}_{pk} 
  -   \Gamma^m{}_{ip}  \Gamma^q{}_{mj} 
   +  \Gamma^q{}_{ij,p})   \,\del_q \cr
     &=   - 2m\, Y^p\, V_{,p}
+  g^{ij}\,\big(  2\nabla_i Y \, \del_j + [\del_i , \nabla_j Y]  -  \lambda\, \Gamma^k{}_{ij}  \nabla_k Y 
  \big)  -  \lambda\, g^{ij}\, Y^p{}_{,i}\,\Gamma^q{}_{pj}\,\del_q
\cr
  &\quad 
  + \lambda\,Y^p\,g^{ij}( \Gamma^k{}_{ij}  \Gamma^q{}_{pk} 
  -   \Gamma^m{}_{ip}  \Gamma^q{}_{mj} 
   +  \Gamma^q{}_{ij,p}
   - \Gamma^m{}_{jp} \,   \Gamma^q{}_{im}
       -  \Gamma^q{}_{pj,i}  )\,\del_q \cr
            &=   - 2m\, Y^p\, V_{,p}
+  g^{ij}\,\big(  2\nabla_i Y \, \del_j 
+\lambda\, \nabla_i\nabla_j Y - \lambda\, Y^p{}_{;j}\,\Gamma^q{}_{ip}\,\del_q
 -  \lambda\, \Gamma^k{}_{ij}  \nabla_k Y 
  \big)  -  \lambda\, g^{ij}\, Y^p{}_{;i}\,\Gamma^q{}_{pj}\,\del_q
\cr
  &\quad 
  + \lambda\,Y^p\,g^{ij}( \Gamma^k{}_{ij}  \Gamma^q{}_{pk} 
   +  \Gamma^q{}_{ij,p}
   - \Gamma^m{}_{jp} \,   \Gamma^q{}_{im}
       -  \Gamma^q{}_{pj,i}  )\,\del_q \cr
                   &=   -2 m\, Y^p\, V_{,p}
+  2 g^{ij} \nabla_i Y \, \del_j 
+\lambda\, \Delta Y 
    - 2 \lambda\, g^{ij}\, Y^p{}_{;i}\,\Gamma^q{}_{pj}\,\del_q
  + \lambda\,Y^p\,g^{ij} \,  R^q{}_{jpi}  \,\del_q \cr
                     &=   - 2 m\, Y^p\, V_{,p}
+  2 g^{ij} \nabla_i Y \, \del_j 
+\lambda\, \Delta Y   - 2 \lambda\, g^{ij}\, Y^p{}_{;i}\,\Gamma^q{}_{pj}\,\del_q
  + \lambda\,Y^p\,g^{qn} \,  R_{np} \,\del_q
\end{align*}
which we write as stated. We can compare the result with (\ref{keromega1}) and the formula in Proposition~\ref{hdkk} to conclude that $\mathfrak{X}$ vanishes on these kernel elements if we set $\mathfrak{X}(\theta')=1$. The difference is that we are now using the commutator in $\CD(M)$ not its image under $\rho$ as we did in Section~\ref{sec:calc}.
\endproof

In principle, we also need a right bimodule connection $\nabla$ on $\Omega^1_{\CD(M)}$ at least to order $\lambda$, with respect to which $\mathfrak{X}$ obeys the geodesic velocity equations. This can in principle be found by similar methods \cite{BegMa:geo} but will be looked at  elsewhere.

Finally, while Proposition~\ref{propC} justifies our interpretation $\theta'=\extd s$ for proper time in a `generalised Heisenberg picture' for the evolution of algebra elements, this necessarily has a corresponding `Schr\"odinger picture'  with evolution of pure states according to
\begin{equation}\label{KGQM}- \lambda {\del\psi\over\del s}= \rh\psi\end{equation}
for suitable $\psi$. This is exactly the quantum geodesic amplitude flow equation $\nabla_E\psi=0$ from (\ref{nablaE}) if we identify $\extd s$ with the geodesic time parameter interval there. This justifies our interpretation of the theory. Even though wave functions $\psi$ on $M$  in the case where $M$ is spacetime are not something usually considered, we see that this arises naturally from quantum geodesics and our above results.

\section{Basic examples} \label{sec:ex}
Here we compute the geometric content of our formulae in various special cases as a check of consistency. The one for the Schwarzschild black hole will be used in applications in Section~\ref{sec:app}.

\subsection{The flat case}

When $M$ is flat in the sense that the Levi-Civita connection has zero curvature, the algebra of differential operators looks locally like the flat spacetime Heisenberg algebra but the difference is that our constructions are geometric and coordinate-invariant, which is still of interest. The general results above for $\Omega^1_{\CD(M)}$ to order $\lambda^2$ can be written in the flat case as 
\begin{align*} 
[\widehat \xi,f] &= m^{-1}\lambda\ \xi^\#(f) \theta' \\ 
 [X,\widehat\xi ] &= \lambda\,(\widehat{ \nabla_X \xi  })  - (2m)^{-1}\lambda\,\theta'\, \big(2 \nabla_{\xi^\#} X
+\lambda  X^a{}_{;i}\, \xi^{\#i}_{;a}\big)\\
 [\extd X,f ] &=  \lambda\,\widehat{(\<\nabla_i X,\extd f\>\extd x^i) }  + (2m)^{-1}\lambda\,\theta'\, \big(2 \nabla_{\extd f^\#} X 
+   \lambda\,  X^a{}_{;i}\, \extd f^{\#i}_{;a}\big)\\
[    Y  , \extd X  ] &=\lambda\, \extd(\nabla_Y X) - \lambda \theta'\, \<\nabla_Y(\extd V),X\>  - \lambda \widehat{\extd x^i}\, \big(\nabla_{\nabla_i X} Y + \nabla_{\nabla_i Y} X\big) \\
&\kern-40pt + (2m)^{-1}\lambda\theta' \, g^{ij}\, \big(\lambda \nabla_j(\nabla_{\nabla_i X} Y)-\nabla_i X\, \nabla_j Y+\lambda  \Gamma^k{}_{aj} (\nabla_i X)^a \nabla_k Y + X\leftrightarrow Y \big)
\end{align*}
where $\xi^\#$ is $\xi$ converted to a vector field via the metric. We have seen that all the Jacobi identities associated with being a bimodule then hold to this order. This reduces to \cite{BegMa:geo} when we identify the
image of $\del_i$ in $\CD(M)$ as $p_i$ and $\lambda=-\imath\hbar$ and choose special flat space coordinates where $\Gamma=0$ so that $\nabla_i(\del_j)=0$. Also note that in the flat case,${\rm Vect}(M)$ is a pre-Lie algebra with $X\circ Y=\nabla_X Y$ so that
\[ X\circ Y-Y\circ X=[X,Y]_{{\rm Vect(M)}}, \quad X\circ (Y\circ Z)-Y\circ(X\circ Z)=(X\circ Y-Y\circ X)\circ Z\]
and in this case $U({\rm Vect}(M))\subset \CD(M)$ has a calculus with $[Y, \extd X]=\lambda\, \extd(\nabla_Y X)$ as a general construction for pre-Lie algebras. Our construction has a bigger algebra but we see this as part of the relevant commutator.  

Finally, we have a Schr\"odinger representation of $\Omega^1_{\CD(M)}$ given by
\begin{align*} \rho(f)\psi&=f\psi,\quad \rho(X)\psi=\lambda X(\psi),\quad \rho(\hat\xi)\psi={\lambda\over 2m}
((\ ,\ )\nabla\xi)\psi + 2 \xi^{\#}(\psi)\\
\rho(\extd X)\psi&= -X(V)\psi + {\lambda^2\over 2m}( \Delta(X)(\psi)
+  2 g^{ij} \<\nabla_i X,\nabla_j\extd\psi\> ),\quad \rho(\theta')\psi=\psi\end{align*}
which, in the absence of an associativity obstruction, can be expected to extend to an entire exterior algebra $\Omega_{\CD(M)}$.

\subsection{The compact Lie group case}

This has the merit, as for the compact real form of any complex semisimple Lie group $G$, of a trivial tangent bundle allowing calculations to be written at a Lie algebra level. Here $\CD(G)=C^\infty(G)\rtimes U(\cg)$ where the Lie algebra $\cg$ of $G$ acts by left-invariant vector fields. This is because, on a Lie group, one has a global basis of left-invariant vector fields which by themselves generate $U(\cg)$. Any polynomial of functions and vector fields can be considered equivalently as a polynomial in functions and the basis of left-invariant vector fields by moving all coefficients to the left using the commutation relations.

We start with the algebra generated by the functions and its centrally extended differential calculus. This has generators $e^a$ for the anti-Hermitian basis of 1-forms over the algebra with dual basis $\del_a$ of left-invariant vector fields. The real structure constants are defined by the Lie bracket $[\del_a,\del_b]_g=c_{ab}{}^c\del_c$ and the Killing form, which in the compact case is {\em negative} definite in this basis. We take the metric $g^{ab}=(e^a,e^b)$ as given by this up to a normalisation. We also have $e^{a\#}=(e^a,\ )=g^{ab}\del_b$ to convert a 1-form to a vector field. Ad-invariance of the metric and its (more usual) inverse are respectively
\[ c_{ab}{}^d g^{cb}+ c_{ab}{}^c g^{bd}=0,\quad c_{ab}{}^d g_{dc}+ c_{ac}{}^d g_{bd}=0.\]

The calculus has form-function relations
\[ [e^a, f]= m^{-1}\lambda g(e^a,\extd f)\theta'= m^{-1}\lambda g^{ab}(\del_b f)\theta',\quad \extd f=(\del_a f)e^a+(2m)^{-1}\lambda(\Delta f)\theta'\]
where $\Delta=g^{ab}\del_a\del_b$ is the Laplacian in our conventions. The products of 1-forms and functions here are in the quantised $\Omega^1$, i.e. one could use $\bullet$ and put a hat on the $e^a$). The quantum version of a classical 1-form $fe_a$, writing $\bullet$ explicitly, is then 
\[ \widehat{f e^a}=f\bullet e^a+ (2m)^{-1}\lambda(\extd f,e^a)\theta'=f\bullet e^a+(2m)^{-1}\lambda(\del_b f)g^{ba}\theta'.\]

Next, on a Lie group, the Levi-Civita connection has generalised Christoffel symbols for the basis and curvature given by
\[ \nabla_{\del_a}\del_b={1\over 2}[\del_a,\del_b]_g={1\over 2} c_{ab}{}^c\del_c,\quad \nabla_{\del_a}e^b=-{1\over 2}c_{ac}{}^b e^c,\quad \Gamma^c{}_{ba}={1\over 2}c_{ab}{}^c \]
\[R(\del_a,\del_b)\del_c={1\over 4}[[\del_a,\del_b]_g,\del_c]_g={1\over 4}c_{ab}{}^e c_{ec}{}^d\del_d\]
and the latter in index conventions translates to
\[  R^d{}_{cab}={1\over 4} c_{ab}{}^e c_{ec}{}^d,\quad R_{cb}={1\over 4} c_{ab}{}^d c_{dc}{}^a={1\over 4}K_{cb},\]
where $K_{ab}$ is the Killing form. For the canonical Riemannian geometry on $G$, $g_{ab}$ will be proportional to this and we identify this constant as 
\begin{equation}\label{liemetric} g_{ab}={\dim\cg\over 4 R_{sc}}K_{ab}\end{equation}
where $R_{sc}$ is the Ricci scalar curvature.  

Next, for $\CD(G)$, we add $\del_a$ into the algebra with relations
\[ [\del_a,\del_b]=\lambda c_{ab}{}^c\del_c, \quad [\del_a,f]=\lambda \del_a f\]
and according to our general results, we take commutation relations in $\Omega^1(\CD(G))$ with
\begin{align*} [\del_a,e^b]&=\lambda \nabla_{\del_a}e^b- m^{-1}\lambda\theta'\nabla_{e^{b\#}}\del_a-(2m)^{-1}\lambda^2\theta'\big(R(\del_a,e^{b\#})+ {\rm Tr}((\nabla \del_a )(\nabla e^{b\#}))\big)\\
&= -{\lambda\over 2}c_{ac}{}^b e^c-(2m)^{-1}\lambda \theta' g^{bc}c_{ca}{}^d\del_d-(2m)^{-1}\lambda^2\theta'\big(R_{ac}g^{cb}-{1\over 4}{\rm Tr}(c_{ca}{}^d e^c\tens \del_d \tens g^{be}c_{fe}{}^p e^f\tens \del_p)\big)\\
&= -{\lambda\over 2}c_{ac}{}^b e^c- (2m)^{-1}\lambda \theta' g^{bc}c_{ca}{}^d\del_d-(2m)^{-1}{\lambda^2\over 4}\theta'\big( c_{fc}{}^d c_{da}{}^f   g^{cb}- c_{ca}{}^d g^{be}c_{de}{}^c  \big)\\
&= -{\lambda\over 2}c_{ac}{}^b e^c- (2m)^{-1}\lambda \theta' g^{bc}c_{ca}{}^d\del_d= {\lambda\over 2}c_{ca}{}^b(e^c+m^{-1}\theta' e^{c\#}),
 \end{align*}
where at the end we used  that the Ricci tensor is proportional to the Killing form and hence to the inverse metric $g_{ab}$. By similar calculations, we have
 \begin{align*}
 [\extd \del_a,f ] &= \lambda\,\widehat{( \id\tens\<\extd f,\ \>)\nabla \del_a }  + m^{-1}\lambda\,\theta'\,  \nabla_{\extd f^\#}\del_a 
+   (2m)^{-1}\lambda^2\, \theta'\big(R(\del_a,\extd f^{\#}) +{\rm Tr}((\nabla \del_a)( \nabla\extd f^{\#}))
\big)\\
&={\lambda\over 2}c_{ba}{}^c\del_c f e^b+(2m)^{-1}{\lambda^2\over 2}c_{ba}{}^c g^{db}\del_d \del_c f\theta'+m^{-1}\lambda\theta' \del_b f\nabla_{\extd e^{b\#}}\del_a\\
&\quad+(2m)^{-1}\lambda^2\theta' \del_b f\big(R(\del_a,e^{b\#}) +{\rm Tr}((\nabla \del_a)( \nabla e^{b\#}))\big)+ (2m)^{-1}\lambda^2\theta'{\rm Tr}((\nabla\del_a)(\extd \del_c f\tens e^{c\#}))\\
&={\lambda\over 2}c_{ba}{}^c\del_c f e^b+(2m)^{-1}\lambda\theta' \del_b f g^{bc}c_{ca}{}^d \del_d+(2m)^{-1}{\lambda^2\over 2}\theta'\big(c_{ba}{}^c g^{db}\del_d\del_c f+ c_{ba}{}^d g^{cb} \del_d\del_cf\big)\\
&=\del_b f\big(-{\lambda\over 2}c_{ac}{}^b  e^c+(2m)^{-1}\lambda\theta'   g^{bc}c_{ca}{}^d \del_d) ={\lambda\over 2}\del_b f c_{ca}{}^b\tilde{e^c}
\end{align*}
where
\[\tilde \xi=\xi- m^{-1}\theta'\xi^{\#}.\]
 For the 3rd equality, we recognised the previous vanishing Ricci + Tr expression but have an extra term due to $\nabla \extd f^\#$ not being tensorial in the coefficients of $\extd f$. The 4th equality uses ad-invariance of the metric. One can check that this is consistent with $\extd$ applied to the relations $[\del_a,f]=\lambda \del_a f$ when expanded by the Leibniz rule and with expressions of the form $[\del_a, f e^b]=[\del_a ,f]e^b+ f[\del_a,e^b]$ again expanded as usual. Finally
\begin{align*} [\del_a,\extd\del_b]&=  {\lambda\over 2}c_{ab}{}^c \extd \del_c + \lambda P^{(1)}(\del_a,\del_b)+\lambda P^{(2)}(\del_a,\del_b),
\end{align*}
where we break $P(X,Y)$ into terms without and with curvature in the general expression. Here
\begin{align*} P^{(1)}(\del_a,\del_b)&= -\theta'\<\nabla_{\del_b}\extd V,\del_a\>-(e^c\nabla_{\nabla_{\del c}\del_a}\del_b +(2m)^{-1}\lambda \theta' g^{cd}\nabla_{\del_d}\nabla_{\nabla_{\del_c}\del_a}\del_b+ a\leftrightarrow b)\\
&\quad - (2m)^{-1}\theta' g^{cd}\big(  (\nabla_{\del_c}\del_a)(\nabla_{\del_d}\del_b)-\lambda \Gamma^e{}_{sd}(\nabla_{\del c}\del_a)^s \nabla_{\del_e}\del_b+ a\leftrightarrow b\big)\\
&=-\theta'(\del_b(\del_a V)-{1\over 2}c_{ba}{}^c\del_c V)- {1\over 4}e^c(c_{ca}{}^dc_{db}{}^e+ c_{cb}{}^dc_{da}{}^e )\del_e\\
&\quad -{m^{-1}\lambda\over 16}\theta' g^{cd}c_{de}{}^f(  c_{ca}{}^sc_{sb}{}^e+c_{cb}{}^sc_{sa}{}^e )\del_f\\
&\quad  -{m^{-1}\over 8}\theta'g^{cd}( (c_{ca}{}^sc_{db}{}^t+c_{cb}{}^sc_{da}{}^t)  \del_s\del_t -{\lambda\over 2} c_{ds}{}^e (c_{ca}{}^s c_{eb}{}^f+c_{cb}{}^s c_{ea}{}^f) \del_f)\\
&=-\theta'(\del_b(\del_a V)-{1\over 2}c_{ba}{}^c\del_c V)- {1\over 4}e^c(c_{ca}{}^dc_{db}{}^e+ c_{cb}{}^dc_{da}{}^e )\del_e\\
&\quad -{m^{-1}\over 8}\theta'g^{cd} c_{cb}{}^sc_{da}{}^t( \del_s\del_t +\del_t\del_s), 
\end{align*}
where for the last equality we used the Jacobi identity and antisymmetry of the Lie bracket to cancel all the $\lambda/8$ terms.  We also have
\begin{align*} P^{(2)}(\del_a,\del_b)&=m^{-1}\theta' g^{cd}R^e{}_{abc}(\del_d\del_e-\lambda\Gamma^f{}_{ed}\del_f)\\
&\quad+{\lambda\over 2}\Big(\nabla_{\del_c}(R)(\del_a,\del_b)+ \<R(\del_c,\nabla_{\del_d}\del_a)\del_b,e^d\>+ \<R(\del_c,\nabla_{\del_d}\del_b)\del_a,e^d\>\Big)\tilde{e^c}\\
&\quad+\lambda (2m)^{-1}\theta'\Big( \nabla_{\del_a}(R)(\del_b,\del_c)+\nabla_{\del_b}(R)(\del_a,\del_c) - \nabla_{\del_c}(R)(\del_a,\del_b) \\
&\qquad\qquad -\<R(\del_c,\del_a)\nabla_{\del_d}\del_b+R(\del_c,\del_b)\nabla_{\del_d}\del_a,e^d\> \Big)e^{c\#}  \\
&\quad-\lambda(2m)^{-1}\theta' g^{cd}\big( R(\del_b,\del_c)\ \nabla_{\del_d}\del_a+R(\del_a,\del_c)\ \nabla_{\del_d}\del_b)\\
&=m^{-1}\theta' g^{cd}R^e{}_{abc}(\del_d\del_e-{\lambda\over 2}c_{de}{}^f\del_f)+ {\lambda\over 4}(R{}^d{}_{bce}c_{da}{}^e+ R{}^d{}_{ace}c_{db}{}^e)\tilde{e^c}\\
&\quad -{\lambda\over 4}m^{-1}\theta'(  R^d{}_{eca}c_{db}{}^e+ R^d{}_{ecb}c_{da}{}^e    )e{}^{c\#}  -{\lambda\over 4}m^{-1}\theta'g^{cd}(  R_{bc}c_{da}{}^e + R_{ac}c_{db}{}^e  )\del_e\\
&={m^{-1}\over 4}\theta' g^{cd} c_{cb}{}^s  c_{da}{}^t (\del_s\del_t-{\lambda\over 2}c_{st}{}^f\del_f)+ {\lambda\over 16}\big(c_{ce}{}^f c_{fb}{}^d c_{da}{}^e+ c_{ce}{}^f c_{fa}{}^d c_{db}{}^e\big)\tilde{e^c}, 
\end{align*}
where for the second equality we used
\[\nabla_{\del_c}(R)(\del_a,\del_b)= \del_c R(\del_a,\del_b)-R(\nabla_{\del_c}\del_a,\del_b)-R(\del_a,\nabla_{\del_c}\del_b) =0\]
since the Ricci tensor is a multiple of the metric and hence covariantly constant. We then used that
\[ c_{ca}{}^f c_{fe}{}^d c_{db}{}^e+ c_{cb}{}^f c_{fe}{}^d c_{da}{}^e=K(\del_b,[\del_a,\del_c])+ K(\del_a,[\del_b,\del_c])=0\]
by invariance and symmetry of the Killing form, so that there is no $e^{c\#}$ terms. We also use that $R_{ab}$ is a multiple of the metric, so that there is no $\del_e$ term. Finally, we put in the formula for $R^e{}_{abc}$ and used ad-invariance of the metric to cast the first term in certain form. This is arranged so that when we add $P^{(1)}$ and $P^{(2)}$, the last term of the former and the first term of the latter exactly cancel giving the final result
\[ P(\del_a,\del_b)=-\theta'(\del_b(\del_a V)-{1\over 2}c_{ba}{}^c\del_c V)- {1\over 4}e^c(c_{ca}{}^dc_{db}{}^e+ c_{cb}{}^dc_{da}{}^e )\del_e+  {\lambda\over 16}\big(c_{ce}{}^f c_{fb}{}^d c_{da}{}^e+ c_{ce}{}^f c_{fa}{}^d c_{db}{}^e\big)\tilde{e^c}.\]
The coefficient of $\tilde{e^c}$ here is a canonical totally symmetric 3-form on the Lie algebra.

Next, we can compute Jacobiators in our case. Following the results of Section~\ref{sec:jac}, the nonzero ones come out using the curvature above as
\begin{align*} J(\del_a,\del_b,e^c)&=-\lambda^2R(\del_a,\del_b)(e^c-m^{-1}\theta'g^{cd}\del_d))=- {\lambda^2\over 4}c_{ab}{}^e c_{ed}{}^c\tilde{e^d} \\
J(f,\del_a, \extd \del_b)&=\lambda^2\big((R(\del_a,\del_c)\del_b) (f)e^c-m^{-1}\theta'R(\del_b,(\extd f)^\#)\del_a\big)\\
&= { \lambda^2\over 4}(\del_df)\big( c_{ac}{}^e c_{eb}{}^d e^c-m^{-1}\theta' g^{dc}c_{bc}{}^e c_{ea}{}^f\del_f\big)= { \lambda^2\over 4}(\del_df)c_{ac}{}^ec_{eb}{}^d \tilde{e^c}\end{align*}
where at end we used ad-invariance of the metric. We also have
\begin{align*}
J(&\del_a,\del_b,\extd\del_c)=\lambda^2\Big(\extd (R(\del_a,\del_b)\del_c)- m^{-1}\theta' (g^{fe}\del_e)\nabla_{\del_f}(R(\del_a,\del_b)\del_c)+\theta' (R(\del_a,\del_b)\del_c)(V)\\
&\qquad\qquad\qquad-\tilde{e^d}\big(   R(\nabla_{\del_d}\del_a,\del_b)\del_c+ R(\del_a,\nabla_{\del_d}\del_b)\del_c+ R(\del_a,\del_b)\nabla_{\del_d}\del_c)\\
&\qquad\qquad\qquad+\nabla_{R(\del_a,\del_d)\del_c}\del_b+\nabla_{R(\del_d,\del_b)\del_c}\del_a+\nabla_{R(\del_a,\del_b)\del_d}\del_c \big)\Big)\\
&={\lambda^2}R^d{}_{cab}(\extd \del_d - (2m)^{-1}\theta' g^{fs}c_{fd}{}^t\del_s\del_t+\theta'\del_d V)\\
&-{\lambda^2\over 2}\big(R^f{}_{ceb}c_{da}{}^e+ R^f{}_{cae}c_{db}{}^e+R^f{}_{eab}c_{dc}{}^e+R^e{}_{cad}c_{eb}{}^f+ R^e{}_{cdb}c_{ea}{}^f+R^e{}_{dab}c_{ec}{}^f  \big)\tilde{e^d}\del_f\\
&={\lambda^2}R^d{}_{cab}(\extd \del_d +\theta'\del_d V)-{\lambda^2\over 2}\big(R^f{}_{eab}c_{dc}{}^e+R^e{}_{cad}c_{eb}{}^f+ R^e{}_{cdb}c_{ea}{}^f  \big)\tilde{e^d}\del_f\\
&={\lambda^2\over 4}c_{ab}{}^ec_{ec}{}^d (\extd \del_d +\theta'\del_d V)-{\lambda^2\over 8}\big( c_{ac}{}^s c_{sd}{}^t c_{tb}{}^f- c_{bc}{}^s c_{sd}{}^t c_{ta}{}^f \big)\tilde{e^d}\del_f.
\end{align*}
For the 3rd equality, we dropped the $\del_s\del_t $ term as these commute at order 1 and are contracted with something antisymmetric by invariance of the metric. We also cancelled 3 of the 6 similar terms after inserting the value of $R$ and using the Jacobi identity for the Lie algebra. The remaining 3 terms do not cancel but again using the Jacobi identity in the Lie algebra can be condensed to two for the 4th equality. 

Finally, we compute what the Schr\"odinger representation looks like in the Lie group case. Here,  
\[ \rho(f)\psi=f\psi,\quad \rho(\del_a)\psi=\lambda\del_a(\psi),\quad \rho(e^a)\psi=m^{-1}\lambda e^{a\#}(\psi),\quad\rho(\theta')\psi=\psi\]
since
\[ (\ ,\ )\nabla e^b=-{1\over 2} (e^a,e^c)c_{ac}{}^b=-{1\over 2}g^{ac}c_{ac}{}^b=0.\]
This extends the usual Schr\"odinger representation to 1-forms on $M$ by converting them to vector fields by the metric and the scale factor $m$. In addition, we have
\begin{align*}
\rho(\extd \del_a)\psi&= -( \del_a V)\psi +{\lambda^2\over 2m}\big(  (\Delta \del_a)(\psi)+ R(\del_b, \del_a) e^{b\#}(\psi)+ 2 g^{bc}\<\nabla_{\del_b}\del_a,  \nabla_{\del_c}\extd\psi\>  \big)\\
&= - \del_a(V)\psi+ {\lambda^2\over 2m}\big(  (\Delta \del_a)(\psi)+ R(\del_b, \del_a) e^{b\#}(\psi)+ 2 g^{bc}\<\nabla_{\del_b}\del_a,  \nabla_{\del_c}\extd\psi\>  \big)\\
&= - (\del_a V)\psi+ {\lambda^2\over 2m}\big( (\Delta\del_a)(\psi)+{R_{sc}\over\dim\cg}  \del_a\psi- {1\over 2}g^{bc}c_{ba}{}^fc_{cf}{}^d   \del_d\psi+ g^{bc}c_{ba}{}^d\del_c\del_d \psi\big)\\
&= - (\del_a V)\psi+ {\lambda^2\over 2m}\big((\Delta\del_a)(\psi)- {R_{sc}\over\dim\cg}  \del_a\psi\big)= - (\del_a V)\psi
\end{align*}
using by invariance of the metric so that
\[ g^{bc}c_{ba}{}^fc_{cf}{}^d=-c_{ba}{}^cg^{bf}c_{cf}^d=c_{ba}{}^cc_{cf}{}^b g^{fd}=K_{fa}g^{fd}={4R_{sc}\over\dim\cg}\delta_a^d.\]
We also used that $\del_c\del_d$ commute to order 1. We then used that in the Lie group case for our basis,
\[ \Delta(\del_a)=(\ ,\ )\nabla(e^b\tens \nabla_{\del_b}\del_a)=g^{bc}\nabla_{\del_b}\nabla_{\del_c}(\del_a)={1\over 4}g^{bc}c_{ca}{}^dc_{bd}{}^e\del_e= {R_{sc}\over\dim\cg}\del_a\]
by a similar computation. 

\begin{example}For $G=SU(2)=S^3$ we have the Lie algebra $[\del_i,\del_j]=\eps_{ijk}\del_k$ in a basis of left-invariant vector fields, where $\eps_{ijk}$ is the totally antisymmetric tensor. Then the Killing form and symmetric trilinear form are
\[K_{ij}= \< [\del_i,   [\del_j,\del_k]],e^k\>=\eps_{jk l}\eps_{ilk}=(\delta_{jk}\delta_{ki}-\delta_{ji}\delta_{kk})=-2\delta_{ij}\]
\begin{align*} K_{ijk}&=\<[\del_i, [\del_j,   [\del_k,\del_l]]]+[\del_j, [\del_i,   [\del_k,\del_l]]],e^l\> =  \eps_{klm}\eps_{jmn}\eps_{inl}+  \eps_{klm}\eps_{imn}\eps_{jnl}=0.
\end{align*}
We set $g_{ij}=-\delta_{ij}$ which corresponds under (\ref{liemetric}) to a certain radius so that the  Ricci scalar is  3/2. Then we have
\[ [\del_i,e^j]={\lambda\over 2}\eps_{ijk} (e^k-m^{-1}\theta'\del_k),\quad [\extd\del_i,f]={\lambda\over 2}\eps_{ijk}(\del_j f) \tilde{e^k};\quad \tilde{e^k}= e^k+ m^{-1}\theta'\del_k\]
\begin{align*} [\del_i,\extd\del_j]& = {\lambda\over 2}\eps_{ijk}\extd \del_k - \lambda\theta'\big(\del_j(\del_i V)-{1\over 2}\eps_{jik}\del_kV\big)-{\lambda\over 4}\big(e^j\del_i+e^i\del_j-2\delta_{ij}e^k\del_k\big).
\end{align*}
The Jacobiators are
\[ J(\del_i,\del_j,e^k)={\lambda^2\over 4}(\delta_{ik}\tilde{e^j}-\delta_{jk}\tilde{e^i},\quad J(f,\del_i,\extd \del_j)={\lambda^2\over 4}(\delta_{ij}\del_k f \tilde{e^k}- \del_i f\tilde{e^j})\]
\[ J(\del_i,\del_j,\extd\del_k)={\lambda^2\over 4}\Big(\delta_{ik}(\extd\del_j+\theta'\del_j V)-\delta_{jk}(\extd\del_i+\theta'\del_i V)\Big)+{\lambda^2\over 8}(\eps_{kil}\tilde{e^j}-\eps_{kjl}\tilde{e^i})\del_l.\]
Finally, the Schr\"odinger representation is
\[ \rho(f)\psi=f\psi,\quad \rho(\del_i)\psi=\lambda\del_i \psi,\quad \rho(e^i)\psi=-{\lambda\over m}\del_i\psi,\quad\rho(\extd\del_i)=-(\del_i V)\psi,\quad \rho(\theta')\psi=\psi.\]
\end{example}

\subsection{The Schwarzschild metric} \label{sec:bh}  

One can analyse the theory for a general static rotationally invariant spacetime. Here we just focus on the representative black-hole case with $r_s$ the `Schwarzschild radius' as a free parameter, and we also set the external potential $V=0$.
The Ricci tensor vanishes,  the metric and the  Christoffel symbols are
\[ g_{\mu\nu}\extd x^\mu\extd x^\nu=-(1-{r_s\over r})\extd t^2+{1\over 1-{r_s\over r}}\extd r^2+r^2(\extd\theta^2+\sin^2(\theta)\extd\phi^2)\]
\begin{align*}
\Gamma^t{}_{tr}&=\frac{r_s}{2 r^2 \left(1-\frac{r_s}{r}\right)},\quad \Gamma^r{}_{tt}={r_s\over 2 r^2}\left(1-\frac{r_s}{r}\right) ,\quad \Gamma^r{}_{rr}=- \Gamma^t{}_{tr},\quad 
\Gamma^r{}_{\theta\theta}=-r \left(1-\frac{r_s}{r}\right),\\ \Gamma^r{}_{\phi\phi}&= \sin^2(\theta)\,  \Gamma^r{}_{\theta\theta},\quad 
 \Gamma^\theta{}_{r\theta} = \tfrac1r\ ,\quad \Gamma^\theta{}_{\phi\phi}=-\sin\theta\, \cos\theta\ ,\quad 
\Gamma^\phi{}_{r\phi} = \tfrac1r\ ,\quad \Gamma^\phi{}_{\theta\phi} =\cot\theta.
\end{align*}
Recalling our notation $\Gamma^\mu:=\Gamma^{\mu}{}_{\alpha\beta}g^{\alpha\beta}$, these come out as
\[ \Gamma^t=\Gamma^\phi=0,\quad \Gamma^r=-{1\over r}(2-{r_s\over r}),\quad \Gamma^\theta=-{1\over r^2}\cot(\theta).\]
The Ricci tensor is zero but the Laplacian on the coordinate basis vector fields is not zero and we compute it as 
\[ \Delta\del_t=\Delta\del_\phi=0\quad \Delta\del_r=-{2\over r^3}(r-r_s)\del_r,\quad \Delta\del_\theta=-{1\over r^2}\left(2(r-r_s)\cot(\theta)\del_r+ (\cot(\theta)^2-1)\del_\theta\right).\]

We compute the kernel relations from (\ref{kerrho1}) for the coordinate basis as
\begin{equation}\label{ptpr} \del_t=- m  (1-{r_s\over r}){\extd t\over\extd s},\quad \del_r={m \over 1-{r_s\over r}} {\extd r\over\extd s}-\lambda{2 r-r_s\over 2 r(r-r_s)},\end{equation}
\begin{equation}\label{pang}\del_\phi= m  r^2\sin^2(\theta){\extd \phi\over\extd s},\quad \del_\theta= m  r^2 {\extd \theta\over\extd s}-{\lambda\over 2}\cot(\theta)\end{equation}
for the momentum operators. We also have (\ref{kerrho2}) as
\[ m {\extd \del_t\over\extd  s}= m {\extd \del_\phi\over\extd s}=0,\quad m {\extd \del_\theta\over\extd s}={\cos(\theta)\over r^2\sin^3(\theta)}\del_\phi^2+{\lambda\over 2 r^2\sin^2(\theta)}\del_\theta, \]
\[ m {\extd \del_r\over\extd s}= -{r_s\over 2 (r-r_s)^2}\del_t^2- {r_s\over 2 r^2}\del_r^2+{1\over r^3}\del_{sph}^2+{\lambda\over r^3}(r-{r_s})\del_r \]
where
\[ \del_{sph}^2:=\del_\theta^2+ {\del_\phi^2\over\sin^2(\theta)}+ \lambda \cot(\theta) \del_\theta \]
is the `spherical momentum'. These are our $\CD(M)$-valued geodesic equations in first order form. Note that  $\del_\mu\in \CD(M)$ are (locally defined) vector fields and we are not obliged to think of them as differential operators. 

\begin{proposition} The spherical momentum and the total momentum
\[ -\del^2_{tot}=-{\del_t^2\over 1-{r_s\over r}}+(1-{r_s\over r})\del_r^2  +{\lambda\over r}(2-{r_s\over r})\del_r+ {\del_{sph}^2\over r^2} \]
are constants,  ${\extd \del_{sph}^2\over\extd s}={\extd \del_{tot}^2\over\extd s}=0$ to order $\lambda$.
\end{proposition}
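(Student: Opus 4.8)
The preliminary point is that in the quotient bimodule of Corollary~\ref{corbimod} every $1$-form is $\theta'=\extd s$ times an element of $\CD(M)$: this is precisely what the kernel relations (\ref{keromega1}) and Proposition~\ref{hdkk} say about $\widehat\xi$ and $\extd X$. Since $\theta'$ is central and $\extd$ obeys the Leibniz rule, the operation $\extd/\extd s$ that reads off the coefficient of $\theta'$ is then an \emph{order-preserving derivation} of $\CD(M)$ (to the order we work), with $\extd f/\extd s=\mu^{-1}(g^{\mu\nu}f_{,\nu}\del_\mu+\tfrac\lambda2\Delta f)$ on functions by (\ref{T}) and with $\extd\del_\mu/\extd s$ given by the displayed Schwarzschild equations. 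All equalities below are to order $\lambda$.

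For the total momentum I would not compute at all, but identify it with the Hamiltonian. Because the metric is diagonal, $R_{\mu\nu}=0$ and $V=0$, we have $2\mu\,\ch=g^{ij}\del_i\del_j-\lambda\,\Gamma^k\del_k$. Substituting the listed contractions $\Gamma^\mu$ (with $\Gamma^t=\Gamma^\phi=0$) and grouping the angular block $g^{\theta\theta}\del_\theta^2+g^{\phi\phi}\del_\phi^2+\lambda\,g^{\theta\theta}\cot\theta\,\del_\theta=r^{-2}\del_{sph}^2$ apart from the time--radial terms reproduces exactly $-\del^2_{tot}$, with matching operator orderings since $\del_t$ commutes with all functions of $r,\theta$ and $r^{-2}$ commutes with $\del_\theta,\del_\phi$. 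Thus $-\del^2_{tot}=2\mu\,\ch$ in $\CD(M)$, and $\extd\del^2_{tot}/\extd s=-2\mu\,\extd\ch/\extd s=0$ is immediate from Corollary~\ref{cordh}.

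For the spherical momentum I would run the same mechanism on the unit round sphere. Evaluating $g^{ij}\del_i\del_j-\lambda\,\Gamma^k\del_k$ for $\extd\theta^2+\sin^2\theta\,\extd\phi^2$ (where $\Gamma^\theta=-\cot\theta$, $\Gamma^\phi=0$) returns precisely $\del_{sph}^2$, so $\del_{sph}^2=2\mu\,\ch_{S^2}$ is the $S^2$ Hamiltonian. The decisive observation is that the Schwarzschild equations for $\extd\del_\theta/\extd s$ and $\extd\del_\phi/\extd s$ are $r^{-2}$ times the corresponding $S^2$ geodesic equations, and likewise $\extd h(\theta)/\extd s=r^{-2}(\extd h/\extd s)_{S^2}$, since $g^{\theta\theta}=r^{-2}$ and $\Delta_{\rm Schw}h=r^{-2}\Delta_{S^2}h$ for $h=h(\theta)$. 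Let $\mathcal B\subset\CD(M)$ be the subalgebra generated by $\del_\theta,\del_\phi$ and the functions of $\theta$; then $r^{-2}$ is central in $\mathcal B$, so the derivation property carries this single conformal factor through each Leibniz term and gives $\extd A/\extd s=r^{-2}(\extd A/\extd s)_{S^2}$ for all $A\in\mathcal B$. As $\del_{sph}^2\in\mathcal B$, this yields $\extd\del_{sph}^2/\extd s=r^{-2}\cdot 2\mu\,(\extd\ch_{S^2}/\extd s)_{S^2}=0$ by Corollary~\ref{cordh} applied on $S^2$.

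The step I expect to be the real work is making this conformal factorisation rigorous: one must verify that the derivation actually sends $\mathcal B$ into $\mathcal B$ (no $\del_r$ or $\del_t$ are generated, which is visible because the right-hand sides of the $\del_\theta$ and $\del_\phi$ equations contain only $\del_\theta,\del_\phi$ and functions of $\theta$) and that the factor $r^{-2}$ may be pulled out past every $A\in\mathcal B$, which rests on $[\del_\theta,r^{-2}]=[\del_\phi,r^{-2}]=0$. Should one prefer to avoid this structural argument, the identity can instead be checked head-on: applying the derivation to $\del_{sph}^2$, the classical ($\lambda^0$) terms cancel by the ordinary conservation of $L^2$, while the order-$\lambda$ remainder --- produced by the explicit $\lambda$-pieces of the $\del_\theta$ equation, the reorderings $[\del_\theta,h(\theta)]=\lambda\,h'(\theta)$, and the scalar-Laplacian corrections in $\extd(1/\sin^2\theta)/\extd s$ and $\extd(\cot\theta)/\extd s$ --- must be shown to sum to zero; this is the routine but lengthier alternative.
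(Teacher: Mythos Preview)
Your argument is correct. For $\del_{tot}^2$ you do exactly what the paper does: identify $-\del_{tot}^2=2\mu\,\ch$ and quote Corollary~\ref{cordh}. For $\del_{sph}^2$, however, you take a genuinely different route. The paper computes directly: it first records $\mu\,\extd f(\theta)/\extd s=r^{-2}\bigl(f'\del_\theta+\tfrac{\lambda}{2}(f''+\cot\theta\,f')\bigr)$, then applies the Leibniz rule to $\del_\theta^2+\del_\phi^2/\sin^2\theta$ and finds that the result is the order-$\lambda$ residue $\lambda r^{-2}\sin^{-2}\theta\,(\del_\theta^2-\cot^2\theta\,\del_\phi^2/\sin^2\theta)$, which the extra $\lambda\cot\theta\,\del_\theta$ term then kills. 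Your argument instead recognises $\del_{sph}^2$ as $2\mu\,\ch_{S^2}$ and observes that on the angular generators the Schwarzschild derivation is exactly $r^{-2}$ times the unit-sphere one (you verify this for $h(\theta)$, $\del_\theta$, $\del_\phi$ separately); since $r^{-2}$ is central in the angular subalgebra $\mathcal B$, the Leibniz rule carries the factor through to all of $\mathcal B$, and Corollary~\ref{cordh} on $S^2$ finishes. Your approach is more structural and explains the conservation as inherited from the spherical factor of the geometry, while the paper's direct cancellation is a concrete check that also makes transparent why the particular $\lambda\cot\theta\,\del_\theta$ correction is needed. Either way is fine; the point you flag as ``the real work'' --- that the derivation preserves $\mathcal B$ and that $r^{-2}$ can be commuted past everything in $\mathcal B$ --- is indeed straightforward from the explicit right-hand sides of the $\del_\theta$, $\del_\phi$ equations and from $[\del_\theta,r^{-2}]=[\del_\phi,r^{-2}]=0$.
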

\proof (1)  the differential in $\CD(M)$ is $\extd f=\widehat{\extd f}$ and using (\ref{keromega1}) in the kernel of the Schr\"odinger representation, this becomes in particular 
\[ m{\extd f(\theta)\over\extd s}={1\over r^2}(f' \del_\theta+ {\lambda\over 2}(f''+\cot(\theta)f')).\]
from the form of $\Gamma^\theta$. We use this and  $[\del_\theta,f(\theta)]=\lambda f'$ along with our expressions for $\extd \del_\theta$ to compute that  
\[ m {\extd\over\extd  s}(\del_\theta^2+ {\del_\phi^2\over\sin^2(\theta)}) ={\lambda\over r^2\sin^2(\theta)}(\del_\theta^2- {\del_\phi^2\cot^2(\theta)\over\sin^2(\theta)})\]
to order $\lambda$. The addition of the quantum correction $\lambda \cot(\theta) \del_\theta$ exactly kills this. 

(2)  $-\del_{tot}^2:=g^{\mu\nu}\del_\mu \del_\nu -\lambda \Gamma^\mu \del_\mu$ is the expression stated but this is proportional to the Hamiltonian $\ch\in \CD(M)$ in our set-up. Hence this follows from Cor~\ref{cordh} applied in the case of the Schwarzschild metric. It can also be verified explicitly as an excellent check on our calculations, using 
\[ m{\extd f(r)\over\extd s}= (1-{r_s\over r})(f' \del_r + {\lambda\over 2} f'')+{\lambda\over 2r } (2-{r_s\over r}) f'\]
obtained from (\ref{keromega1}). \endproof

Since $\del_\mu\in \CD(M)$ map under the Schr\"odinger representation to momentum operators,  we think of them as momentum. Classically,  we would set  $\lambda=0$ and consider them as real momenta $p_\mu$.  Ditto for $p^2_{sph}=p_\theta^2+ p_\phi^2/\sin^2(\theta)$ for the spherical momentum and the total momentum
\begin{equation}\label{unitspeed} -p_{tot}^2=-{p_t^2\over 1-{r_s\over r}}+(1-{r_s\over r})p_r^2  + {p^2_{sph}\over r^2}. \end{equation}
Thus $p_{tot}=m$, the rest mass of the particle if $s$ is proper time. This can be used to express $p_r$ as a function of $r$ and the other three conserved quantities. These four constants of motion then allow one to fully compute geodesics by determining their values for any initial proper velocity. However, to act on wave functions we need the above expressions at least to order $\lambda$ and to view them as operators. Then requiring that the image of $\del_{tot}^2$ is a constant $m^2_{KG}$ becomes the Klein-Gordon equation for a particle of mass $m_{KG}$. 

We also check $*$-compatilibilty. From
\[ \del_\mu^*=\del_\mu + \lambda\Gamma^\nu{}_{\nu\mu}\]
we find
\[ \del_t^*=\del_t,\quad \del_\phi^*=\del_\phi,\quad \del_r^*=\del_r+{2\lambda\over r},\quad \del_\theta^*=\del_\theta+ \lambda \cot(\theta)\]
which implies to order $\lambda$, 
\begin{align*} (\del_{tot}^2)^*&={\del_t^2\over 1-{r_s\over r}}-(\del_r+{2\lambda\over r})^2(1-{r_s\over r})  - {\del_{sph}^2+2\lambda\cot(\theta)\del_\theta \over r^2} +{\lambda\over r}(2-{r_s\over r})\del_r+{\lambda\over r^2}\cot(\theta) \del_\theta
\\
&=\del_{tot}^2+ 2 {\lambda\over r}(2-{r_s\over r})\del_r- {4\lambda\over r}(1-{r_s\over r})\del_r -2 \lambda {r_s\over r^2}\del_r=\del_{tot}^2\end{align*}
and similarly, but more easily, $(\del_{sph}^2)^*=\del_{sph}^2$ as expected. Similarly, 
\[ m {\extd \over\extd s} (\del_r^*)=m {\extd \del_r\over\extd s} +2 \lambda(1-{r_s\over r})(-{1\over r^2})\del_r=m {\extd \del_r\over\extd s}- {2\lambda\over r^3}(r-r_s)\del_r\]
so that
\begin{align*} (m {\extd \del_r\over\extd s})^*&=- {r_s\over 2 (r-r_s)^2}\del_t^2- (\del_r+{2\lambda\over r})^2{r_s\over 2 r^2}+{1\over r^3}\del_{sph}^2-{\lambda\over r^3}(r-{r_s})\del_r\\
& =- {r_s\over 2 (r-r_s)^2}\del_t^2- {r_s\over 2 r^2}\del_r^2+{1\over r^3}\del_{sph}^2-{\lambda\over r^3}(r-{r_s})\del_r =m {\extd \over\extd s} (\del_r^*)\end{align*}
as expected. Similarly, and more easily, for $\del_\theta$, and trivially for $\del_t,\del_\phi$.

\section{Quantum mechanics-like applications}\label{sec:app}

The theory developed in previous sections can be applied in two contexts. The first is $M$ a Riemannian manifold for `space' and the geodesic time variable $s$ identified with regular time $t$. This amounts to a geometric approach to regular quantum mechanics on $M$, to which the theory above applies. This is of interest, but here we focus our attention on the more novel case in which $M$ is spacetime with wave functions  $\psi\in L^2(M)$ over spacetime. 

\subsection{Spacetime quantum mechanics}

We recall from Section~\ref{sec:geo} that we represented the algebra $\CD(M)$ and its differential calculus as an extended Schr\"odinger representation $\rho$ on $L^2(M)$.  We interpreted $\theta'=\extd s$ as a `Heisenberg picture' where  $\lambda{\extd a\over\extd s}=[\rh,a]$ for operators $a$ and $s$ proper time and for some choice of Hamiltonian. We used this for $a=\rho(f)$ in Section~\ref{buio} to define $\extd f$ in $\CD(M)$ and for $a=\rho(X)$ in Proposition~\ref{hdkk} to define $\extd X$. It then follows on products at least to order $\lambda$ working in $\CD(M)$, and hence at least at the order for quantum mechanics in the image under $\rho$. The corresponding Schr\"odinger picture (\ref{KGQM}) moreover matched up to an expected quantum geodesic flow with $s$ the geodesic parameter and this provides the physical meaning of the  external time $s$. If we imagine a density of dust where each particle evolves along a geodesic in spacetime, we can start with an initial configuration of $\rho$ and evolve it by proper time $s$ for each dust particle. If we proceed by analogy with ordinary quantum mechanics, it would then be natural set $\rho=|\psi|^2$ for $\psi$ some kind of `wave function' on spacetime and evolving with $s$ is then the quantum geodesic at hand. This is not quite what we do as we work on $A=\CD(M)$, but the formalism works in principle for any suitable $A-C^\infty(\R)$-bimodule, in the present case consisting of $s$-dependent  wave functions  in $L^2(M)$. This remains the conceptual setting even though our case is also more complicated due to nonassociativity at higher order. We will also refer to momentum operators acting on wave functions and defined with respect to a coordinate basis as 
\begin{equation}\label{pmu}  p_\mu:=\rho(\del_\mu)=\lambda {\del\over\del x^\mu}.\end{equation}

Next, we consider the choice of Hamiltonian. When $M$ is space, we  take $\ch$ so that $\rh=-{\hbar^2\over 2 m } \Delta+ V$ for some external potential function. In the spacetime case we will use $\square$ for the spacetime Laplacian to avoid confusion, and we will focus on the simplest case where the spacetime external potential $V=0$. With spacetime signature $-+++$, we accordingly take 
\begin{equation}\label{minkh} \rh=-{\hbar^2\over 2 m }\square\end{equation}
to define our more novel spacetime or `Klein-Gordon' quantum mechanics, which we will solve in the next section. For the theory to be unitary, we need that $\rh$ is self-adjoint, which depends on the limiting behaviour of fields when the manifold is not compact. The following observation is a step in this direction.

\begin{proposition} For a Schwarzschild background, at least on 2-differentiable radial-only dependent wave functions $\psi(r)$, $\rh$ is symmetric if we impose von Neumann conditions in the sense of $r^2 \psi' \to 0$ as $r\to \infty$ and $(r-r_s)\psi'\to 0$ as  $r\to r_s^+$. 
\end{proposition}
\proof We focus for simplicity on the radial sector of the model, so $\psi=\psi(r)$.  We use the measure $\sqrt{-\det(g)}= r^2\sin(\theta)$ so that the $L^2$-norm for radial functions is effectively
\begin{equation}\label{radmeasure} \<\psi|\psi\>=\int_{r_s}^\infty  |\psi(r)|^2 r^2\extd r\end{equation}
(the $\sin(\theta)$ cancels in expectation values for purely radial calculations, so we ignore this.) Then  $\rh={\lambda^2\over 2 m }\square$ where $\square$ acts on radial functions as $(1-{r_s\over r}){\del^2\over\del r^2} + {1\over r}(2-{r_s\over r}){\del\over\del r}$. Then
\[ \<\phi|\square\psi\>=[\bar\phi\psi'  r(r-r_s)]^\infty_{r_s^+}-\lambda^2\int_{r_s}^\infty \phi' \psi' r(r-r_s)\extd r\]
where prime denotes $\del/\del r$ and where the term from differentiating $r(r-r_s)$ in the integration by parts cancels with the second term of $\square$. If we choose Neumann conditions as stated then we do not pick up anything from the endpoints.  Doing the same for $\<\square\phi|\psi\>$ proceeds in the same way and gives the same answer for the second term.  \endproof

Next, in both space and spacetime cases without external potential,  the images in the Schr\"odinger representation of (\ref{kerrho1})-(\ref{kerrho2}) are set to zero and hence  we automatically have an Ehrenfest theorem,  
\begin{equation}\label{ehren1}  m {\extd\over\extd s}\<\psi|x^\mu|\psi\>=\<\psi|g^{\mu\nu}p_\nu-{\lambda\over 2}\Gamma^\mu|\psi\>,\end{equation}
\begin{equation}\label{ehren2}  m {\extd\over\extd s}\<\psi|p_\mu| \psi\>=\<\psi| \Gamma^\nu{}_{\mu\sigma}g^{\sigma\rho}(p_\nu p_\rho-\lambda \Gamma^\tau{}_{\nu s}p_\tau)+{\lambda\over 2}R_{\nu\mu}g^{\nu\rho}p_\rho |\psi\>.\end{equation}
This differs from  classical geodesic flow for the expectation values of the coordinates because of quantum uncertainties, i.e. since the expectation of a product is not the product of the corresponding expectations. Similarly if we add an external potential $V$. Note that in the Heisenberg picture, the state $\psi$ is fixed and does not evolve in time. However, the same result applies in the Schr\"odinger picture, where $\psi$ now evolves with $s$ according to (\ref{KGQM}) and operators are considered as fixed questions about the system and not evolving (in the basic version of the theorem).  Then
\[  m {\extd\over\extd s}\<\psi|a\psi\> = { m \over\lambda}\<\psi |\rh a- a\rh |\psi\>\]
and for $[\rh,a]$ we use the expressions previously computed as the coefficient of $\theta'$ in the calculation of $\extd a$ and its representation. As $\<\psi|\psi\>$ is a constant,  this also tells us the rate of change of the expectation value $\<a\>:=\<\psi|a|\psi
\>/\<
\psi|\psi\>$.  The Ehrenfest theorem (\ref{ehren1})-(\ref{ehren2}) in the Schr\"odinger picture thus looks the same but now with the $s$ time dependence on the left coming from the state. 

Proceeding in the spacetime Schr\"odinger picture, if we have an eigenvector for the Hamiltonian with eigenvalue $E_{KG}$, say, then each of these evolves by $-\lambda\dot\psi = E_{KG}\psi$ and hence
\[  m {\extd\over\extd s}\<\psi| a| \psi\>=0\]
just as in regular quantum mechanics. In the case of a black hole background and radial wave functions $\psi(r)$, we note the following consequence of the Ehrenfest theorem.

\begin{proposition}\label{propradial} In a Schwarzschild background, if the wave function is differentiable and has only radial dependence $\psi(r)=\psi_1+\imath\psi_2$ for real $\psi_i$  then 
\[   m {\extd  \<\psi| r|\psi\>\over\extd s}= \lambda\imath\int_{r_s}^\infty r(r-r_s)(\psi_1\psi_2'-\psi_2\psi_1')\extd r+{\lambda\over 2}\left[|\psi(r)|^2 r(r-r_s)\right]^\infty_{r_s^+}\]
if the endpoint limits exist. Hence, unitary evolution of $\<\psi|r|\psi\>$ requires that  the second term vanishes, for example if $|\psi|^2r^2\to 0$  as $r\to \infty$ and $|\psi|^2(r-r_s)\to 0$ as $r\to r_s^+$. 
\end{proposition}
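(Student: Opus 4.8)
The plan is to apply the Schr\"odinger-picture Ehrenfest theorem already recorded above, namely $\mu{\extd\over\extd s}\<\psi|a|\psi\>={\mu\over\lambda}\<\psi|[\rh,a]|\psi\>$, to the multiplication operator $a=r$ with the Klein--Gordon Hamiltonian $\rh={\lambda^2\over 2\mu}\square$ of (\ref{minkh}). This reduces the whole claim to evaluating ${\lambda\over 2}\<\psi|[\square,r]|\psi\>$, so the proposition becomes a single commutator computation on radial functions followed by one integration by parts against the radial measure (\ref{radmeasure}).

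First I would write the radial d'Alembertian in Sturm--Liouville form. Setting $h(r):=r(r-r_s)$, so that $h'=2r-r_s$, one checks directly from the radial action of $\square$ recorded in Section~\ref{sec:bh} that $\square\psi=r^{-2}(h\,\psi')'$, which exhibits $\square$ as formally self-adjoint for the weight $r^2$. The algebraic commutator with the coordinate $r$ is then routine: since $r$ commutes with itself, only the first-derivative pieces survive and $[\square,r]\psi=r^{-2}(2h\,\psi'+h'\,\psi)$. Pairing this against $\psi$ with the measure $r^2\,\extd r$ cancels the weight and leaves $\<\psi|[\square,r]|\psi\>=\int_{r_s}^\infty(2h\,\bar\psi\psi'+h'\,|\psi|^2)\,\extd r$.

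Next I would separate real and imaginary parts using $2\bar\psi\psi'=(|\psi|^2)'+2\imath(\psi_1\psi_2'-\psi_2\psi_1')$, valid since $\psi_1,\psi_2$ are real. The real contribution $h(|\psi|^2)'+h'|\psi|^2=(h\,|\psi|^2)'$ is a total derivative and integrates to the boundary term $[\,r(r-r_s)|\psi|^2\,]^\infty_{r_s^+}$, while the imaginary contribution gives $2\imath\int_{r_s}^\infty r(r-r_s)(\psi_1\psi_2'-\psi_2\psi_1')\,\extd r$. Multiplying by ${\lambda\over 2}$ produces exactly the integral and boundary terms in the statement.

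The last sentence is where the only genuine subtlety lies, and it is interpretive rather than computational. The quantity $\<\psi|r|\psi\>=\int_{r_s}^\infty|\psi|^2 r^3\,\extd r$ is manifestly real, so its $s$-derivative must be real; and since $\lambda=-\imath\hbar$, the integral term is real (its prefactor $\lambda\imath=\hbar$ and its integrand are real) whereas the boundary term ${\lambda\over 2}[\,r(r-r_s)|\psi|^2\,]$ is pure imaginary (the bracket being real). Consistency therefore forces the boundary term to vanish; equivalently this is the no-flux self-adjointness condition for $\rh$ from the Neumann-boundary Lemma, which is what \emph{unitary evolution} means here. It then remains only to read off sufficient decay: near infinity $r(r-r_s)|\psi|^2\sim r^2|\psi|^2$, and near the horizon $r(r-r_s)|\psi|^2\sim r_s(r-r_s)|\psi|^2$, so the stated conditions $|\psi|^2r^2\to0$ as $r\to\infty$ and $|\psi|^2(r-r_s)\to0$ as $r\to r_s^+$ kill both endpoints. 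The point to be careful about is that the algebraic commutator used in the Ehrenfest step is anti-self-adjoint only up to these very boundary terms, so the reality argument and the self-adjointness Lemma must be invoked together, rather than the boundary contribution being dropped silently.
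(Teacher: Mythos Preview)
Your proof is correct and follows essentially the same route as the paper's: apply the Ehrenfest identity to $a=r$, identify the resulting operator, then split $\bar\psi\psi'$ into real and imaginary parts and integrate by parts. The only organisational difference is that the paper reads off the operator $\mu{\extd r\over\extd s}=(1-{r_s\over r})\lambda\partial_r+{\lambda(2r-r_s)\over 2r^2}$ from the kernel relation (\ref{kerrho1a}) in Section~\ref{sec:bh}, whereas you compute $[\square,r]$ directly via the Sturm--Liouville form $\square=r^{-2}(h\,\cdot\,')'$ --- these are the same operator, and your packaging of the real part as $(h|\psi|^2)'$ is marginally cleaner than the paper's separate handling of the $h'$ term.
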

\begin{proof} By the Ehrenfest theorem and the calculations in Section~\ref{sec:bh}, since $\del_r\in \CD(M)$ acts as $\lambda{\del\over\del r}$, we have
\[ m   {\extd \<\psi| r|\psi\>\over\extd s}=\lambda\<\psi|(1-{r_s\over r}){\del\over\del r}+{ 2 r-r_s\over 2r^2}|\psi\>.\]
 Then we compute
\begin{align*} \int_{r_s}^\infty \bar\psi r (r-r_s)\psi'\extd r&={1\over 2}\int_{r_s}^\infty r(r-r_s)\del_r |\psi|^2 +\imath\int_{r_s}^\infty r(r-r_s)(\psi_1\psi_2'-\psi_2\psi_1')\extd r\\
&=[{r(r-r_s)\over 2}|\psi|^2]^\infty_{r_s^+}- \<\psi| {2r-r_s\over 2 r^2}|\psi\> + \imath\int_{r_s}^\infty r(r-r_s)(\psi_1\psi_2'-\psi_2\psi_1')\extd r\end{align*}
where we apply integration by parts to the first term. We then insert this back into the Ehrenfest theorem. 
\end{proof}

For example, the `atomic' black hole eigenstates in  Section~\ref{sec:gravatom} (the type (iii) modes) are differentiable at any point just above the horizon, bounded there, and decay exponentially for large $r$, so the endpoints term vanishes and evolution is unitary as expected. Moreover, these modes are real and remain real (times a phase that is indendent of $r$ as they evolve),  and hence $\<r\>$ is a constant as  expected for evolution eigenstates. The endpoints limit condition also appears to be true for the horizon modes arising in the numerical calculations in Section~\ref{sec:dir}, but these are complex so $\<r\>$ does not have to be a constant, and indeed we will find that $\<r\>$ actually increases.

\subsection{Pseudo quantum mechanics in Schwarzschild background}\label{secBH}

Ordinary quantum mechanics arises as an approximation to solutions of the KG equation for a fixed mass $m$ and wave functions which, after factoring out a rest mass mode $e^{-{m\over\lambda}t}$, are slowly varying with respect to some local laboratory time $t$. In this section, we consider something rather different but which nevertheless quite resembles quantum mechanics. To avoid confusion, we will call it `pseudo quantum mechanics'. Namely, we look at the above spacetime Schr\"odinger picture with $\rh$ the spacetime Laplacian (and no external potential), but reduced in the presence of a time-like Killing vector. This extends ideas in \cite{BegMa:geo} to the curved static case. The big difference is that in pseudo-quantum mechanics the `quantum mechanics time' is the geodesic parameter time $s$ as explained above and not the spacetime coordinate $t$. We work in geometric units where the speed of light is $c=1$. We continue to focus on a black hole as representative of our methods.

 The required reduction at the noncommutative geometry level is to restrict to functions independent of $t$ and quotient by the time coordinate case of (\ref{kerrho1}),(\ref{kerrho2}), i.e.
\begin{equation}\label{reduction}  m  \extd t =g^{tt}\theta' \del_t, \quad \extd \del_t=0\end{equation}
to order $\lambda$. As explained in Section~\ref{sec:bh}, the vector field $\del_t\in \CD(M)$ appears in the classical limit as the `energy' $p_t$ and the first of (\ref{reduction}) with $ m $ interpreted as the mass of a particle imposes that $\theta'=\extd  s$ the proper time in the classical approximation and for the chosen signature. This is part of classically imposing (\ref{kerrho1}) whereby
\[ -\extd s^2=g_{\mu\nu}\extd x^\mu\extd x^\nu={\theta'^2\over  m^2}g_{\mu\nu}g^{\mu\alpha}g^{\nu\beta}p_\alpha p_\beta=-\theta'^2\]
for a particle of mass $ m $, but we need only impose it for one of the time coordinates to identify $\theta'$ as proper time, i.e. with the right time dilation factor. We still have a quantum geodesic flow on this reduced algebra that still lands on the Schr\"odinger equation and is now closer to the conventional one for quantum mechanics. This reduced algebra can be elaborated along the lines of  \cite{BegMa:geo},  although we do not do so here as we do not need it explicitly. 

The second of (\ref{reduction}) means we can represent the reduced algebra (and hence the original algebra) on fixed  frequency or more precisely fixed `energy' $p_t$ elements of the form
\begin{equation}\label{eptpsi}  {e^{p_t t\over\lambda}\psi(r,\theta,\phi)},\end{equation}
now with such wave functions also varying in $s$. Here $\del_t\in \CD(M)$ in the noncommutative geometry acts as $\lambda{\del\over\del t}$ and hence has value $p_t$ on the above modes.  Such modes are not normalisable and hence not precisely in $L^2(M)$ but it is useful to include them in the discusison (there are standard ways to deal with this issue more precisely). The associated quantum geodesic flow/spacetime Schr\"odinger equation on these modes then looks like 
\begin{equation}\label{KGVeff} -\lambda{\del \psi\over\del s}:=-{\hbar^2\over 2 m }\square\psi=(-{\hbar^2\over 2 m }\Delta+V_{eff})\psi,\end{equation}
where 
\[  V_{eff}(r):= -(1-{r_s\over r})^{-1}{p_t^2 \over 2 m },\]
\[ \Delta:=(1-{r_s\over r}){\del^2\over \del r^2}+{1\over r}(2-{r_s\over r}){\del\over \del r}+ {1\over r^2}\left({\del^2\over \del \theta^2} + {1\over \sin^2(\theta)}{\del^2\over \del \phi^2}+\cot(\theta){\del\over\del \theta}\right).\]
 This has been set up to  resemble some kind of quantum mechanics for a particle of mass $ m $ on a 3-manifold with the spatial part of the metric  plus an induced radial force potential. Although this only looks like (and isn't) what is normally meant by quantum mechanics, it has the merit that the KG flow/spacetime Schr\"odinger equation is coordinate invariant; we are only choosing to look at it a certain way with respect to a chosen coordinate time (and then extending to allow plane waves in this direction). With this discussion, we are nevertheless led to a precise setting were we can ask for $\psi$ to be $L^2$ on position space and use standard quantum mechanical methods and language. As in quantum mechanics, one can either solve this directly by integrating the first order PDE (\ref{KGVeff}) or one can look for eigenvalues $E_{KG}$ and eigenfunctions of the evolution operator, i.e. such that
 \[ (-{\hbar^2\over 2 m }\Delta+V_{eff})\psi=E_{KG}\psi,\]
 which amounts to (\ref{eptpsi}) solving the KG equation $\square = {m^2_{KG}\over\hbar^2}$ with `square mass' 
 \[ m^2_{KG}=-E_{KG}2 m .\]
 We may potentially be interested in all the eigenvalues $E_{KG}$, not only negative ones, since there is no specific massive  KG field in the picture and we are just using the KG wave operator $\square$ to define the flow.  We illustrate both the direct numerical PDE method and the eigenfunction method, and can consider the latter as 
 \[\left( -{\hbar^2\over2 m }\Delta+ V_{eff}+{p_t^2\over 2 m }\right)\psi=E\psi;\quad  \hbar\omega=-p_t =\sqrt{m_{KG}^2+2 m  E}, \]
for $\psi(r,\theta,\phi)$, where  we subtracted the rest energy to match conventions of the ordinary time-independent Schr\"odinger equations. We will find solutions $\psi_E$ for $E<0$ that are much like those of a hydrogen atom. Also, to align with ordinary quantum mechanics, we will be interested in $p_t\le 0$ or equivalently $\omega\ge 0$ as explained in the introduction, see \cite{BegMa:cos}.  The asymptotic form of solutions of the KG equation in a Schwarzschild background is known analytically  in terms of Whittaker functions \cite{RowSte}, and exact solutions for more general Kerr black holes were noted in \cite{Bez} in terms of Heun functions. These can also be solved for exactly using MATHEMATICA, which is the approach we take. In both cases, graphs  are presented in units with $\hbar=1$.
 
 \subsubsection{Direct integration in the radial case } \label{sec:dir}
 
 The simplest solutions are for $\psi=\psi(r)$ constant in $\theta,\phi$. Then we can solve this numerically see Figures~\ref{figradial1} and~\ref{figradial2}. Calculations are for $r_s=-p_t= m =1$ and are done  numerically for $r\in (1.000001 r_s, 50 r_s)$ with Neumann boundary conditions of zero radial derivative along the horizon edge. Figure~\ref{figradial1} (a) and (b) study the case of 
 an initial Gaussian centred at $10 r_s$ showing complex oscillations in $\psi$ and a gradual diffusion of the probability density $|\psi|^2$. Part (b) the same model in close up nearer the horizon and extending a little further in geodesic time $s$. We see the emergence of further probability density waves when the region of disturbance reaches the horizon, at around $s=0.65$. Whereas parts (a)-(b) have the initial Gaussian centred far from the horizon, part (c) shows evolution of 
an initial Gaussian at $1.4 r_s$, i.e. near to the horizon. It is significant that this is not particularly singular in our set up where our region terminates just above horizon.  

\begin{figure}
 \includegraphics[scale=0.7]
 {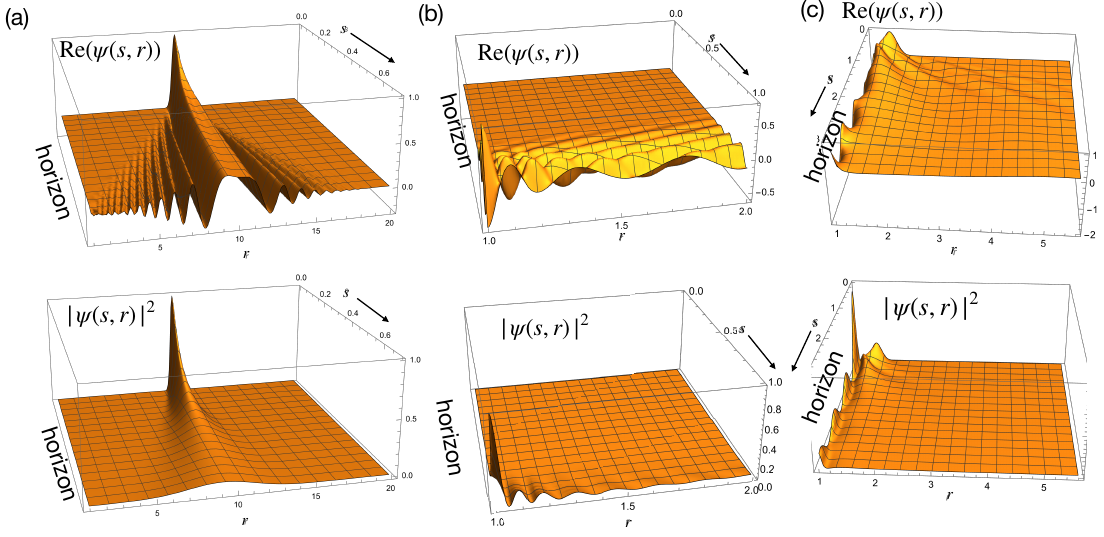}
 \caption{\label{figradial1} (a) Evolution of $\psi(r)$ initially a Gaussian centred far from the horizon at 10 $r_s$ showing complex waves and diffusion of the Gaussian probability density with motion of the peak towards the horizon (b) the same model but in close up near the horizon showing appearance of horizon modes at time $s=0.65$. (c) Evolution of an initial Gaussian centred at $1.4 r_s$ close to the horizon. Units of $r_s=1$.} \end{figure}

Figure~\ref{figradial2} (a)-(b) looks in cross section and in close up at these emergent  `horizon mode' probability density waves (their actual wave function is complex oscillatory). The density waves start very small on the tail of the Gaussian where it interacts with the horizon as shown at $s=0.65$, but by $s=0.9$ they are already twice as high as the peak of the Gaussian, even though most of the probability still resides in the Gaussian off stage at larger $r$. But by $s=3$, there is almost no trace of the original Gaussian as the horizon modes have grown and also increased their wavelength considerably. Part (b) steps back and shows what happens to the Gaussian bump. By $s=1.4$ the oscillations have passed the centre of the Gaussian. 

Note that the peak of the Gaussian bump throughout this process has an apparent motion increasingly rapidly towards the horizon so that by $s=1.1$ it is at $r=7.2 r_s$ and by $s=1.4$ it appears at about $r=4r_s$ in Figure~\ref{figradial2}(b) underneath the probability density oscillations (albeit no longer a Gaussian by this point). The picture is thus of a Gaussian bump `particle' falling into the black  by a process of absorption by waves created at the horizon. This apparent movement of the Gaussian peak towards the horizon is, however, quite a bit faster than a classical geodesic for the same initial velocity $p_t/ m $, as governed by (\ref{unitspeed}) in the form
\begin{equation}\label{classgeo} {\extd r\over\extd s}=\pm\sqrt{{p_t^2\over  m^2}-(1-{r_s\over r})}\end{equation}
Solving this with the same initial point as the initial location of the Gaussian bump, the point particle is only at $9.65 r_s$ at $s=1.1$ and $9.55$ at $s=1.4$ compared to the above. Yet in spite of the inward motion of what used to be the Gaussian peak, the expected value of $\<r\>$  all the while {\em increases} as shown in Figure~\ref{figradial2}(c). This is a somewhat unexpected effect, but what happens is that the horizon modes, while they increase with time in height near the horizon, also have increasingly larger wavelength, which pushes up the expected value of $r$.  

It is tempting to think of the disappearance of the initial Gaussian and its eventual replacement by the horizon modes as a kind of information loss. To this end, we plotted the continuous  entropy $-\<\ln(\rho)\>$  of the associated classical probability density $\rho=|\psi|^2/\<\psi|\psi\>$, which on radial functions amounts to
\begin{equation}\label{entropy} S(\psi)=- \< \ln({ |\psi|^2\over\<\psi|\psi\>})\>=- \int_{r_s}^\infty{ |\psi|^2\over\<\psi|\psi\>}\ln({ |\psi|^2\over\<\psi|\psi\>}) r^2\extd r. \end{equation}
We find in part (d) that this also increases throughput the above process. We similarly looked at the entropy starting with several other  $\R_{\ge 0}$-valued initial wave functions with support away from the horizon and $r_{max}$ (or any fixed phase times such functions) and entropy increasing appears to be a general feature for at least this narrow class, but not for all initial wave functions. There is also a natural relative entropy $S(\rho|\rho')=-\<\ln(\rho/\rho')\>$ where $\rho$ is used to compute the expected value (this is called the Kullback-Leibler divergence \cite{KL} in information geometry). However, this quantity relative to the initial state is too noisy to compute numerically due to the essentially zero probability densities of both parts of the ratio approaching $r_{max}$.

 \begin{figure}
 \includegraphics[scale=0.75]
 {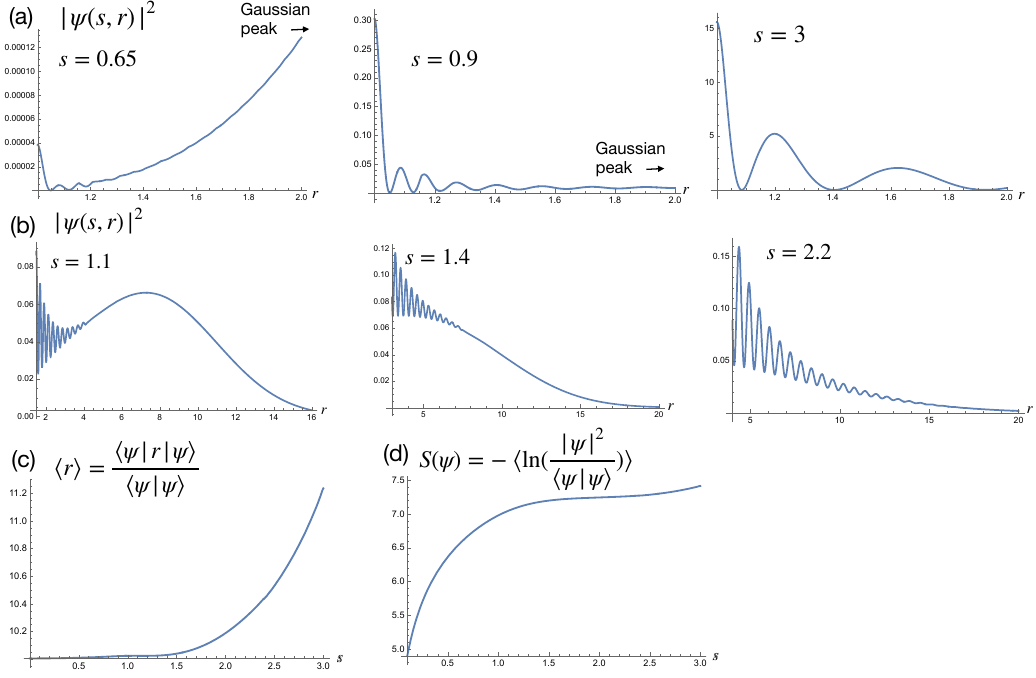}
 \caption{\label{figradial2} (a) Cross-sections of the model in Figure~\ref{figradial1}(a)-(b) showing close-ups of the emergence of probability density waves when the Gaussian tail starts to interact with the horizon, at around $s=0.65$. Note the different scales in the plots. By $s=3$ these horizon modes are all that remain. (b) The same model in larger view showing the Gaussian bump absorbed at $s=1.4$ into the horizon modes.  (c) The expected value $\<r\>$ and (d) the probability density entropy both increase throughout the process.}\end{figure}

 All of our plots are for $s$ before the point where the region of disturbance reaches $r_{max}$ (otherwise one gets a reflection there and interference from this). Integrity of the numerics before that point was assessed by computing $\<\psi|\psi\>$ which indeed remains constant up to numerical noise or systemic errors (of less than around $\pm 1\%$ over the range of $s$ plotted). Moreover, changing $r_{min}$ to ten times closer to the horizon does not visibly change any of the graphs (except for  the highly magnified $s=0.65$ case in Figure~\ref{figradial2}(a) which does not significantly change on making $r_{min}$ twice as close). In particular, the horizon modes do not appear to diverge at the horizon. It should be stressed, however, that the assumption of an initial Gaussian wave function in $r$ is entirely hypothetical and not a physical choice. For example, a particle `Gaussian bump' coming in from $r=\infty$ might be expected to have already have evolved to a complex wave function by the time its region of disturbance reaches radius  $r_{min}< r< r_{max}$ so as to be an initial state for the numerical model.

\subsubsection{Black hole atom case}\label{sec:gravatom}

For large $r$, the potential looks like a $1/r$ potential (shifted by 1) and we can solve for something which for large $r$ is like a hydrogen atom or `gravatom'.   This mirrors the hydrogen-like atom in \cite{BegMa:geo}. The term gravatom has been used in physics for the loose context of gravitationally bound states and these are of potential empirical interest \cite{gravatom}, but we are not aware of any theoretical framework to make this precise in the GR setting needed for a black-hole atom. Although in our case the physical significance of evolution with respect to $s$ is not yet established, when we focus on stationary states with respect to the $s$-evolution, as we do in this section, we do obtain something that looks like  quantum mechanics and is in the same spirit as a quantum-mechanics interpretation of solutions of the Klein Gordon equation proposed in\cite{Don}. We are then able to study gravatom modes in our framework.

We proceed similarly to a hydrogen atom, namely by separation of variables in the eigenvalue equation. Separating out  and solving for the $\phi$ coordinate dependence fixes $p_\phi$ as well as $p_t$ as parameters and we need only consider  eigenstates of the Klein-Gordon wave operator of the form
\[ e^{ p_t t\over\lambda}e^{p_\phi\phi\over\lambda} R(r) F(\theta).\]
The radial equation then separates to 
\begin{equation}\label{Reqn} \left({2 m  E_{KG}\over\hbar^2}  + {p_t^2\over\hbar^2(1-{r_s\over r})}\right) R + \left((1-{r_s\over r}){\del^2\over\del r^2}+{1\over r}(2-{r_s\over r}){\del\over \del r})\right)R={l(l+1)\over r^2}R\end{equation} 
for some constant $l$, and the remaining $\theta$ equation is then
\begin{equation}\label{Feqn} \left({l(l+1)} -{p_\phi^2\over \hbar^2\sin^2(\theta) }\right)F+  \left({\del^2\over \del \theta^2}+\cot(\theta){\del\over \del \theta}\right)F=0.\end{equation}
The latter is the same as for the angular part of the Laplace equation on $\R^3$ and solved as usual for integral $l$ by Legendre polyomials $P_l^{m}(\cos(\theta))$ where $p_\phi=m\hbar$ and $m=0,1,\cdots,l$ (these functions combine with the $e^{p_\phi\phi\over\lambda}$ to  spherical harmonics as usual). So the only difference for us is the radial equation (\ref{Reqn}). 

Note that if we take $r_s=0$ in the 2nd term on the left  of (\ref{Reqn}) and work to order $r_s$ in the 1st term then we obtain the usual equation for an energy $E$ eigenstate of a hydrogen atom, with the correspondence
\[ E_{KG}+{p_t^2\over 2 m }=E,\quad r_s p_t^2={ m  e^2\over 2\pi\hbar^2\eps_0}\]
where $ m $ is the reduced electron mass (that takes into account the mass of the nucleus). $e$ is the electron charge and $\eps_0$ the vacuum permitivity constant. Recall that the ground state of the hydrogen atom is spherical (the wave function is purely radial) and up to normalisation is of the form, with energy
\begin{equation}\label{hatom} \psi(r)= e^{-{r\over a_0}},\quad  a_0=   {4\pi\hbar^2 \eps_0\over  m  e^2},\quad E=-{\hbar^2\over 2 m  a_0^2}.\end{equation}
 Our case is more complicated, but  we can  expect some similarity in view of the above. 

Some solutions are shown in Figure~\ref{figatom} for $r_s=p_t= m =1$ and $p_\phi=0$ (the higher spin modes follow a similar pattern). In this case $\psi$ again depends only on the radius. The radial equation depends critically on $E_{KG}$ and we find that:

(i) For $E_{KG}\ge -{p_t^2\over 2 m }$, there are real oscillatory modes which are well approximated for  large $r$ by 
\[  \psi(r)\sim  {\sin(\alpha r+\beta)\over r}.\]
Up to normalisation, there is a free boundary condition resulting in a phase shift of the form $\beta$ as stated for large $r$.  

(ii) For $E_{KG}< -{p_t^2\over 2 m }$, solutions typically diverge exponentially to $\pm\infty$ at large $r$,
\[ \psi(r)\sim {e^{\alpha r}\over r}.\]

(iii) For $E_{KG}<  -{p_t^2\over 2 m }$ and carefully chosen initial conditions, the mode in (ii) can be suppressed leaving solutions approximately of the form for large $r$,
\[ \psi(r)\sim {e^{-\alpha r}}.\]

The case (iii) has a finite norm  but the other two are not normalisable with respect to the same $r^2\extd r$ measure as above, due to large $r$ contributions. Remarkably, all solutions are non-singular as $r\to r_s$. The large $r$ frequency/exponential factor is
\[ \alpha={\sqrt{|2 m  E_{KG}+p_t^2|}\over\hbar}.\]

\begin{figure}
 \includegraphics[scale=0.65]
 {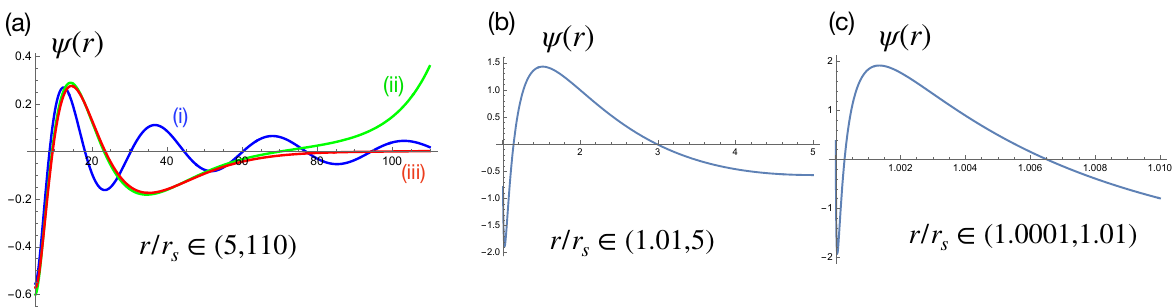}
 \caption{\label{figatom} Spherically symmetric $l=0$ evolution eigenfunctions for the pseudo gravatom with $p_t^2/2 m =0.5$. (a) shows oscilliatory mode (i) with eigenvalue $E_{KG}=-0.49$, exponentially divergent mode (ii) with $E_{KG}=-0.51$ and exponentially decaying `atomic' mode (iii) with $E_{KG}=-0.51$. (b) and (c) shows the fractal nature of all three modes approaching the horizon, where successive close ups look the same. } \end{figure}

We see that the pseudo-gravatom wave functions for $l=0$ match the usual hydrogen atom at large $r$, based on the `atom-like' type (iii) modes, but their behaviour near the horizon is completely different. Namely in the figure parts (b),(c),  the modes are shown again in close up (the type (iii) `atomic' mode is plotted but the other two increasingly coincide as we near the horizon). We see that the even more close-ups look the same, a phenomenon that persists on iterating more close-ups all the way down to machine precision. Thus the solutions, while bounded and not divergent, oscillate infinitely quickly as $r\to r_s$ and acquire a fractal nature. This is to be expected due to the time dilation approaching the horizon. Note that the horizon modes in Section~\ref{sec:dir}, while they look superficially like type (i) here, are not eigenstates. In fact, they are complex and decay much faster (namely what appears to be more like $1/r^2$ at large $r$).

Thus, the probability density $|\psi(r)|^2$ of the $l=0$ modes, unlike the case of a hydrogen atom, does not simply decay but rather, approaching the horizon, forms bands of increasingly small separation. In the example, we see these density peaks for the `atomic' type (iii) mode at:
\[ 35,\quad  15,\quad  5,\quad 1.5,\quad 1.03,\quad 1.0013,\quad 1.00006,\quad \cdots \]
in units of $r_s$.  This banding will be present in any coordinate system. Banding, i.e. the wave function crossing zero, is a feature of some higher $l$ modes in the case of the hydrogen atom, but we see it already and in a fractal form here. Indeed, the radial structure of the modes for small $l>0$ in our case appear to qualitatively identical to the $l=0$ case, while the  angular structure of the higher $l$ modes is the same as for a hydrogen atom.  Note, however, that neither $E$ nor $E_{KG}$ are  forced to be quantised. For the hydrogen atom, the exponential form $\psi(r)=e^{-{r\over a_0}}$ for the ground state in (\ref{hatom}) implies the stated value of $a_0$ to avoid a divergence in the eigenvalue equation at $r=0$, while the stated relation between $E$ and $a_0$ comes from the eigenvalue equation  at large $r$ (and together they fix the discrete value of $E$ for this type of mode). In our case we have an analogue of the large $r$ restriction, but not of the small $r$ restriction.

\section{Concluding remarks}\label{sec:con} 

We introduced a new technique for doing some form of quantum mechanics on curved spacetimes based on the algebra $\CD(M)$ of differential operators, as generated by complex smooth functions and vector fields on the manifold with the rule that they do not commute, so $[X,f]=\lambda X(f)$ where $\lambda=-\imath\hbar$. The key feature is that apart from being careful about the ordering, we can work with the usual objects of tensor calculus familiar to physicists working in GR, in contrast to other approaches to quantum theory on curved spacetimes where the focus is on operator algebras and functional analysis\cite{HolWal,Kay}. By separating these aspects, we were able to write down $\CD(M)$-level first order versions of the geodesic equations as setting to zero the kernel expressions in (\ref{keromega1}) and  Proposition~\ref{hdkk}. These equations are generally covariant, being defined by covariant geometric objects or where not  apparently covariant, behaving correctly after one accounts for the above noncommutativity in patching coordinate charts or changing bases (Remark~\ref{rembasis}).  Moreover, our local versions  (\ref{kerrho1})-(\ref{kerrho2}) land on the classical geodesic equation when $\lambda\to 0$, while at order $\lambda$ they contain information which, when $\CD(M)$ is represented as operators, encodes the Klein-Gordon equation with potential. At order $\lambda^2$, we also saw the Ricci tensor in the commutation relations in Propositions~\ref{geid7} and~\ref{geid}. 

We also gained a novel way of thinking about the need for these equations, i.e. why classical and now quantum particles move along geodesics in the first place, as due  in our context to the presence of gravity (curvature) and a resulting nonassociativity obstruction that geodesic motion kills. This is the content of Corollary~\ref{corbimod}. The full explanation here lies in the noncommutative differential geometry of $\CD(M)$ as an algebra, which in the curved case (in this paper) we were not able to fully understand but where the flat space ordinary Heisenberg algebra version is \cite{BegMa:geo}. However, by moving $\theta'=\extd s$ to the denominator and interpreting $\extd x^\mu, \extd p_\nu$ as rates of change, the quantum geodesic equations and associated commutation relations can be used without knowing their origin in noncommutative geometry and with a possibility of new effects at order $\lambda^2$ which could be investigated.

On the mathematical side, although we did not aim to develop the higher order theory other than to compute the breakdown of the Jacobi identity at order $\lambda^2$, there is a precedent in the use of $L_\infty$ and homotopy algebra methods to describe field theory in the presence of interactions, see e.g. \cite{Jur} for a review. Possibly the higher orders could be treated order by order motivated by such methods. In special cases, it might also be possible to cast the exterior algebra as quasiassociative, i.e. associative in a nontrivial monoidal category. This is another direction for further work and could connect to cochain twist methods \cite{BegMa:twi}. Finally, while the focus of the paper has been on $\CD(M)$ constructed at a smooth level, there are many interesting issues as operators in the Schr\"odinger representation. The image of $\CD(M)$ would appear to qualify as an $O^*$-algebra in the sense of \cite{Sch}, and we also note an extensive literature around metrics on phase space and the Weyl-H\"ormander calculus for the quantisation, see for example \cite{Pro}. A  Poisson-level setting behind the flat spacetime/Heisenberg algebra case is in Appendix A of \cite{BegMa:geo} and a version of it may apply in the $\CD(M)$ case as a point of contact.

Next, the parameter $s$ that we used in the spacetime case has its origins as  geodesic proper time and by extension makes sense also at the `Heisenberg picture' level of evolution in $\CD(M)$. If we are happy with this when $\hbar=0$ to describe the motion of  a classical particle, we {\em should} be somewhat happy with it in some kind of collective sense for motion of a quantum particle even if what exactly we mean by `proper time' is at this point a little fuzzy. We believe that  this interpretational conundrum does not need to be answered immediately but could emerge from applications and experience over time. Until then, it is fair to say that quantum evolution with respect to $s$ is of mathematical origin and only quantum mechanics-like, but with hints at potential physics. In particular, when $M$ is spacetime, we are not doing quantum mechanics on curved spacetime in the sense pioneered in \cite{Witt}.  We also saw that our Heisenberg picture was equivalent to a Schr\"odinger picture where wave-functions are now on spacetime and $s$ is now the time of a  Schr\"odinger-like evolution (\ref{KGQM}). We looked at this in Section~\ref{sec:app} for `Hamiltonian'  (\ref{minkh}), where the wave operator $\square$ replaces the role of spatial Laplacian and we took external potential $V=0$. We also referred to this as a  `Klein-Gordon flow'. We are led to this point of view even though wave functions on spacetime are not usually considered in Physics (but see for example  \cite{Fan}).  Moreover, even if one discounts any physical role of $s$, one can still view (\ref{KGQM})  as a tool to study geometry on a (pseudo)-Riemannian manifold on a par with heat-kernel expansions or Ricci flow methods.  As we are not solving for a fixed mass, this Klein-Gordon flow  captures off-shell information as a self-contained step in the direction of quantum field theory (which also deals with off-shell modes in computing Feynman integrals) without being quantum field theory, and could be explored further even if it is just a tool.  We exhibited a class of such flows in Section~\ref{sec:dir} around a Schwarzschild background and saw from numerical work that their probability density entropy (\ref{entropy}) increases with $s$ in the models that we looked at, as in Figure~\ref{figradial2}. The process here suggests a comparison with information being lost on conversion to Hawking radiation when matter falls into a black hole; such ideas could be explored further in our parallel setting.  Another question is that the formalism of quantum geodesics which we applied to $\CD(M)$ can also be applied directly to $M$ itself as a quantum geodesic flow on this. If these latter $\psi$ are real and positive then this is not really different from a density flow (where each particle moves on its own geodesic), but when $\psi$ is complex then new phenomena are potentially possible. Such quantum geodesic flows directly on $M$ are different from the Klein-Gordon flows in the present paper, but both could be looked for other spacetimes of interest. 

In the final Section~\ref{sec:gravatom} of the paper, we connected with another point of view on what could be called quantum mechanics on curved spacetime, namely solutions of the wave or Dirac equation that are interpreted as quantum-mechanics-like with respect to a preferred time direction (then there is no external time $s$ and `wave functions' are on space). This is not new and for black hole backgrounds appears first to have been considered in \cite{Don}. General solutions for this and more generally for Kerr backgrounds are also known \cite{Bez}. In our case such Klein-Gordon solutions appear as stationary states for the Klein-Gordon flow and we used methods motivated by analogy with time independent quantum mechanics, finding `gravatom' states with banding in the probability density even for orbital angular moment $l=0$  and of a fractal nature as we saw in Figure~\ref{figatom}. In our companion paper\cite{BegMa:cos} we do a similar analysis for FLRW cosmologies, this time using similar methods to those  used for quantum mechanical tunneling now  applied to solve the Klein-Gordon equation through a period of rapid inflation. It would be interesting to look at other spacetimes in the same vein. 

Last but not least, while the entire paper as well as \cite{BegMa:geo, BegMa:cos} are about applying noncommutative geometry to  quantum mechanics with classical space or spacetime,  the same formalism can be applied to the case where the space or spacetime is noncommutative or discrete (the latter also falls within noncmmutative geometry with finite-difference 1-forms not commuting with functions). Direct quantum geodesic applications are in \cite{LiuMa, BegMa:cur, BegMa:geogra} respectively, while quantum mechanics on noncommutative spaces via the methods of the present paper remains to be considered.

\section*{Declarations}

\noindent{\bf Funding:}  SM was supported by a Leverhulme Trust project grant RPG-2024-177. 

\medskip
\noindent{\bf Data availability:} Data sharing is not applicable as no data sets were generated or analysed during the current 
study.

\medskip
\noindent{\bf Conflict of Interest:} The authors have no competing  interests to declare that are relevant to the content of this article.

\end{document}